\newcommand{\headcav}{{{\scaleto{\mathrm{H}}{3.5pt}}}}
\newcommand{
\tailcav}{{{\scaleto{\mathrm{T}}{3.5pt}}}}
\newcommand{
\HVn}{{{\scaleto{\mathrm{N}}{3.5pt}}}}
\newcommand{\platoon}{\mathrm{p}}
\newcommand{\hhv}{\mathrm{d}}
\newcommand{\diff}{\,\mathrm{d}}
\newcommand{\imaginaryj}{\mathrm{j}}
\newtheorem{theorem}{Theorem}
\newtheorem{lemma}{Lemma}
\newtheorem{definition}{Definition}
\theoremstyle{definition}
\newtheorem{remark}{\normalfont\bfseries Remark}
\begin{document}

\title{\LARGE \bf 
Leveraging Cooperative Connected Automated Vehicles \\ for Mixed Traffic Safety}

\author[1]{Chenguang Zhao}
\author[2]{Tamas G. Molnar}
\author[1,*]{Huan Yu\thanks{* corresponding author: Huan Yu (huanyu@ust.hk)}}

\affil[1]{Thrust of Intelligent Transportation, The Hong Kong University of Science and Technology (Guangzhou), Nansha, Guangzhou, 511400, Guangdong, China.}
\affil[2]{Department of Mechanical Engineering,
Wichita State University, Wichita, KS 67260, USA.}

\renewcommand\Authands{ and }

\maketitle


\begin{abstract}
    The introduction of connected and automated vehicles (CAV) is believed to reduce congestion, enhance safety, and improve traffic efficiency. Numerous research studies have focused on controlling pure CAV platoons in fully connected automated traffic, as well as single or multiple CAVs in mixed traffic with human-driven vehicles (HVs). CAV cruising control designs have been proposed to stabilize the car-following traffic dynamics, but few studies has considered their safety impact, particularly the trade-offs between stability and safety.  In this paper, we study how cooperative control strategies for CAVs can be designed to enhance the safety and smoothness of mixed traffic under varying penetrations of connectivity and automation. Considering mixed traffic where a pair of CAVs travels amongst HVs, we design cooperative feedback controllers for the pair CAVs to stabilize traffic via cooperation and, possibly, by also leveraging connectivity with HVs. The real-time safety impact of the CAV controllers is investigated using control barrier functions (CBF). We construct CBF safety constraints, based on which we propose safety-critical control designs to guarantee CAV safety, HV safety and platoon safety. Both theoretical and numerical analyses have been conducted to explore the effect of CAV cooperation and HV connectivity on stability and safety. Our results show that the cooperation of CAVs helps to stabilize the mixed traffic while safety can be guaranteed with the safety filters.
    Moreover, connectivity between CAVs and HVs offers additional benefits: if an HV connects to an upstream CAV (i.e., the CAV looks ahead), it helps the CAV to stabilize the upstream traffic, while if an HV connects to a downstream CAV (i.e., the CAV looks behind), the safety of this connected HV can be enhanced. 
\end{abstract}

\begin{IEEEkeywords}
    Connected and automated vehicle, mixed traffic, stability analysis, traffic safety 
\end{IEEEkeywords}

\section{Introduction}
The integration of automation and connectivity in intelligent vehicles has been envisioned to improve road transportation efficiency, fuel consumption, and driving safety. Many studies have focused on the control of connected and automated vehicles (CAVs) to explore their potential for improving traffic under different penetration rates ranging from a single automated vehicle to fully connected automated traffic systems~\cite{mousavi2021investigating,ding2020penetration}.
Before fully automated and connected traffic becomes reality, there will still be an inevitable transition period of mixed traffic systems, that are characterized by frequent interactions between CAVs and conventional human-driven vehicles (HVs). 
In mixed traffic, the cooperative control of CAVs remains a significant challenge.
Especially, the safety impact of cooperative CAV control strategies on surrounding vehicles needs to be addressed. In this paper, we investigate this problem for a mixed vehicle platoon that includes a pair of cooperative CAVs traveling amongst (possibly connected) HVs as shown in Fig.~\ref{fig:framework}. We explore in detail how coordination between CAVs and feedback from surrounding connected HVs influence the effectiveness of control strategies on safety and smoothness of mixed traffic, under varying penetrations of connectivity and automation.

\subsection{Stability and safety by controlling a single CAV}

Automated vehicles have the potential to stabilize traffic by introducing smooth driving motions through car-following behaviors of vehicle platoon. Speed perturbations are attenuated in the process of propagating from the leader vehicle to the follower.~\cite{giammarino2020traffic,stern2018dissipation,yu2018stabilization,zhou2023data}. To achieve smooth motion, controllers have been designed for automated vehicles under different penetration rate of CAVs in traffic,  accessibility of surrounding traffic information.
In adaptive cruise control (ACC), the automated vehicle is controlled based on data measured via onboard sensors, that include its speed, the preceding vehicle's speed, and the gap (distance) ahead~\cite{gunter2020commercially}. The performance of automated vehicles can be further improved by using additional information from vehicle-to-vehicle (V2V) connectivity. CAVs equipped with communication devices may obtain data from downstream or upstream connected HVs that are also equipped with communication devices. This is leveraged by connected cruise control, which enables CAVs to respond not only to the immediate preceding vehicle but also to other connected HVs downstream. Field experiments in~\cite{jin2018experimental} have shown that this additional information from the downstream traffic helps the CAV to drive smoother compared to ACC-only vehicle, which improves traffic stability.
Connectivity between the CAV and upstream HVs can also improve traffic stability  by the notion of leading cruise control~\cite{wang2021leading}. As the CAV responds to connected HVs upstream, the stability of the upstream traffic is further enhanced.

In order to implement controllers for CAVs in practice, safety must be guaranteed~\cite{ye2019evaluating}. In this paper, we focus on longitudinal control, and thus safety refers to eliminating the risk of rear-end collisions. Longitudinal controller design exhibits a trade-off between stability and safety~\cite{li2022trade}. For example, when the vehicle ahead of the CAV decelerates, a smaller deceleration of the CAV leads to smoother traffic but also a higher risk of collisions. To ensure safety for the controlled dynamical systems, representative control techniques include model predictive control (MPC)~\cite{bai2022robust}, reachability analysis~\cite{althoff2014online,zhao2022formal}, and control barrier function (CBF)~\cite{ames2014control,xiao2021bridging}. MPC typically minimizes stability-related indices such as speed perturbation or energy consumption while safety is usually considered as constraint over the prediction horizon.  CBF, on the other hand, directly gives a constraint on the controller to meet safety of the current system state. Therefore, CBFs avoid the computation burden from the prediction of future dynamics which could possibly be inaccurate. Furthermore, CBFs can be integrated with pre-designed nominal controllers. In particular, safety-critical controllers have been developed that minimize the deviation from the nominal controller while satisfying CBF safety constraint~\cite{ames2019control,ames2014control}. In this approach, CBF acts as a safety filter that only alters the nominal controller when the system is in danger of violating safety. 

Many recent results have employed CBFs to enhance traffic safety.
CBFs have been integrated with adaptive cruise control in~\cite{ames2014control} to ensure the safety of automated vehicles. CBF-based safety filters have also been developed for connected cruise control in~\cite{molnar2023safetyCCC} to avoid collisions for a CAV that is connected to HVs downstream. Furthermore, CBF constraints have been designed in~\cite{zhao2023safety} to ensure safety for both the CAV and following HVs when the CAV is connected to HVs upstream. In this paper, we develop a novel CBF framework to evaluate the stability and safety impact of cooperative CAV control strategies on the upstream and downstream traffic. More importantly, we analyze the trade-offs between stability and safety, in particular, how connectivity of HVs, ACC and cooperation between CAVs play a role in it.  The safety-critical control designs proposed in this paper achieve stability and safety simultaneously.

\subsection{Cooperative control of multiple CAVs}

Besides analyzing the effect of a single automated vehicle on traffic, research has also shown that communication and coordination between multiple CAVs may further improve the overall performance of the traffic system~\cite{garg2023can,do2019simulation,papadoulis2019evaluating,talebpour2016influence}. The cooperative controllers have been considered for various traffic scenarios, such as cooperative cruise control and platooning~\cite{wang2024robust,johansson2023platoon}, cooperative merging at highway on-ramps~\cite{rios2016automated,wang2018distributed}, cooperative lane changing~\cite{zheng2019cooperative,zhang2022hybrid}, and cooperative eco-driving at signalized and unsignalized intersections~\cite{han2020energy,bai2022hybrid,chen2023nearly}. In longitudinal control that this paper focuses on, an important case is when multiple adjacent CAVs follow each other and form a CAV platoon.  Related research has revealed the positive improvement brought by such CAV platoons on traffic throughput~\cite{jin2020analysis,zhou2021analytical}, energy~\cite{kim2021compact}, stability~\cite{zhou2023autonomous}, mobility~\cite{liu2020mobility}, and safety~\cite{yi2020using}. 

To unleash the potential of CAV platoons, however, it is crucial to design controllers that effectively coordinate CAVs control strategies. A prevailing approach is cooperative adaptive cruise control (CACC), which is an extension of ACC~\cite{dey2015review,dai2022exploring}. Compared with ACC, CACC allows the CAVs within the platoon to have a smaller gap, and improves traffic stability, efficiency, and throughput~\cite{brunner2022comparing}. 
As for safety, CACC with well-designed controller gains also achieves collision-free driving for CAVs~\cite{bekiaris2023robust}.

Despite their benefits, CAV platoons may be difficult to implement in practice, since full penetration of automation and connectivity is required within the platoon.
As opposed, the penetration rate of CAVs is typically low in current traffic conditions. With low penetration, a more common scenario is mixed vehicle platoons consisting of both CAVs and HVs as shown 
in Fig.~\ref{fig:framework}. To coordinate CAVs that are separated by HVs, controllers have been designed via multiple tools, such as MPC~\cite{gong2018cooperative,qiu2023cooperative}, feedback control~\cite{guo2023connected}, and reinforcement learning~\cite{shi2021connected}. Analyses in these studies have shown that, compared with controlling the two CAVs separately, cooperative strategies further improves mixed traffic efficiency and stability. For safety, MPC-related controllers~\cite{gong2018cooperative,qiu2023cooperative} include CAV safety conditions as constraints in the optimization problem, while the other controllers fail to provide safety guarantees.

The above research considers the case where HVs are not connected, and thus, feedback about their motion may not be available to the CAVs.  Yet, as it has been discussed above for the control of a single CAV, the traffic efficiency~\cite{orosz2016connected,wang2021leading,shladover2012impacts,di2019cooperative} and safety~\cite{zhao2023safety} can be further improved via connecting HVs to CAVs. For the coordination of multiple CAVs, it remains unexplored how additional connectivity to HVs affects traffic stability and safety.  On the other hand, scarce studies have focused on designing CAV control strategies to guarantee the safety of surrounding HVs in mixed traffic, which will be discussed in our paper.  We design safety-critical controllers for a cooperating pair of CAVs that stabilizes the mixed traffic while considering both CAV and HV safety, and we analyze how the connection of HVs can be leveraged to improve the stability and safety of the system.

\begin{figure}[t]
    \centering
    \includegraphics[width=1\linewidth]{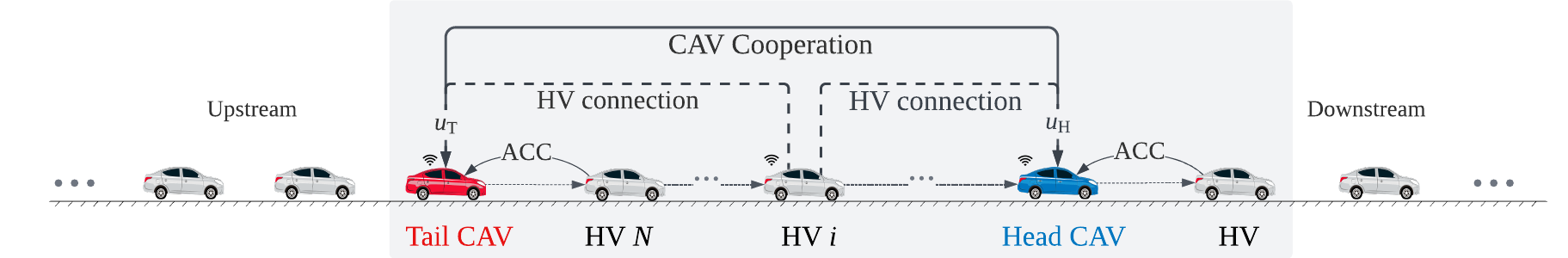}
    \caption{Safety-critical stabilization of a pair of CAVs traveling in mixed traffic. The head CAV follows a downstream head HV  and leads $N$ following HVs, while the tail CAV follows the last HV in this vehicle platoon.  We design controllers for the CAVs to alleviate congestion and also maintain formal safety guarantees considering the mixed vehicle platoon shaded in grey.
    }
    \label{fig:framework}
\end{figure}

\subsection{Contribution}

In this paper, we consider a mixed vehicle platoon shown in Fig.~\ref{fig:framework} that contains HVs enclosed by two CAVs. The CAVs are equipped with communication devices and thus can obtain information about each other's state, while some of the middle HVs may also have the ability for V2V communication. By leveraging the information from connectivity, we design controllers for the pair CAVs to stabilize the motion of the mixed vehicle platoon while maintaining safety of the CAVs and the middle HVs. 
We first design a nominal stabilizing controller consisting of three parts: adaptive cruise control, feedback of middle HV states (if there are connected HVs), and CAV coordination. 
Then, we discuss safety against rear-end collisions and propose safety notion in terms of the CAVs, the middle HVs, and the overall platoon by CBF. 
We perform safety analysis for the nominal controller, and then design CBF safety filters to guarantee CAVs, HVs, and the overall platoon respectively.
Finally, we conduct numerous simulations to validate our safety-critical cooperative CAV controllers
 and analyze their impact on stabilty and safety, sensitivity to penetration rate of CAVs, and robustness to uncertain human driver behaviors. 

The main contribution of this paper lies in proposing a methodological cooperative control framework for CAVs to guarantee safety of the mixed traffic, including CAVs safety, HVs safety and platoon safety. 
In particular, we analyze how the connection of HVs affects stability and safety, and show that if an HV is equipped with V2V communication devices, the CAVs can simultaneously enhance the safety of the HV and the smoothness of the following traffic.
In our previous works~\cite{zhao2023safety, guo2023connected, zhao2024safetyACC}, we obtained preliminary results on using CBFs to enhance traffic safety. The impact of cooperative strategies on trades-off between stability and safety has not been explored yet, especially for mixed traffic. More importantly, we leverage connection of HVs to improve the safety of mixed system by establishing CBFs safety constraints for the pair CAVs, HVs, and platoon. The practical impact of our proposed control strategies has been extensively studied across various safety-critical scenarios, considering different penetration rates of CAVs and accounting for uncertain human driver behaviors.

The remainder of this paper is organized as follows. We formulate the model of the mixed vehicle platoon in Section~\ref{sec:formulation}. We establish a nominal controller and address stability in Section~\ref{sec:stability}.
In Section~\ref{sec:safety}, we conduct safety impact analysis and propose cooperative control designs that guarantee the safety of the CAVs, HVs, and platoon based on CBFs. We conduct numerical simulations in Section~\ref{sec:simulation} to validate the designed controller.
We analyze controller performance from various aspects in Section~\ref{sec:performance analysis}, including controller parameter selection and robust stability and safety against unmodeled HV dynamics.

\section{Problem Formulation}\label{sec:formulation}

We consider a mixed vehicle platoon, shown in Fig.~\ref{fig:framework}, that includes two CAVs and $N$ HVs in-between.
The head CAV (H-CAV) follows an HV that leads the vehicle platoon and may cause velocity disturbance, while the tail CAV (T-CAV)  follows the last HV (HV-$N$) in the platoon. The two CAVs can measure their own gap, their own speed, and the speed of their preceding vehicle by on-board sensors (such as radar, lidar, or camera).
Furthermore, the head and tail CAVs share information of each other's real-time position and speed via V2V communication.  The states of the middle HVs are available to the CAVs if they are equipped with communication devices (i.e., they are connected HVs). We will discuss how to design safe cooperative CAV control strategies from null connectivity to full connectivity of middle HVs. 

The state variable of the mixed traffic system $x(t)\in \mathbb{R}^n$ with $n=2N+4$ is defined as:
\begin{align}
    x = [ \underbrace{s_{\headcav},v_{\headcav}}_{\text{head CAV}}, \underbrace{  s_{1},v_{1},\cdots, s_{\HVn},v_{\HVn}}_{\text{HV}}, \underbrace{s_{\tailcav},v_{\tailcav}}_{\text{tail CAV}}]^{\top} \in \mathbb{R}^{n},
\end{align}
with $s_{\headcav}(t)\in \mathbb{R}$ being the gap between the head CAV and its leader HV, $v_{\headcav}(t)\in \mathbb{R}$ being the speed of the head CAV, $s_{\tailcav}(t)\in \mathbb{R}$ and $v_{\tailcav}(t)\in \mathbb{R}$ being the gap and speed of the tail CAV, $s_{i}(t)\in \mathbb{R}$ and $v_{i}(t)\in \mathbb{R}$ being the gap and speed of HV-$i$ with $i \in \{1,\ldots,N\}$. Meanwhile, the leader HV's speed $v_{\hhv}(t)\in \mathbb{R}$ is an external disturbance for the vehicle platoon.
We design a control input $u_{\headcav}(t)\in \mathbb{R}$ for the head CAV and a control input $u_{\tailcav}(t)\in \mathbb{R}$ for the tail CAV. The controller is represented by:
\begin{align}
    u=\begin{bmatrix}u_{\headcav} \\ u_{\tailcav}
    \end{bmatrix} \in \mathbb{R}^2.
\end{align}
The dynamics of the mixed-traffic vehicle platoon are governed by: 
\begin{align}\label{eq:system}
    \dot x = f(x,v_{\hhv}) + g(x) u + d,
\end{align}
with $ f : \mathbb{R}^n \times \mathbb{R} \to \mathbb{R}^{n}$, $g : \mathbb{R}^{n} \to \mathbb{R}^{n\times 2}$, and $d(t)\in \mathbb{R}^n$ given below according to the models used for capturing the car-following motions of CAVs and HVs.
We describe the longitudinal motion of HVs and two CAVs as follows. 

\textbf{HV dynamics}:
the car-following model of HV-$i$ is given by
\begin{align}
    \dot s_i &= v_{i-1}-v_i, \label{eq:system HV s} \\
    \dot v_i &= F_i(s_i,v_i,\dot s_i) +  d_i.  \label{eq:system HV v}
\end{align}
The acceleration function  $F_i: \mathbb{R}^3 \to \mathbb{R}$ models the longitudinal driving behaviors of HV-$i$ based on its gap $s_i$, speed $v_i$, and speed difference $\dot s_i$.
Specifically, function $F_i$ is a generic acceleration function that allows the adoption of many commonly-used car-following models such as the optimal velocity model (OVM)~\cite{Bando1998} or the intelligent driver model (IDM)~\cite{treiber2000congested}.
In practice, human drivers often present more complex car-following behaviors that are difficult to describe accurately by simple models, hence we use $d_i(t)\in   \mathbb{R}$ to represent the unmodeled HV dynamics as an additive disturbance. It is noted that HV-1 follows the head CAV; thus, we use the notation $v_0=v_\headcav$.

\textbf{CAV dynamics}:
the head CAV follows an HV, and we will design the driving strategy of H-CAV as a control input, given by
\begin{align}
    \dot s_\headcav &= v_{\hhv} - v_\headcav, \label{eq:system head CAV s}\\
    \dot v_\headcav &= u_\headcav. \label{eq:system head CAV v}
\end{align}
In analogy, the gap $s_{\tailcav} $ and speed $v_{\tailcav} $ of the tail CAV are governed by:
\begin{align}
    \dot s_{\tailcav} &= v_{N} - v_{\tailcav}, \label{eq:system tail CAV s}\\
    \dot v_{\tailcav} &= u_{\tailcav}, \label{eq:system tail CAV v}
\end{align}
where $u_{\tailcav} \in \mathbb{R}$ controls the acceleration of the T-CAV.

For the mixed vehicle platoon controlled by the two cooperative CAVs, we thus have the system model as~\eqref{eq:system} with:
\begin{equation}
\begin{split}
& f(x,v_{\hhv}) = 
\begin{bmatrix}
    f_{\headcav}(x,v_{\hhv}) \\   
    f_{1}(x) \\
    \vdots \\
    f_{\HVn}(x) \\
    f_{\tailcav}(x) 
\end{bmatrix} \in \mathbb{R}^n,   \quad
f_{\headcav}(x,v_{\hhv}) = 
\begin{bmatrix}
    v_{\hhv} - v_{\headcav}\\
    0
\end{bmatrix},    \quad 
f_i(x) = 
\begin{bmatrix}
    v_{i-1} - v_i \\ F_i(s_i,v_i,v_{i-1}-v_i)
\end{bmatrix},  \quad
f_{\tailcav}(x) = 
\begin{bmatrix}
    v_{N} - v{\tailcav} \\ 0
\end{bmatrix}, \\
& g(x) = 
\begin{bmatrix}
    g_{\headcav} & g_{\tailcav}
\end{bmatrix}  \in \mathbb{R}^{n\times 2},  \quad
g_{\headcav} = 
\begin{bmatrix}
    0  \\1  \\0  \\\vdots  \\0 
\end{bmatrix}  \in \mathbb{R}^n, \quad
g_{\tailcav} = 
\begin{bmatrix}
    0 \\ \vdots \\0 \\0 \\1
\end{bmatrix}  \in \mathbb{R}^n, \quad
d =  \begin{bmatrix}
        0 & 0 & 0 & d_1 & \cdots & 0 & d_{N} & 0 & 0
\end{bmatrix}^{\top} \in \mathbb{R}^n. 
\end{split}
\label{eq:system_expressions}
\end{equation}

In the following parts, we first consider the case in which the human driver's model is fully known, i.e., $d_i=0$ in \eqref{eq:system HV v}. In Section~\ref{sec:stability}, we design cooperative CAV controllers to stabilize the system~\eqref{eq:system}, and further analyze the safety impact of the proposed cooperative controllers and then design safety filters to enhance the safety of the mixed traffic system in Section~\ref{sec:safety}. We run simulations to validate the safety filters in Section~\ref{sec:simulation} and conduct performance analysis in Section~\ref{sec:performance analysis}. In Section~\ref{sec:performance analysis}, we also analyze the robustness of the controller with an inaccurate driver's model, i.e., $d_i\ne 0$.

\section{Cooperative Stabilizing CAV Controllers}\label{sec:stability}

In this section, we design stabilizing nominal controllers for the two CAVs so that stable motion of the vehicle platoon is achieved for two different notions of stability, i.e. \textit{plant stability} and 
 \textit{head-to-tail string stability}.

\subsection{Nonlinear cooperative control design for the head and tail CAVs }

The nominal CAV controllers are designed to respond to the state $x$ of the traffic and the speed $v_{\hhv}$ of the leader HV:
\begin{align}
\begin{split}
    u_{\headcav} & = k_{\headcav,{\rm n}}(x,v_{\hhv}), \\
    u_{\tailcav} & = k_{\tailcav,{\rm n}}(x),
\end{split}
\label{eq:nominal_controller}
\end{align}
where subscript ${\rm n}$ stands for nominal, and the expressions of the nominal head and tail CAV controllers, ${k_{\headcav,{\rm n}}: \mathbb{R}^n \times \mathbb{R} \to \mathbb{R}}$ and ${k_{\tailcav,{\rm n}}: \mathbb{R}^n \to \mathbb{R}}$, are chosen as follows.
The controller of each CAV is designed to consist of three parts:
(i) adaptive cruise control (ACC) based on the preceding vehicle;
(ii) state feedback of the middle HVs that are connected to the CAV (if there are any);
and (iii) cooperative response to the other CAV. If there is no connectivity between the  CAV pair and HVs, the control strategy of the two CAVs will fall back to the simple ACC. We denote the set of middle HVs that are connected to the head CAV as $\mathcal{N}_{\headcav}\subseteq \{1,2,\ldots,N\}$.
The controller of the H-CAV is given by:
\begin{align}
    k_{\headcav,{\rm n}}(x,v_{\hhv}) = \underbrace{\alpha_{\headcav}(V_\headcav(s_\headcav) - v_{\headcav}) + \beta_{\headcav,{\hhv}}(W(v_{\hhv}) - v_{\headcav}) }_{\text{Adaptive cruise control}} +\underbrace{\textstyle{
    \sum_{i\in \mathcal{N}_{\headcav}}} \beta_{\headcav,i} (W(v_i) - v_{\headcav}) }_{\text{HV feedback}}   + \underbrace{\beta_{\headcav,\tailcav} (W(v_{\tailcav}) - v_{\headcav})}_{\text{CAV cooperation}} . \label{eq:nominal controller head CAV}
\end{align}
The first two terms are the response to the preceding vehicle, the third term is the response to the middle HVs (that is omitted if no HVs are connected to the head CAV, i.e., ${\mathcal{N}_{\headcav} = \emptyset}$), and the fourth term is the response to the tail CAV. We propose this controller by extending the design in ~\cite{guo2023connected} with response to the middle HVs. Parameters $\beta
_{\headcav,\hhv}$, $\beta
_{\headcav,i}$, and $\beta
_{\headcav,{\tailcav}}$ are the control gains with respect to the speeds of the head HV, middle HVs, and the tail CAV, respectively. The function $W:\mathbb{R}\to \mathbb{R}$ is defined as:
\begin{align}
    W(v) = \min\{v,v_{\max}\}, \label{eq:W}
\end{align}
with $v_{\max}$ being the maximum speed.
The control gain $\alpha_{\headcav}$ adjusts the head CAV's acceleration with respect to the desired speed $V_\headcav(s_\headcav)$ based on the gap $s_\headcav$.
Function $V_\headcav:\mathbb{R}\to \mathbb{R}$ is given below.

The tail CAV's acceleration is controlled as:
\begin{align}
    k_{\tailcav,{\rm n}}(x) = \underbrace{\alpha_{\tailcav} (V_{\tailcav}(s_\tailcav) - v_\tailcav) + \beta_{\tailcav,N} (W(v_N) - v_{\tailcav})}_{\text{Adaptive cruise control}}
    + \underbrace{\textstyle{\sum_{i\in \mathcal{N}_{\tailcav}} \beta_{\tailcav,i} (W(v_i) - v_{\tailcav})}}_{\text{HV feedback}} + \underbrace{\beta_{\tailcav,\headcav}(W(v_\headcav) - v_{\tailcav})}_{\text{CAV cooperation}}, \label{eq:nominal controller tail CAV}
\end{align}
where the meaning of each term in~\eqref{eq:nominal controller tail CAV} is analogous to that in~\eqref{eq:nominal controller head CAV}.
In controller~\eqref{eq:nominal controller tail CAV} $\mathcal{N}_{\tailcav}\subseteq \{1,2,\ldots,N-1\}$ denotes the set of middle HVs that are connected to the tail CAV excluding the preceding HV-$N$ (and the third term is omitted if no HVs are connected to the tail CAV, i.e., $\mathcal{N}_{\tailcav} = \emptyset$).
Parameters $\alpha_{\tailcav}$, $\beta_{\tailcav,N}$, $\beta_{\tailcav,i}$, and $\beta_{\tailcav,\headcav}$ are the corresponding control gains, while $V_\tailcav(s_\tailcav)$ is a desired speed based on the gap $s_\tailcav$.

The gap-dependent desired speeds $V_\headcav$ and $V_\tailcav$ can be designed for each CAV respectively. For instance, they are chosen differently by vehicle manufactures.
The upcoming analysis is conducted for general $V_\headcav$ and $V_\tailcav$ functions.
For numerical examples, we will use the same range policy:
\begin{equation}\label{eq:Vs}
    V_\headcav(s) = V_\tailcav(s) =
    \begin{cases}
    0, & s \leq s_{\mathrm{st}}, \\
    \kappa (s-s_{\mathrm{st}}), & s_{\mathrm{st}}<s<s_{\mathrm{go}}, \\
    v_{\max}, & s \geq s_{\mathrm{go}},
    \end{cases}
\end{equation}
where $s_{\mathrm{st}}$ and $s_{\mathrm{go}}$ are the standstill gap and free-driving gap, respectively, and ${\kappa = v_{\max}/(s_{\mathrm{go}}-s_{\mathrm{st}})}$.

With the controller~\eqref{eq:nominal_controller}, the dynamics~\eqref{eq:system} of the mixed vehicle platoon becomes:
\begin{equation}
    \dot{x} = 
    F(x,v_{\hhv}) = f(x,v_{\hhv}) + g_{\headcav} k_{\headcav,{\rm n}}(x,v_{\hhv}) + g_{\tailcav} k_{\tailcav,{\rm n}}(x),
\label{eq:closed_loop_system}
\end{equation}
where ${F(x,v_{\hhv})}$
represents the car-following dynamics of the HV model and the proposed CAV controllers.
Next, we analyze these dynamics, and we design the control gains denoted by $\alpha$ and $\beta$ for the two CAVs such that their controllers achieve stable motion for the vehicle platoon.
The linear stability is analyzed. We provide stability charts that identify the control gains for stability guarantees.

\subsection{Stability analysis}\label{sec:subsec stability chart}

We analyze the stability by considering the behavior of the mixed-autonomy traffic system~\eqref{eq:closed_loop_system} around its equilibrium.
At the equilibrium, each vehicle in the vehicle platoon has the same constant speed $v^*$ and keeps a constant gap. The equilibrium gap for each HV, $s_i^*$, is given by $F_i(s_i^*,v^*,0) = 0$. 
For the H-CAV and T-CAV, their equilibrium gaps $s_{\headcav}^*$ and $s_{\tailcav}^*$ are given by $V_{\headcav} (s_{\headcav}^*) = v^*$ and $V_{\tailcav} (s_{\tailcav}^*) = v^*$, respectively.
For the system~\eqref{eq:closed_loop_system}, the equilibrium state is
\begin{align}
    x^* = [s_{\headcav}^*,v^*,s_1^*,v^*,\cdots,s_{\HVn}^*,v^*,s_{\tailcav}^*,v^*]^\top,
\end{align}
and it satisfies ${F(x^*,v^*)=0}$.

We analyze stability by considering that speeds and gaps fluctuate around their equilibrium value.
The perturbations are described by: 
${\tilde{v}_{\hhv} = v_{\hhv} - v^*}$, $\tilde{s}_{\headcav} = s_{\headcav} - s^*$, $\tilde{v}_{\headcav} =v_{\headcav} -v^*$, ${\tilde{s}_{\tailcav} = s_\tailcav - s_\tailcav^*}$, ${\tilde{v}_{\tailcav} = {v}_{\tailcav} - v^*}$, ${\tilde s_i = s_i  - s_i^*}$, and ${\tilde v_i = v_i - v^*}$, which are written compactly as:
\begin{equation}
    \tilde{x}(t) = x(t) - x^*, \quad
    \tilde{v}_{\hhv} = v_{\hhv} - v^*.
    \label{eq:perturbation}
\end{equation}
By linearizing the mixed traffic system~\eqref{eq:closed_loop_system}, the evolution of these perturbations is given in the form:
\begin{equation}
    \dot{\tilde{x}} = A \tilde{x} + B \tilde{v}_{\hhv}.
    \label{eq:linearized_system}
\end{equation}
This linearized system, with the expressions of $A \in \mathbb{R}^{n \times n}$ and $B \in \mathbb{R}^{n}$, is derived in Appendix~\ref{appdx:stability} as~\eqref{eq:linear AB}. We consider two types of stability for the linearized system: plant stability and head-to-tail string stability.

\begin{definition}[Internal stability]
    System~\eqref{eq:linearized_system} is plant stable if it is asymptotically stable for ${\tilde{v}_{\hhv}(t)=0}$.
\end{definition}

\begin{definition}[Input-output stability]\label{definition:head to tail string stability}
    System~\eqref{eq:linearized_system}
    head-to-tail string stable if
    ${\sqrt{\int_0^{\infty} \tilde{v}_{\tailcav}(t)^2 \diff t} < \sqrt{\int_0^{\infty} \tilde{v}_{\hhv}(t)^2 \diff t}}$
    for any square integrable $\tilde{v}_{\hhv}$ and ${\tilde{x}(0)=0}$.
\end{definition}

\begin{remark}[Plant and string stability]
   The plant stability refers to the \textit{internal} dynamics of the system when the external disturbance $\tilde{v}_{\hhv}=0$ , while head-to-tail string stability describes the system's response to the \textit{external} disturbance $\tilde{v}_{\hhv}$ when initial states ${\tilde{x}(0)=0}$. With respect to their implications for the mixed traffic system, plant stability means that, when the downstream traffic travels at the equilibrium speed $v^*$, the platoon will approach the equilibrium state $x^*$ associated with the same speed $v^*$ and constant spacing for each vehicle. Head-to-tail string stability, on the other hand, requires that when the leader HV has some speed perturbation $\tilde{v}_{\hhv}$ the tail CAV will experience a smaller perturbation $\tilde{v}_{\tailcav}$ in its speed, i.e., the speed perturbations in downstream traffic are attenuated by the vehicle platoon and the traffic becomes smoother. For traffic system, the head-to-tail string stability is a more restrictive performance requirement than the plant stability. 
\end{remark}

We analyze plant and head-to-tail string stability using the {\em head-to-tail transfer function}  defined as:
\begin{align} \label{eq:G_def}
    G(s) = \frac{\widetilde{V}_{\tailcav}(s)}{\widetilde{V}_{\hhv}(s)},
\end{align}
with $\widetilde{V}_{\tailcav}(s)$ and $\widetilde{V}_{\hhv}(s)$ being the Laplace transforms of the speed perturbations of the leader HV, $\tilde v_\hhv$, and the tail CAV, $\tilde v_{\tailcav}$, respectively.
The head-to-tail transfer function $G(s)$ of the linearized system~\eqref{eq:linearized_system} is derived in Appendix~\ref{appdx:stability}.
Its final expression is given as follows.

\begin{lemma}\label{theorem:G}
    The head-to-tail transfer function of system~\eqref{eq:linearized_system} is:
    \begin{align} \label{eq:G}
        G(s) = \frac{N(s)}{D(s)},
    \end{align}
    where the numerator is given by:
    \begin{align}
        N(s) = (\beta_{\headcav,\hhv} s+ \xi_{\headcav} ) \left(\beta_{\tailcav,\headcav}s P_0  + (\beta_{\tailcav,\HVn}s + \xi_{\tailcav}) P_{\HVn} + \sum_{i\in \mathcal{N}_{\tailcav}} \beta_{\tailcav,i} s P_i \right),
        \label{eq:G N}
    \end{align}
    with ${\xi_{\headcav} = \alpha_{\headcav} V'_{\headcav}(s_{\headcav}^*)}$, ${\xi_\tailcav = \alpha_\tailcav V_{\tailcav}'(s_{\tailcav}^*)}$,
    while the denominator is:
    \begin{align}
        D(s) = \left( (s^2 +
        \eta_{\headcav} s  + \xi_{\headcav}) P_0
        - \sum_{i\in \mathcal{N}_{\headcav}}\beta_{\headcav,i} s P_i \right) (s^2 + \eta_{\tailcav} s + \xi_{\tailcav})  -\beta_{\headcav,\tailcav}s \left(\beta_{\tailcav,\headcav}s P_0  + (\beta_{\tailcav,\HVn}s + \xi_{\tailcav}) P_{\HVn} + \sum_{i\in \mathcal{N}_{\tailcav}} \beta_{\tailcav,i} s P_i\right),
    \label{eq:G D}
    \end{align}
    with
    ${\eta_{\headcav} = \alpha_{\headcav} + \beta_{\headcav,{\hhv}} + \sum_{i\in \mathcal{N}_{\headcav}} \beta_{\headcav,i}  + \beta_{\headcav,\tailcav}}$,  ${\eta_\tailcav = \alpha_\tailcav + \beta_{\tailcav,\HVn} + \sum_{i\in \mathcal{N}_{\tailcav}} \beta_{\tailcav,i} + \beta_{\tailcav,\headcav}}$, and:
    \begin{align}
        P_0 = \prod_{j=1}^{N} (s^2 + a_{j2} s + a_{j1}), \quad
        P_i = \prod_{j=1}^{i} (a_{j3}s+a_{j1}) \prod_{j=i+1}^{N} (s^2 + a_{j2} s + a_{j1}), \quad
        P_N = \prod_{j=1}^{N} (a_{j3}s+a_{j1}),
        \label{eq:G Pi}
    \end{align}
    where ${a_{j1}=\frac{\partial F_j}{\partial s_j}(s_j^*,v^*,0)}$, ${a_{j2}=\frac{\partial F_j}{\partial \dot{s}_j}(s_j^*,v^*,0)-\frac{\partial F_j}{\partial v_j}(s_j^*,v^*,0)}$, ${a_{j3}=\frac{\partial F_j}{\partial \dot{s}_j}(s_j^*,v^*,0)}$.
\end{lemma}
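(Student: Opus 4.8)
The plan is to work entirely in the Laplace domain. First I would linearize each of the scalar subsystems~\eqref{eq:system HV s}--\eqref{eq:system tail CAV v} around the equilibrium $x^*$, using $W'(v^*)=1$ (valid for $v^*<v_{\max}$) and the partials $a_{j1},a_{j2},a_{j3}$ of $F_j$, and then take the Laplace transform with zero initial conditions. Writing $\widetilde V_\headcav,\widetilde V_i,\widetilde V_\tailcav,\widetilde V_\hhv$ for the transforms of the speed perturbations, every gap perturbation is eliminated through the kinematic relation $s\widetilde S_\bullet=\widetilde V_{\mathrm{pre}}-\widetilde V_\bullet$, where $\widetilde V_{\mathrm{pre}}$ is the preceding vehicle's speed transform. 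This collapses each second-order block into a single algebraic relation among speed transforms.

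For the $N$ middle HVs, substituting $\widetilde S_i=(\widetilde V_{i-1}-\widetilde V_i)/s$ into the linearized $\dot{\tilde v}_i=a_{i1}\tilde s_i-a_{i2}\tilde v_i+a_{i3}\tilde v_{i-1}$ and clearing $s$ yields the elementary car-following relation
\begin{align}
(s^2+a_{i2}s+a_{i1})\widetilde V_i=(a_{i3}s+a_{i1})\widetilde V_{i-1},
\end{align}
with the convention $\widetilde V_0=\widetilde V_\headcav$ since HV-$1$ follows the head CAV. Telescoping this cascade and matching the product structure in~\eqref{eq:G Pi} gives $\widetilde V_i=(P_i/P_0)\widetilde V_\headcav$, and in particular $\widetilde V_\HVn=(P_\HVn/P_0)\widetilde V_\headcav$. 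The role of $P_0$ is precisely to act as the common denominator that renders each per-vehicle ratio polynomial, which is what keeps $N(s)$ and $D(s)$ polynomial at the end.

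Next I would treat the two CAV blocks. Applying the same gap elimination to~\eqref{eq:system head CAV s}--\eqref{eq:system tail CAV v} with the controllers~\eqref{eq:nominal controller head CAV},~\eqref{eq:nominal controller tail CAV} linearized (and $\xi_\headcav,\xi_\tailcav,\eta_\headcav,\eta_\tailcav$ as defined in the statement) produces
\begin{align}
(s^2+\eta_\headcav s+\xi_\headcav)\widetilde V_\headcav&=(\beta_{\headcav,\hhv}s+\xi_\headcav)\widetilde V_\hhv+\textstyle\sum_{i\in\mathcal N_\headcav}\beta_{\headcav,i}s\,\widetilde V_i+\beta_{\headcav,\tailcav}s\,\widetilde V_\tailcav,\\
(s^2+\eta_\tailcav s+\xi_\tailcav)\widetilde V_\tailcav&=(\beta_{\tailcav,\HVn}s+\xi_\tailcav)\widetilde V_\HVn+\textstyle\sum_{i\in\mathcal N_\tailcav}\beta_{\tailcav,i}s\,\widetilde V_i+\beta_{\tailcav,\headcav}s\,\widetilde V_\headcav.
\end{align}
Substituting the HV cascade $\widetilde V_i=(P_i/P_0)\widetilde V_\headcav$ into both lines and multiplying by $P_0$ turns these into a $2\times2$ linear system for $(\widetilde V_\headcav,\widetilde V_\tailcav)$ driven by $\widetilde V_\hhv$: the diagonal entries are $(s^2+\eta_\headcav s+\xi_\headcav)P_0-\sum_{i\in\mathcal N_\headcav}\beta_{\headcav,i}sP_i$ and $(s^2+\eta_\tailcav s+\xi_\tailcav)P_0$, and the coupling entries carry the $\beta_{\headcav,\tailcav}sP_0$ term and the combination $Q(s):=\beta_{\tailcav,\headcav}sP_0+(\beta_{\tailcav,\HVn}s+\xi_\tailcav)P_\HVn+\sum_{i\in\mathcal N_\tailcav}\beta_{\tailcav,i}sP_i$. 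Solving for $\widetilde V_\tailcav$ by Cramer's rule, the determinant and the Cramer numerator share a common factor $P_0$ that cancels, leaving $D(s)$ of~\eqref{eq:G D} in the denominator and $(\beta_{\headcav,\hhv}s+\xi_\headcav)Q(s)=N(s)$ of~\eqref{eq:G N} in the numerator, so that $G(s)=N(s)/D(s)$.

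The main obstacle is the bidirectional coupling between the CAVs: because the head CAV reacts to the tail CAV through $\beta_{\headcav,\tailcav}$ while the tail CAV reacts to the head CAV through $\beta_{\tailcav,\headcav}$, the platoon is not a one-way cascade and $\widetilde V_\headcav$ cannot be obtained by simple feedforward propagation from $\widetilde V_\hhv$; one must solve the coupled system and verify that this feedback produces exactly the $-\beta_{\headcav,\tailcav}sQ(s)$ subtraction in $D(s)$. The accompanying bookkeeping hazard is keeping every per-vehicle rational transfer function over the common denominator $P_0$ so that the polynomials $P_i$ of~\eqref{eq:G Pi} appear correctly and the spurious $P_0$ factors cancel at the very end; this is where sign and index slips are most likely to arise, particularly through the exclusion of HV-$\HVn$ from $\mathcal N_\tailcav$ and the $\widetilde V_0=\widetilde V_\headcav$ convention.
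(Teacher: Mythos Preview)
Your proposal is correct and follows essentially the same route as the paper's proof: linearize each block, take Laplace transforms with zero initial conditions, eliminate the gap transforms via $s\widetilde S_\bullet=\widetilde V_{\mathrm{pre}}-\widetilde V_\bullet$, telescope the HV cascade to $\widetilde V_i=(P_i/P_0)\widetilde V_\headcav$, and then resolve the bidirectionally coupled CAV pair. The only cosmetic difference is the last algebraic step: the paper solves the head-CAV relation for $\widetilde V_\headcav$ and the tail-CAV relation for $\widetilde V_\headcav$ separately and equates the two expressions, whereas you multiply through by $P_0$, assemble a $2\times2$ system, and apply Cramer's rule, then cancel the common $P_0$; both arrive at $G(s)=N(s)/D(s)$ identically.
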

Note that the expressions in~\eqref{eq:G Pi} represent the dynamics~\eqref{eq:system HV s}-\eqref{eq:system HV v} of HVs.
Meanwhile, the formulas in~\eqref{eq:G N} and~\eqref{eq:G D} are determined by the CAV controllers in~\eqref{eq:nominal controller head CAV} and~\eqref{eq:nominal controller tail CAV}, hence they depend on the $\alpha$ and $\beta$ controller gains.
Using the head-to-tail transfer function $G(s)$, we provide the conditions on these controller gains to stabilize the linearized system~\eqref{eq:linearized_system}.
This implies local stability for the nonlinear system~\eqref{eq:closed_loop_system}, i.e., when the system state and disturbance are within a small region around the equilibrium.
The stability conditions are summarized in Theorem~\ref{theorem:stability}.

\begin{theorem}\label{theorem:stability}
    System~\eqref{eq:linearized_system} is plant stable if the controller gains $\alpha_{\headcav}$, $\beta_{\headcav,\hhv}$, $\beta_{\headcav,i}$, $\beta_{\headcav,\tailcav}$, $\alpha_{\tailcav}$, $\beta_{\tailcav,\HVn}$, $\beta_{\tailcav,i}$, $\beta_{\tailcav,\headcav}$  are chosen such that all solutions of $D(s) = 0$ have negative real parts, where $D$ is given in~\eqref{eq:G D}.     The system is head-to-tail string stable if $ |G(\imaginaryj \omega) | < 1 $  holds for all $\omega > 0$, where $\imaginaryj^2 = -1$ and $G$ is given in~\eqref{eq:G}-\eqref{eq:G D}.
\end{theorem}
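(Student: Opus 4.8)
The plan is to connect both stability notions to the head-to-tail transfer function $G(s)=N(s)/D(s)$ established in Lemma~\ref{theorem:G}, exploiting the fact that the denominator $D(s)$ is the characteristic polynomial of the linearized dynamics~\eqref{eq:linearized_system}. For the first claim, recall that plant stability means asymptotic stability of $\dot{\tilde{x}}=A\tilde{x}$, which holds if and only if every eigenvalue of $A$ lies in the open left half-plane. The key step is therefore to identify the roots of $D(s)=0$ with the spectrum of $A$. I would argue that the denominator produced by the Laplace-domain elimination in Appendix~\ref{appdx:stability} is exactly $\det(sI-A)$: indeed, $D(s)$ as written in~\eqref{eq:G D} is monic of degree $2N+4=n$, since the dominant term $(s^2+\eta_{\headcav}s+\xi_{\headcav})P_0$ has degree $2N+2$ and is multiplied by the quadratic $(s^2+\eta_{\tailcav}s+\xi_{\tailcav})$, while every remaining summand has strictly lower degree. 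This matches the degree of the characteristic polynomial, so that the condition that all roots of $D(s)=0$ have negative real part is equivalent to Hurwitz stability of $A$, giving plant stability.

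For the second claim, I would invoke Parseval's (Plancherel's) theorem. Assuming plant stability, $G(s)$ has all poles in the open left half-plane, so the map $\tilde{v}_{\hhv}\mapsto\tilde{v}_{\tailcav}$ is a bounded operator on $L_2[0,\infty)$ under the zero initial condition $\tilde{x}(0)=0$, and $\widetilde{V}_{\tailcav}(\imaginaryj\omega)=G(\imaginaryj\omega)\widetilde{V}_{\hhv}(\imaginaryj\omega)$. Parseval then yields
\begin{equation}
\int_0^{\infty}\tilde{v}_{\tailcav}(t)^2\diff t = \frac{1}{2\pi}\int_{-\infty}^{\infty}\bigl|G(\imaginaryj\omega)\bigr|^2\,\bigl|\widetilde{V}_{\hhv}(\imaginaryj\omega)\bigr|^2\diff\omega ,
\end{equation}
together with the identical identity in which $|G(\imaginaryj\omega)|^2$ is replaced by $1$ for $\int_0^{\infty}\tilde{v}_{\hhv}(t)^2\diff t$. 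Subtracting the two, the string-stability inequality of Definition~\ref{definition:head to tail string stability} follows provided the integrand $\bigl(|G(\imaginaryj\omega)|^2-1\bigr)\bigl|\widetilde{V}_{\hhv}(\imaginaryj\omega)\bigr|^2$ is nonpositive, which is precisely the hypothesis $|G(\imaginaryj\omega)|<1$ extended to $\omega<0$ by the conjugate symmetry $|G(-\imaginaryj\omega)|=|G(\imaginaryj\omega)|$ of a real system.

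The main obstacle, and the step deserving the most care, is twofold. First, for the plant-stability direction I must rule out pole-zero cancellations: the roots of $D(s)=0$ coincide with the eigenvalues of $A$ only if no common factor of $N(s)$ and $D(s)$ conceals an unstable mode, so I would rely on the explicit determinant construction in Appendix~\ref{appdx:stability} (rather than on the reduced rational form of $G$) to certify $D(s)=\det(sI-A)$ as an honest characteristic polynomial; the degree count $\deg D=n$ above is the sanity check that no dynamics have been lost. Second, for string stability the inequality of Definition~\ref{definition:head to tail string stability} is \emph{strict}, whereas the integrand vanishes wherever $|G(\imaginaryj\omega)|=1$. The strictness is recovered by observing that $|G(\imaginaryj\omega)|<1$ holds for all $\omega\neq0$ and that $\{0\}$ has Lebesgue measure zero, while any nonzero square-integrable $\tilde{v}_{\hhv}$ has $|\widetilde{V}_{\hhv}(\imaginaryj\omega)|^2>0$ on a set of positive measure; hence the integrand is strictly negative on a set of positive measure and nonpositive almost everywhere, which upgrades the pointwise bound $|G(\imaginaryj\omega)|<1$ to the strict $L_2$ inequality.
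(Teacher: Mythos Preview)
Your proposal is correct and, in fact, more detailed than what the paper provides: the paper states Theorem~\ref{theorem:stability} without proof, treating it as a standard consequence of the transfer-function derivation in Lemma~\ref{theorem:G} and the classical definitions of plant and string stability. Your argument---identifying $D(s)$ with $\det(sI-A)$ via the degree count $\deg D=2N+4=n$, and applying Parseval's theorem for the $L_2$ string-stability inequality---is exactly the standard justification that the paper leaves implicit.

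One minor remark: you correctly flag that the string-stability argument requires plant stability (so that $G$ is analytic in the closed right half-plane and the frequency-domain identity $\widetilde{V}_{\tailcav}(\imaginaryj\omega)=G(\imaginaryj\omega)\widetilde{V}_{\hhv}(\imaginaryj\omega)$ holds for $L_2$ signals). The paper's theorem statement does not make this prerequisite explicit, but it is tacitly assumed throughout---the stability charts in Section~\ref{sec:subsec:stability analysis} always present the string-stable region as a subset of the plant-stable region. Your handling of the strict inequality at $\omega=0$ via the measure-zero argument is also a nice touch that the paper does not discuss.
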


\subsection{Stabilizing control gains}
\label{sec:subsec:stability analysis}

Here we use Theorem~\ref{theorem:stability} to determine the range of controller gains that stabilize the system, and we plot this range as \textit{stability charts} in the $(\beta_{\headcav,\tailcav},\beta_{\tailcav,\headcav})$ plane of controller gains. The corresponding stability boundaries (that bound the region of stabilizing controller gains)  are given as follows.

\textbf{Plant stability}: The mixed traffic system is at the plant-stability boundary when $D(s)=0$ has a real root at the origin, i.e.,  $s=0$,  or has a complex conjugate pair of roots $s=\pm \imaginaryj \omega$ with $\omega>0$. 
In the first case, the stability boundary is:
\begin{align}\label{eq:stability boundary p0}
    D(0) = 0.
\end{align}
In the second case, the stability boundary is:
\begin{align}
\begin{split}
    \mathrm{Re}(D(\imaginaryj \omega)) = & 0, \\
    \mathrm{Im}(D(\imaginaryj \omega)) = & 0,
\end{split}
\label{eq:stability boundary p}
\end{align}
with $\mathrm{Re}(\cdot)$ and $\mathrm{Im}(\cdot)$ being the real and imaginary parts of a complex number.

\textbf{Head-to-tail string stability}:
Based on Theorem~\ref{theorem:stability}, the head-to-tail string stability condition is
$ |G(\imaginaryj \omega)|<1$ for all $\omega>0$. We note that $|G(0)| = 1$. Therefore, there are two cases for the string stability boundaries~\cite{guo2023connected}. In the first case, $\vert G(\imaginaryj \omega)\vert$ gets its maximum value at $\omega = 0$, and the stability boundary is given as:
\begin{align}\label{eq:stability boundary s0}
   \lim_{\omega \to 0^+} \frac{1}{\omega^2} \big( \vert D(\imaginaryj \omega) \vert^2 - \vert N(\imaginaryj \omega) \vert^2 \big) = 0,
\end{align}
In the second case, $|G(\imaginaryj \omega)|= 1$ for some positive $\omega>0$. The stability boundaries form a family of curves parameterized by the wave number $\theta\in [0,2\pi)$, obtained from:
\begin{align}
	G(\imaginaryj \omega) = e^{-\imaginaryj \theta},
\end{align}
which is equivalent to:
\begin{align}
\begin{split}
	\mathrm{Re} (D(\imaginaryj \omega)) \!-\! \mathrm{Re} (N(\imaginaryj \omega)) \cos \theta \!+\! \mathrm{Im} (N(\imaginaryj \omega)) \sin \theta & = 0, \\
	\mathrm{Im} (D(\imaginaryj \omega)) \!-\! \mathrm{Re} (N(\imaginaryj \omega)) \sin \theta \!-\! \mathrm{Im} (N(\imaginaryj \omega)) \cos \theta & = 0.
\end{split}
\label{eq:stability boundary s}
\end{align}

Equations~\eqref{eq:stability boundary p0}-\eqref{eq:stability boundary p} and~\eqref{eq:stability boundary s0}-\eqref{eq:stability boundary s} define the plant and string stability boundaries, respectively.
Note that the left-hand sides of these equations depend on the control gains like $\beta_{\headcav,\tailcav}$ and $\beta_{\tailcav,\headcav}$.
Thus, these equations define curves in the space of controller gains such as in the $(\beta_{\headcav,\tailcav},\beta_{\tailcav,\headcav})$ plane.
By plotting these curves, we create stability charts that identify stabilizing gains.

\begin{figure}[t]
    \centering    \includegraphics[width=0.55\linewidth]{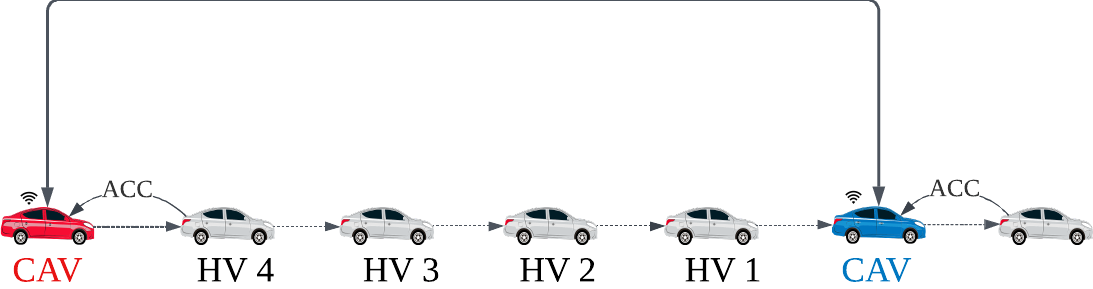}\hspace{2em}
    \includegraphics[width=0.3\linewidth]{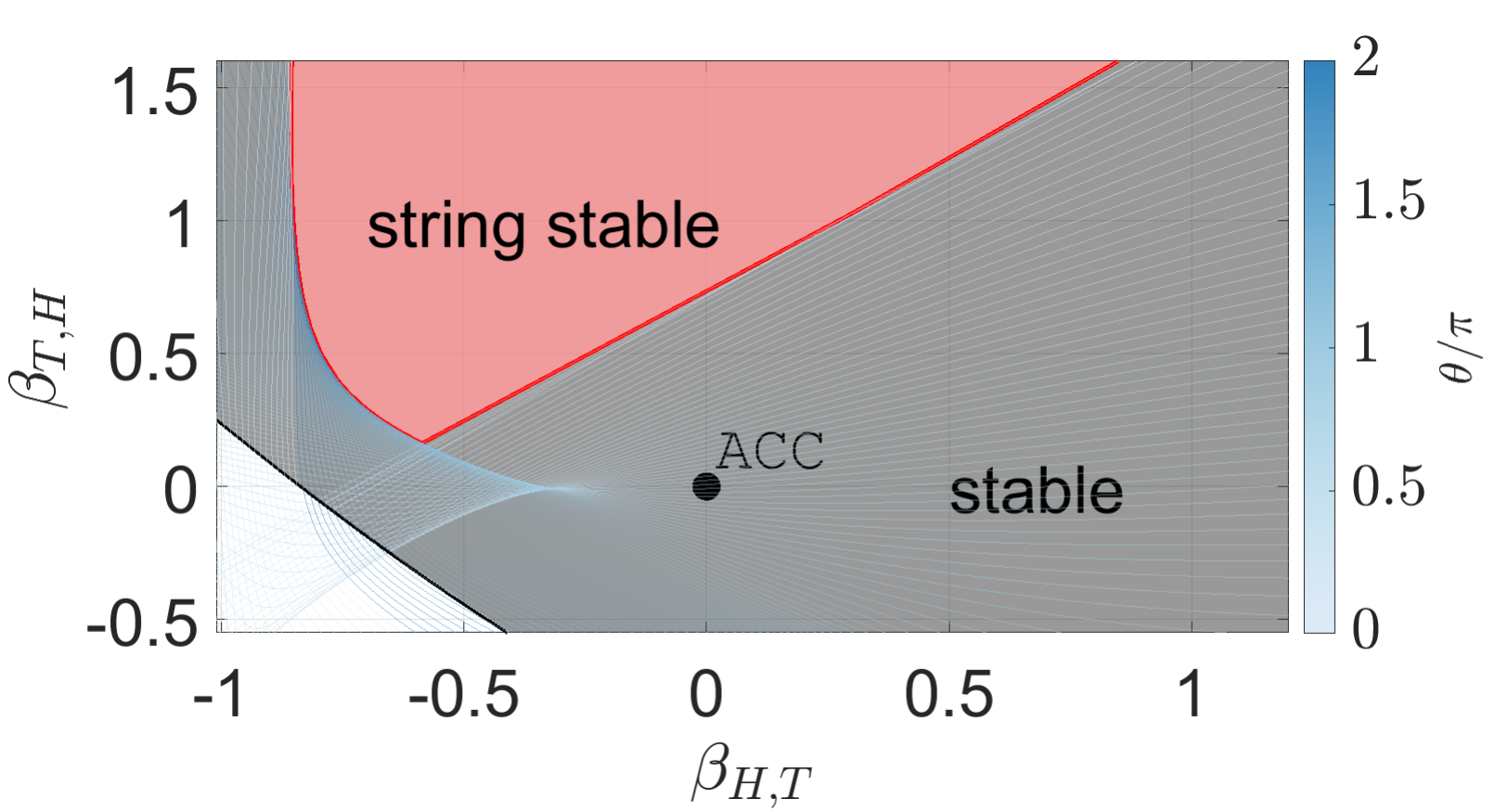}  \\   
    (a)   No HV connection\\
    \includegraphics[width=0.55\linewidth]{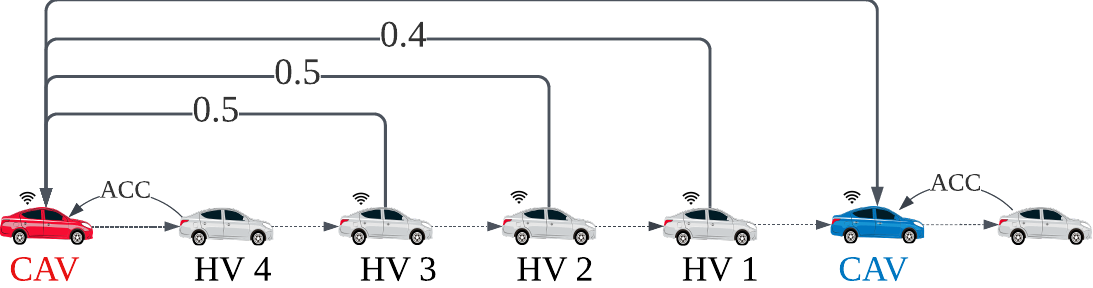}\hspace{2em}
    {\includegraphics[width=0.3\linewidth]{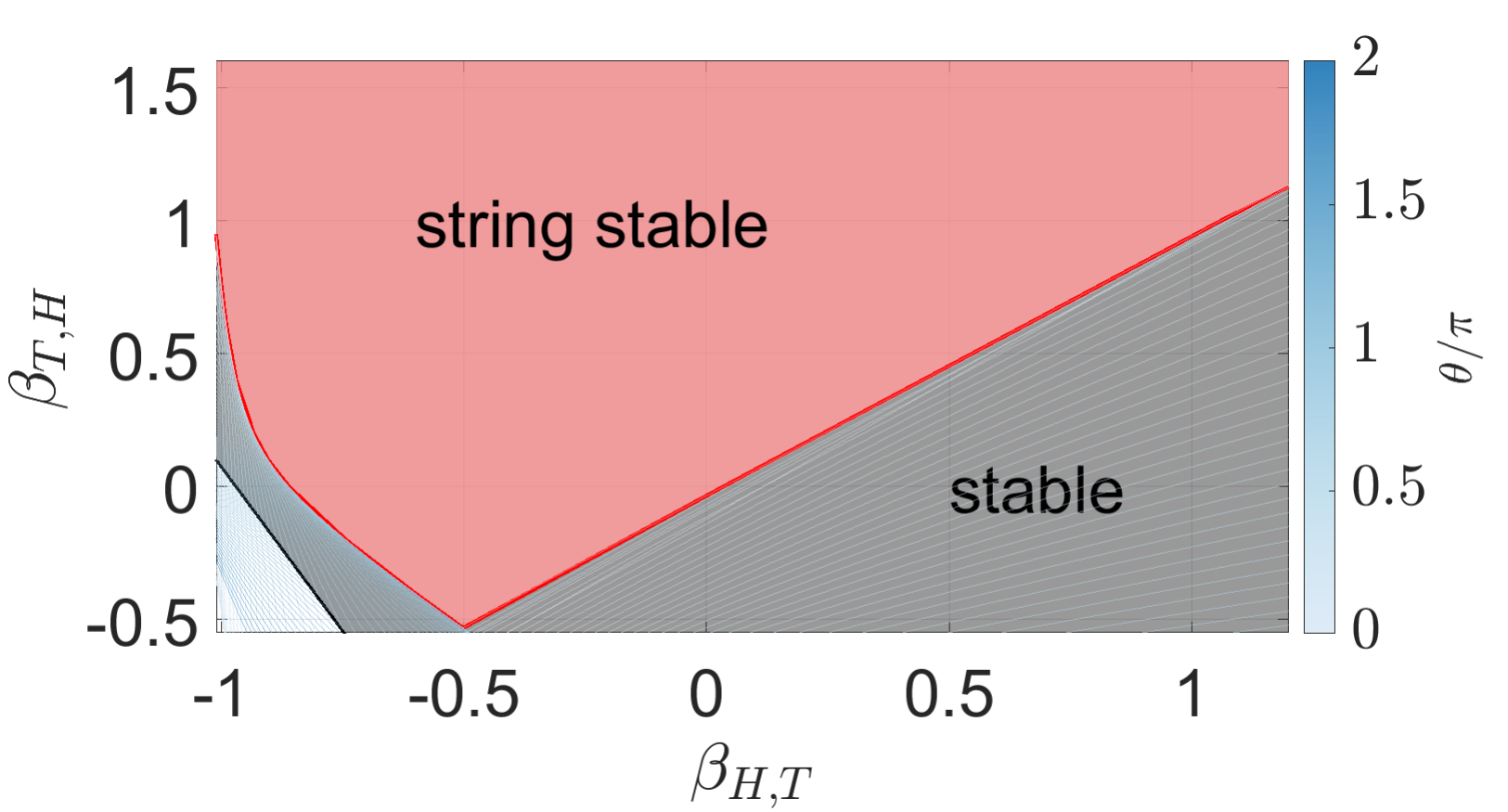} \\  \vspace{0.2cm}
    (b)  Tail CAV looks ahead, i.e.,  HVs are connected to T-CAV  \\
    \includegraphics[width=0.55\linewidth]{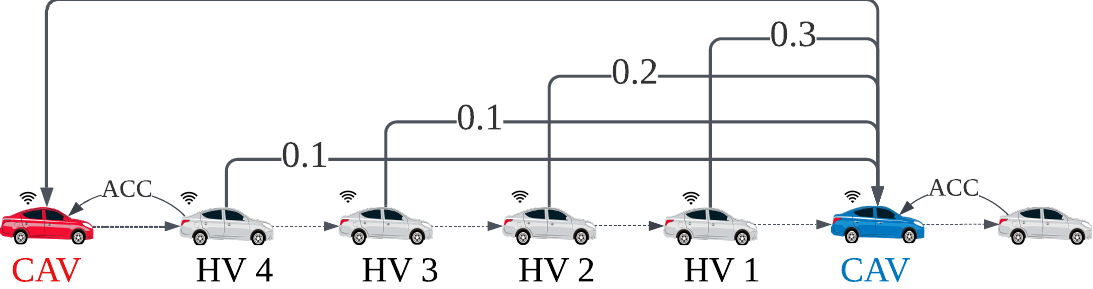} \hspace{2em}
    \includegraphics[width=0.3\linewidth]{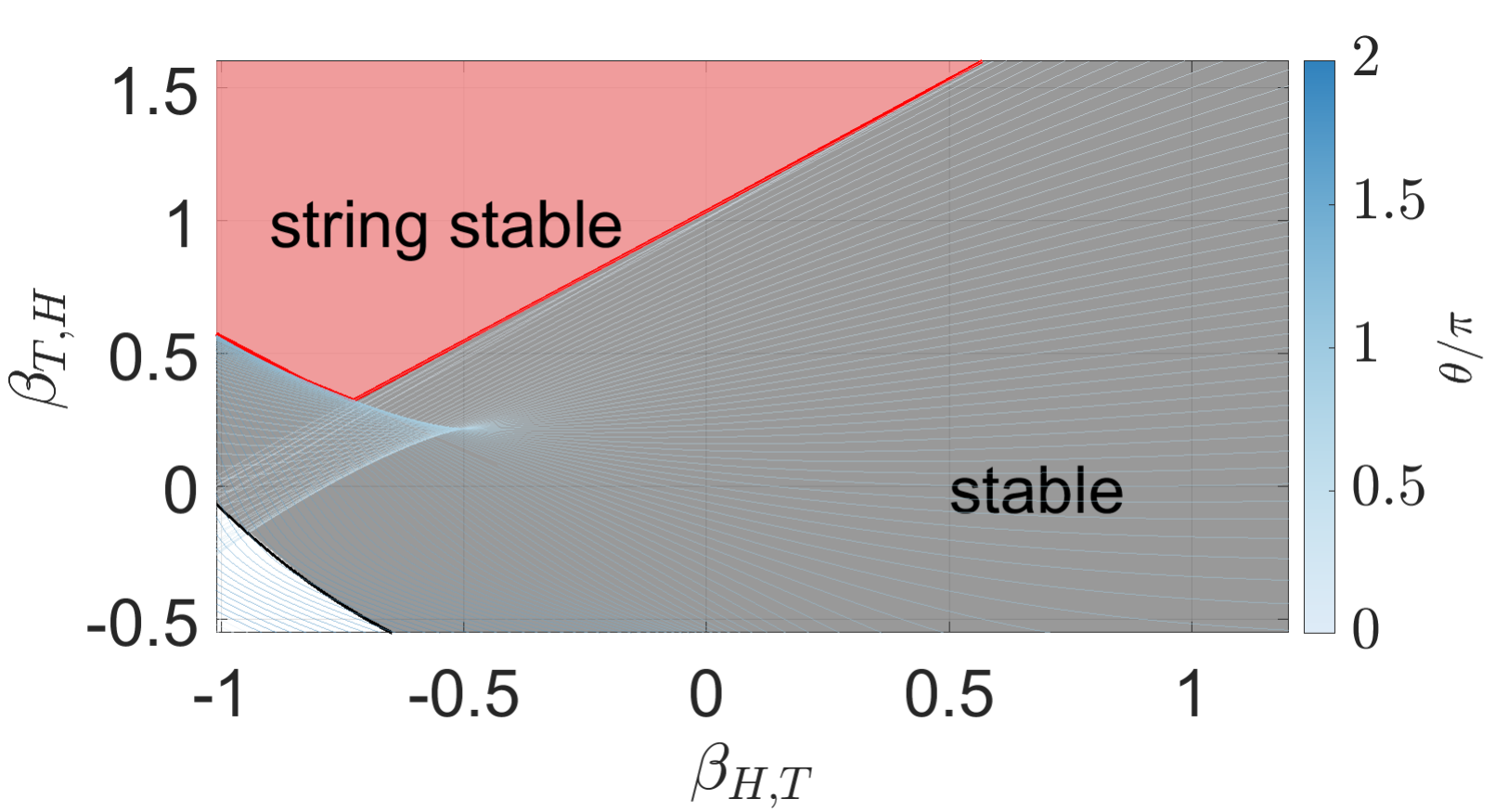}} \\
         \vspace{0.2cm}
    (c) Head CAV looks behind, i.e., HVs are connected to H-CAV 
     \\
     \vspace{0.4cm}
    \includegraphics[width=0.4\linewidth]{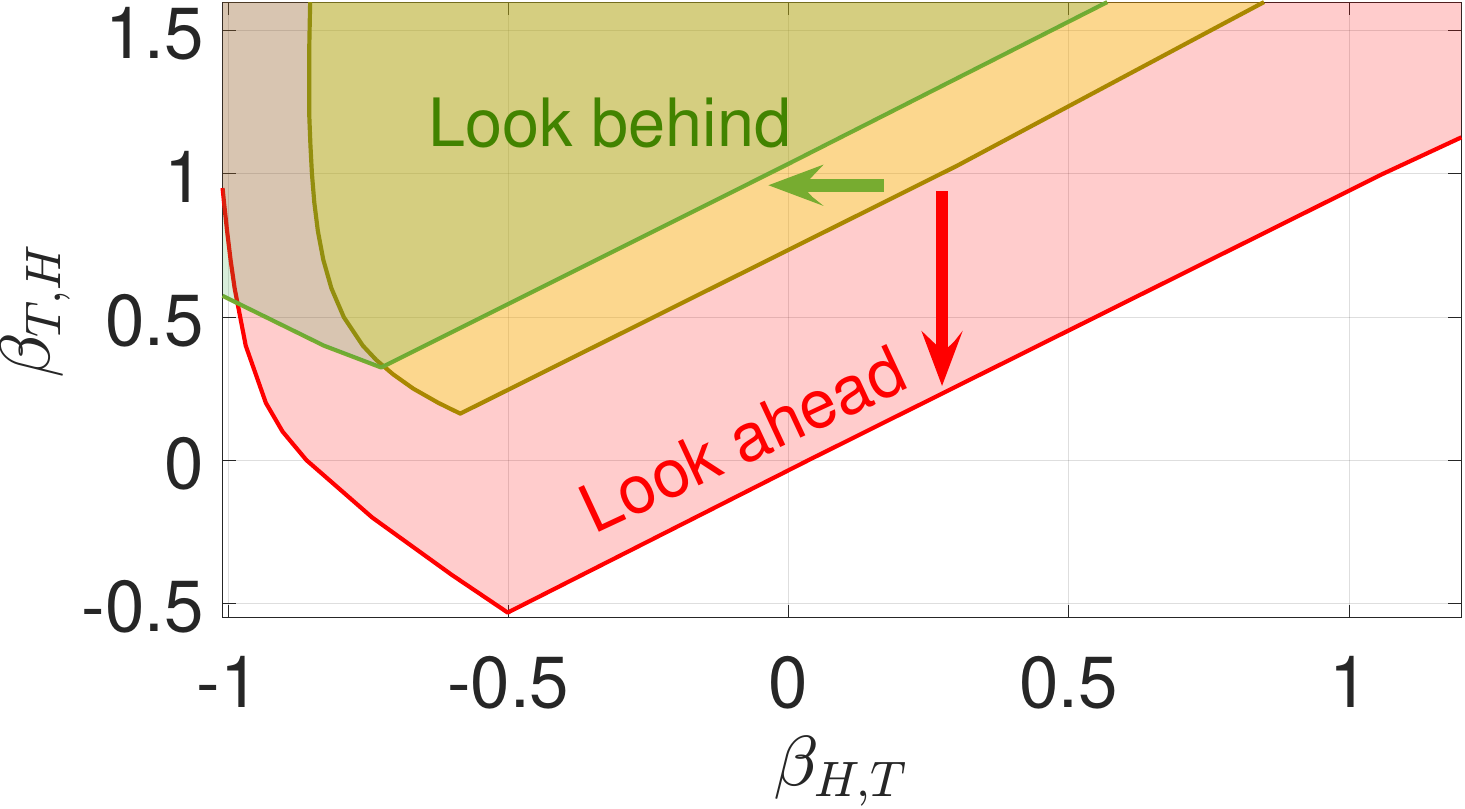}\\
    (d) Comparison of stability region under different HV communication topology: (a) yellow, (b) red, and (c) green
    \caption{Stability charts in the $(\beta_{\headcav,\tailcav},\beta_{\tailcav,\headcav})$ space of control gains. Grey areas and red areas represent plant stability and head-to-tail string stability, respectively. The white one represents unstable region. In panel (a), the middle HVs are not connected and the CAVs do not respond to them. 
    In panel (b), HV~1, HV~2, and HV~3 are connected to the tail CAV who responds to them with gains $\beta_{\tailcav,1} = 0.4$, $\beta_{\tailcav,2} = 0.5$, $\beta_{\tailcav,3} = 0.5$ (while also responding to HV-4 based on range sensors).
    In panel (c), HVs are connected to the head CAV who responds to them with controller gains $\beta_{\headcav,1} = 0.3$, $\beta_{\headcav,2} = 0.2$. $\beta_{\headcav,3} = 0.1$, $\beta_{\headcav,4} = 0.1$.
    In all the three cases, the remaining controller gains are $\alpha_{\headcav} = 0.4$, $\beta_{\headcav,\hhv}=0.6$, $\alpha_{\tailcav}= 0.4$, and $\beta_{\tailcav,4}= 0.6$.
    In panel (d), the stability charts from panels (a)-(c) are compared.}
    \label{fig:stability chart}
\end{figure}

As a numerical example, we consider a vehicle platoon of $N=4$ middle HVs. For the CAV spacing policy in~\eqref{eq:Vs}, we use $s_{\mathrm{st}} = 2$ m, $s_{\mathrm{go}} = 40$ m, and $v_{\max} = 40$ m/s. We set the HV dynamics as the optimal velocity model:
\begin{align}
    F_i(s_i, v_i, \dot{s}_i) = a(V_i(s_i) - v_i) + b\dot s_i, \label{eq:OVM}
\end{align}
where $a>0$ and $b>0$ reflect the human driver's reaction to match its speed $v_i$ to the desired speed $V_i(s_i)$ and the preceding vehicle's speed $v_{i-1}$, respectively. For the human drivers, we take the desired speed $V_i(s)$ also in the form of~\eqref{eq:Vs}, but with $s_{\mathrm{st}}$ and $s_{\mathrm{go}}$ calibrated from trajectories in the NGSIM dataset~\cite{NGSIM}. 
We calibrate the model parameters for HVs as  $a=0.16$ s$^{-1}$, $b=0.16$ s$^{-1}$, $s_{\mathrm{st}} = 1.9 $ m, and $s_{\mathrm{go}} = 46.3$ m.    We set the equilibrium speed of the vehicle platoon as $v^* = 20$ m/s, which gives the equilibrium gap for the HVs as $s_i^* = 24 $ m and for the two CAVs as $s_{\headcav}^* = s_{\tailcav}^* = 21$ m. 

Fig.~\ref{fig:stability chart} presents {\em stability charts} that indicate the stability boundaries and the range of cooperative CAV controller gains in the $(\beta_{\headcav,\tailcav},\beta_{\tailcav,\headcav})$ domain that leads to plant and head-to-tail string stability.

\begin{remark}[Stability impact of ACC mode only]
Fig.~\ref{fig:stability chart}(a) considers the case where the middle HVs are not connected. In this case, if the two CAV are also not connected (i.e., $\beta_{\headcav,\tailcav} = 0$, $\beta_{\tailcav,\headcav} = 0$), then the two CAVs execute ACC, and the vehicle platoon is string unstable. This means that when the head HV has some speed perturbations, the tail CAV will experience a larger speed perturbation through propagation along the vehicle platoon.
\end{remark}

\begin{remark}[Stability impact of CAV cooperation]\label{remark:stability cav}

In Fig.~\ref{fig:stability chart}(a) when the two CAVs only use ACC and CAV coordination, by connecting the two CAVs and using feedback gains ($\beta_{\headcav,\tailcav} $,  $\beta_{\tailcav,\headcav} $) within the string stable region shaded in red, the mixed vehicle platoon becomes head-to-tail string stable. 
This is possible only if the tail CAV's controller responds to the head CAV (i.e., string stability requires $\beta_{\tailcav,\headcav}\ne 0$), while the response of the head CAV to the tail CAV is not necessary (i.e., there exist string stable gains even when $\beta_{\headcav,\tailcav}=0$).
\end{remark}

Now consider the case where the middle HVs are connected to the CAVs. If the tail CAV includes feedback from middle HVs, as in Fig.~\ref{fig:stability chart}(b), then the system can be rendered string stable even if the two CAVs are not connected (i.e., $\beta_{\headcav,\tailcav}=0$ and $\beta_{\tailcav,\headcav}=0$). This shows that for string stability, the tail CAV should include feedback from its downstream traffic, either from the head CAV or the middle HVs. Fig.~\ref{fig:stability chart}(c) shows how the stability region is affected when the head CAV includes middle HV feedback in its controller. Similar to Fig.~\ref{fig:stability chart}(a), the system can be string stable with $\beta_{\headcav,\tailcav}=0$ and proper choice of $\beta_{\tailcav,\headcav}$, i.e., the head CAV can ignore tail CAV's feedback but not vice versa.  

\begin{remark}[Stability impact of connecting HVs]\label{remark:stability hv}
    Fig.~\ref{fig:stability chart}(d) compares the string stability boundaries from Fig.~\ref{fig:stability chart}(a)-(c) that correspond to different communication topologies of HV connection.
    Yellow color indicates when the CAVs are not connected to HVs (cf.~panel (a)), red shows when the tail CAV looks ahead and connects to HVs (cf.~panel (b)), and green corresponds to when the head CAV looks behind and connects to HVs (cf.~panel (c)).
    By connecting HVs to CAVs, i.e. HV feedback considered by CAV control strategies, the string stability boundaries drift as the arrows show. In the numerical example, ``look ahead" HVs behaviors makes stabilization an easier job for cooperation between the two CAVs while ``look behind" HVs behaviors not necessarily implicate that.  For various communication topologies, there exists a large overlap of the string stability regions,  from which we can select the controller gains $\beta_{\headcav,\tailcav}$ and $\beta_{\tailcav,\headcav}$ to stabilize traffic. 
\end{remark}

\section{Safety-critical Control}\label{sec:safety}
In this section, we first define the safety impact of the proposed cooperative CAV controllers~\eqref{eq:nominal controller head CAV} and~\eqref{eq:nominal controller tail CAV}, and then analyze how the choice of controller gains affects the safety.
Secondly, we utilize control barrier functions to design safety filters that constrains control inputs in real-time for safety guarantees. This is realized by formulating and solving an optimization problem that modifies the operation of potentially unsafe nominal controller designs.

\subsection{Longitudinal safety of  CAVs and HVs}

In longitudinal car-following control, safety refers to eliminating the risk of rear-end collision. We adopt surrogate safety measures~\cite{wang2021review} to evaluate the risk of collision for CAVs and HVs in various traffic scenarios. 
To this end, we adopt the constant time headway (CTH) safe spacing policy~\cite{zhao2023safety} for the safety definition of the CAVs and middle HVs.
For a vehicle with speed $v$ and gap $s$,  the CTH policy requires that the gap exceeds a minimum safe value which is the product of a safe time headway $\tau>0$ and the speed:
\begin{align}
    s\ge \tau v.
\end{align}
For the head CAV, we set its safe time headway as $\tau_{\headcav}>0$, and the CTH policy constraints its spacing $s_{\headcav}$ and speed $v_{\headcav}$ as:
\begin{align}
    s_{\headcav} &\ge \tau_{\headcav} v_{\headcav}.
\end{align}
This yields a safe set, i.e., a set of states where the head CAV is considered to be safe: 
\begin{align}
    \mathcal{C}_{\headcav} = \{x \in \mathbb{R}^{n}: h_{\headcav} (x)\ge 0\},
    \label{eq:safe_set_headcav}
\end{align}
with the safety function $h_{\headcav}$ being:
\begin{align}
    h_{\headcav}(x) = s_{\headcav} - \tau_{\headcav} v_{\headcav}.
    \label{eq:safety_function_headcav}
\end{align}
To guarantee safety, the controller should be designed so that if the head CAV is safe initially then it stays safe for all future time.
That is, if $x(0) \in \mathcal{C}_{\headcav}$, then $x(t) \in \mathcal{C}_{\headcav}$ holds for all $t\ge 0$,  which means that the safe set $\mathcal{C}_{\headcav}$ is {\em forward invariant}.
For the tail CAV, similarly, we take a safe time headway as $\tau_{\tailcav}>0$, and define its safe set as:
\begin{align}
    \mathcal{C}_{\tailcav} = \{x \in \mathbb{R}^{n}: h_{\tailcav} (x)\ge 0\},
    \label{eq:safe_set_tailcav}
\end{align}
with the safety function:
\begin{align}
    h_{\tailcav}(x) = s_{\tailcav} - \tau_{\tailcav} v_{\tailcav}.
    \label{eq:safety_function_tailcav}
\end{align}
We aim to ensure forward invariance  of $\mathcal{C}_{\tailcav}$, i.e., if $x(0) \in \mathcal{C}_{\tailcav}$, then $x(t) \in \mathcal{C}_{\tailcav}$ for all ${t \ge 0}$. For HV $i$, we take its safe time headway as $\tau_i$, and the CTH spacing policy 
yields the safety function:
\begin{align}\label{eq:safety_function_hv}
    h_i(x) = s_i - \tau_i v_i.
\end{align}

\begin{remark}[Safety evaluation]
    The safety function $h$ acts as a safety surrogate measurement, and a negative $h$ implies an unsafe car-following scenario that has a higher risk of collision. When $h<0$, the gap $s$ may still be positive. A negative gap $s$ means that a severe collision has happened. In the following analysis, we refer to the CAVs or HVs as ``safe'' if $h \ge 0$.
\end{remark}

\subsection{Safety impact of the nominal controller}

The forward invariance of the safe sets $\mathcal{C}_{\headcav}$ and $\mathcal{C}_{\tailcav}$ for the mixed traffic system~\eqref{eq:system} is established using Nagumo's theorem.
\begin{lemma}[Nagumo's theorem \cite{nagumo1942lage}] \label{theorem:nagumo safety}
Consider the system:
\begin{align}
    \dot x = F(x),
\end{align}
with state $x\in \mathbb{R}^{n}$, safe set $\mathcal{C}$, and safety function $h:\mathbb{R}^{n} \to \mathbb{R}$ such that ${\nabla h(x) \neq 0}$ if ${h(x) = 0}$. 
The system is safe w.r.t.~$\mathcal{C}$ (that is, $\mathcal{C}$ is forward invariant) if and only if:
\begin{equation}
    \dot h(x) = \nabla h(x) \cdot F(x) \geq 0
    \label{eq:Nagumo}
\end{equation}
holds for all ${x \in \mathbb{R}^{n}}$ satisfying ${h(x) = 0}$.
\end{lemma}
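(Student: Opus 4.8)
The statement is an equivalence, so the plan is to prove the two implications separately. Throughout I would use the regularity hypothesis $\nabla h(x) \neq 0$ whenever $h(x) = 0$ to identify the boundary of the safe set with the zero level set, $\partial\mathcal{C} = \{x : h(x) = 0\}$, so that a solution can leave $\mathcal{C}$ only by crossing this set. I would also treat $\phi(t) := h(x(t))$ as a scalar $C^1$ function whose derivative along solutions is, by the chain rule, exactly $\dot\phi(t) = \nabla h(x(t)) \cdot F(x(t))$, the quantity appearing in \eqref{eq:Nagumo}.

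For necessity (invariance $\Rightarrow$ condition) I would fix an arbitrary boundary point $x_0$ with $h(x_0) = 0$ and let $x(t)$ be the solution with $x(0) = x_0$. Since $x_0 \in \mathcal{C}$ and $\mathcal{C}$ is forward invariant, $\phi(t) \geq 0 = \phi(0)$ for all $t \geq 0$, so $\phi$ attains a one-sided minimum at $t = 0$ and its right derivative is nonnegative. Hence $\nabla h(x_0) \cdot F(x_0) = \dot\phi(0^+) \geq 0$, which is \eqref{eq:Nagumo} at $x_0$; as $x_0$ was arbitrary on $\partial\mathcal{C}$, this direction is complete.

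For sufficiency (condition $\Rightarrow$ invariance) I would argue by contradiction. Suppose some solution with $x(0) \in \mathcal{C}$ satisfies $h(x(T)) < 0$ for some $T > 0$, and set $t_0 = \sup\{t \in [0,T] : h(x(t)) \geq 0\}$, the last instant on the boundary; by continuity $h(x(t_0)) = 0$ and $h(x(t)) < 0$ for all $t \in (t_0, T]$. The hypothesis gives $\dot\phi(t_0) = \nabla h(x(t_0)) \cdot F(x(t_0)) \geq 0$, while $\phi$ being strictly negative immediately to the right of $t_0$ forces the right derivative $\dot\phi(t_0) \leq 0$; together these yield only $\dot\phi(t_0) = 0$. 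This is the main obstacle: the non-strict boundary condition is compatible, to first order, with a trajectory grazing $\partial\mathcal{C}$ and then escaping, so the first-order argument alone is inconclusive.

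To close this gap I would promote the inequality to a strict one by perturbation. On a neighborhood of $\partial\mathcal{C}$ (where $\nabla h \neq 0$) consider the modified field $F_\epsilon(x) = F(x) + \epsilon\,\nabla h(x)/\|\nabla h(x)\|^2$, for which $\nabla h \cdot F_\epsilon = \nabla h \cdot F + \epsilon \geq \epsilon > 0$ on $\{h = 0\}$; for this strict condition the contradiction argument of the previous paragraph does go through, so $\mathcal{C}$ is forward invariant under $F_\epsilon$, and in particular the perturbed solution $x_\epsilon$ from $x(0)$ satisfies $h(x_\epsilon(T)) \geq 0$. Letting $\epsilon \to 0^+$ and invoking continuous dependence of solutions on the vector field gives $h(x(T)) = \lim_{\epsilon \to 0^+} h(x_\epsilon(T)) \geq 0$, contradicting $h(x(T)) < 0$. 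I would note that this route, like the whole sufficiency argument, relies on local Lipschitz continuity of $F$ to guarantee existence and uniqueness of solutions on $[0,T]$. Alternatively, whenever one can establish the stronger differential inequality $\dot\phi(t) \geq -\alpha(\phi(t))$ for an extended class-$\mathcal{K}$ function $\alpha$ (as holds in our application, where $h$ is affine and $F$ is the closed-loop field), invariance follows at once from the comparison lemma applied to $\dot\phi \geq -\alpha(\phi)$ with $\phi(0) \geq 0$, bypassing the perturbation argument entirely.
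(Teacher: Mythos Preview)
The paper does not prove this lemma; it is stated as Nagumo's classical result with a citation and then invoked as a tool (for instance, in the proof of Theorem~\ref{theorem:safety nominal head}). There is therefore no in-paper argument to compare against.

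Your proposal is a standard and essentially correct route. The necessity direction is clean. For sufficiency, your identification of the grazing obstruction and the perturbation remedy $F_\epsilon = F + \epsilon\,\nabla h/\|\nabla h\|^2$ are both classical. One technical point deserves care: $F_\epsilon$ as written is defined only where $\nabla h \neq 0$, yet you invoke the perturbed solution $x_\epsilon$ from the initial point $x(0)$, which may lie outside that set; the usual fix is to multiply the perturbation by a smooth cutoff supported near $\partial\mathcal{C}$ so that $F_\epsilon$ is globally Lipschitz, or to localize the argument to a short interval $[t_0, t_0+\delta]$ with the perturbed flow started at $x(t_0) \in \partial\mathcal{C}$. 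Your alternative via the comparison lemma and $\dot\phi \geq -\alpha(\phi)$ is precisely the mechanism behind the CBF framework the paper relies on downstream.
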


Using condition~\eqref{eq:Nagumo} in Lemma~\ref{theorem:nagumo safety}, we determine the gains of the nominal controller~\eqref{eq:nominal controller head CAV} that ensure safety for the head CAV.
This is summarized in Theorem~\ref{theorem:safety nominal head}, whose proof is given in Appendix~\ref{appdx:safety}.

\begin{theorem}[Safety of the nominal head CAV controller]\label{theorem:safety nominal head}
System~\eqref{eq:closed_loop_system} with the nominal controller~\eqref{eq:nominal controller head CAV} of the head CAV and the range policy~\eqref{eq:Vs} is safe w.r.t.~$\mathcal{C}_{\headcav}$ defined in~\eqref{eq:safe_set_headcav}-\eqref{eq:safety_function_headcav},  if $v_{\hhv}, v_{\headcav}, v_{i}, v_{\tailcav} \in [0,v_{\max}]$, $s_{\headcav} \in [s_{\mathrm{st}},s_{\mathrm{go}}]$, and if the controller parameters satisfy ${\kappa \leq 1/\tau_{\headcav}}$ and:
\begin{align}
    \alpha_{\headcav} \ge \bigg( |  1-\tau_{\headcav} \beta_{\headcav,{\hhv}}| + \tau_{\headcav} \sum_{i \in \mathcal{N}_{\headcav}} |\beta_{\headcav,i}| + \tau_{\headcav} |\beta_{\headcav,\tailcav}| \bigg) \frac{v_{\max}}{s_{\mathrm{st}}}.
    \label{eq:head_CAV_safe_gains}
\end{align}
\end{theorem}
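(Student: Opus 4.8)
The plan is to invoke Nagumo's theorem (Lemma~\ref{theorem:nagumo safety}) with the safety function $h_{\headcav}$ from~\eqref{eq:safety_function_headcav}. Since $\nabla h_{\headcav}$ has a unit entry in the $s_{\headcav}$-coordinate (and $-\tau_{\headcav}$ in the $v_{\headcav}$-coordinate), it never vanishes, so the regularity hypothesis holds automatically and it suffices to verify $\dot h_{\headcav}(x)\ge 0$ at every admissible boundary state, i.e. those with $h_{\headcav}(x)=0$, $v_{\hhv},v_{\headcav},v_i,v_{\tailcav}\in[0,v_{\max}]$, and $s_{\headcav}\in[s_{\mathrm{st}},s_{\mathrm{go}}]$, treating the disturbance $v_{\hhv}$ as an arbitrary admissible parameter.

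Next I would compute $\dot h_{\headcav}$ along the closed-loop dynamics~\eqref{eq:closed_loop_system}. Using $\dot s_{\headcav}=v_{\hhv}-v_{\headcav}$ from~\eqref{eq:system head CAV s} and $\dot v_{\headcav}=k_{\headcav,{\rm n}}(x,v_{\hhv})$, one gets $\dot h_{\headcav}=(v_{\hhv}-v_{\headcav})-\tau_{\headcav}k_{\headcav,{\rm n}}(x,v_{\hhv})$. Because every relevant speed lies in $[0,v_{\max}]$, the saturation~\eqref{eq:W} acts as the identity, $W(v)=v$, so the controller~\eqref{eq:nominal controller head CAV} reduces to a linear combination of the relative-speed terms. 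Substituting and collecting the $(v_{\hhv}-v_{\headcav})$ contributions yields $\dot h_{\headcav}=(1-\tau_{\headcav}\beta_{\headcav,\hhv})(v_{\hhv}-v_{\headcav})-\tau_{\headcav}\alpha_{\headcav}(V_{\headcav}(s_{\headcav})-v_{\headcav})-\tau_{\headcav}\sum_{i\in\mathcal{N}_{\headcav}}\beta_{\headcav,i}(v_i-v_{\headcav})-\tau_{\headcav}\beta_{\headcav,\tailcav}(v_{\tailcav}-v_{\headcav})$.

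The crux of the argument, which I expect to be the main obstacle, is to show the ACC term $-\tau_{\headcav}\alpha_{\headcav}(V_{\headcav}(s_{\headcav})-v_{\headcav})$ is a guaranteed positive braking contribution that is bounded below independently of the state. On the boundary I substitute $s_{\headcav}=\tau_{\headcav}v_{\headcav}$; since $s_{\headcav}\in[s_{\mathrm{st}},s_{\mathrm{go}}]$ the range policy~\eqref{eq:Vs} gives $V_{\headcav}(s_{\headcav})=\kappa(s_{\headcav}-s_{\mathrm{st}})$, so $V_{\headcav}(s_{\headcav})-v_{\headcav}=(\kappa\tau_{\headcav}-1)v_{\headcav}-\kappa s_{\mathrm{st}}$. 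Using the hypothesis $\kappa\le 1/\tau_{\headcav}$ (so the coefficient $1-\kappa\tau_{\headcav}$ is nonnegative) together with the boundary bound $v_{\headcav}=s_{\headcav}/\tau_{\headcav}\ge s_{\mathrm{st}}/\tau_{\headcav}$ produces the clean estimate $-(V_{\headcav}(s_{\headcav})-v_{\headcav})=(1-\kappa\tau_{\headcav})v_{\headcav}+\kappa s_{\mathrm{st}}\ge s_{\mathrm{st}}/\tau_{\headcav}$, whence $-\tau_{\headcav}\alpha_{\headcav}(V_{\headcav}(s_{\headcav})-v_{\headcav})\ge\alpha_{\headcav}s_{\mathrm{st}}$. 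The delicate point is that $\kappa$ cancels in this lower bound, which is why the final gain condition is $\kappa$-free.

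Finally I would bound the three remaining terms from below by the triangle inequality, using $|v_{\hhv}-v_{\headcav}|,\,|v_i-v_{\headcav}|,\,|v_{\tailcav}-v_{\headcav}|\le v_{\max}$, to obtain $\dot h_{\headcav}\ge\alpha_{\headcav}s_{\mathrm{st}}-\bigl(|1-\tau_{\headcav}\beta_{\headcav,\hhv}|+\tau_{\headcav}\sum_{i\in\mathcal{N}_{\headcav}}|\beta_{\headcav,i}|+\tau_{\headcav}|\beta_{\headcav,\tailcav}|\bigr)v_{\max}$. Requiring the right-hand side to be nonnegative and dividing through by $s_{\mathrm{st}}>0$ reproduces exactly condition~\eqref{eq:head_CAV_safe_gains}; hence $\dot h_{\headcav}\ge 0$ on the boundary and Lemma~\ref{theorem:nagumo safety} gives forward invariance of $\mathcal{C}_{\headcav}$, completing the proof.
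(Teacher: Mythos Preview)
Your proposal is correct and follows essentially the same approach as the paper's proof: invoke Nagumo's theorem, compute $\dot h_{\headcav}=(v_{\hhv}-v_{\headcav})-\tau_{\headcav}k_{\headcav,{\rm n}}$, use the speed bounds to drop the saturation $W$, substitute $s_{\headcav}=\tau_{\headcav}v_{\headcav}$ on the boundary together with the linear branch of $V_{\headcav}$, extract the lower bound $\alpha_{\headcav}s_{\mathrm{st}}$ from the ACC term via $\kappa\le 1/\tau_{\headcav}$, and absorb the remaining relative-speed terms by $v_{\max}$. The only cosmetic difference is that the paper phrases the ACC estimate in terms of $s_{\headcav}$ (showing $\kappa\tau_{\headcav}(s_{\headcav}-s_{\mathrm{st}})-s_{\headcav}\le -s_{\mathrm{st}}$) while you phrase it in terms of $v_{\headcav}$; the two are equivalent after the boundary substitution.
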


\begin{remark}[Safe controller gains] \label{remark:nominal safe gain}
    The safety criterion~\eqref{eq:head_CAV_safe_gains} in Theorem~\ref{theorem:safety nominal head} can be interpreted as follows.
    It provides safe choices of controller gains $\alpha_{\headcav}$ and $\beta_{\headcav,\hhv}$ that are related to the adaptive cruise control term in~\eqref{eq:nominal controller head CAV}:
    \begin{align}
    \frac{s_{\mathrm{st}}}{v_{\max}} \alpha_{\headcav} - |  1-\tau_{\headcav} \beta_{\headcav,{\hhv}}|  \ge  \tau_{\headcav} \sum_{i \in \mathcal{N}_{\headcav}} |\beta_{\headcav,i}| + \tau_{\headcav} |\beta_{\headcav,\tailcav}| .
    \end{align}
    It also gives a maximum safe CAV coordination gain $\beta_{\headcav,\tailcav}$ as:
    \begin{align}
        |\beta_{\headcav,\tailcav}|  \le \frac{s_{\mathrm{st}}}{v_{\max} \tau_{\headcav}} \alpha_{\headcav} - \bigg|  \frac{1}{\tau_{\headcav}}-\beta_{\headcav,{\hhv}} \bigg| -  \sum_{i \in \mathcal{N}_{\headcav}} |\beta_{\headcav,i}|.
    \end{align}
    Such $\beta_{\headcav,\tailcav}$ gain exists only if the right-hand side is non-negative. When $\beta_{\headcav,i} = 0$ and $\beta_{\headcav,\hhv} = 1/\tau_{\headcav}$, the right-hand side has the maximum as $\alpha_{\headcav} s_{\mathrm{st}} / (v_{\max} \tau_{\headcav})$. 
    Similarly, condition~\eqref{eq:head_CAV_safe_gains} also provides safe gains considering the feedback of HV states:
    \begin{align}
        \sum_{i \in \mathcal{N}_{\headcav}} |\beta_{\headcav,i}|  &\le \frac{s_{\mathrm{st}}}{v_{\max} \tau_{\headcav}} \alpha_{\headcav} - \bigg|  \frac{1}{\tau_{\headcav} }-\beta_{\headcav,{\hhv}} \bigg| - |\beta_{\headcav,\tailcav}| , 
    \end{align}
    where the right-hand side should again be non-negative to ensure the existence of such $\beta_{\headcav,i}$. When $\beta_{\headcav,\hhv} = 1/\tau_{\headcav}$ and $\beta_{\headcav,\tailcav} = 0$, the right-hand side has the maximum as $\alpha_{\headcav} s_{\mathrm{st}} / (v_{\max} \tau_{\headcav})$.
\end{remark}

To ensure safety for the tail CAV, the nominal controller~\eqref{eq:nominal controller tail CAV} must satisfy similar criteria given in Theorem~\ref{theorem:safety nominal tail}.
The proof of this theorem follows the same steps as for Theorem~\ref{theorem:safety nominal head}, hence it is omitted.
\begin{theorem}[Safety of the nominal tail CAV controller]\label{theorem:safety nominal tail}
System~\eqref{eq:system} with the nominal controller~\eqref{eq:nominal controller tail CAV} of the tail CAV and the range policy~\eqref{eq:Vs} is safe w.r.t. $\mathcal{C}_{\tailcav}$ defined in~\eqref{eq:safe_set_tailcav}-\eqref{eq:safety_function_tailcav}, if $v_{\hhv},v_{\headcav},v_i,v_{\tailcav} \in [0,v_{\max}]$, $s_{\tailcav} \in [s_{\mathrm{st}},s_{\mathrm{go}}]$, and if the controller parameters satisfy $\kappa \le 1/\tau_{\tailcav}$ and: 
\begin{align}
    \alpha_{\tailcav} \ge \bigg(| 1- \tau_{\tailcav}   \beta_{\tailcav,{\HVn}}| + \tau_{\tailcav} \sum_{i \in \mathcal{N}_{\tailcav}} |\beta_{\tailcav,i}| + \tau_{\tailcav} |\beta_{\tailcav,\headcav}| \bigg) \frac{v_{\max}}{s_{\mathrm{st}}}.
    \label{eq:tail_CAV_safe_gains}
\end{align}
\end{theorem}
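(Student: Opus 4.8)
The plan is to apply Nagumo's theorem (Lemma~\ref{theorem:nagumo safety}) with $h = h_{\tailcav}$ and $\mathcal{C} = \mathcal{C}_{\tailcav}$, mirroring the proof of Theorem~\ref{theorem:safety nominal head}. The gradient $\nabla h_{\tailcav}$ is a constant vector selecting $s_{\tailcav}$ with coefficient $1$ and $v_{\tailcav}$ with coefficient $-\tau_{\tailcav}$, so it never vanishes and the hypothesis of the lemma holds. It therefore suffices to show $\dot h_{\tailcav}(x)\ge 0$ at every boundary state, i.e.\ whenever $h_{\tailcav}(x)=0$, which by~\eqref{eq:safety_function_tailcav} means $s_{\tailcav}=\tau_{\tailcav}v_{\tailcav}$, equivalently $v_{\tailcav}=s_{\tailcav}/\tau_{\tailcav}$.

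First I would differentiate along the closed loop. Using the tail-CAV dynamics~\eqref{eq:system tail CAV s}--\eqref{eq:system tail CAV v} with $u_{\tailcav}=k_{\tailcav,{\rm n}}(x)$ from~\eqref{eq:nominal controller tail CAV},
\[ \dot h_{\tailcav} = \dot s_{\tailcav} - \tau_{\tailcav}\dot v_{\tailcav} = (v_N - v_{\tailcav}) - \tau_{\tailcav} k_{\tailcav,{\rm n}}(x). \]
Since all speeds lie in $[0,v_{\max}]$, the saturation $W$ in~\eqref{eq:W} reduces to the identity, so $W(v_N)=v_N$, $W(v_i)=v_i$, and $W(v_{\headcav})=v_{\headcav}$; this avoids any case split coming from the saturation. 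Substituting the controller and collecting the two terms proportional to $v_N-v_{\tailcav}$ gives
\[ \dot h_{\tailcav} = (1-\tau_{\tailcav}\beta_{\tailcav,N})(v_N - v_{\tailcav}) + \tau_{\tailcav}\alpha_{\tailcav}\big(v_{\tailcav} - V_{\tailcav}(s_{\tailcav})\big) - \tau_{\tailcav}\sum_{i\in\mathcal{N}_{\tailcav}}\beta_{\tailcav,i}(v_i - v_{\tailcav}) - \tau_{\tailcav}\beta_{\tailcav,\headcav}(v_{\headcav}-v_{\tailcav}). \]

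The idea is then to show that the single positive adaptive-cruise-control term dominates the remaining ``disturbance'' terms. Because every speed and $v_{\tailcav}$ lie in $[0,v_{\max}]$, each velocity difference satisfies $|v_\bullet - v_{\tailcav}|\le v_{\max}$, so the three disturbance terms contribute at least $-\big(|1-\tau_{\tailcav}\beta_{\tailcav,N}| + \tau_{\tailcav}\sum_{i}|\beta_{\tailcav,i}| + \tau_{\tailcav}|\beta_{\tailcav,\headcav}|\big)v_{\max}$. For the ACC term I would use the boundary identity $v_{\tailcav}=s_{\tailcav}/\tau_{\tailcav}$ and the range policy~\eqref{eq:Vs}, which on $s_{\tailcav}\in[s_{\mathrm{st}},s_{\mathrm{go}}]$ reads $V_{\tailcav}(s_{\tailcav})=\kappa(s_{\tailcav}-s_{\mathrm{st}})$, to obtain
\[ v_{\tailcav} - V_{\tailcav}(s_{\tailcav}) = s_{\tailcav}\Big(\tfrac{1}{\tau_{\tailcav}}-\kappa\Big)+\kappa s_{\mathrm{st}} \ge s_{\mathrm{st}}\Big(\tfrac{1}{\tau_{\tailcav}}-\kappa\Big)+\kappa s_{\mathrm{st}} = \frac{s_{\mathrm{st}}}{\tau_{\tailcav}}, \]
where $\kappa\le 1/\tau_{\tailcav}$ makes the coefficient of $s_{\tailcav}$ nonnegative and $s_{\tailcav}\ge s_{\mathrm{st}}$ is then applied. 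Hence the ACC term is at least $\alpha_{\tailcav}s_{\mathrm{st}}$, and combining the two estimates yields $\dot h_{\tailcav}\ge \alpha_{\tailcav}s_{\mathrm{st}} - \big(|1-\tau_{\tailcav}\beta_{\tailcav,N}| + \tau_{\tailcav}\sum_{i}|\beta_{\tailcav,i}| + \tau_{\tailcav}|\beta_{\tailcav,\headcav}|\big)v_{\max}$, which is nonnegative exactly under the gain condition~\eqref{eq:tail_CAV_safe_gains}. Nagumo's theorem then gives forward invariance of $\mathcal{C}_{\tailcav}$.

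I expect the main difficulty to be bookkeeping rather than conceptual. The delicate points are grouping the two $v_N$ contributions so that the factor $1-\tau_{\tailcav}\beta_{\tailcav,N}$ appears with its sign undetermined (hence entering as $|1-\tau_{\tailcav}\beta_{\tailcav,N}|$ after the worst-case bound), and chaining the two inequalities $\kappa\le 1/\tau_{\tailcav}$ and $s_{\tailcav}\ge s_{\mathrm{st}}$ in the correct order to land the clean lower bound $s_{\mathrm{st}}/\tau_{\tailcav}$ on $v_{\tailcav}-V_{\tailcav}(s_{\tailcav})$; reversing them would yield the weaker $\kappa s_{\mathrm{st}}$ and fail to reproduce~\eqref{eq:tail_CAV_safe_gains}. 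Verifying that $W$ collapses to the identity over the stated speed interval is what keeps the whole computation free of extra cases.
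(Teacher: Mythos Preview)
Your proposal is correct and follows essentially the same approach as the paper, which explicitly states that the proof of Theorem~\ref{theorem:safety nominal tail} mirrors that of Theorem~\ref{theorem:safety nominal head}: apply Nagumo's theorem, compute $\dot h_{\tailcav}$ on the boundary $s_{\tailcav}=\tau_{\tailcav}v_{\tailcav}$, bound the velocity-difference terms by $v_{\max}$, and use $\kappa\le 1/\tau_{\tailcav}$ with $s_{\tailcav}\ge s_{\mathrm{st}}$ to get the ACC lower bound $\alpha_{\tailcav}s_{\mathrm{st}}$. Your added remark that $\nabla h_{\tailcav}$ is a nonzero constant vector (so the gradient hypothesis of Lemma~\ref{theorem:nagumo safety} is met) is a nice explicit check the paper leaves implicit.
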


\begin{figure}[t]
    \centering
    \subfloat[Safe ACC gain for head CAV]{\includegraphics[width=0.24\linewidth]{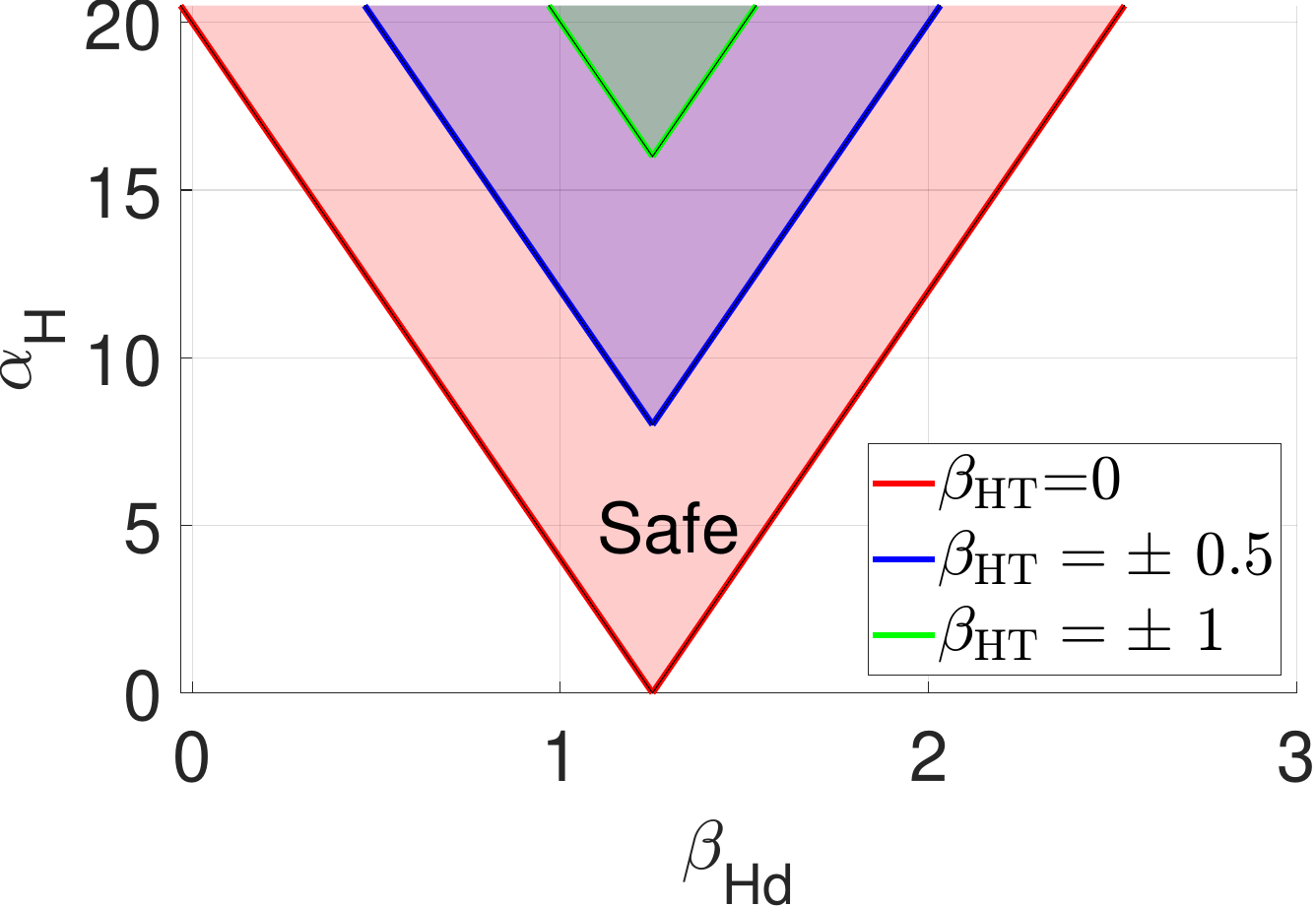}}
    \hspace{0.2em} \subfloat[Safe ACC gain for tail CAV]{\includegraphics[width=0.24\linewidth]{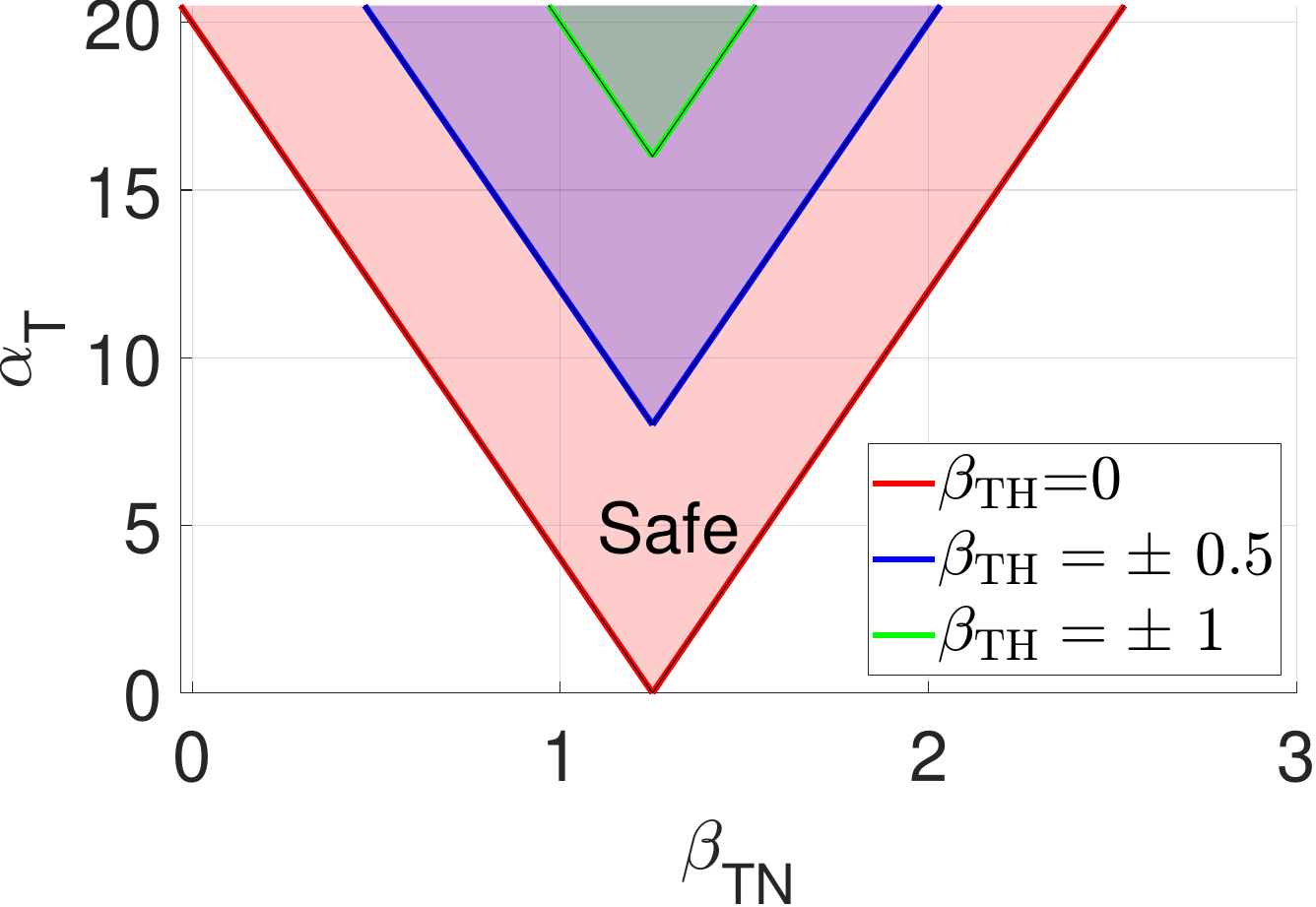}}
    \hspace{0.2em}  \subfloat[Safe ACC gain for head CAV]{\includegraphics[width=0.24\linewidth]{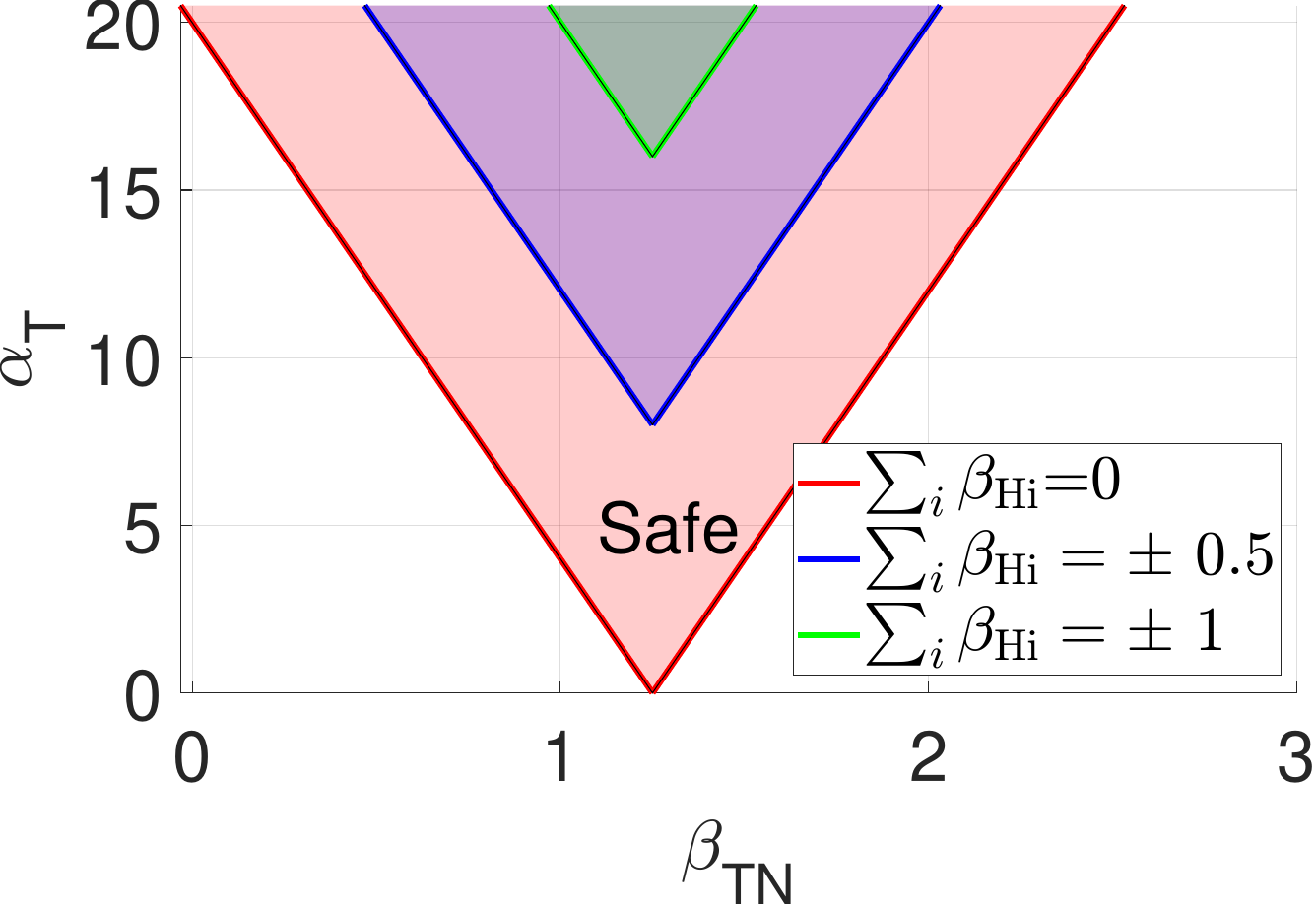}}
    \hspace{0.2em} \subfloat[Safety and stability charts with CAV cooperation  gain]{\includegraphics[width=0.24\linewidth]{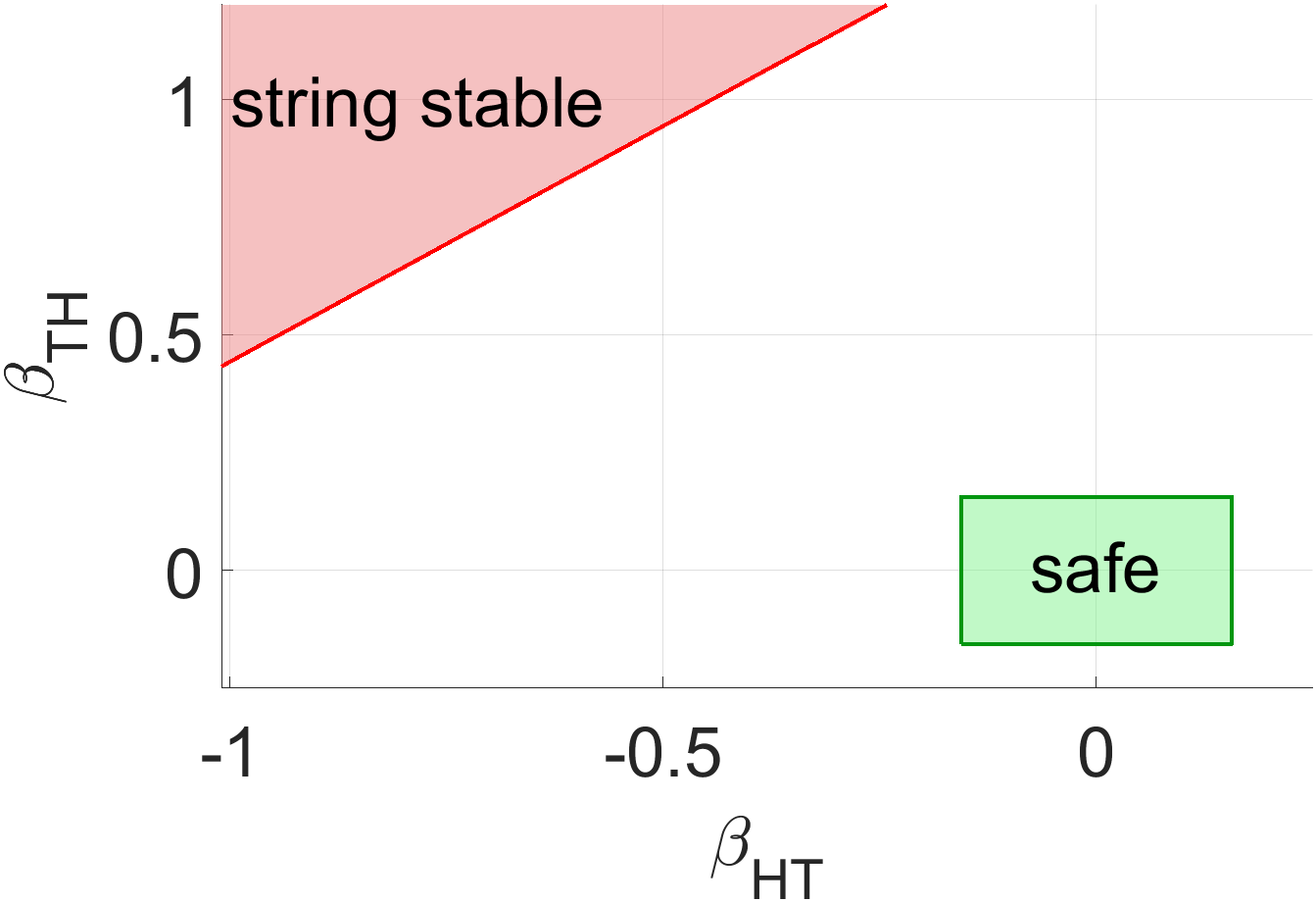}}
    \caption{
    Safety charts in the space of control gains for the the nominal controller~\eqref{eq:nominal controller head CAV}-\eqref{eq:nominal controller tail CAV}.
    (a) Safe $(\beta_{\headcav,\hhv},\alpha_{\headcav})$ gains considering the head CAV's safety (We take the spacing policy $V(s)$  the same as in Fig.~\ref{fig:stability chart}. The CAV coordination gains are set as $\beta_{\headcav,\tailcav} = 0$. We choose safe time headway as $\tau_{\headcav} = 0.8$ s.);
    (b) safe $(\beta_{\tailcav,N},\alpha_{\tailcav})$ gains associated with the tail CAV's safety (We take the spacing policy $V(s)$  the same as in Fig.~\ref{fig:stability chart}. The CAV coordination gains are set as $\beta_{\tailcav,\headcav} = 0$. We choose safe time headway as $\tau_{\tailcav} = 0.8$ s.); (c) safe ACC gain for $(\beta_{\headcav,\hhv},\alpha_{\headcav})$ for the head CAV with different  HV feedback gains (We set $\beta_{\headcav,\tailcav}=0$); and
    (d) safe $(\beta_{\headcav,\tailcav},\beta_{\tailcav,\headcav})$ gains for the safety of both CAVs. 
    The shaded region indicates the range of gains that ensure safety for the respective CAVs based on Theorems~\ref{theorem:safety nominal head} and~\ref{theorem:safety nominal tail}.
    Notice that the gains of the nominal controller are restricted if one intends to achieve provably safe behavior (i.e., $\alpha_{\headcav}$ and $\alpha_{\tailcav}$ must be very high or $\beta_{\headcav,\tailcav}$ and $\beta_{\tailcav,\headcav}$ must be very small).
    This motivates the introduction of safety filters to enforce safe behaviors by deviating from the nominal controller to prevent safety violation.}
    \label{fig:safety chart nominal}
\end{figure}

The range of safe controller gains provided by Theorems~\ref{theorem:safety nominal head} and~\ref{theorem:safety nominal tail} are depicted as {\em safety charts} in Fig.~\ref{fig:safety chart nominal}.

\begin{remark}[Safety impact of CAV cooperation]\label{remark:safety cav}

The shaded domain in Fig.~\ref{fig:safety chart nominal}(a) shows the safe controller gains for the head CAV in the $(\beta_{\headcav,\hhv},\alpha_{\headcav})$ space for various values of $\beta_{\headcav,\tailcav}$ based on~\eqref{eq:head_CAV_safe_gains}.  
It can be observed that including the cooperation with the tail CAV (i.e., taking $\beta_{\headcav,\tailcav} \ne 0$) makes the safety region shift towards higher $\alpha_{\headcav}$ gains. Similarly, Fig.~\ref{fig:safety chart nominal}(b) shows the safe gains for the tail CAV in the $(\beta_{\tailcav,N},\alpha_{\tailcav})$ space for various $\beta_{\tailcav,\headcav}$ values based on~\eqref{eq:tail_CAV_safe_gains}.
The same trend is showcased: the safety region shifts up by including the cooperation with the head CAV ($\beta_{\tailcav,\headcav} \ne 0$).
\end{remark}

\begin{remark}[Safety impact of connecting HVs]\label{remark:safety hv}
  We plot the safe ACC gain in Fig.~\ref{fig:safety chart nominal}(c) for the head CAV. When the CAV includes feedbacks from middle HVs, its safety region shifts up and requires larger $\alpha_{\headcav}$. The effect of HV  feedback on the tail CAV safety is also similar, and we ignore the figure.

\end{remark}

\begin{remark}[Trade-off between stability and safety]

The cooperation between CAVs enhances stability as in Remark~\ref{remark:stability cav} but makes it harder to guarantee safety as in Remark~\ref{remark:safety cav}. This trade-off also exists in the connectivity of HVs: From Remark~\ref{remark:stability hv} regarding Fig.~\ref{fig:stability chart}, the stability region grows by setting proper controller gains for the feedback of HV states. Meanwhile, Remark~\ref{remark:safety hv} shows that the selection of safe gains becomes more limited by connecting to HVs (i.e., the right-hand sides of~\eqref{eq:head_CAV_safe_gains} and~\eqref{eq:tail_CAV_safe_gains} increase for $\beta_{\headcav,i} \ne 0$ and $\beta_{\tailcav,i} \ne 0$).

Fig.~\ref{fig:safety chart nominal}(d) indicates the safe CAV cooperation gains in green in the $(\beta_{\headcav,\tailcav},\beta_{\tailcav,\headcav})$ space considering the safety of both CAVs. We set $\alpha_{\headcav} =1$, $\alpha_{\tailcav} =1$. We consider no HV connection case, i.e., $\beta_{\headcav,i} = 0$, and $\beta_{\tailcav,i} = 0$, and we set   $\beta_{\headcav,\hhv} = 1/\tau_{\headcav}$, $\beta_{\tailcav,\HVn} = 1/\tau_{\tailcav}$, which gives the maximum range of safe CAV coordination gain $\beta_{\headcav,\tailcav}$ and $\beta_{\tailcav,\headcav}$  as in Remark~\ref{remark:nominal safe gain}. We also plot the string-stability region as the red region. As Fig.~\ref{fig:safety chart nominal}(d) shows, the string stability region and safety region do not intersect.
This highlights that guaranteeing both stability and safety by the nominal controller~\eqref{eq:nominal controller head CAV}-\eqref{eq:nominal controller tail CAV} may not be possible. 
Furthermore, Fig.~\ref{fig:safety chart nominal}(a,b,c) also demonstrate that cooperating CAVs require a very high $\alpha$ gain for provable safety (for example, the blue curve in panel (a) shows that the head CAV is safe only for $\alpha_{\headcav}>5$ when the CAV cooperation gain is set to $\beta_{\headcav,\tailcav} = 0.5$). 
Such high $\alpha$ gain is undesired and infeasible for practical CAV control applications. Experiments in~\cite{alan2024integrating} have shown that the gap-related controller gain $\alpha$ is recommended to be smaller than 1.
Therefore, these restrictions on the safe control gains motivate the modification of the nominal controller~\eqref{eq:nominal controller head CAV}-\eqref{eq:nominal controller tail CAV} to actively enforce safety.
This is discussed in the next subsections through the introduction of CBF-based safety filters.    
\end{remark}

\subsection{Preliminaries on CBFs}

Control barrier functions provide a general tool to constrain the control input to ensure safety. We present the basics of CBF as follows. Consider an affine control system with state $x \in \mathbb{R}^{n}$ and control input $u\in \mathbb{R}^{m}$:
\begin{equation}\label{eq:controlafine}
    \dot{x} = f(x) + g(x)u,
\end{equation}
cf.~\eqref{eq:system},
with controller ${u=k(x)}$ and initial condition ${x(0) = x_{0} \in \mathbb{R}^{n}}$.
If $f$, $g$, and $k$ are locally Lipschitz, the system has a unique solution $x(t)$, which we assume to exist for all ${t \ge 0}$.
The system is safe if the solution stays in a safe set $\mathcal{C}$, i.e, $x(t)\in \mathcal{C}$ holds for all ${t \ge 0}$ if ${x_{0} \in \mathcal{C}}$. Let $\mathcal{C}$ be given by a continuously differentiable function $h:\mathbb{R}^{n} \to \mathbb{R}$,
cf.~\eqref{eq:safe_set_headcav} and~\eqref{eq:safe_set_tailcav}.

\begin{definition}[Control Barrier Function~\cite{ames2019control}]
    Function $h$ is called a control barrier function for the system \eqref{eq:controlafine}  on $\mathcal{C}$  if there exists an extended class-$\mathcal{K}_{\infty}$ function $\gamma$ such that:
	\begin{equation}
		\sup_{u\in \mathbb{R}^{m}} L_{f} h(x)+L_{g} h(x) u > -\gamma(h(x)), \quad \forall x\in \mathcal{C}, 
	\end{equation}
	with $L_fh(x) = \nabla h(x) \cdot f(x)$ and $L_gh(x)  = \nabla h(x) \cdot g(x)$.
\end{definition}
The CBF is used to guarantee safety of the closed-loop system with a feedback controller ${k: \mathbb{R}^{n} \to \mathbb{R}^{m}}$, ${u = k(x)}$ for \eqref{eq:controlafine}.
\begin{theorem}[Safety guarantee by CBF~\cite{ames2019control}] \label{theorem:safety}
    If function $h$ is a control barrier function for~\eqref{eq:controlafine}  on $\mathcal{C}$, then any locally Lipschitz continuous controller $u=k(x)$ satisfying:
	\begin{align}
		L_f h(x) + L_gh(x) k(x) \ge -\gamma(h(x)),
        \quad \forall x\in \mathcal{C},
        \label{eq:CBF_constraint}
	\end{align}
	renders the set $\mathcal{C}$ forward invariant (safe), i.e, ${x(t)\in \mathcal{C}}$, ${\forall t \geq 0}$ holds for the closed-loop system for all $x_{0} \in \mathcal{C}$.
\end{theorem}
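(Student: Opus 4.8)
The plan is to reduce the forward-invariance claim to a scalar differential inequality along closed-loop trajectories and then close the argument by a comparison principle. First I would substitute the controller $u = k(x)$ into the control-affine system~\eqref{eq:controlafine} to obtain the autonomous closed-loop vector field $F_{\mathrm{cl}}(x) = f(x) + g(x) k(x)$. Since $f$, $g$, and $k$ are all locally Lipschitz, $F_{\mathrm{cl}}$ is locally Lipschitz, so for each $x_0 \in \mathcal{C}$ there is a unique solution $x(t)$, assumed to exist for all $t \ge 0$. Along this solution the chain rule gives $\dot{h}(x(t)) = \nabla h(x(t)) \cdot F_{\mathrm{cl}}(x(t)) = L_f h(x(t)) + L_g h(x(t)) k(x(t))$, and the assumed CBF constraint~\eqref{eq:CBF_constraint} then yields the differential inequality $\dot{h}(x(t)) \ge -\gamma\big(h(x(t))\big)$ valid wherever $x(t) \in \mathcal{C}$.

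Next I would compare the scalar function $t \mapsto h(x(t))$ against the initial value problem $\dot{y} = -\gamma(y)$ with $y(0) = h(x_0)$. Because $\gamma$ is an extended class-$\mathcal{K}_{\infty}$ function, it is strictly increasing with $\gamma(0) = 0$, so $y \equiv 0$ is an equilibrium of the comparison system; for $y > 0$ one has $\dot{y} < 0$ and for $y < 0$ one has $\dot{y} > 0$, whence the half-line $\{y \ge 0\}$ is invariant and any solution with $y(0) = h(x_0) \ge 0$ stays nonnegative for all $t \ge 0$. Applying the comparison lemma to $\dot{h} \ge -\gamma(h)$ gives $h(x(t)) \ge y(t) \ge 0$, i.e.\ $x(t) \in \mathcal{C}$ for all $t \ge 0$, which is exactly the asserted forward invariance.

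The main technical obstacle is the comparison step when $\gamma$ is merely continuous rather than Lipschitz, since then $\dot{y} = -\gamma(y)$ need not have a unique solution and one must invoke the comparison principle with the maximal solution and argue that it, too, remains nonnegative. Fortunately, Nagumo's theorem (Lemma~\ref{theorem:nagumo safety}) offers a shortcut that avoids this subtlety altogether: on the boundary $\partial\mathcal{C} = \{x : h(x) = 0\}$ the class-$\mathcal{K}_{\infty}$ property forces $\gamma(h(x)) = \gamma(0) = 0$, so the CBF constraint~\eqref{eq:CBF_constraint} collapses there to $\dot{h}(x) = L_f h(x) + L_g h(x) k(x) \ge 0$ precisely on the boundary. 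Nagumo's theorem then delivers forward invariance directly, provided the regularity hypothesis $\nabla h(x) \ne 0$ on $\{h = 0\}$ holds. Either way, the remaining care is routine: verify that the closed-loop field is locally Lipschitz so that trajectories are well posed, and note that~\eqref{eq:CBF_constraint} is required on all of $\mathcal{C}$ so that the inequality governing $\dot{h}$ is valid along the entire trajectory, not only at its boundary.
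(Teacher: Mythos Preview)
The paper does not actually prove this theorem; it is quoted from~\cite{ames2019control} as a standard background result, with no proof given in the text. Your sketch is a faithful rendering of the usual argument found in that reference: close the loop, obtain the scalar inequality $\dot h \ge -\gamma(h)$ along trajectories, and conclude either by a comparison principle or, as you note, by invoking Nagumo's theorem (Lemma~\ref{theorem:nagumo safety}) on the boundary where $\gamma(0)=0$ reduces the constraint to $\dot h \ge 0$. You have also correctly flagged the two technical caveats---the non-Lipschitz comparison issue and the $\nabla h \ne 0$ regularity needed for the Nagumo route---so there is nothing to compare against and nothing to correct.
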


To control a system with formal safety guarantees, CBFs can be integrated with a pre-designed nominal controller such as~\eqref{eq:nominal controller head CAV} or~\eqref{eq:nominal controller tail CAV}. In particular, a nominal controller $u = k_{\rm n}(x)$ can be modified in a minimal way to synthesize a safety-critical control input $u=k(x)$, by solving the quadratic program (QP): 
\begin{equation}\label{eq:QP CBFintro}
\begin{split}
    k(x) = \underset{u \in \mathbb{R}^{m}}{\operatorname{argmin}} \;  & \Vert u-k_{\rm n}(x)  \Vert^{2}, \\
    \text{s.t.} \ 
    &  L_fh(x) + L_gh(x)u +\gamma(h(x))\ge 0,
\end{split}
\end{equation}
that is also called as {\em safety filter}. When the control input is a scalar (i.e., $u\in \mathbb{R}$) and $L_g h(x)<0$ for all $x \in \mathbb{R}^{n}$, this safety filter simplifies to the form:
\begin{align}
    k(x) = \min \left\{k_{\rm n}(x), - \frac{L_fh (x)  + \gamma(h(x))}{L_gh(x)} \right\}.
\label{eq:safety_filter}
\end{align}

\subsection{Safety filter CBF design for CAV, HV and platoon}

While Theorems~\ref{theorem:safety nominal head} and~\ref{theorem:safety nominal tail} provide conditions on the nominal CAV controllers~\eqref{eq:nominal controller head CAV} and~\eqref{eq:nominal controller tail CAV} to ensure safety, now we utilize CBFs to minimally modify these nominal controllers to obtain safety-critical controllers:
\begin{align}
\begin{split}
    u_{\headcav} & = k_{\headcav}(x,v_{\hhv}), \\
    u_{\tailcav} & = k_{\tailcav}(x),
\end{split}
\end{align}
based on real-time traffic states.
We consider three types of safety: CAV safety, HV safety, and platoon safety.

\textbf{CAV safety}: CAV safety refers to that the two CAVs keep a safe gap behind their preceding vehicles by enforcing the CTH policy. 
For the head CAV, the CBF $h_{\headcav}$ in~\eqref{eq:safety_function_headcav} gives constraints on the controller $u_{\headcav}$ as:
\begin{align}\label{eq:CBF head CAV}
    L_fh_{\headcav}(x,v_{\hhv}) + L_{g_{\headcav}} h_{\headcav}(x) u_{\headcav} \ge -\gamma_\headcav h_{\headcav}(x),
\end{align}
cf.~\eqref{eq:CBF_constraint},
with $L_fh_{\headcav}(x,v_{\hhv}) = v_{\hhv} - v_{\headcav}$, $L_{g_{\headcav}} h_{\headcav}(x) = -\tau_{\headcav}$, and $\gamma_{\headcav} > 0$.
This is equivalent to:
\begin{align}\label{eq:CBF explicit head}
    u_{\headcav} \le \frac{1}{\tau_{\headcav}} (v_{\hhv} - v_{\headcav}) + \gamma_\headcav \Big( \frac{1}{\tau_{\headcav}} s_{\headcav} - v_{\headcav} \Big).
\end{align}
The right-hand side resembles the adaptive cruise control terms in the nominal controller~\eqref{eq:nominal controller head CAV}.
Based on~\eqref{eq:safety_filter}, the safety filter enforcing the head CAV's safety is:
\begin{align}
    k_{\headcav}(x,v_{\hhv}) = \min \left\{k_{\headcav,{\rm n}}(x,v_{\hhv}), \frac{1}{\tau_{\headcav}} (v_{\hhv} - v_{\headcav}) + \gamma_\headcav \Big( \frac{1}{\tau_{\headcav}} s_{\headcav} - v_{\headcav} \Big) \right\},
    \label{eq:min_controller_head}
\end{align}
where the nominal controller $k_{\headcav,{\rm n}}$ is given in~\eqref{eq:nominal controller head CAV}.
Similarly, the safety function $h_{\tailcav}$ of the tail CAV  in~\eqref{eq:safety_function_tailcav} gives the constraint:
\begin{align}\label{eq:CBF tail CAV}
    L_fh_{\tailcav}(x) + L_{g_{\tailcav}} h_{\tailcav}(x) u_{\tailcav} \ge -\gamma_{\tailcav}  h_{\tailcav}(x),
\end{align}
with $L_fh_{\tailcav}(x) = v_{\HVn} - v_{\tailcav}$, $L_{g_{\tailcav}}h_{\tailcav}(x) = -\tau_{\tailcav}$, and $\gamma_{\tailcav} > 0$, which is equivalent to:
\begin{align}\label{eq:CBF explicit tail}
    u_{\tailcav} \le \frac{1}{\tau_{\tailcav}} (v_{\HVn} - v_{\tailcav}) + \gamma_\tailcav \Big( \frac{1}{\tau_{\tailcav}} s_{\tailcav} - v_{\tailcav} \Big);
\end{align}
cf.~the ACC terms in~\eqref{eq:nominal controller tail CAV}.
The corresponding safety filter that ensures the tail CAV's safety is:
\begin{align}
    k_{\tailcav}(x) = \min \left\{k_{\tailcav,{\rm n}}(x), \frac{1}{\tau_{\tailcav}} (v_{\HVn} - v_{\tailcav}) + \gamma_\tailcav \Big( \frac{1}{\tau_{\tailcav}} s_{\tailcav} - v_{\tailcav} \Big) \right\},
    \label{eq:min_controller_tail}
\end{align}
with the nominal controller $k_{\tailcav,{\rm n}}$ in~\eqref{eq:nominal controller tail CAV}.

\textbf{HV safety}:  When a middle HV-$i$ is connected to the head CAV, the CBF also enables the head CAV to improve the HV's safety. It is noted that here safe motions are enforced by car-following behaviors and therefore the head-CAV control input is constrained to this purpose. We design constraints on the head CAV controller to ensure HV safety as follows. For the HV safety function $h_i$ in~\eqref{eq:safety_function_hv}, we have that $L_g h_i(x) = \begin{bmatrix}
    0 & 0
\end{bmatrix}$, which means HV's safety measure $h_i$ does not directly constrain on any of the control inputs $u_{\headcav}$ or $u_{\tailcav}$ due to the relative degree of the system. Thus, we introduce a CBF for HV-$i$,:
\begin{align}
    \bar{h}_i(x) = h_i(x)  - \eta_i h_{\headcav}(x),
\label{eq:safety_function_hv_valid}
\end{align}
where $\eta_i>0$ is a parameter. Then we have $L_g \bar{h}_i(x) = \begin{bmatrix}
    \eta_i \tau_{\headcav} & 0
\end{bmatrix}$, which provides a way to constrain the control input $u_{\headcav}$ for safety. Once ensuring both $h_{\headcav}(x)\ge 0 $ and $\bar{h}_i(x)\ge 0$, we have $h_i(x) \ge 0 $. The safety constraint from $\bar{h}_i$ is then:
\begin{align}\label{eq:CBF HV}
    L_{f} \bar{h}_{i}(x,v_{\hhv}) + L_{g_{\headcav}} \bar{h}_{i}(x) u_{\headcav} \ge -\gamma_{i} \bar{h}_{i}(x),
\end{align}
with $L_f \bar{h}_i(x,v_{\hhv}) = v_{i-1} - v_{i} - \tau_i F_i(s_i,v_i,v_{i-1}-v_i) - \eta_i (v_{\hhv} - v_{\headcav})$, $L_{g_{\headcav}} \bar{h}_i(x) = \eta_i \tau_{\headcav}$, and ${\gamma_i>0}$. Thus $\bar{h}_i$ gives a constraint  on the head CAV's controller as:
\begin{align}\label{eq:CBF explicit HV}
    u_{\headcav} \ge
    \frac{1}{\tau_{\headcav}} (v_{\hhv} - v_{\headcav})
    + \gamma_i \bigg( \frac{1}{\tau_{\headcav}} s_{\headcav} - v_{\headcav} \bigg)
    + \frac{\tau_i}{\eta_i \tau_{\headcav}} \bigg( F_i(s_i,v_i,v_{i-1}-v_i)
    -\frac{1}{\tau_i}(v_{i-1}-v_{i}) - \gamma_i \bigg( \frac{1}{\tau_i} s_i - v_i \bigg) \bigg).
\end{align}

While the head CAV safety constraint~\eqref{eq:CBF explicit head} gives an upper bound on $u_{\headcav}$, the HV safety condition~\eqref{eq:CBF explicit HV} sets a lower bound on $u_{\headcav}$. These two bounds may conflict with each other, thus there may be no available controller to enforce the safety of both the head CAV and the HVs. Therefore, we set the head CAV safety as a hard constraint and HV safety as soft constraints by adding a relaxation term to the HV safety constraints.
This leads to the following safety filter that ensures the safety of head CAV while facilitating the safety of HVs:
\begin{align}
\begin{split}
    k_{\headcav}(x,v_{\hhv}) = \underset{u_{\headcav} \in \mathbb{R}, \sigma_i \geq 0}{\operatorname{argmin}} \; & \Vert u_{\headcav} -k_{\headcav,{\rm n}}(x,v_{\hhv}) \Vert^{2} + \sum_{i=1}^N p_i\sigma_i^2 \\
    \text{s.t.} \;
    &\mathrm{head \; CAV \; safety:} \quad
    L_fh_{\headcav}(x,v_{\hhv}) + L_{g_{\headcav}} h_{\headcav}(x) u_{\headcav} \ge - \gamma_{\headcav} h_{\headcav}(x),
    \\
    & \mathrm{HV \; safety:} \quad\quad\
    \left\{
    \begin{array}{l}
    L_{f} \bar{h}_{1}(x,v_{\hhv}) + L_{g_{\headcav}} \bar{h}_{1}(x) u_{\headcav} \ge - \gamma_{1} \bar{h}_{1}(x) -\sigma_1, \\
     \qquad \vdots \\
     L_{f} \bar{h}_{N}(x,v_{\hhv}) + L_{g_{\headcav}} \bar{h}_{N}(x) u_{\headcav} \ge - \gamma_{N} \bar{h}_{N}(x)    - \sigma_{N},
    \end{array}
    \right.
\end{split}
\label{eq:QP_head}
\end{align}
where $\sigma_i$ represent the relaxation term and $p_i>0$ are penalty parameters for the relaxation.
This type of controller was first proposed for a single CAV in~\cite{zhao2023safety} which was called safety-critical traffic controller.
If a HV is not connected to the head CAV, the corresponding HV safety constraint shall be omitted from~\eqref{eq:QP_head}.
If no HVs are connected to the head CAV, then all HV safety constraints are dropped, and~\eqref{eq:QP_head} reduces to~\eqref{eq:min_controller_head}.

\textbf{Platoon safety}: In the above design, to ensure the safety of each individual HV, the HV is required to connect to the head CAV. If some HVs are non-connected, we propose to enforce {\em platoon safety} that constrains the overall length of the vehicle platoon, i.e., the total gap between the two CAVs must exceed a minimum value. 
Since the two CAVs are connected, the gap $s_{\headcav\tailcav}$ between them is available.
The gap $s_{\headcav\tailcav}$ can be expressed by $s_{\headcav\tailcav} = s_{\tailcav} + \sum_{i=1}^{N} s_i + l_{\tailcav} +\sum_{i=1}^N l_i$ with $l_{\tailcav}$ and $l_i$ being the length of the tail CAV and HV-$i$, thus $s_{\headcav\tailcav}$ is a function of the system state $x$.
Based on $s_{\headcav\tailcav}$, we define the platoon safety function  as:
\begin{align}
    h_{\platoon}(x) = s_{\headcav\tailcav} - l_0 - \tau_{\platoon} (v_{\tailcav}-v_{\headcav}),
\end{align}
where $l_0>0$ is the base length of the platoon and $\tau_{\platoon}>0$ is a parameter.
The safety constraint constructed by $h_{\platoon}$ is:
\begin{align}\label{eq:CBF platoon}
    L_f h_{\platoon}(x) + L_g h_{\platoon}(x)u \ge -\gamma_{\platoon} h_{\platoon}(x),
\end{align}
with $L_f h_{\platoon}(x) = v_{\headcav} - v_{\tailcav}$,
$L_gh(x) = \begin{bmatrix}
    \tau_{\platoon} & -\tau_{\platoon}
\end{bmatrix}$, and $\gamma_{\platoon}>0$.
This is equivalent to:
\begin{align}\label{eq:CBF explicit platoon}
    u_{\tailcav}-u_{\headcav} \le \frac{1}{\tau_{\platoon}} (v_{\headcav} - v_{\tailcav}) + \gamma_{\platoon} \bigg( \frac{1}{\tau_{\platoon}} (s_{\headcav\tailcav} - l_0)   - (v_{\tailcav}-v_{\headcav}) \bigg).
\end{align}

The platoon safety constraint is of the form $u_{\tailcav} - u_{\headcav} \le \bar{u}$
with $\bar{u}$ being the right-hand side of~\eqref{eq:CBF explicit platoon}.
Meanwhile, the safety of the head CAV can be guaranteed by enforcing an upper bound on its control input $u_{\headcav}$ in the form ${u_{\headcav} \le \bar{u}_{\headcav}}$ where $\bar{u}_{\headcav}$ is the right-hand side of~\eqref{eq:CBF explicit head}.
The tail CAV's safety can be guaranteed by a similar upper bound on its input $u_{\tailcav}$ as ${u_{\tailcav} \le \bar{u}_{\tailcav}}$
where $\bar{u}_{\tailcav}$ is the right-hand side of~\eqref{eq:CBF explicit tail}.
Fig.~\ref{fig:feasibility} illustrates these input constraints. Depending on the values of $\bar{u}_{\headcav}$, $\bar{u}_{\tailcav}$, and $\bar{u}$, there are three possible cases, as shown by the three panels. In each case, there exists a feasibility region for the control inputs $u_{\headcav}$ and $u_{\tailcav}$ where the safety of both the head CAV, tail CAV, and platoon can be enforced simultaneously without relaxation (see shaded domain).
This is summarized by the following Lemma.
\begin{lemma}
    There always exist $u_{\headcav} \in \mathbb{R}$ and $u_{\tailcav} \in \mathbb{R}$ that satisfy three constraints: head CAV safety~\eqref{eq:CBF explicit head}, tail CAV safety~\eqref{eq:CBF explicit tail}, and platoon safety~\eqref{eq:CBF explicit platoon}, for all $x \in \mathbb{R}^{n}$ and $v_{\hhv} \in \mathbb{R}$.
\end{lemma}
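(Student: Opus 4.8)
The plan is to exploit the fact that, as written in~\eqref{eq:CBF explicit head}, \eqref{eq:CBF explicit tail}, and~\eqref{eq:CBF explicit platoon}, all three safety conditions are \emph{upper} bounds whose right-hand sides $\bar{u}_{\headcav}$, $\bar{u}_{\tailcav}$, and $\bar{u}$ depend only on the state $x$ and the disturbance $v_{\hhv}$, and not on the control inputs. Concretely, the constraints read $u_{\headcav} \le \bar{u}_{\headcav}$, $u_{\tailcav} \le \bar{u}_{\tailcav}$, and $u_{\tailcav} - u_{\headcav} \le \bar{u}$. Since none of them bounds $u_{\headcav}$ or $u_{\tailcav}$ from below, the admissible set in the $(u_{\headcav},u_{\tailcav})$ plane is an intersection of three half-planes that is unbounded toward $-\infty$ in each coordinate, so I expect it to be nonempty for every fixed $(x,v_{\hhv})$.

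To turn this into a proof I would exhibit an explicit feasible pair. Fixing any $x \in \mathbb{R}^{n}$ and $v_{\hhv} \in \mathbb{R}$, I would take $u_{\headcav} = \bar{u}_{\headcav}$, which satisfies the head CAV constraint~\eqref{eq:CBF explicit head} with equality, and then set
\begin{align}
    u_{\tailcav} = \min\{\bar{u}_{\tailcav},\, \bar{u}_{\headcav} + \bar{u}\}.
\end{align}
By construction $u_{\tailcav} \le \bar{u}_{\tailcav}$, so the tail CAV constraint~\eqref{eq:CBF explicit tail} holds; and $u_{\tailcav} \le \bar{u}_{\headcav} + \bar{u} = u_{\headcav} + \bar{u}$, so the platoon constraint~\eqref{eq:CBF explicit platoon} holds as well. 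Hence all three constraints are simultaneously satisfied by this single choice, which proves the claim.

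The only point that requires care --- and the closest thing to an obstacle here --- is to confirm that $\bar{u}_{\headcav}$, $\bar{u}_{\tailcav}$, and $\bar{u}$ are finite real numbers for every $(x,v_{\hhv})$, so that the minimum above is well defined. This follows because each right-hand side in~\eqref{eq:CBF explicit head}, \eqref{eq:CBF explicit tail}, and~\eqref{eq:CBF explicit platoon} is an affine expression in the (finite) state components, with coefficients involving only the positive headway parameters $\tau_{\headcav}, \tau_{\tailcav}, \tau_{\platoon}$ and the positive gains $\gamma_{\headcav}, \gamma_{\tailcav}, \gamma_{\platoon}$; positivity of the headways rules out any division by zero. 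I would also remark that this matches the shaded feasibility regions in Fig.~\ref{fig:feasibility}: regardless of the relative ordering of $\bar{u}_{\headcav}$, $\bar{u}_{\tailcav}$, and $\bar{u}$ that distinguishes the three panels, the corner point $(u_{\headcav},u_{\tailcav}) = (\bar{u}_{\headcav},\, \min\{\bar{u}_{\tailcav},\, \bar{u}_{\headcav}+\bar{u}\})$ always lies in the admissible set, so no relaxation of these three constraints is ever needed.
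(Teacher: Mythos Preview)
Your proposal is correct and follows essentially the same approach as the paper: both arguments rest on the observation that the three constraints are half-plane upper bounds $u_{\headcav} \le \bar{u}_{\headcav}$, $u_{\tailcav} \le \bar{u}_{\tailcav}$, and $u_{\tailcav}-u_{\headcav} \le \bar{u}$ with no lower bounds, so their intersection is always nonempty. The paper establishes this by a geometric case analysis in Fig.~\ref{fig:feasibility} (three orderings of $\bar{u}_{\headcav}$, $\bar{u}_{\tailcav}$, $\bar{u}$), whereas you make the argument explicit by exhibiting the feasible point $(\bar{u}_{\headcav},\,\min\{\bar{u}_{\tailcav},\bar{u}_{\headcav}+\bar{u}\})$; your version is a slightly more rigorous rendition of the same idea.
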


\begin{figure}[t]
    \centering
    \includegraphics[width=0.8\linewidth]{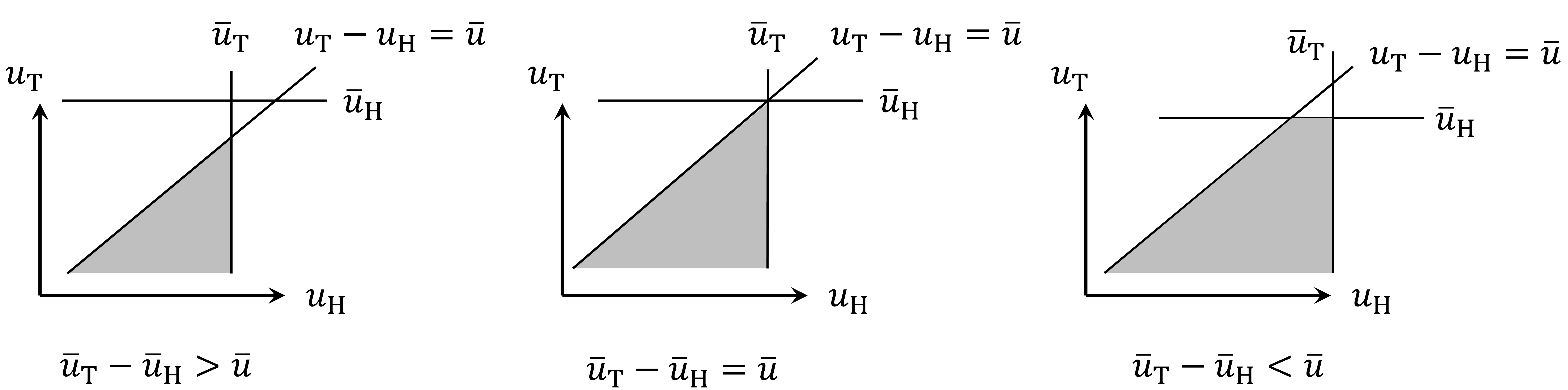}
    \caption{Region of $(u_{\headcav},u_{\tailcav})$ control inputs that ensure the safety of both the H-CAV using~\eqref{eq:CBF explicit head}, the T-CAV using~\eqref{eq:CBF explicit tail}, and the platoon using~\eqref{eq:CBF explicit platoon}.}
    \label{fig:feasibility}
\end{figure}

This finally leads to the following safety constraints design that ensures both CAV safety, HV safety (with relaxation), and platoon safety:
\begin{align}
\begin{split}
    k(x,v_{\hhv}) = \underset{u \in \mathbb{R}^2, \sigma_i \geq 0}{\operatorname{argmin}} \; & \Vert u -k_{\rm n}(x,v_{\hhv}) \Vert^{2} + \sum_{i=1}^N p_i\sigma_i^2 \\
    \text{s.t.} \;
    &\mathrm{CAV \; safety:} \quad \left\{
    \begin{array}{l} 
    L_fh_{\headcav}(x,v_{\hhv}) + L_{g_{\headcav}} h_{\headcav}(x) u_{\headcav} \ge - \gamma_{\headcav} h_{\headcav}(x), \\
    L_fh_{\tailcav}(x) + L_{g_{\tailcav}} h_{\tailcav}(x) u_{\tailcav} \ge - \gamma_{\tailcav} h_{\tailcav}(x),
    \end{array}
    \right.\\
    & \mathrm{HV \; safety:} \quad\
    \left\{
    \begin{array}{l}
    L_{f} \bar{h}_{1}(x,v_{\hhv}) + L_{g_{\headcav}} \bar{h}_{1}(x) u_{\headcav} \ge - \gamma_{1} \bar{h}_{1}(x) -\sigma_1, \\
     \qquad \vdots \\
     L_{f} \bar{h}_{N}(x,v_{\hhv}) + L_{g_{\headcav}} \bar{h}_{N}(x) u_{\headcav} \ge - \gamma_{N} \bar{h}_{N}(x)    - \sigma_{N},
    \end{array}
    \right.\\
    &\mathrm{Platoon \; safety:} \quad  L_{f} h_{\platoon}(x) + L_{g} h_{\platoon}(x) u \ge - \gamma_{\platoon} h_{\platoon}(x).
\end{split}
\label{eq:QP}
\end{align}
It is noted that the platoon safety constraint depends on the inputs of both CAVs.
Therefore, when this constraint is enforced, the two control inputs $u_{\headcav}$ and $u_{\tailcav}$ need to be computed together.
If it is infeasible in practice to compute the control inputs of two CAVs jointly, the platoon safety constraint shall be omitted from~\eqref{eq:QP}.
This leads back to the controller~\eqref{eq:min_controller_tail} for the tail CAV and the controller~\eqref{eq:QP_head} for the head CAV which no longer depend on each other.

\begin{remark}[Look-ahead for stability and look-behind for safety]

As discussed in Remark~\ref{remark:stability cav}-\ref{remark:stability hv} regarding the stability chart Fig.~\ref{fig:stability chart}, for string stability, the tail CAV must look ahead and include feedback from either the middle HVs or the head CAV. As for HV safety, the head CAV should look behind and alter its controller using a safety filter based on the states of the HVs.
    
\end{remark}

\section{Numerical simulation}\label{sec:simulation}

In this section, we conduct a number of simulations to validate the safety and performance of the proposed safety-critical controller~\eqref{eq:QP}. We first consider the case where no HVs are connected to the CAVs, and we discuss CAV safety in subsection~\ref{sec:subsec:simulation CBF}.
Then we consider connected HVs and HV safety in subsection~\ref{sec:subsec:simulation HV}. Finally, we demonstrate the behavior of the controller that enforces platoon safety in subsection~\ref{sec:subsec:simulation platoon}.

\begin{figure}[t]
    \centering
    Nominal control \vspace{0.5ex}\\
    \subfloat[Safety measure $h$]{\includegraphics[width=0.25\linewidth]{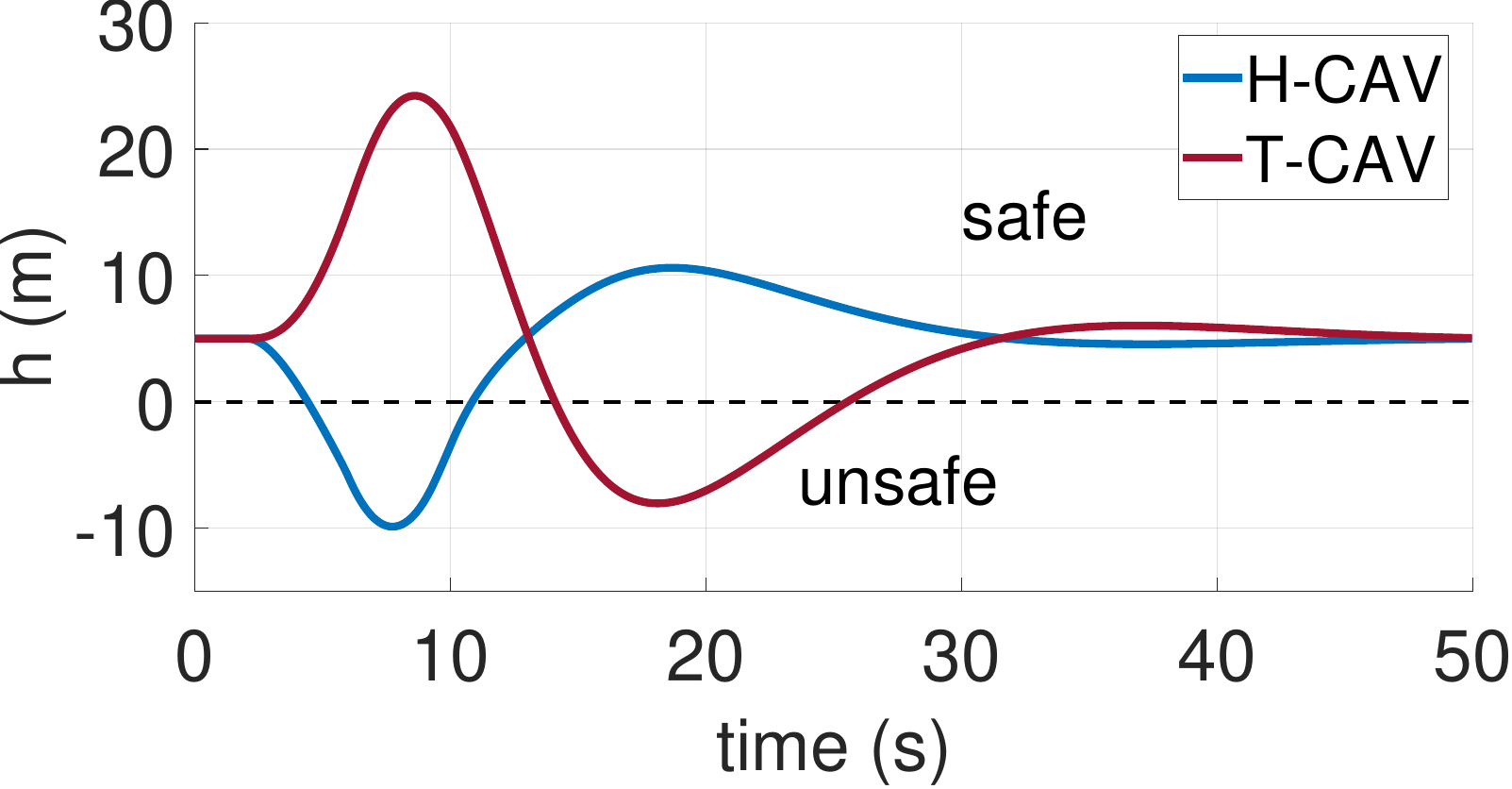}}
    \subfloat[Gap $s$]{\includegraphics[width=0.25\linewidth]{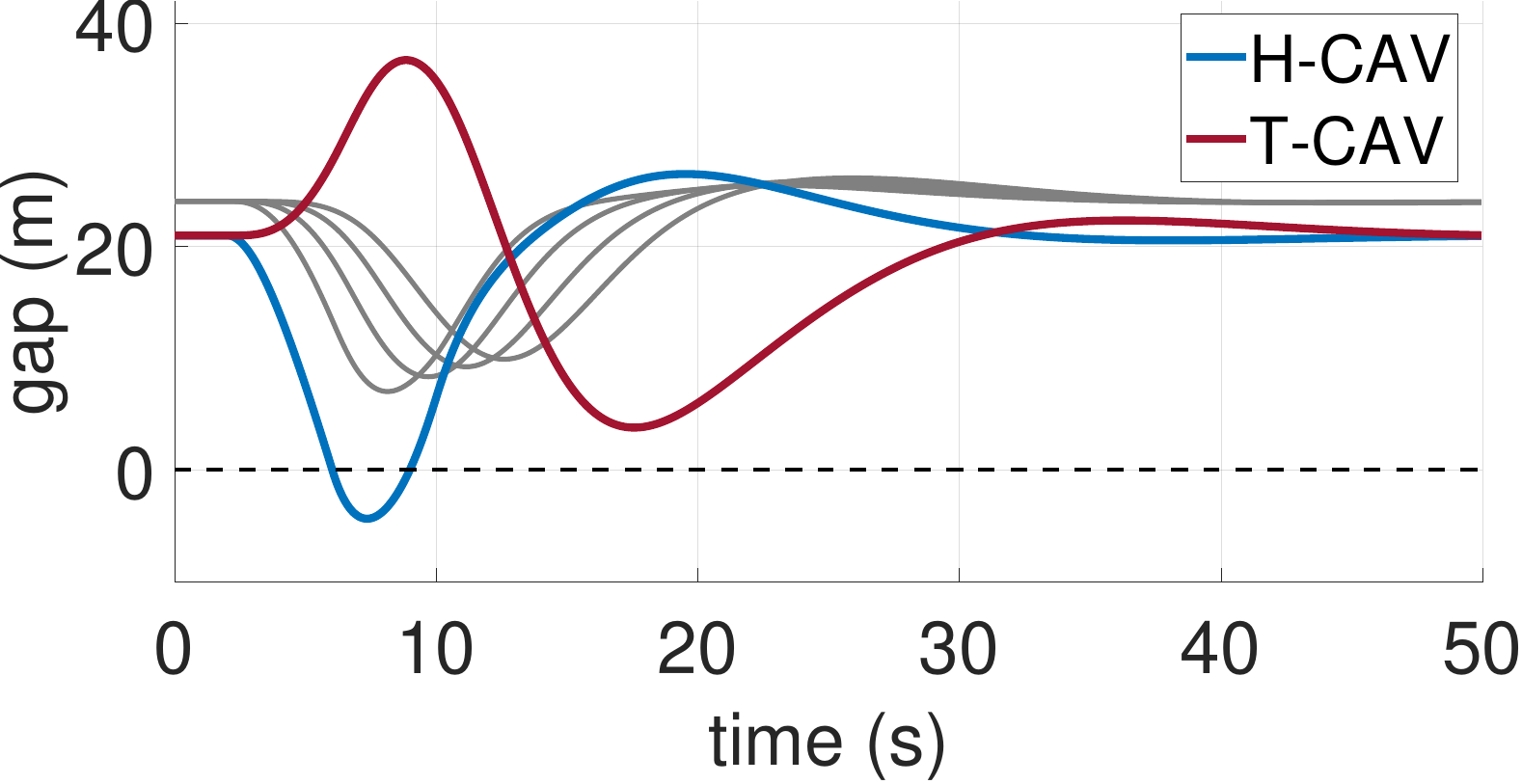}}
    \subfloat[Speed $v$]{\includegraphics[width=0.25\linewidth]{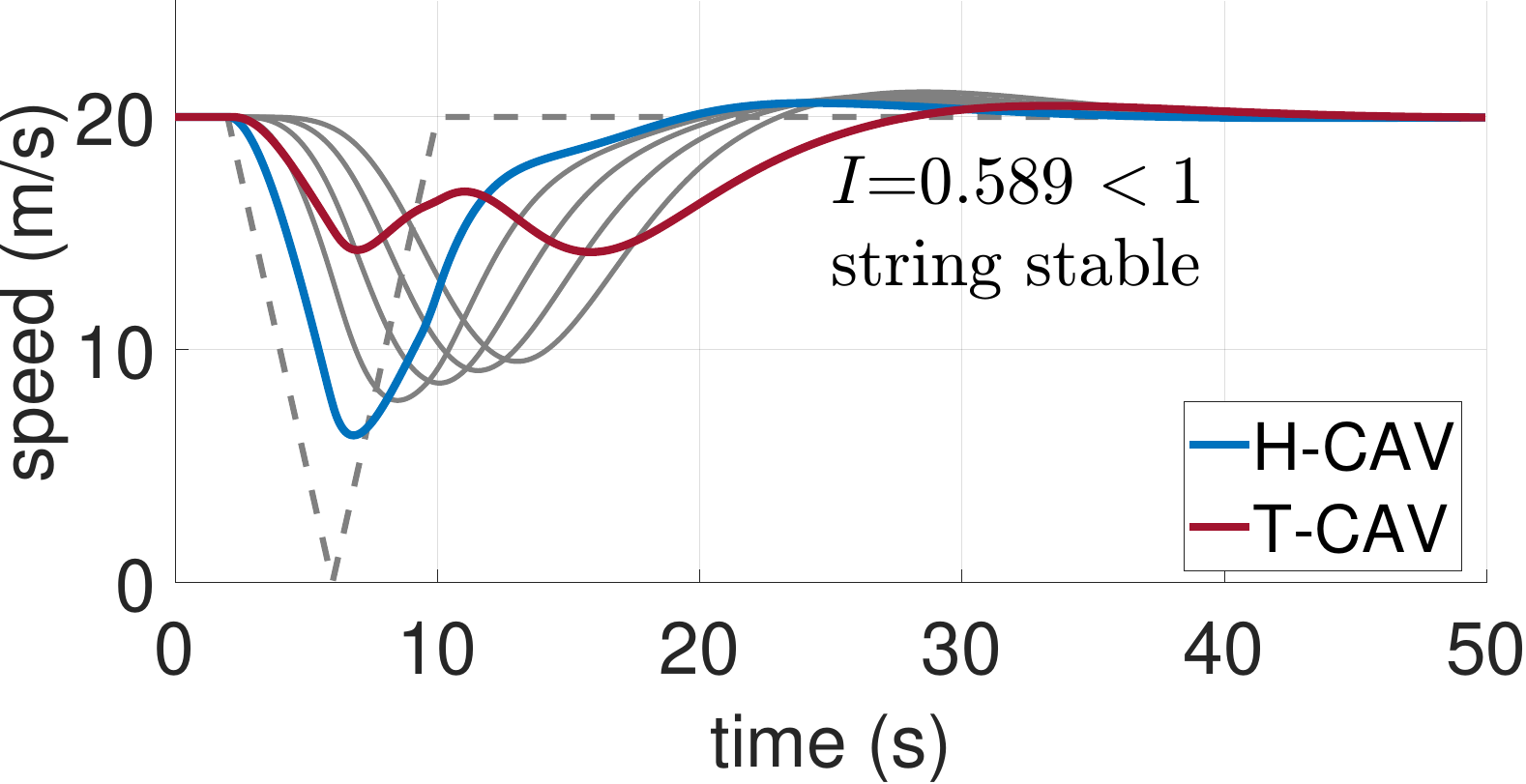}}
    \subfloat[Acceleration $a$]{\includegraphics[width=0.25\linewidth]{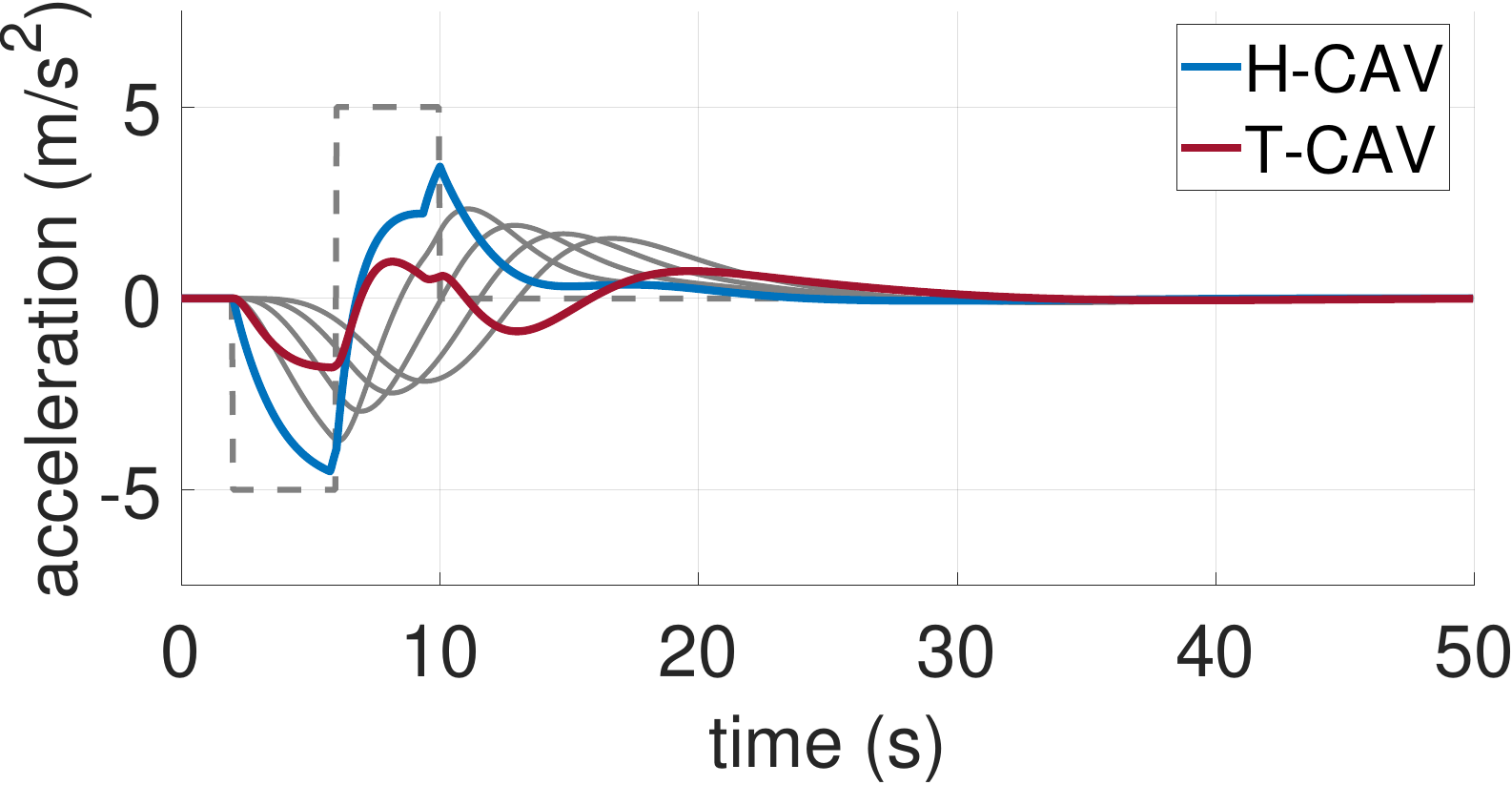}}
    \vspace{2ex} Safety-critical control \vspace{0.5ex}\\
    \subfloat[Safety measure $h$]{\includegraphics[width=0.25\linewidth]{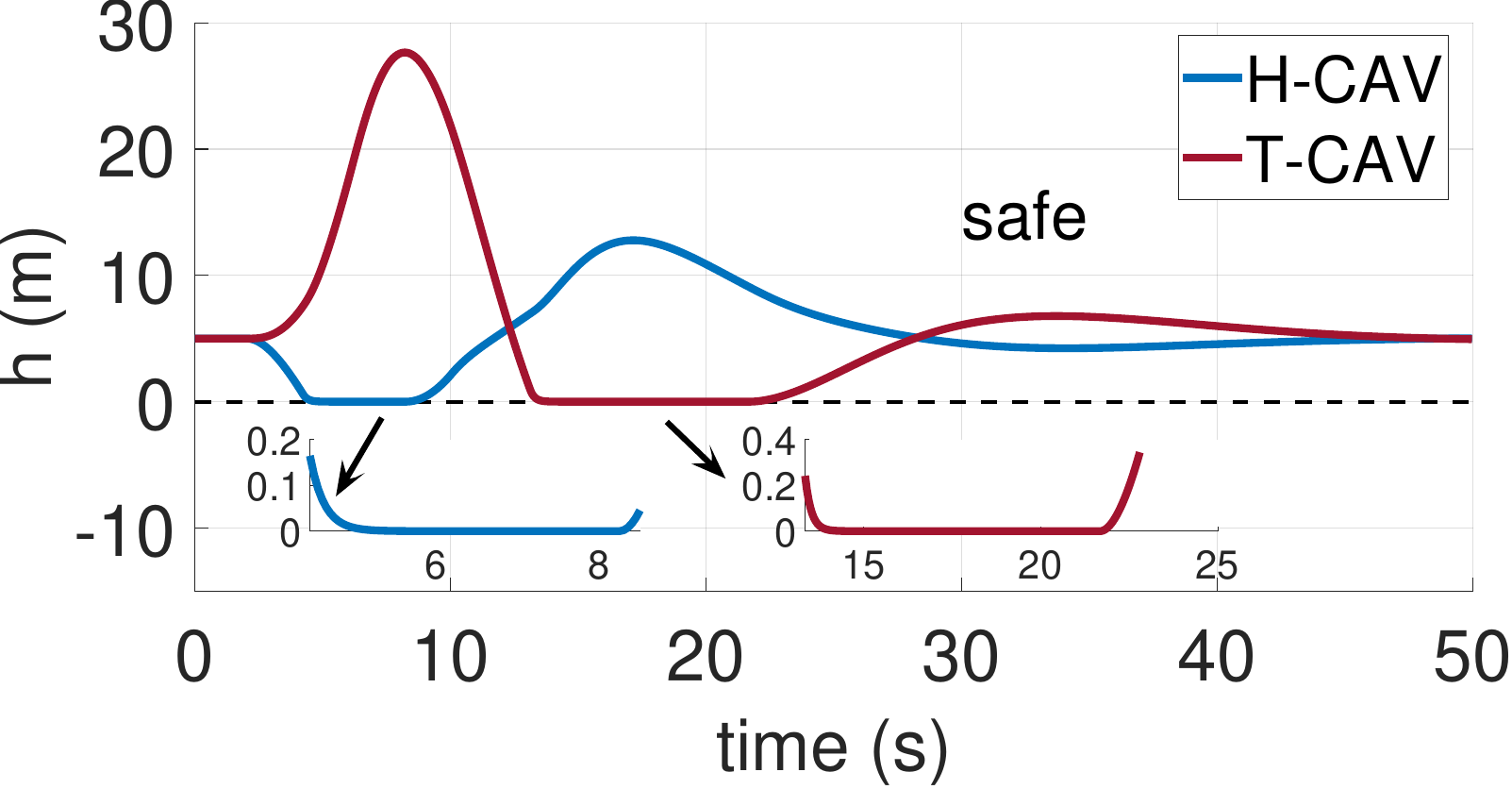}}
    \subfloat[Gap $s$]{\includegraphics[width=0.25\linewidth]{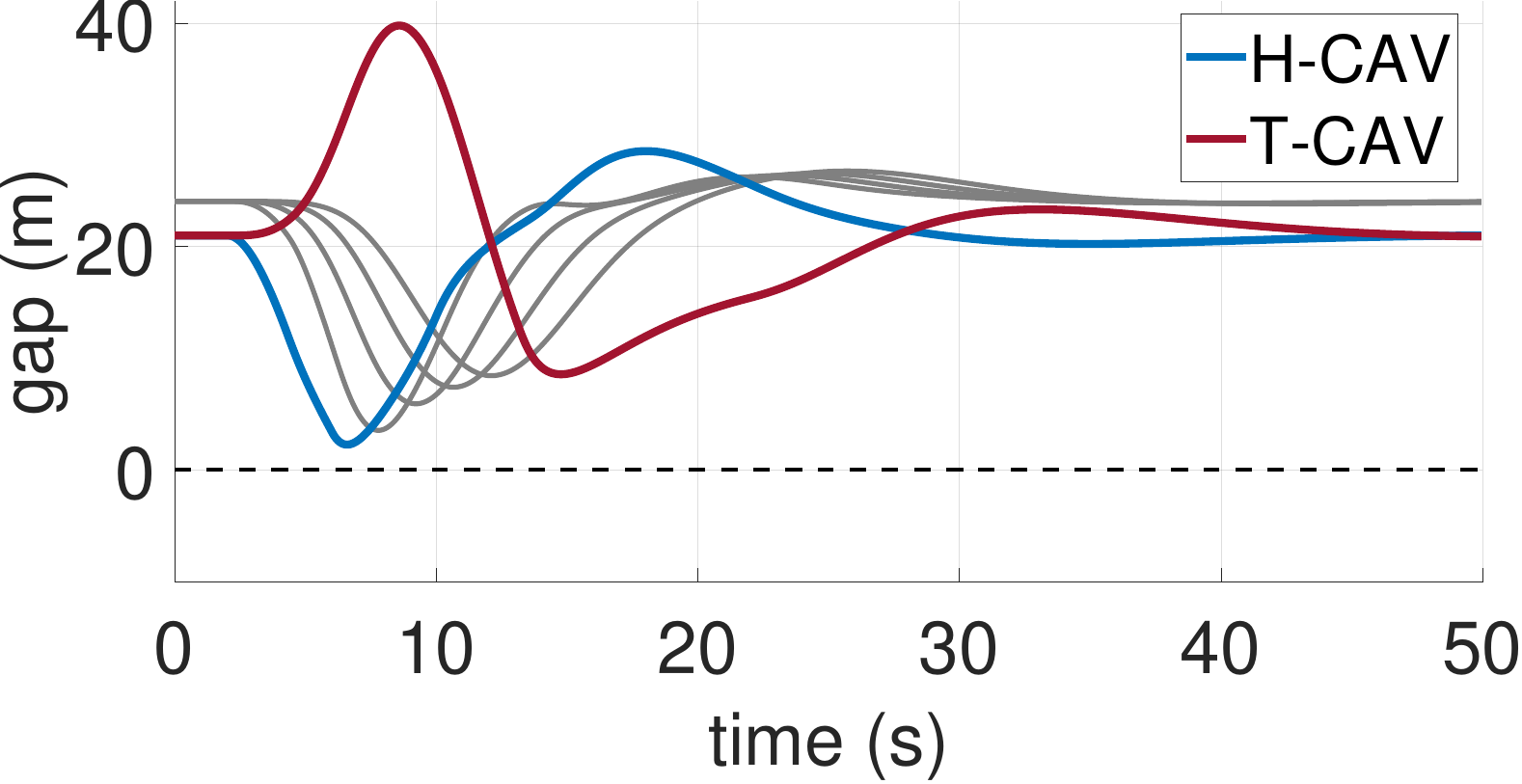}}
    \subfloat[Speed $v$]{    \includegraphics[width=0.25\linewidth]{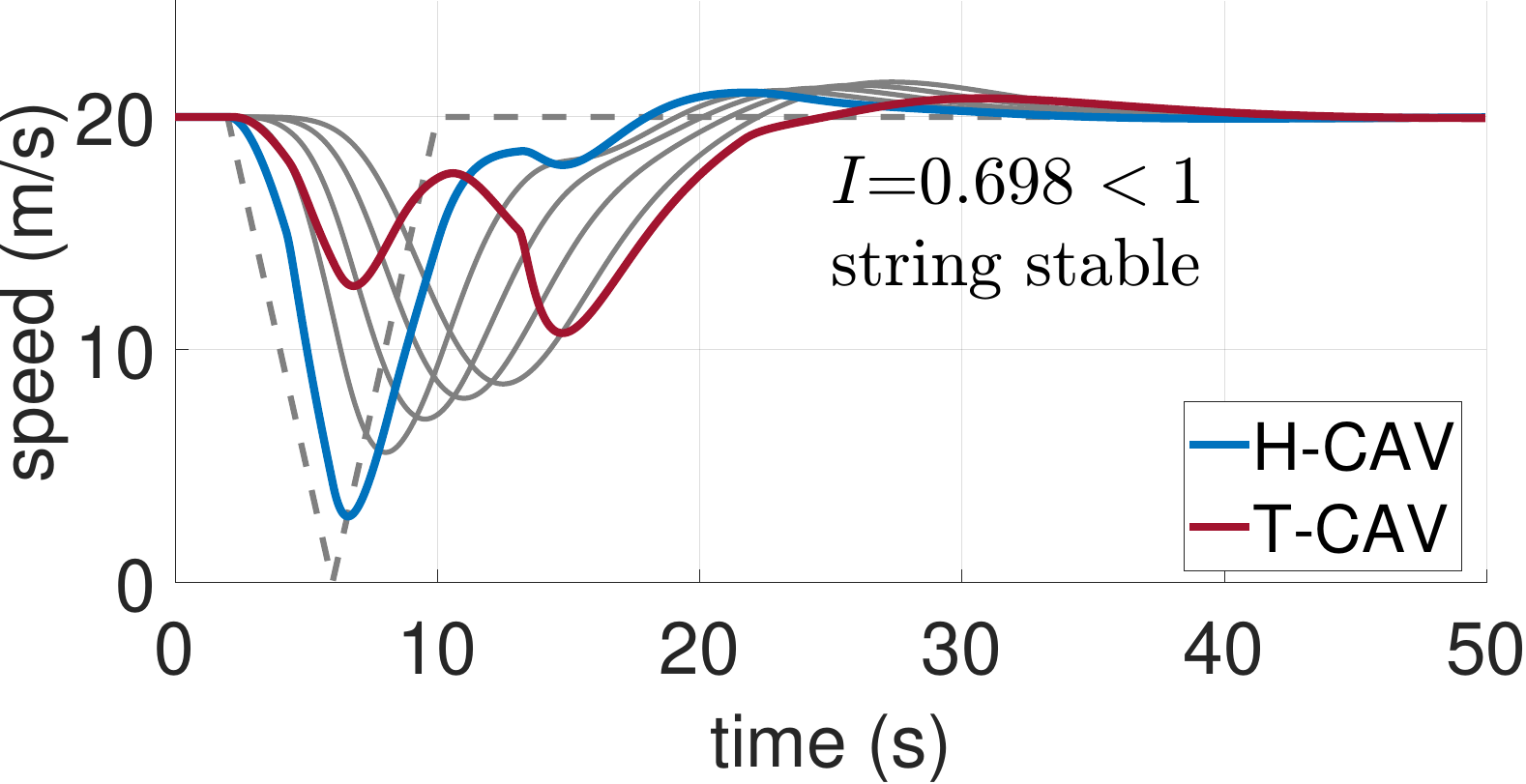}}
    \subfloat[Acceleration $a$]{\includegraphics[width=0.25\linewidth]{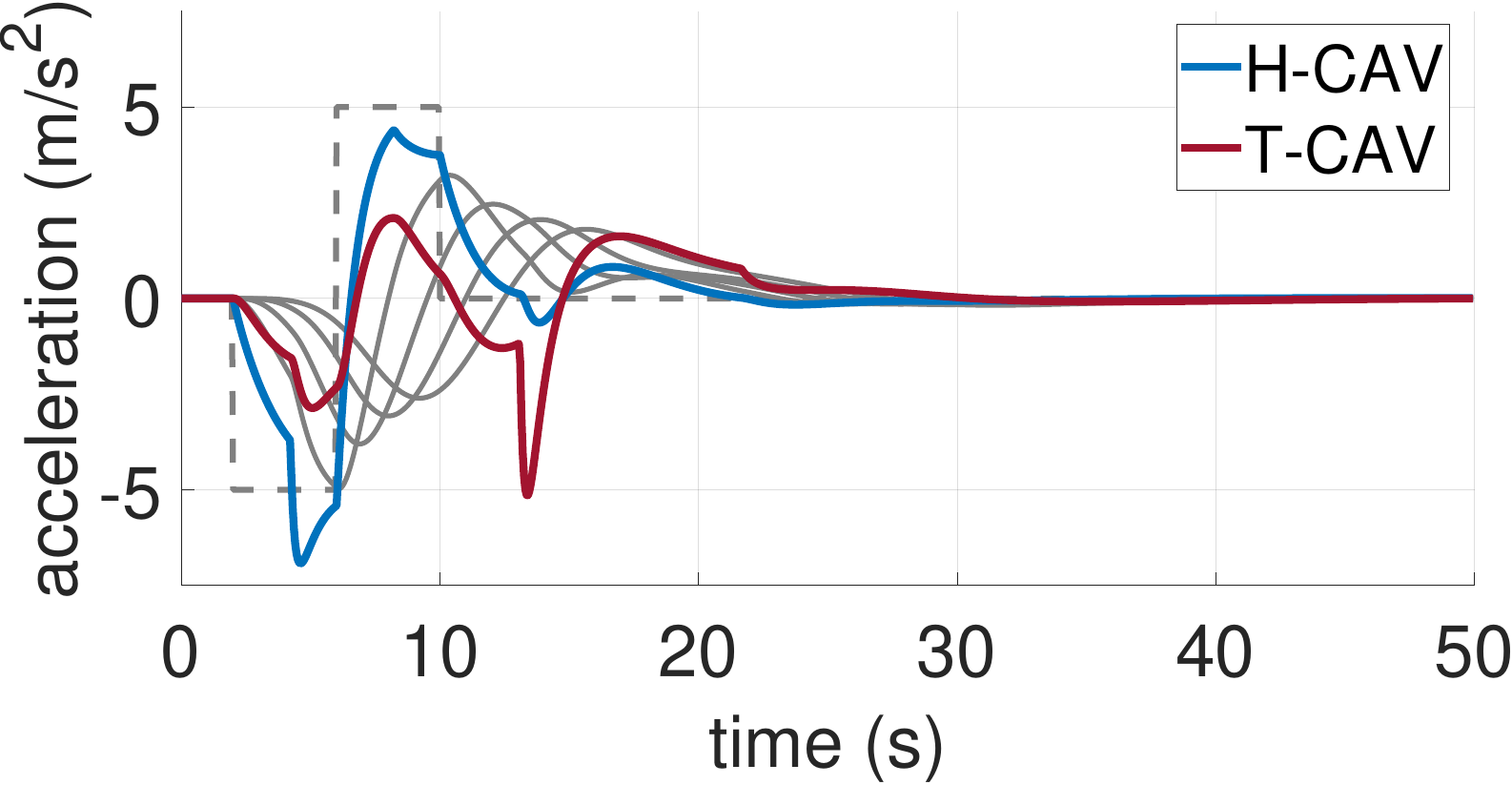}}
    \caption{{\it The head HV suddenly decelerates:} simulated trajectory of mixed vehicle platoon . The nominal controllers cause unsafe driving conditions for the two CAVs, i.e., $h_{\headcav}<0$ and $h_{\tailcav}<0$ occur. By enforcing CBF, the two CAVs become safe with positive $h$.  Besides, the CBF also maintains string stability, i.e., $I<1$.}
    \label{fig:trajectory head HV dec}
\end{figure}
 
\begin{figure}[t]
    \centering
    Nominal control \vspace{0.5ex}\\
    \subfloat[Safety measure $h$]{\includegraphics[width=0.25\linewidth]{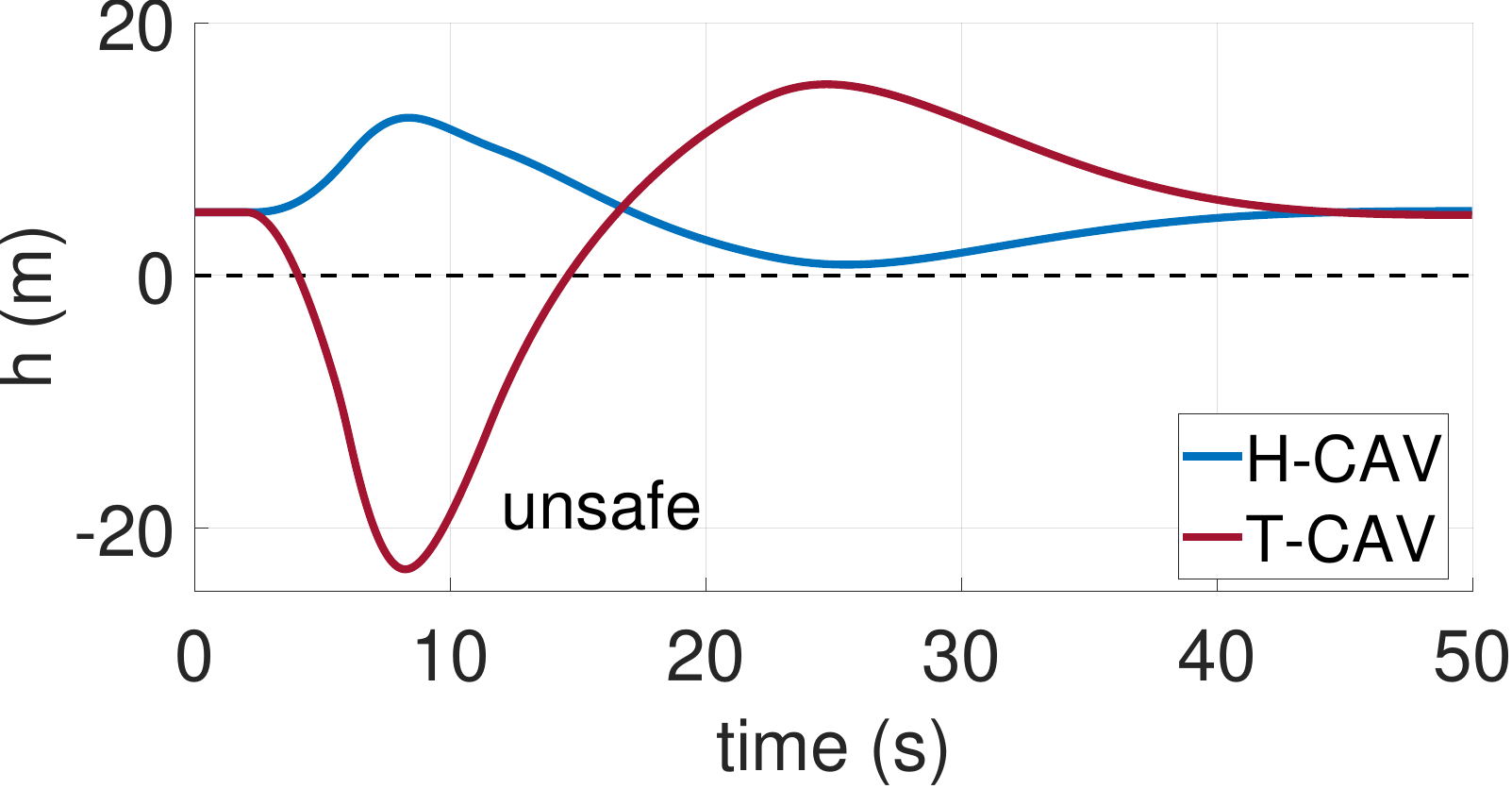}}
    \subfloat[Gap $s$]{\includegraphics[width=0.25\linewidth]{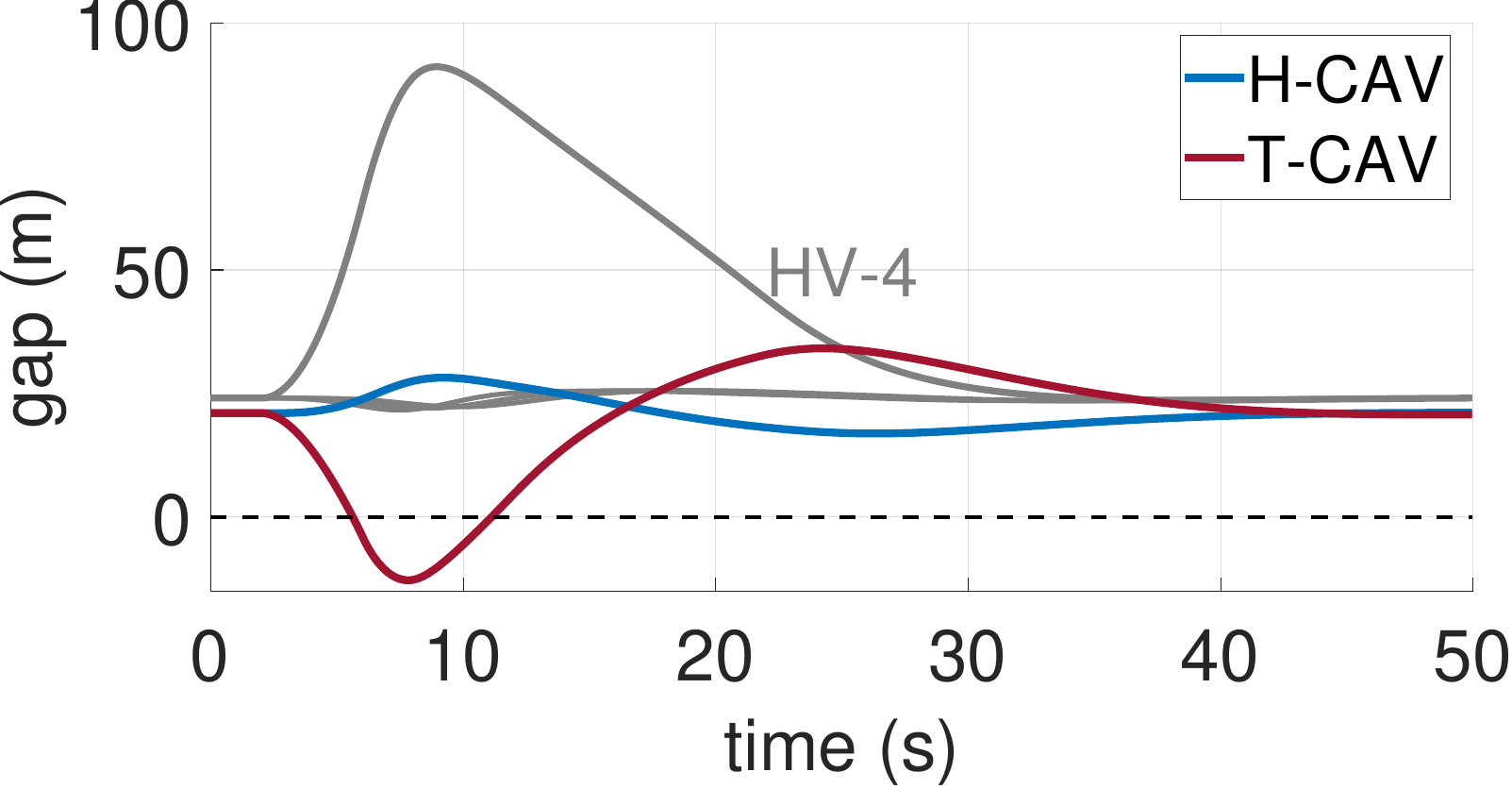}}
    \subfloat[Speed $v$]{\includegraphics[width=0.25\linewidth]{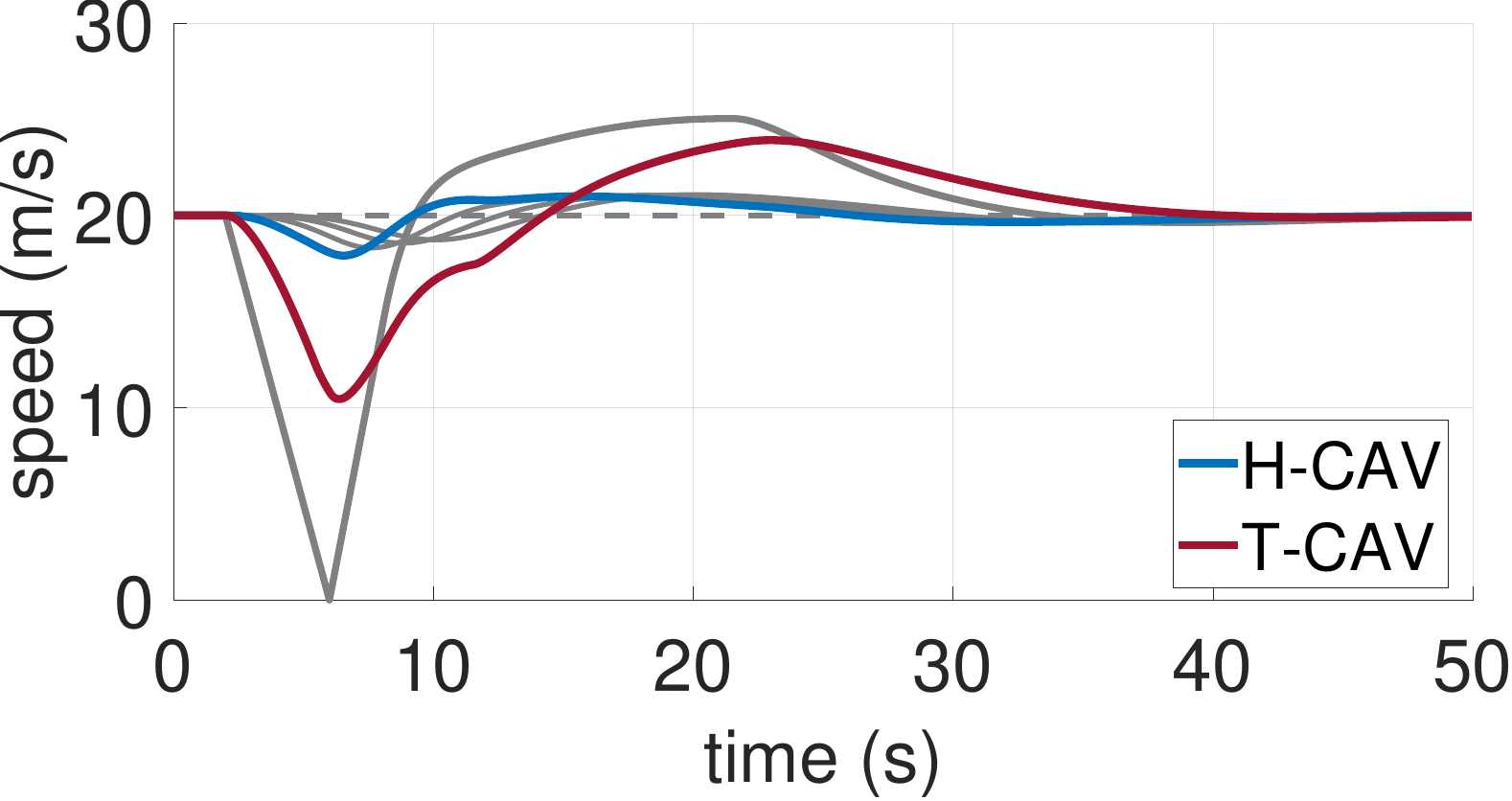}}
    \subfloat[Acceleration $a$]{\includegraphics[width=0.25\linewidth]{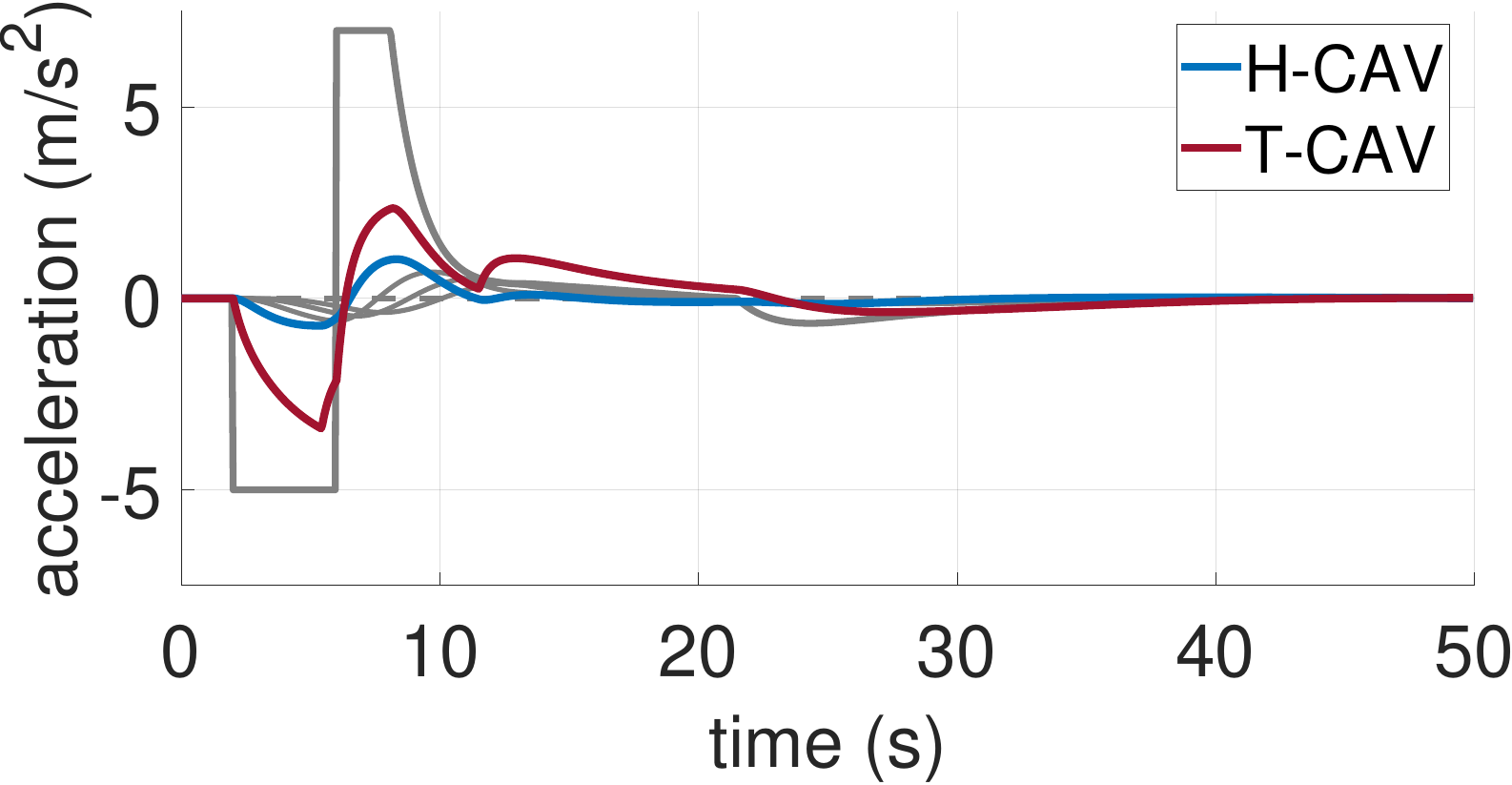}}
    \vspace{2ex} Safety-critical control \vspace{0.5ex}\\
    \subfloat[Safety measure $h$]{\includegraphics[width=0.25\linewidth]{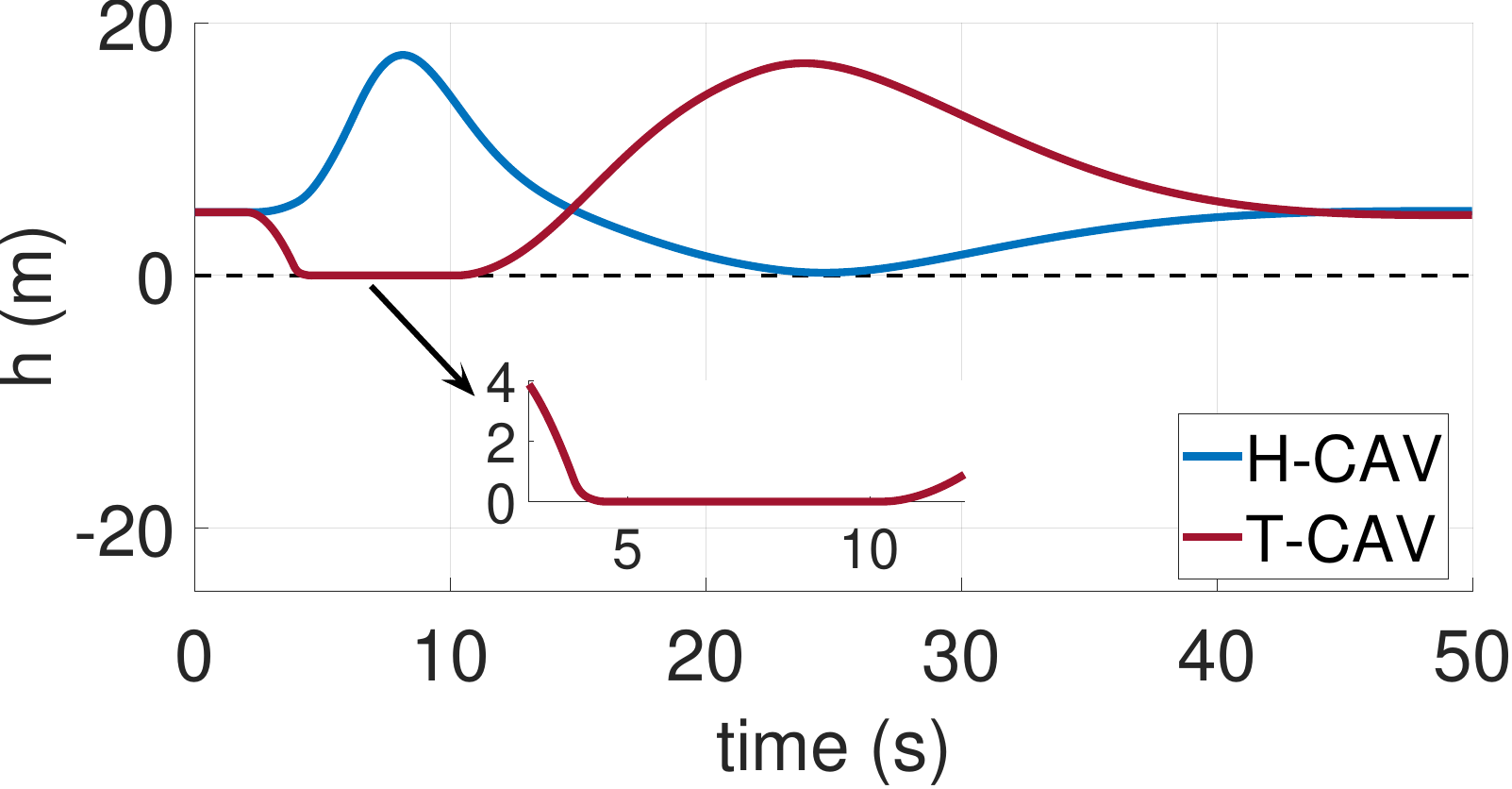}}
    \subfloat[Gap $s$]{\includegraphics[width=0.25\linewidth]{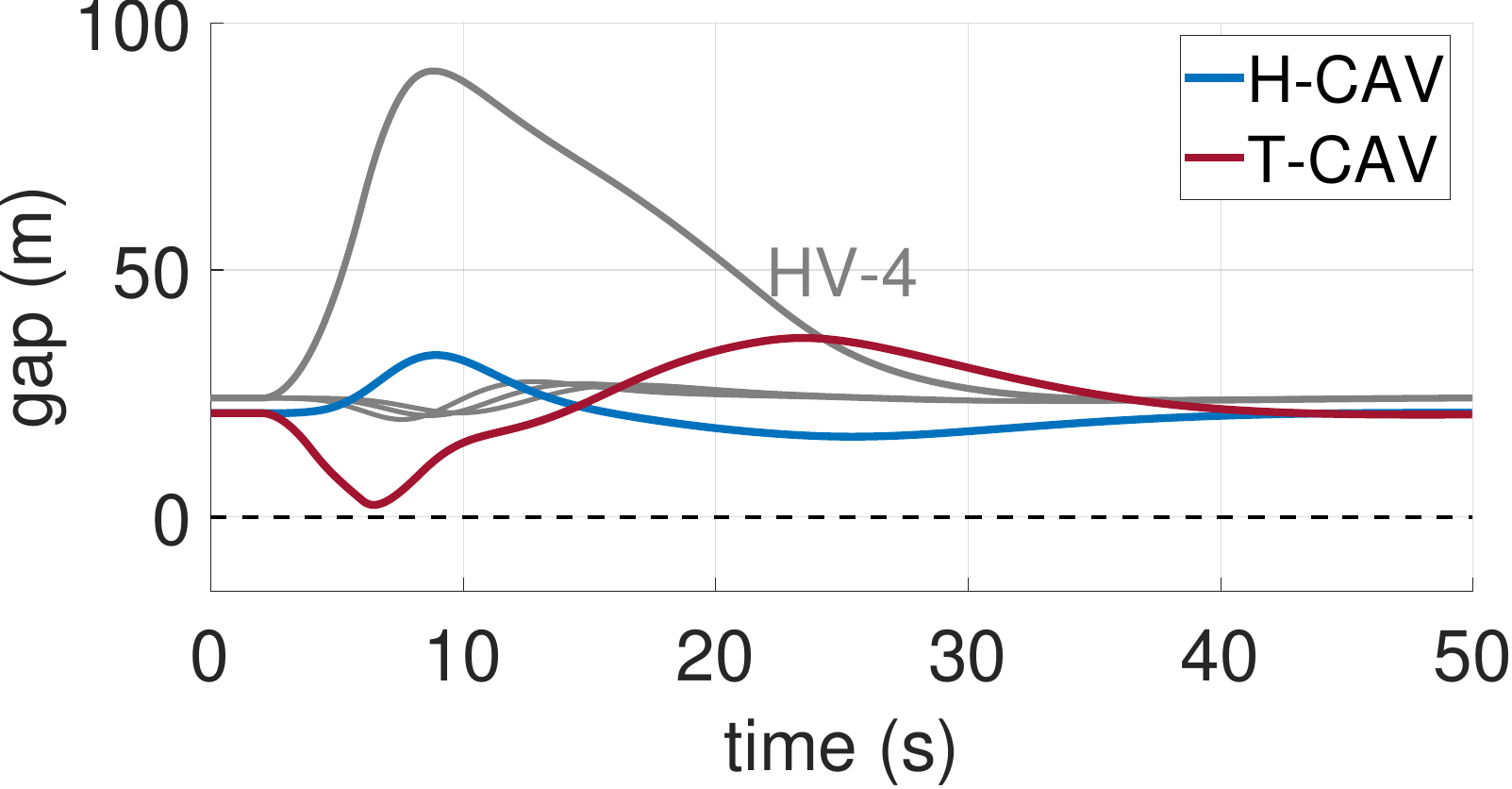}}
   \subfloat[Speed $v$]{\includegraphics[width=0.25\linewidth]{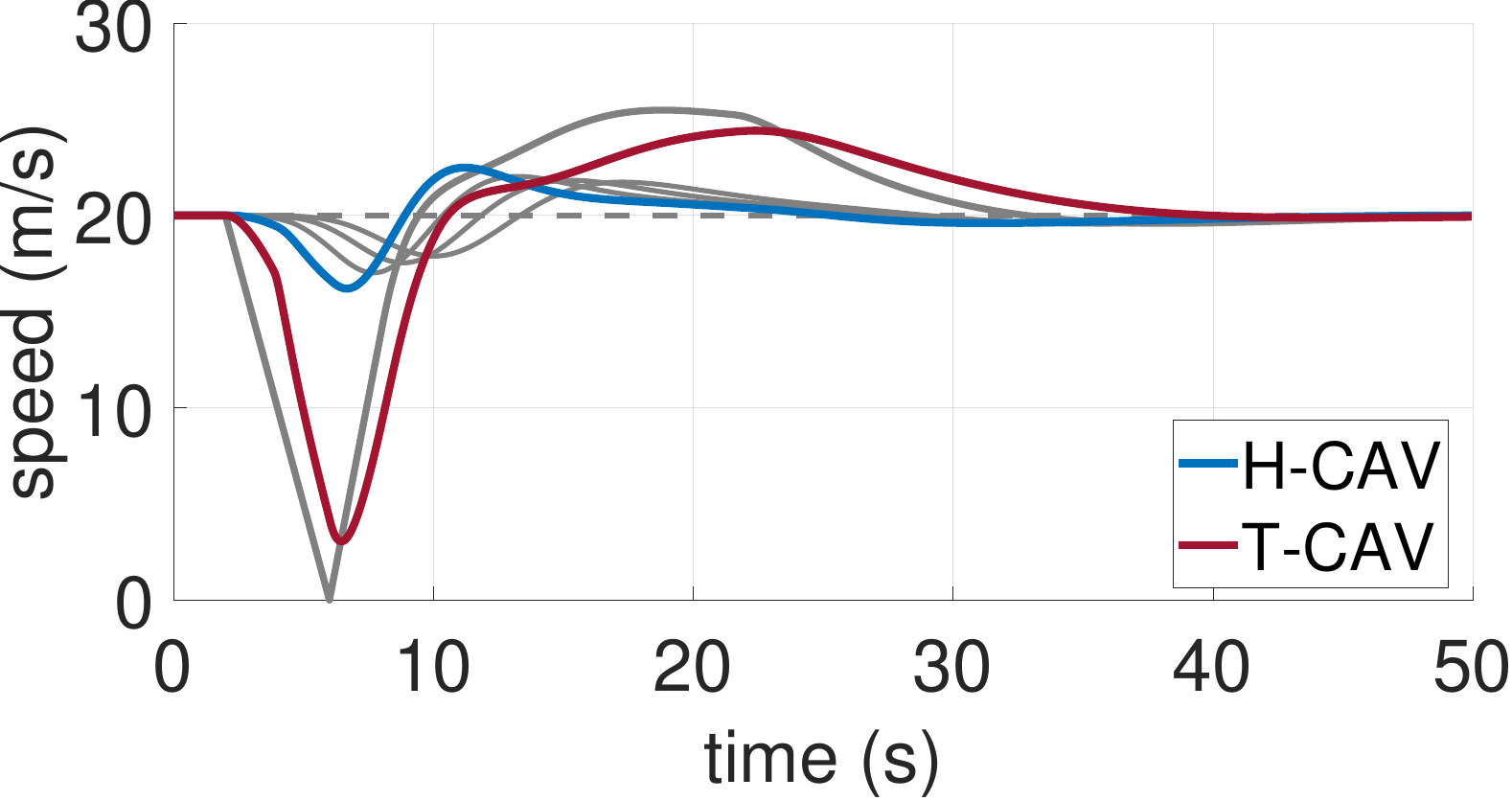}}
    \subfloat[Acceleration $a$]{\includegraphics[width=0.25\linewidth]{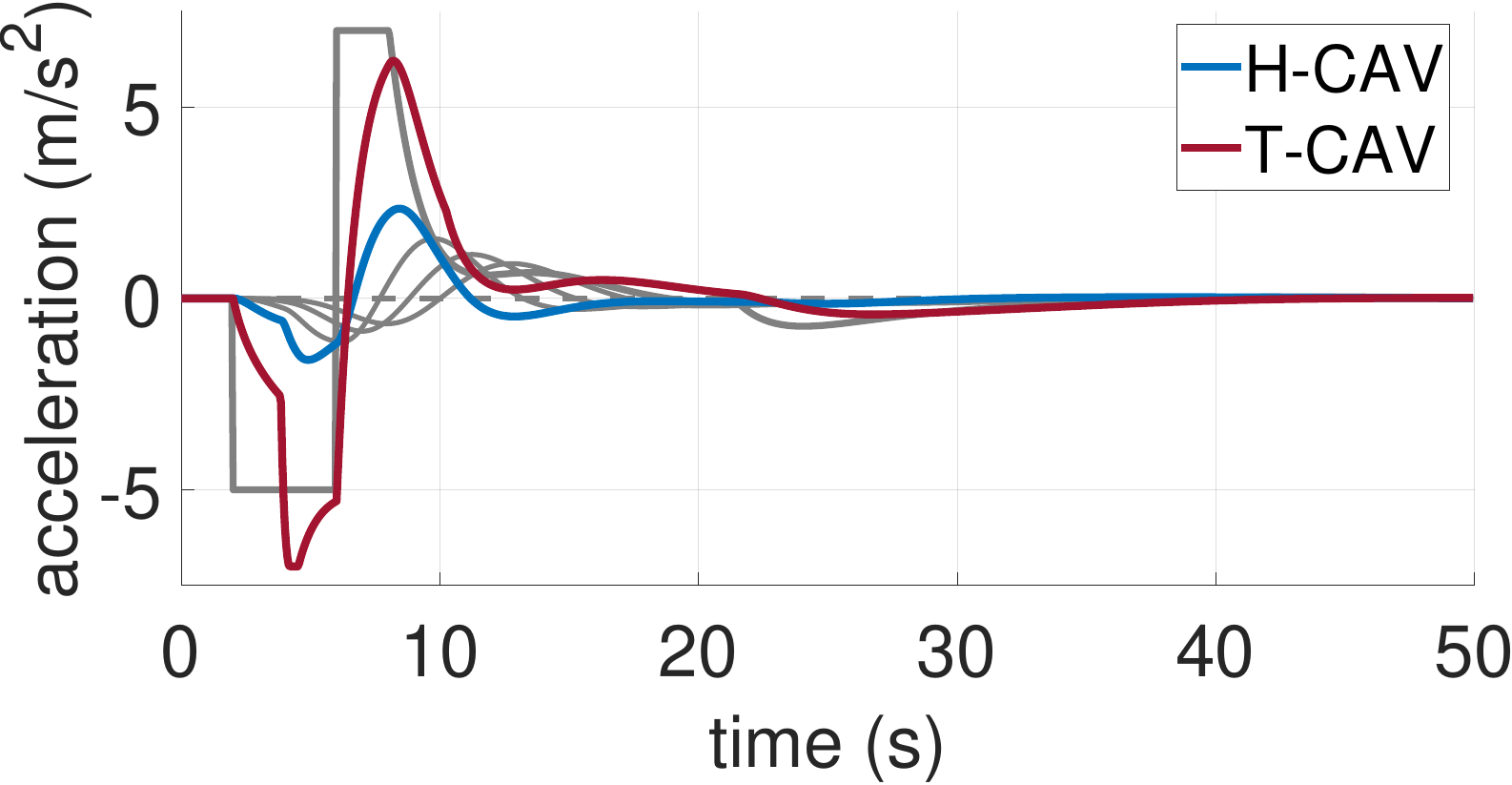}}
    \caption{{\it A middle HV suddenly decelerates:} simulated trajectory of mixed vehicle platoon. Due to the deceleration of HV-4, the tail CAV becomes unsafe when using the nominal controller, while the CBF guarantees safety.}
    \label{fig:trajectory HV dec}
\end{figure}

\subsection{CAV safety guaranteed via CBF} \label{sec:subsec:simulation CBF}

We consider two safety-critical scenarios that may happen in mixed traffic and pose rear-end collision risks:
\begin{itemize}
    \item {\it The head HV suddenly decelerates.} This will pose safety risks for the H-CAV, and possibly also for the T-CAV. This may be caused in real traffic by an aggressive cut-in, pedestrian, or obstacles on the road. In the simulation, we set the head HV's acceleration profile as:
\begin{equation}
    \dot v_\hhv = \left\{
    \begin{array}{ll}
        -a_\hhv, & t \in [2,2+\Delta v_\hhv /a_\hhv], \\
        a_\hhv, & t \in (2+\Delta v_\hhv /a_\hhv,2+2\Delta v_\hhv /a_\hhv], \\
       0, & \mathrm{otherwise},
    \end{array}
       \right.
    \end{equation}
    where $a_\hhv$ is a constant deceleration, $\Delta v_\hhv$ is its speed perturbation, and $\Delta v_\hhv /a_\hhv$ is the duration of deceleration. 
    \item {\it One middle HV suddenly decelerates.} This will pose safety risks for the T-CAV. We make the HV-$i$ suddenly reduce its speed by $\Delta v_i$ with a constant deceleration $a_i$. The acceleration profile of HV-$i$ is, cf.~\eqref{eq:system HV v},:
    \begin{equation}
        \dot{v}_i = \left\{
    \begin{array}{ll}
        -a_i, & t \in [2,2+\Delta v_i /a_i], \\
       F_i(s_i,v_{i},\dot{s}_i), & \mathrm{otherwise},
    \end{array}
       \right.
    \end{equation}

\end{itemize}

Considering the physical limits of vehicle braking systems, we use saturated acceleration in the simulation, i.e., $\dot{v}_i = \mathrm{sat}(F_i(s_i,v_i,\dot{s}_i))$, $\dot{v}_{\headcav}=\mathrm{sat}(u_{\headcav})$, $\dot{v}_{\tailcav}=\mathrm{sat}(u_{\tailcav})$, where the saturation function is
\begin{align}
    \mathrm{sat} (u) =  \begin{cases}
    u_{\min}, & u< u_{\min}, \\
    u, & u_{\min} \le u \le u_{\max}, \\
    u_{\max}, & u> u_{\max},
    \end{cases}
\end{align}
with $u_{\max}$ and $u_{\min}$ being the maximum and minimum acceleration. In the simulation, we use $u_{\max} = 7\; \mathrm{m/s}^2$ and $u_{\min} = -7\; \mathrm{m/s}^2$.  For the nominal stabilizing controller, we consider CAV coordination gain as $\beta_{\headcav,\tailcav} = 0.5$ s$^{-1}$  and $\beta_{\tailcav,\headcav} = 1.2$  s$^{-1}$. We set the safe time headway as $\tau_{\headcav} =0.8 $ s and $\tau_{\tailcav} = 0.8$ s.  For the CBF parameters, we use $\gamma_{\headcav} = 5$ s$^{-1}$ and $\gamma_{\tailcav} = 5$ s$^{-1}$. The human-driver model $F_i$ and the rest of the parameters are the same as in Section~\ref{sec:subsec stability chart}.

Fig.~\ref{fig:trajectory head HV dec} shows the simulated trajectories by using the nominal controller~\eqref{eq:nominal controller head CAV}, \eqref{eq:nominal controller tail CAV} and the safety-critical controller~\eqref{eq:min_controller_head},~\eqref{eq:min_controller_tail} for the first scenario when the head HV suddenly decelerates from ${v^*=20 \ \mathrm{m/s}}$ to a stop with $a_{\hhv} = 5 \ \mathrm{m/s^2}$ and $\Delta v_{\hhv} = 20$ m/s. The behavior of the nominal controller is shown in the top row. The CAVs stabilize the traffic by having a small deceleration (see panels (c) and (d)), but this causes a collision between the head CAV and the head HV (see the negative gap along the blue curve in panel (b)). Besides, the tail CAV also becomes unsafe in the sense that its safety function becomes negative (see the red curve in panel (a)). The behavior of the safety-critical controller that includes CBF-based safety filter is shown in the bottom row.
Initially, when the head HV drives at the constant speed $v^*$, the safety-critical controller matches the nominal controller since it satisfies the CBF constraints~\eqref{eq:CBF head CAV},~\eqref{eq:CBF tail CAV}. When the head HV begins to decelerate, the CBF is activated around 5 sec, and the head CAV has a larger deceleration than with the nominal controller (see panels (g) and (h)).
This way the head CAV successfully avoids collision (see the positive gap in panel (f)).
The safety functions are kept positive throughout the motion (see panel (e)), which indicates that both CAVs are safe.

Besides avoiding unsafe scenarios caused by the nominal controller,  the safety-critical controller also maintains both plant stability and head-to-tail string stability of the system. 
From the profiles of gap $s$ in the second column and speed $v$ in the third column of Fig.~\ref{fig:trajectory head HV dec}, we observe plant stability as $s$ and $v$ converge to the equilibrium values $s^*$ and $v^*$ around 45 sec. To evaluate head-to-tail string stability, we quantify the speed perturbations of the tail CAV relative to the head HV by:
\begin{align}\label{eq:stability index tail}
    I = \frac{\sqrt{\int_0^T (v_{\tailcav} - v^*)^2 \diff t}}{\sqrt{\int_0^T (v_{\hhv} - v^*)^2 \diff t}},
\end{align}
where $T=50$ sec is the total simulation length. By Definition~\ref{definition:head to tail string stability}, if $I<1$, then the system is head-to-tail string stable.  After incorporating the CBF, we have $I = 0.698 <1$.  
Therefore, by implementing CBF, the system is both string stable and safe.
Note that $I = 0.589$ for the nominal controller, i.e., the value of $I$ is higher for the safety-critical controller.
This means that safety is achieved at the price of higher speed fluctuations (but without losing head-to-tail string stability).

In Fig.~\ref{fig:trajectory HV dec}, we plot the profiles of the safety function, gap, speed, and acceleration when one middle HV suddenly decelerates. We consider HV-4 to decelerate with $a_4 = 5 \ \mathrm{m/s^2}$ and $\Delta v_4 = 20$ m/s.  As the top row of Fig.~\ref{fig:trajectory HV dec} shows, considering the nominal controller, the deceleration of HV-4 poses a safety risk for the tail CAV. Meanwhile, based on the bottom row, the safety-critical controller with the CBF has a larger deceleration for the tail CAV to avoid collisions. In this scenario, the safety-filtered controller still stabilizes, since the system converges to the equilibrium state after the perturbation of HV-4 (around 40 sec). We note that the head-to-tail string stability index in~\eqref{eq:stability index tail} is not applicable, since it concerns cases where the head HV has a speed perturbation while in this scenario the head HV drives at a constant speed.

\subsection{Connected HV safety and effect of HV connection}\label{sec:subsec:simulation HV}

If a middle HV is connected to the head CAV, then the head CAV may guarantee the safety of this HV via a CBF constraint as~\eqref{eq:CBF HV}. We validate this HV safety constraint. We consider a scenario where {\it one middle HV suddenly accelerates}. This increases the collision risk between the accelerating HV and its leader. We set HV-$i$ to suddenly accelerate with the acceleration profile:
\begin{equation}
    \dot{v}_i = \left\{
    \begin{array}{ll}
        a_i, & t \in [2,2+ \Delta v_{i}/a_i], \\
       F_i(s_i,v_{i},\dot{s}_i), & \mathrm{otherwise},
    \end{array}
   \right.
\end{equation}
with $a_i$ being a constant acceleration, and $\Delta v_{i}$ being the speed perturbation.
We assume that the accelerating HV is HV-1, which is connected to the head CAV, and the nominal controller of the head CAV~\eqref{eq:nominal controller head CAV} includes HV-1's feedback with the controller gain $\beta_{\headcav,1} = 0.1$. The corresponding safety-critical controller of the head CAV is given by~\eqref{eq:QP_head}, where the HV-1 safety constraint with $\bar{h}_{1}$ is included while the other HV safety constraints are omitted. We set the HV's safe time headway as $\tau_1 = 1$ s, and the CBF parameters as $\gamma_1 = 5$ s$^{-1}$, $\eta_1 = 0.5$, $p_1 = 100$. Other parameters, including the tail CAV's nominal controller~\eqref{eq:nominal controller tail CAV} and safety-critical controller~\eqref{eq:min_controller_tail}, the saturation function, and the CBF parameters for the head and tail CAVs, remain the same. 

Fig.~\ref{fig:trajectory HV acc} plots the trajectories by the nominal controller~\eqref{eq:nominal controller head CAV},~\eqref{eq:nominal controller tail CAV} and the safety-critical controller~\eqref{eq:min_controller_tail},~\eqref{eq:QP_head} with $a_1 =  5 \ \mathrm{m/s^2}$ and $\Delta v_1 = 13$ m/s. The results with the nominal controller are shown  in the top row. When HV-1 suddenly accelerates, the head CAV also accelerates, but the gap between HV-1 and the head CAV still becomes too small, hence HV-1 is unsafe (see negative $h_1$). By using the CBF, as shown in the bottom row, the head CAV generates a larger acceleration.  This enlarges the gap between HV-1 and the head CAV, and thus ensures safety (i.e., $h_1$ is positive for all time).   In this scenario, the safety-filtered controller still stabilizes, since the system converges to the equilibrium state around 25 sec.

\textit{Connectivity of HVs}: Next, we investigate how the connectivity of HVs to CAVs affects the stability and safety of the nominal controller.
These simulations correspond to the stability results presented earlier in Fig.~\ref{fig:stability chart}, which considered three communication topologies: no HV connection, tail CAV connects to HVs (look ahead), and head CAV connects to HVs (look behind). 
We investigate the behavior of the vehicle platoon for each communication topology in three simulation scenarios: head HV decelerates, middle decelerates, and middle HV accelerates; corresponding to Figs.~\ref{fig:trajectory head HV dec},~\ref{fig:trajectory HV dec} and~\ref{fig:trajectory HV acc}.

\begin{figure}[t]
    \centering
    Nominal controller \vspace{0.5ex}\\
    \subfloat[Safety measure $h$]{\includegraphics[width=0.25\linewidth]{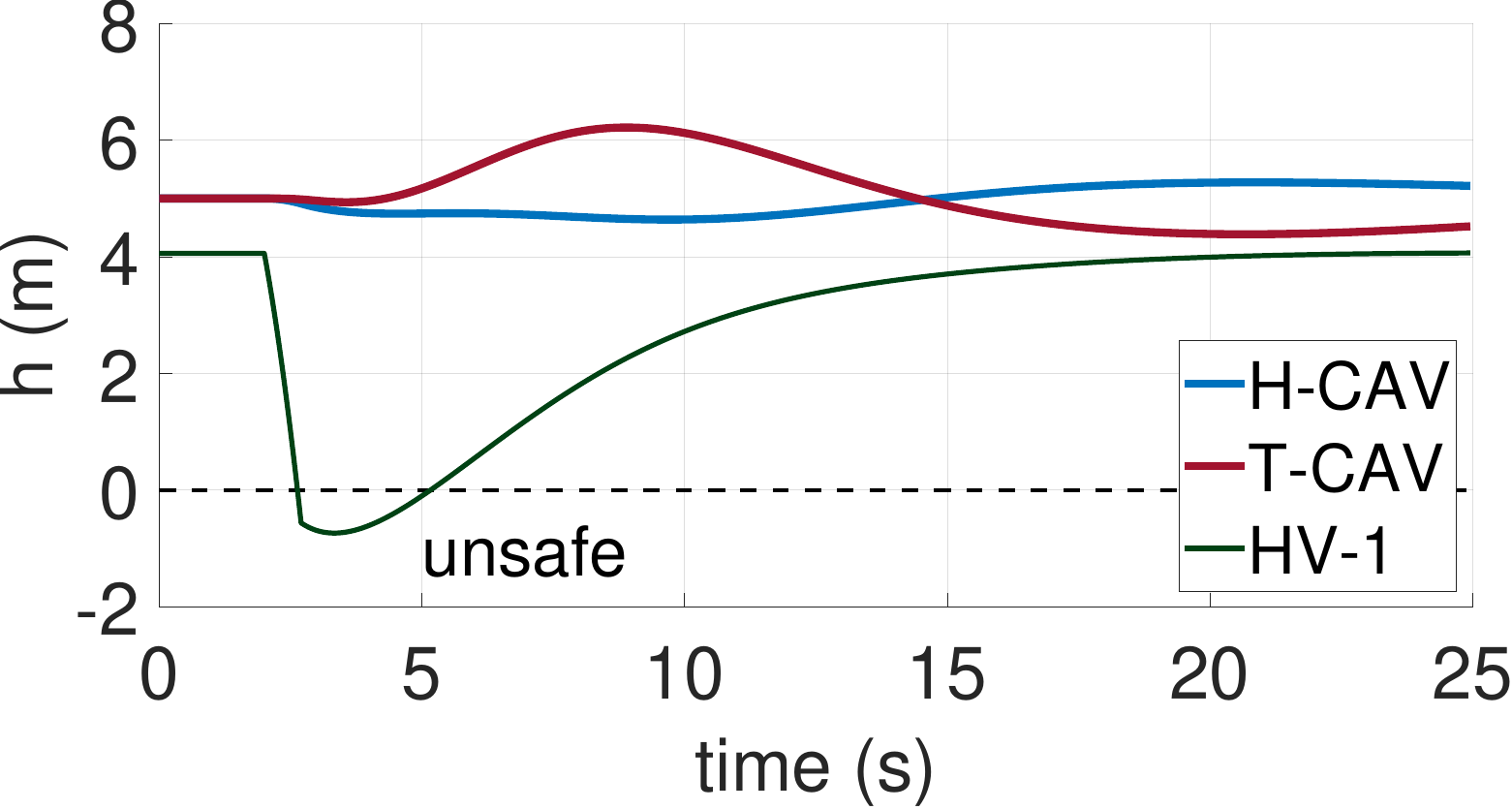}}
    \subfloat[Gap $s$]{\includegraphics[width=0.25\linewidth]{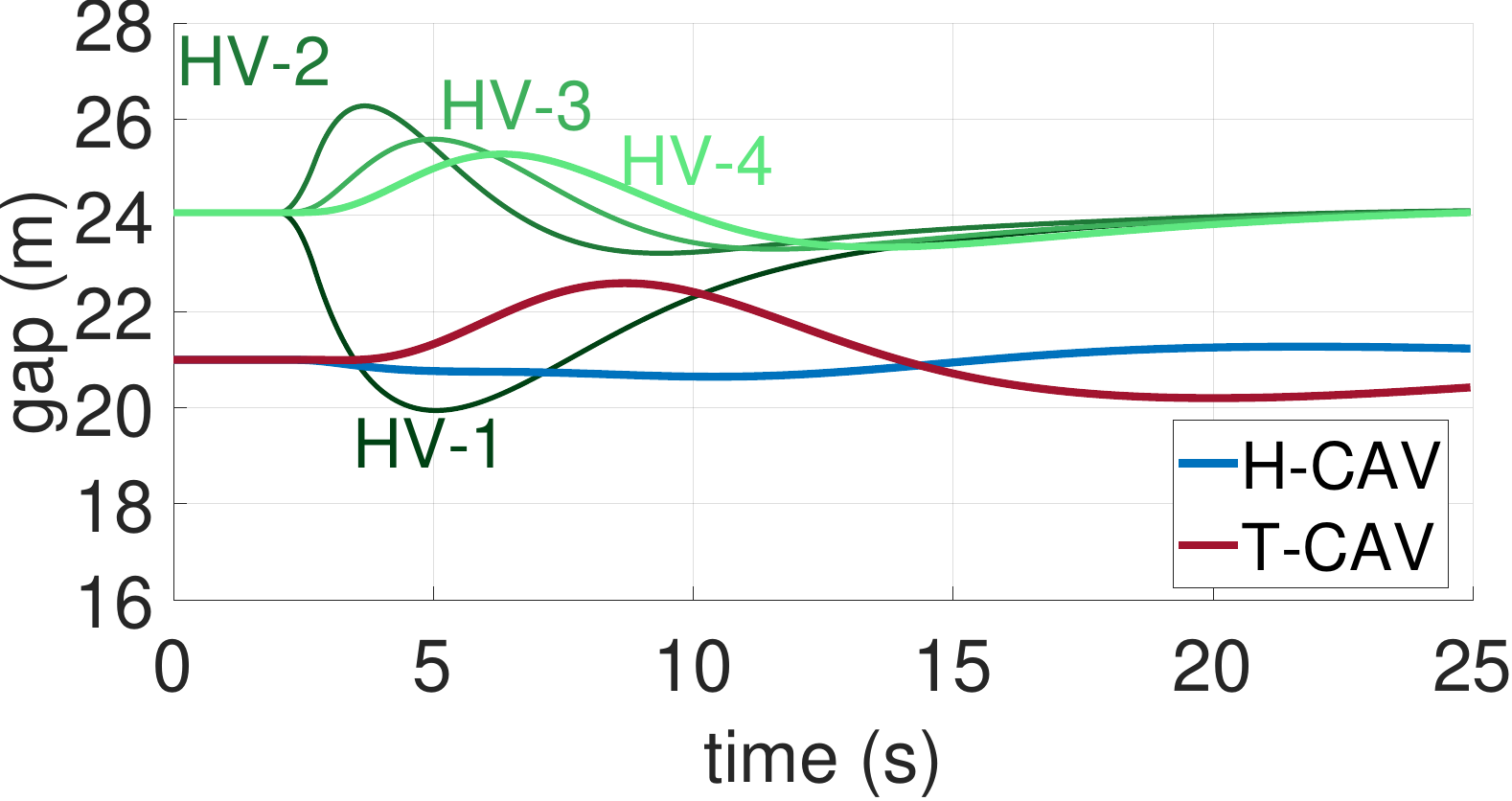}}
    \subfloat[Speed $v$]{\includegraphics[width=0.25\linewidth]{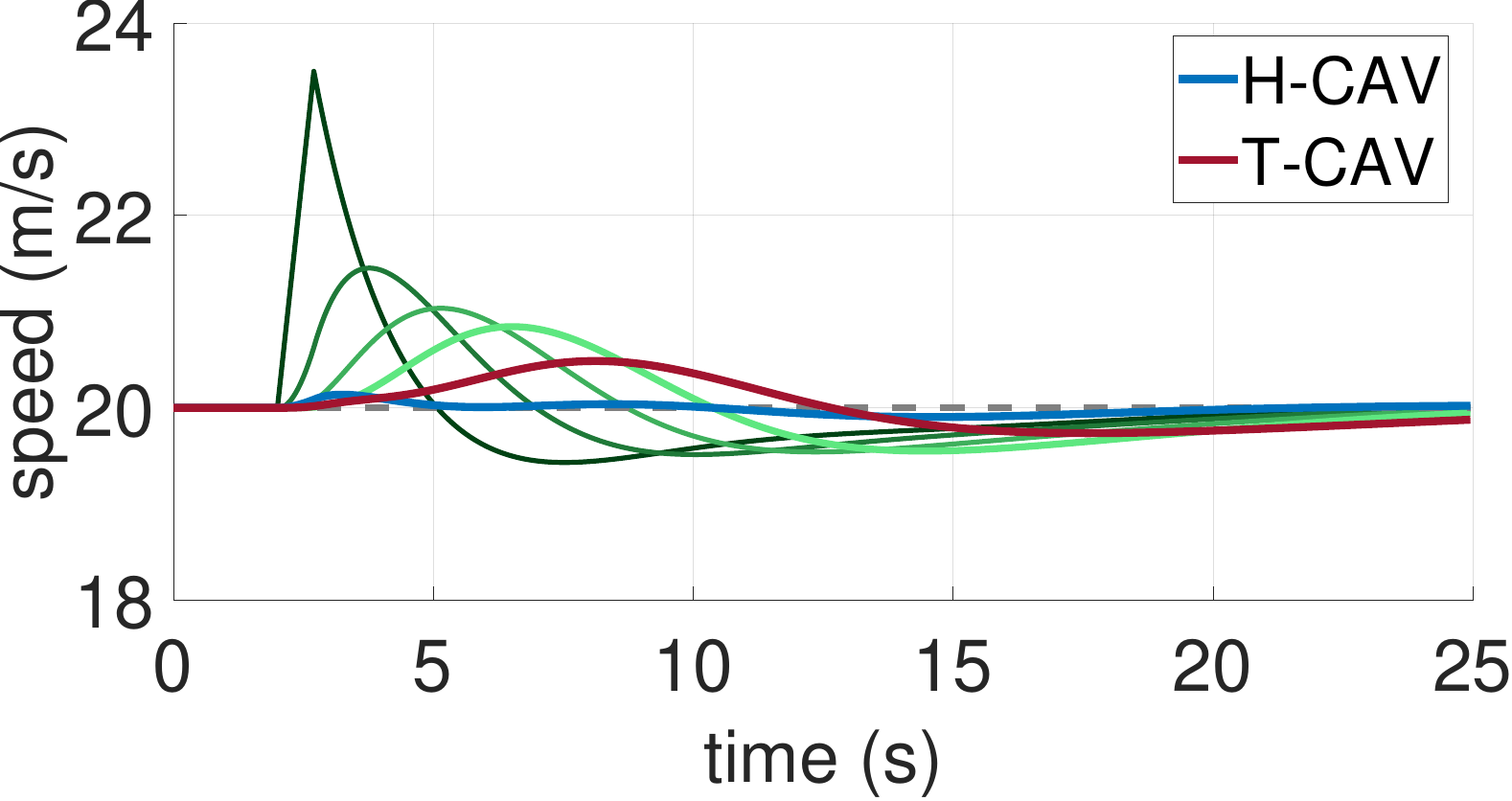}}
    \subfloat[Acceleration $a$]{\includegraphics[width=0.25\linewidth]{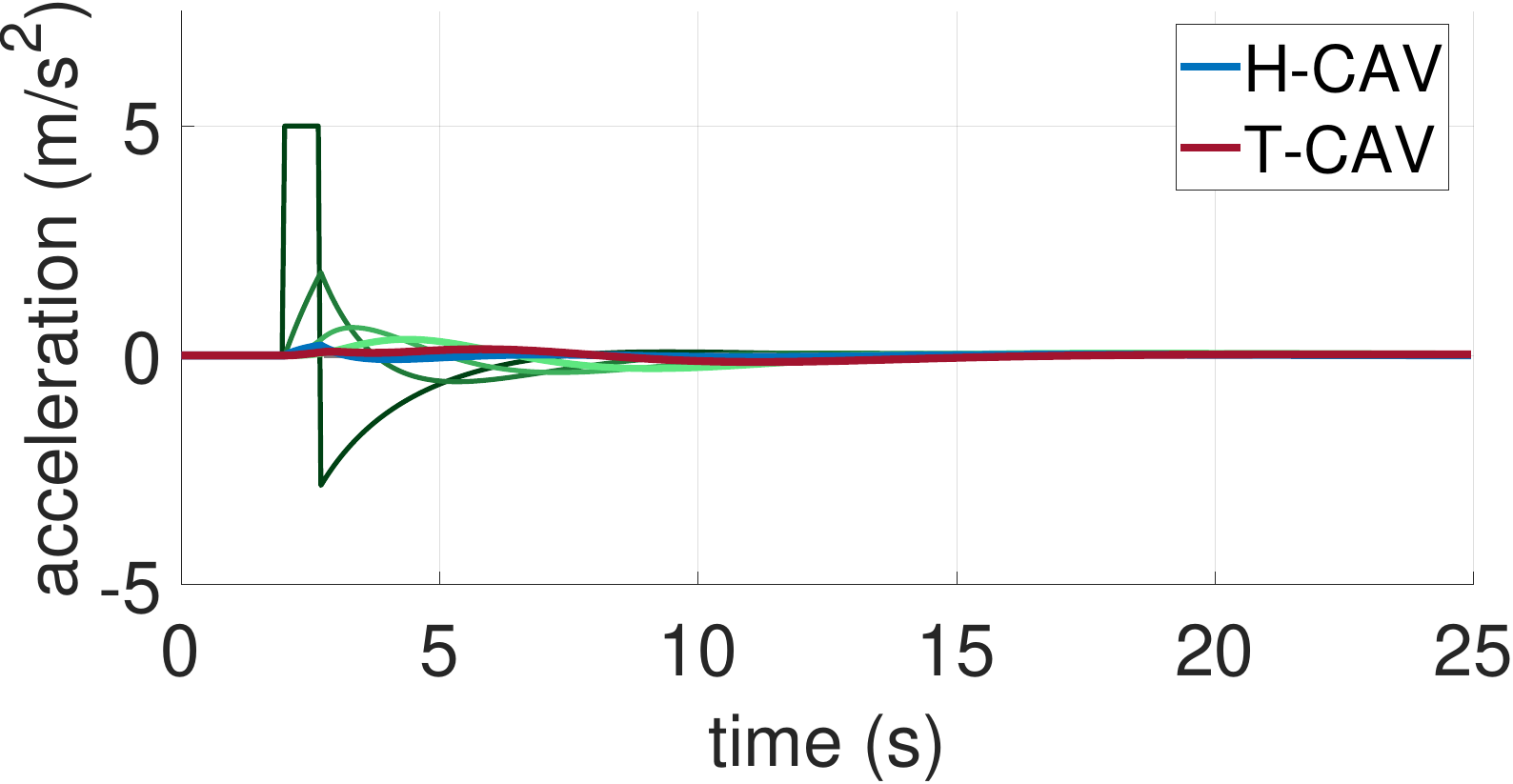}} \\ 
    \vspace{2ex} CBF \vspace{0.5ex}\\
    \subfloat[Safety measure $h$]{\includegraphics[width=0.25\linewidth]{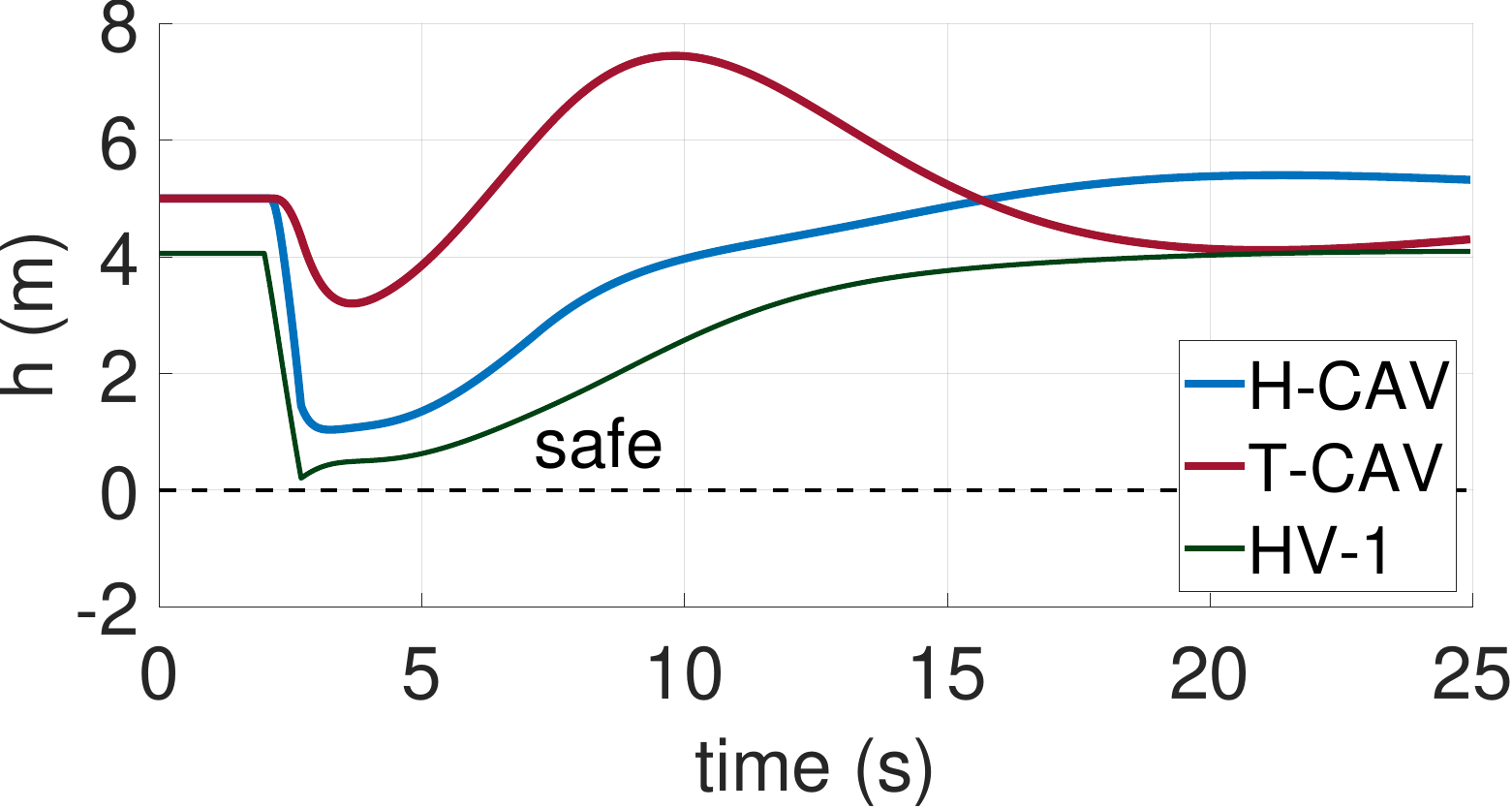}}
    \subfloat[Gap]{\includegraphics[width=0.25\linewidth]{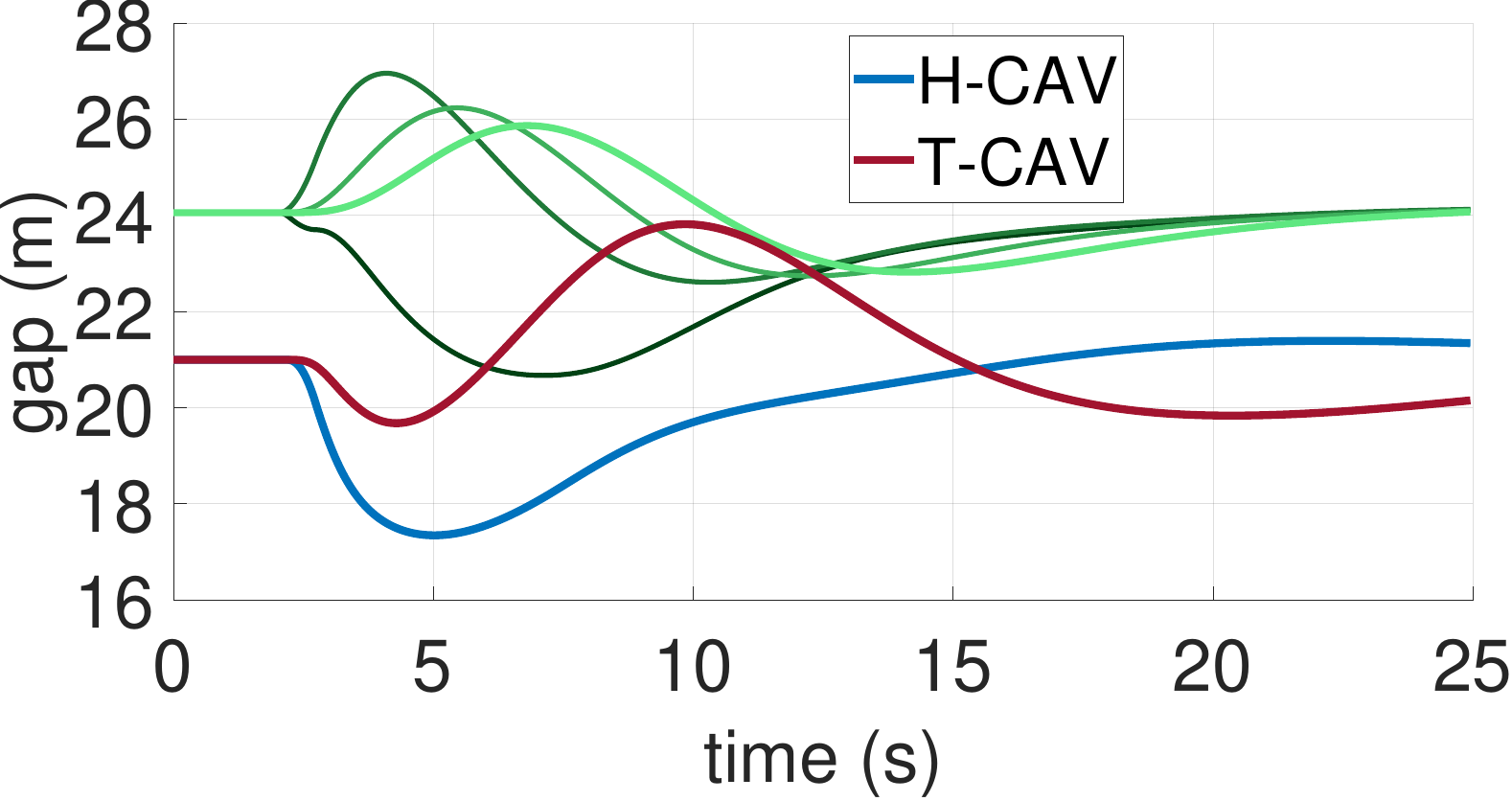}}
    \subfloat[Speed]{\includegraphics[width=0.25\linewidth]{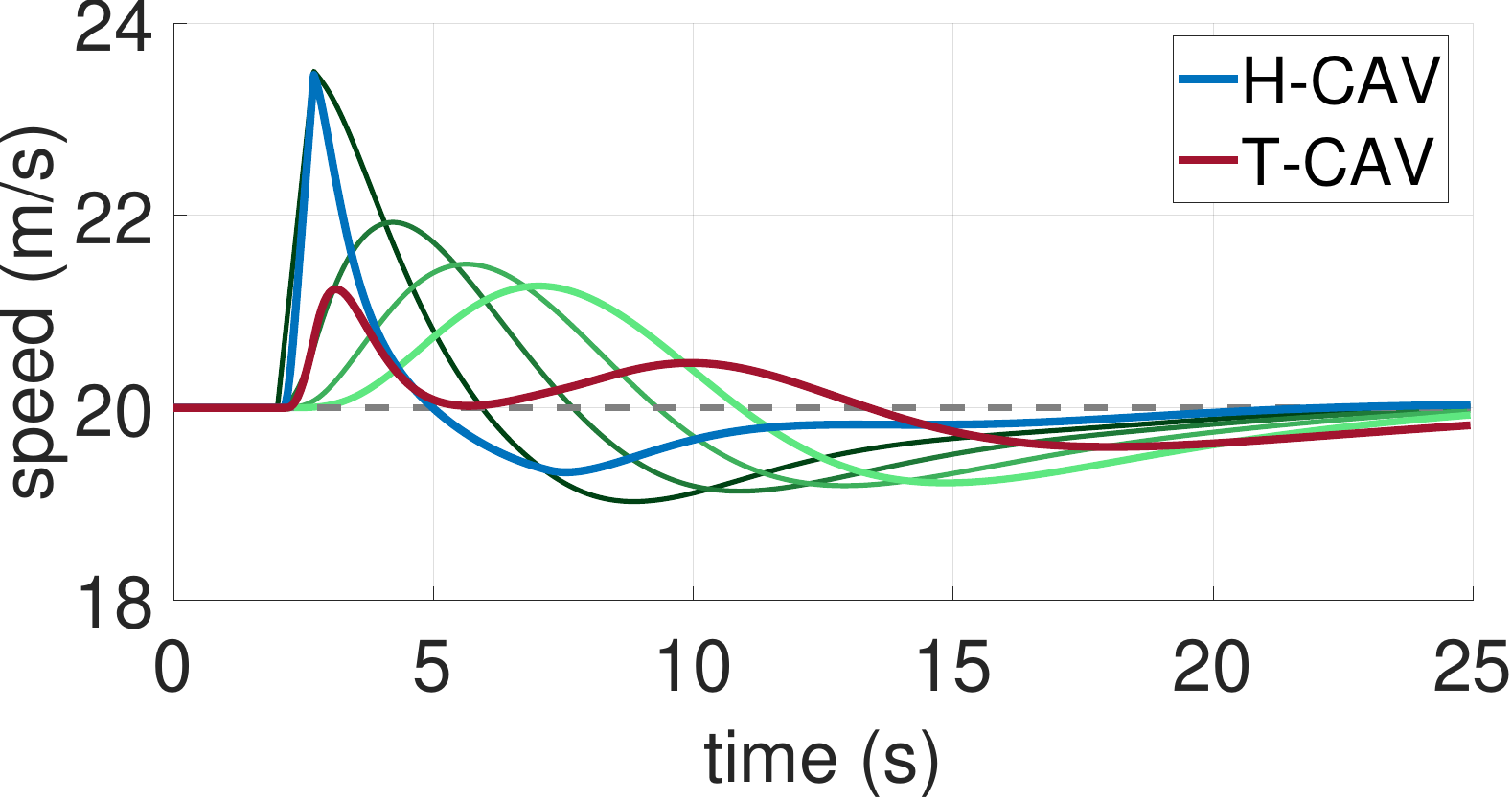}}
    \subfloat[Acceleration]{\includegraphics[width=0.25\linewidth]{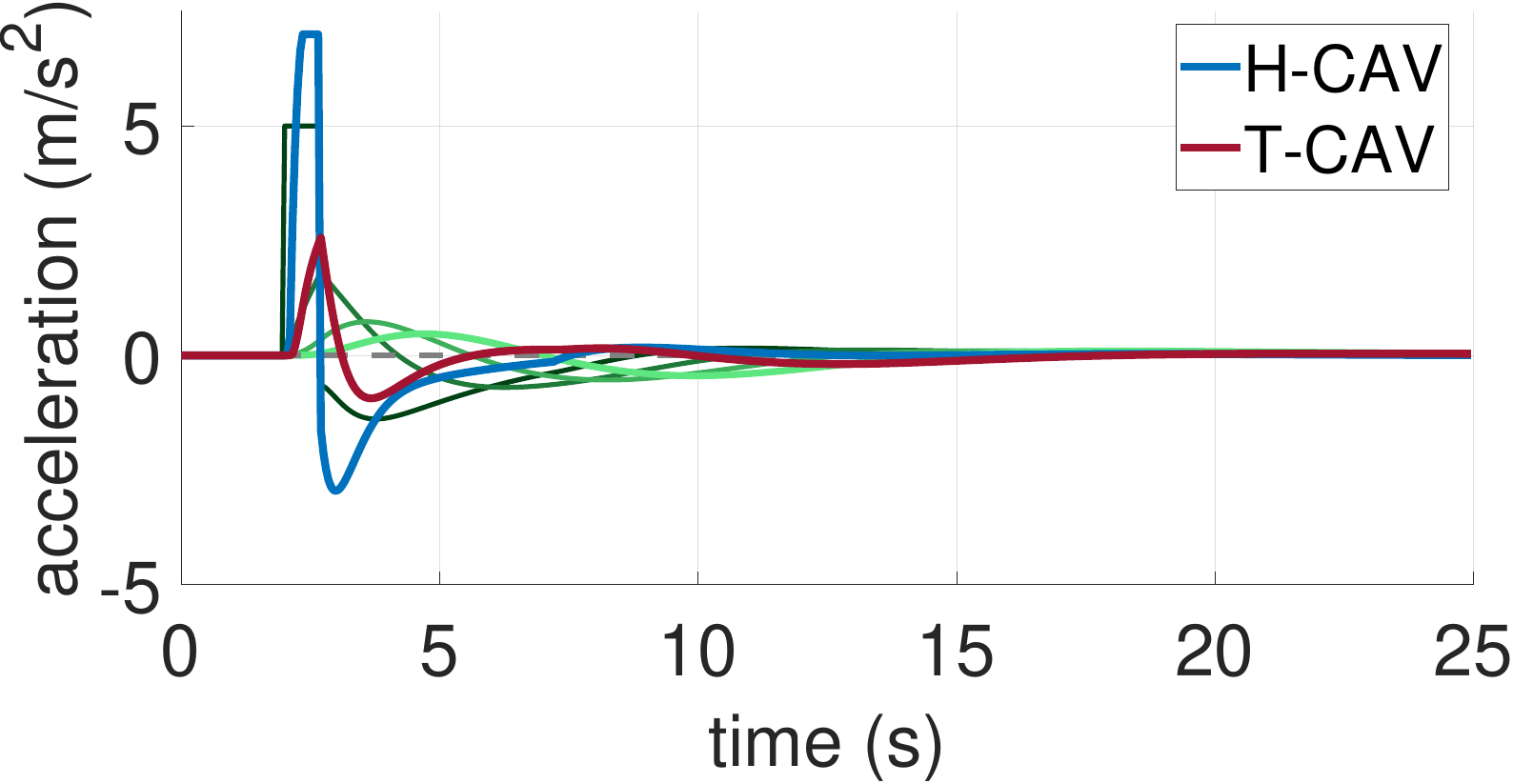}}
    \caption{{\it A middle HV-1 suddenly accelerates:} simulated trajectory of mixed vehicle platoon. The CBF guides the head CAV to accelerate so that collision between HV-1 and the head CAV is avoided.}
    \label{fig:trajectory HV acc}
\end{figure}
The simulation results are shown in Fig.~\ref{fig:analysis HV connection}, where the top row presents the tail CAV's speed, the bottom row shows the tail CAV's safety function, while the three columns correspond to the three simulation scenarios.
Note that the safety functions of the head CAV and middle HVs yield similar curves for each case, thus these plots are omitted.
We see that for all three scenarios looking behind has marginal effect on the results. At the same time, the tail CAV has a smoother speed by looking ahead, which reduces perturbations for the upstream traffic. However, the tail CAV also has a higher risk for unsafe behavior by looking ahead (i.e., the minimum of $h_{\tailcav}$ becomes smaller) compared to the case of no HV connection.
This can be remedied by using CBF-based safety filters, as it was demonstrated in Fig.~\ref{fig:trajectory head HV dec}.

\begin{figure}[t]
    \centering
    \subfloat[head HV decelerates, $v_{\tailcav}$]{\includegraphics[width=0.27\linewidth]{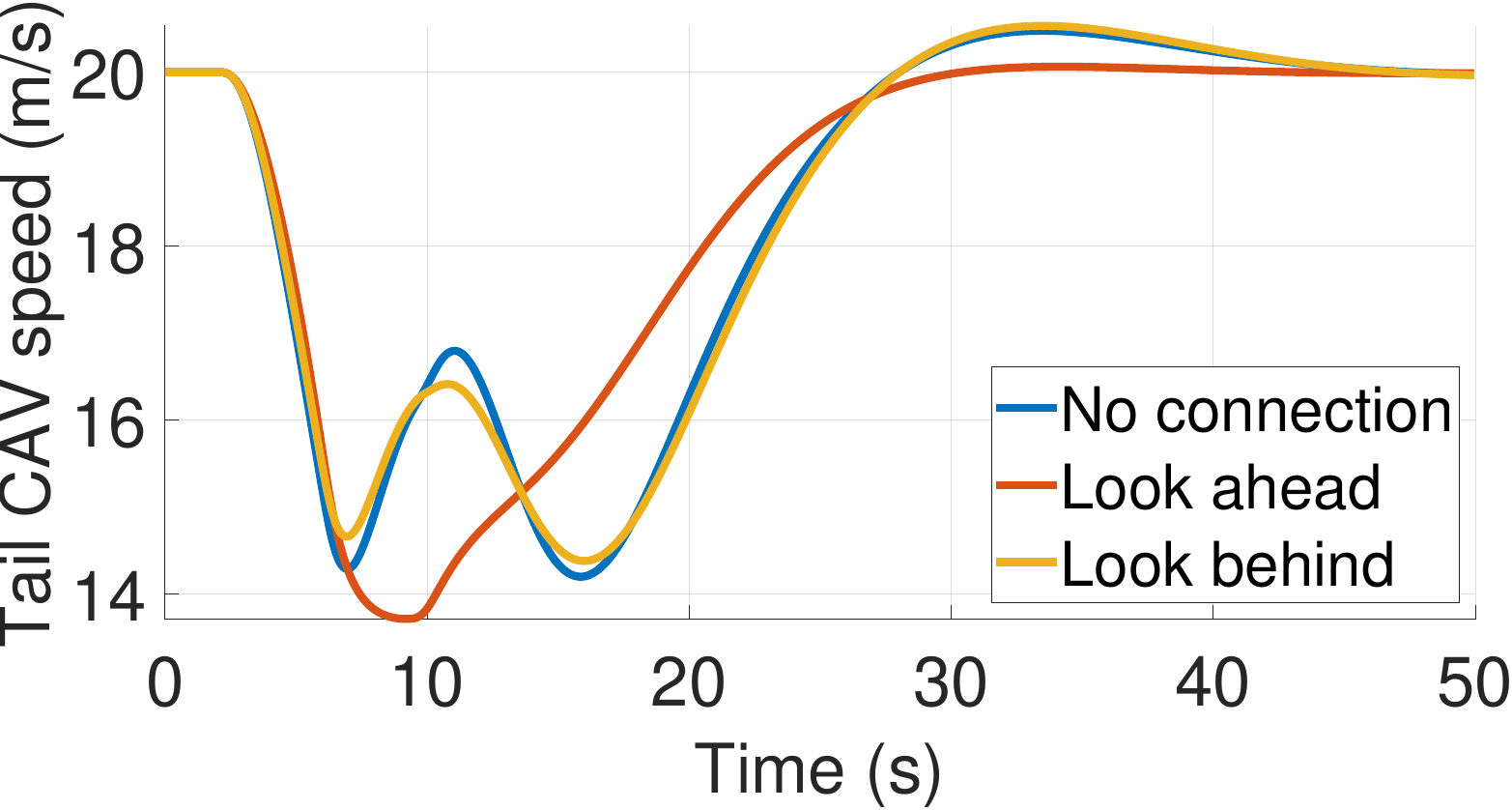}}
    \hspace{1.5 em}\subfloat[middle HV decelerates, $v_{\tailcav}$]{\includegraphics[width=0.27\linewidth]{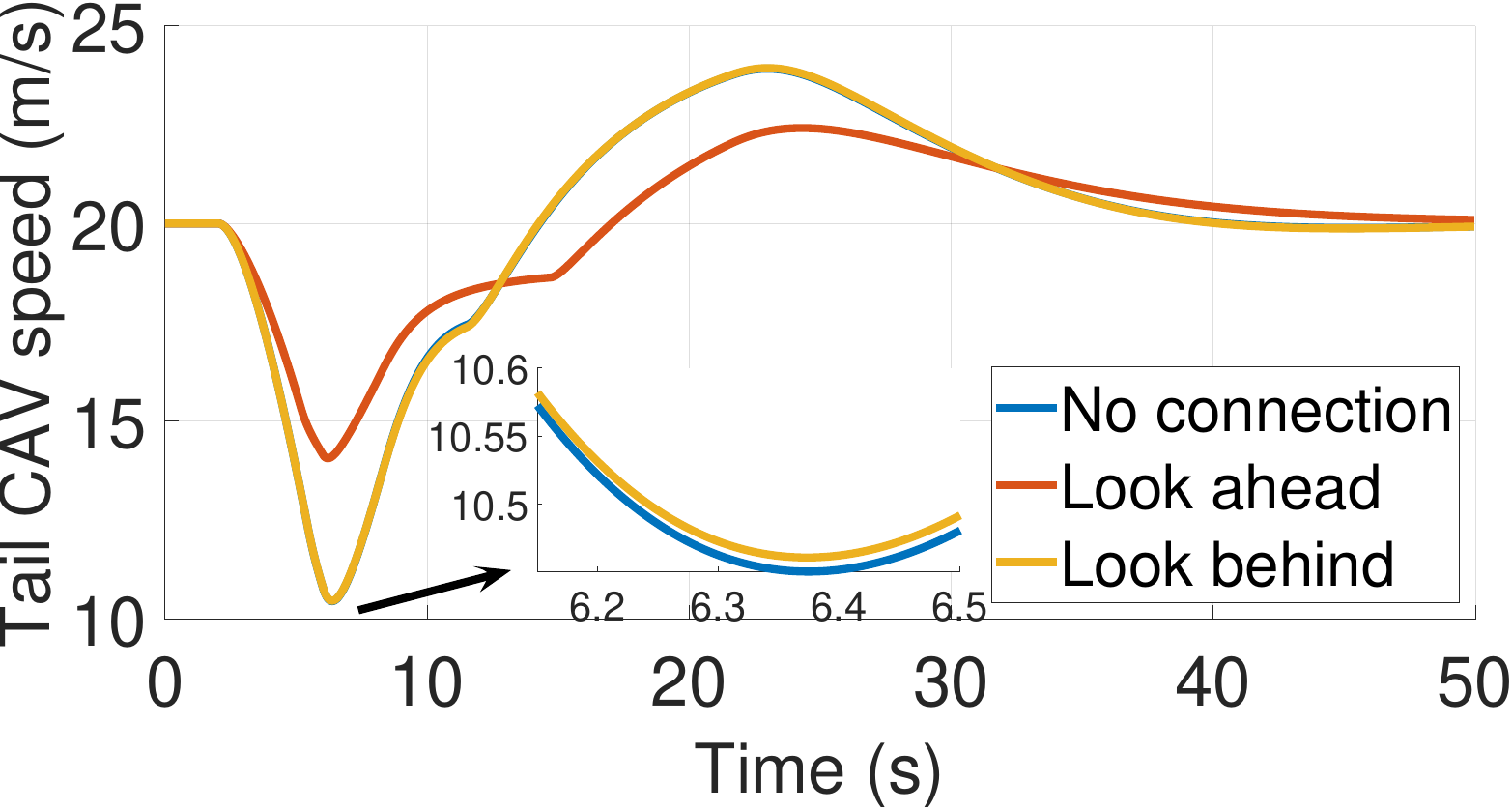}}
    \hspace{1.5 em}\subfloat[middle HV accelerates, $v_{\tailcav}$]{\includegraphics[width=0.27\linewidth]{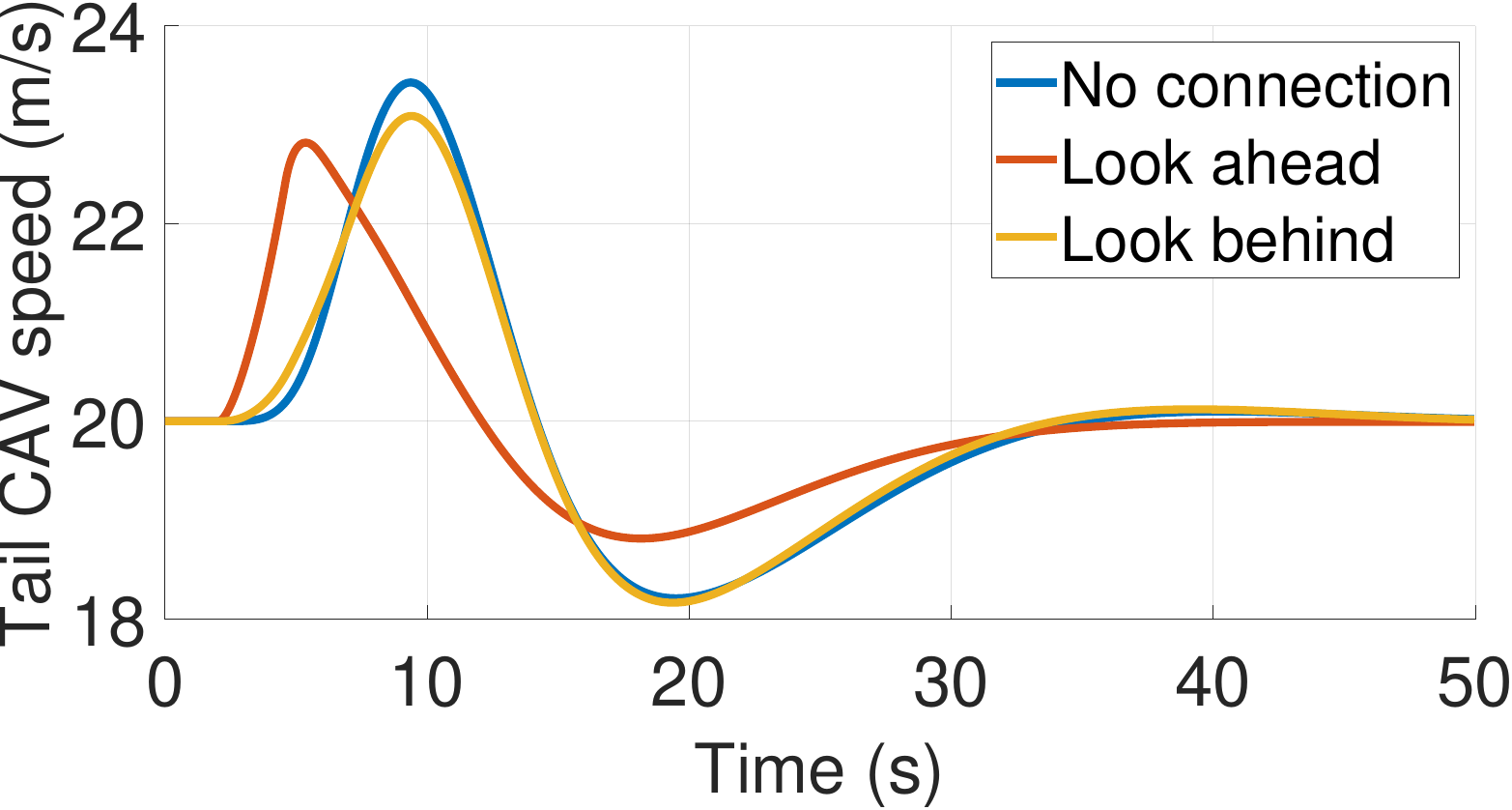}} 
    \\ \vspace{2ex}
    \subfloat[head HV decelerates, $h_\tailcav$]{\includegraphics[width=0.27\linewidth]{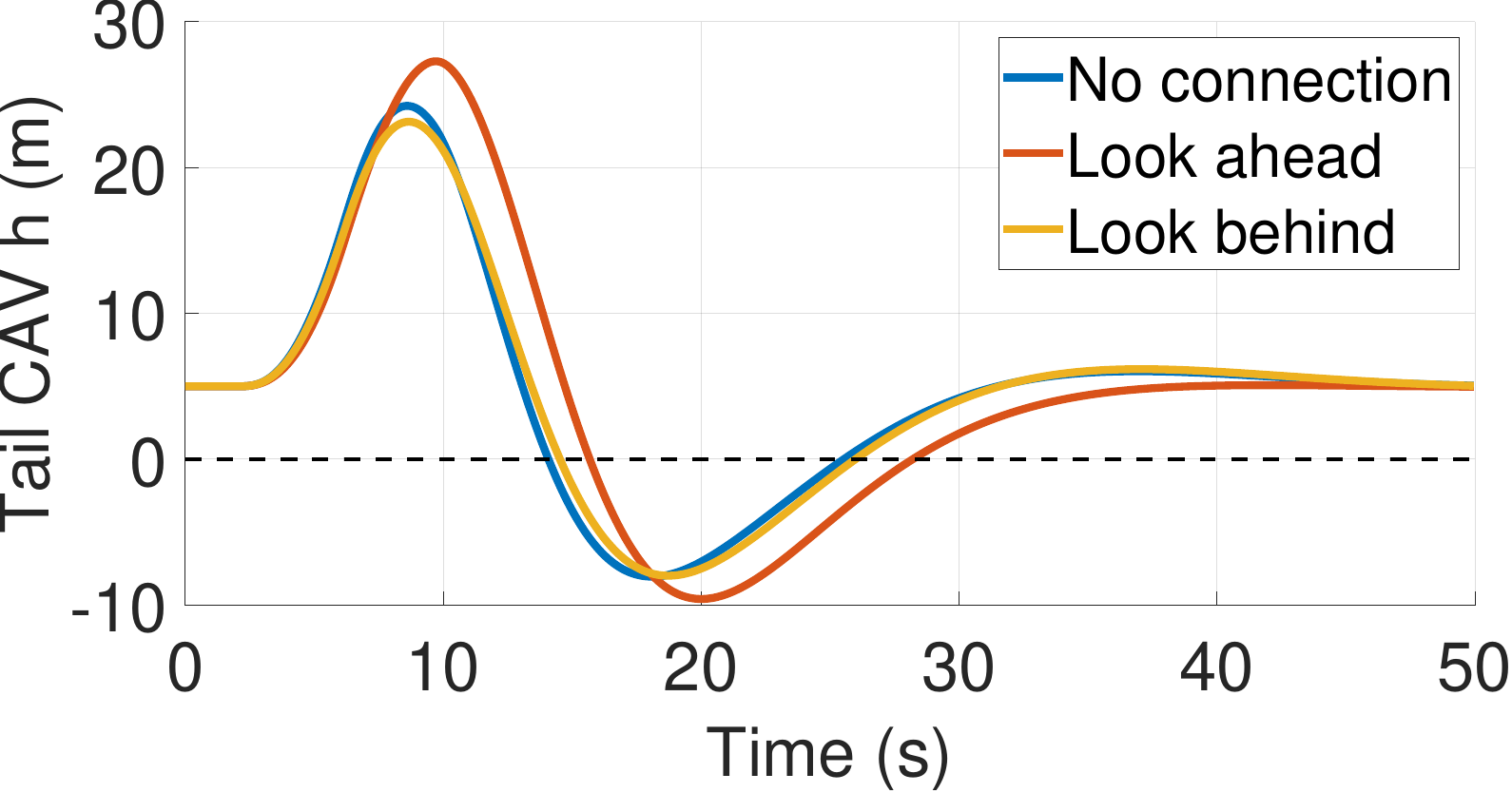}}
    \hspace{1.5 em}\subfloat[middle HV decelerates, $h_{\tailcav}$]{\includegraphics[width=0.27\linewidth]{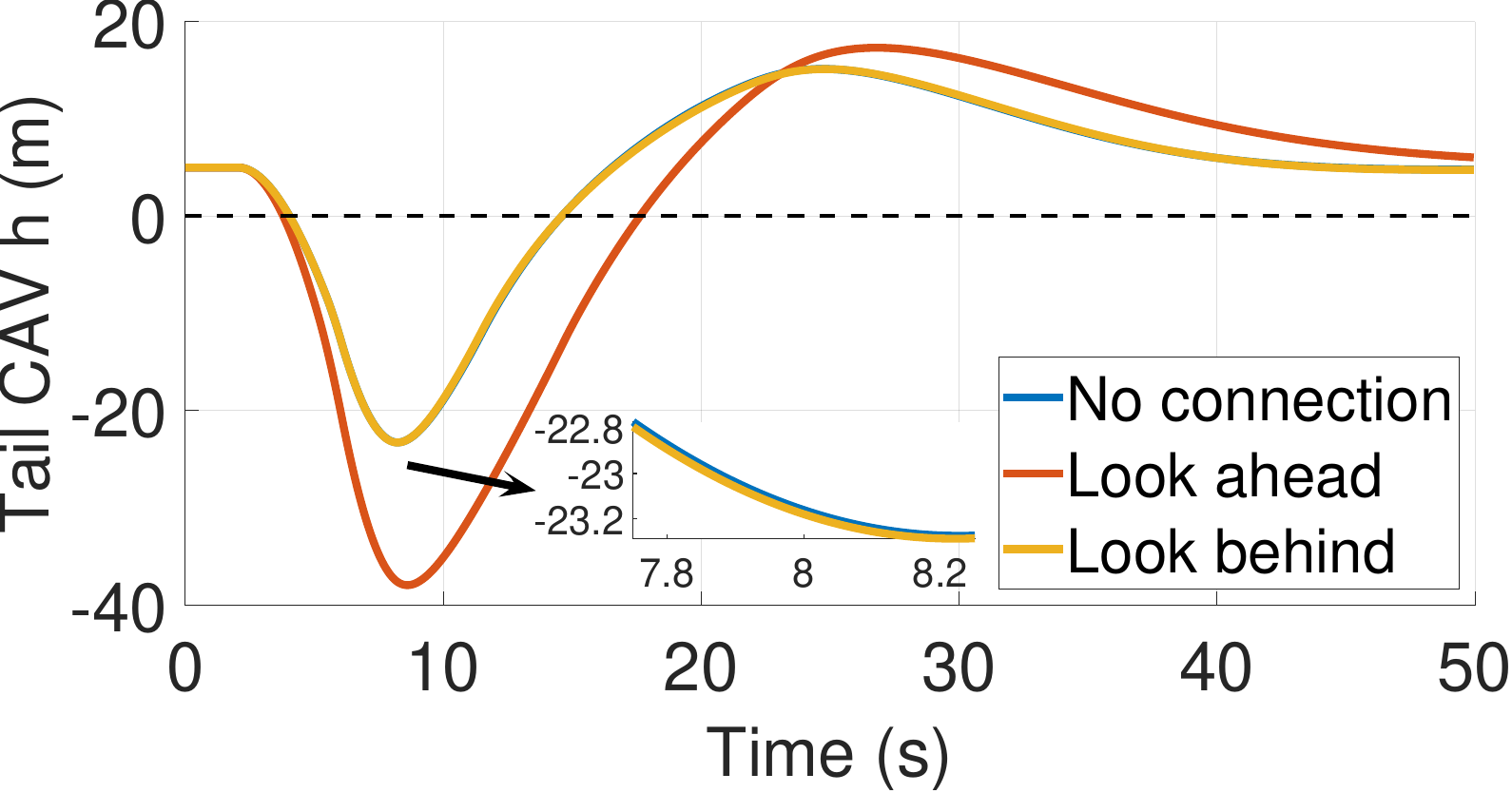}}
    \hspace{1.5 em}\subfloat[middle HV accelerates, $h_1$]{\includegraphics[width=0.27\linewidth]{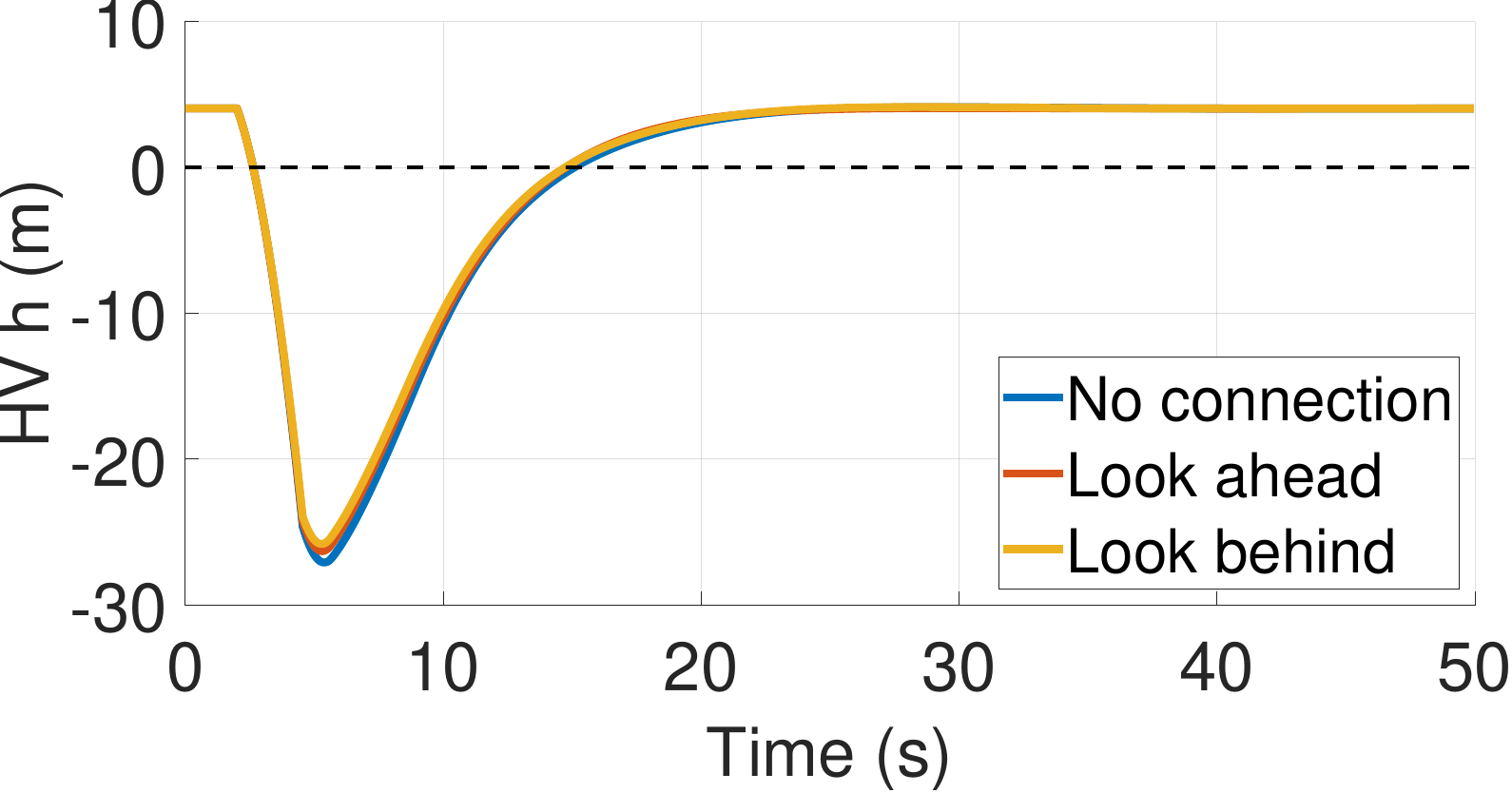}}
    \caption{Simulations of the nominal controller demonstrating the effect of HV connection on stability (first row) and safety (second row).
    When the tail CAV connects to HVs (look ahead), it reduces its speed fluctuations but also hinders its safety compared to the case of no connection.
    When the head CAV connects to HVs (look behind), it has marginal effect.}
    \label{fig:analysis HV connection}
\end{figure}

\begin{figure}[t]
    \centering
    \subfloat[Safety function $h$]{\includegraphics[width=0.24\linewidth]{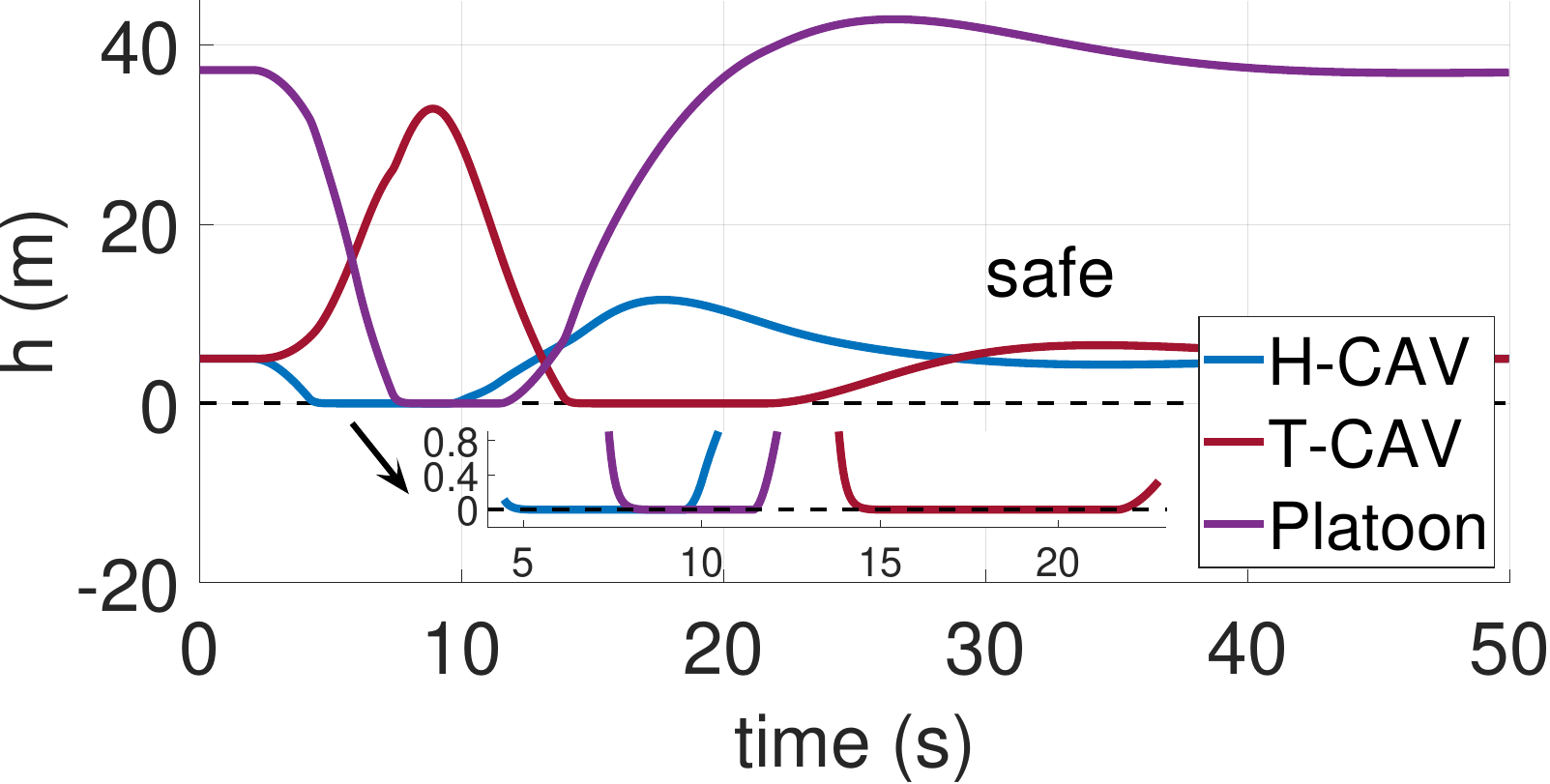}}
    \subfloat[Gap]{\includegraphics[width=0.24\linewidth]{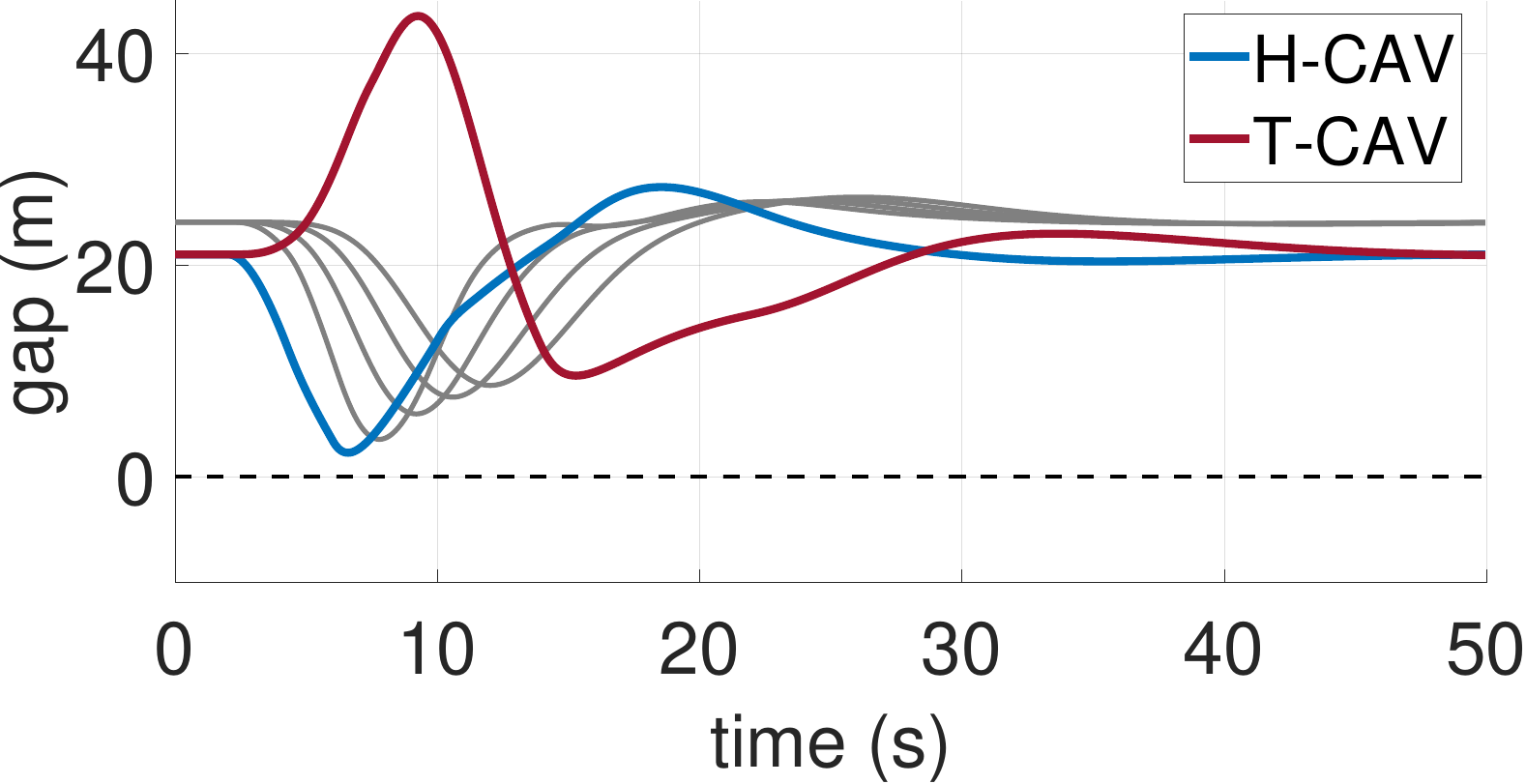}}
    \subfloat[Speed]{\includegraphics[width=0.24\linewidth]{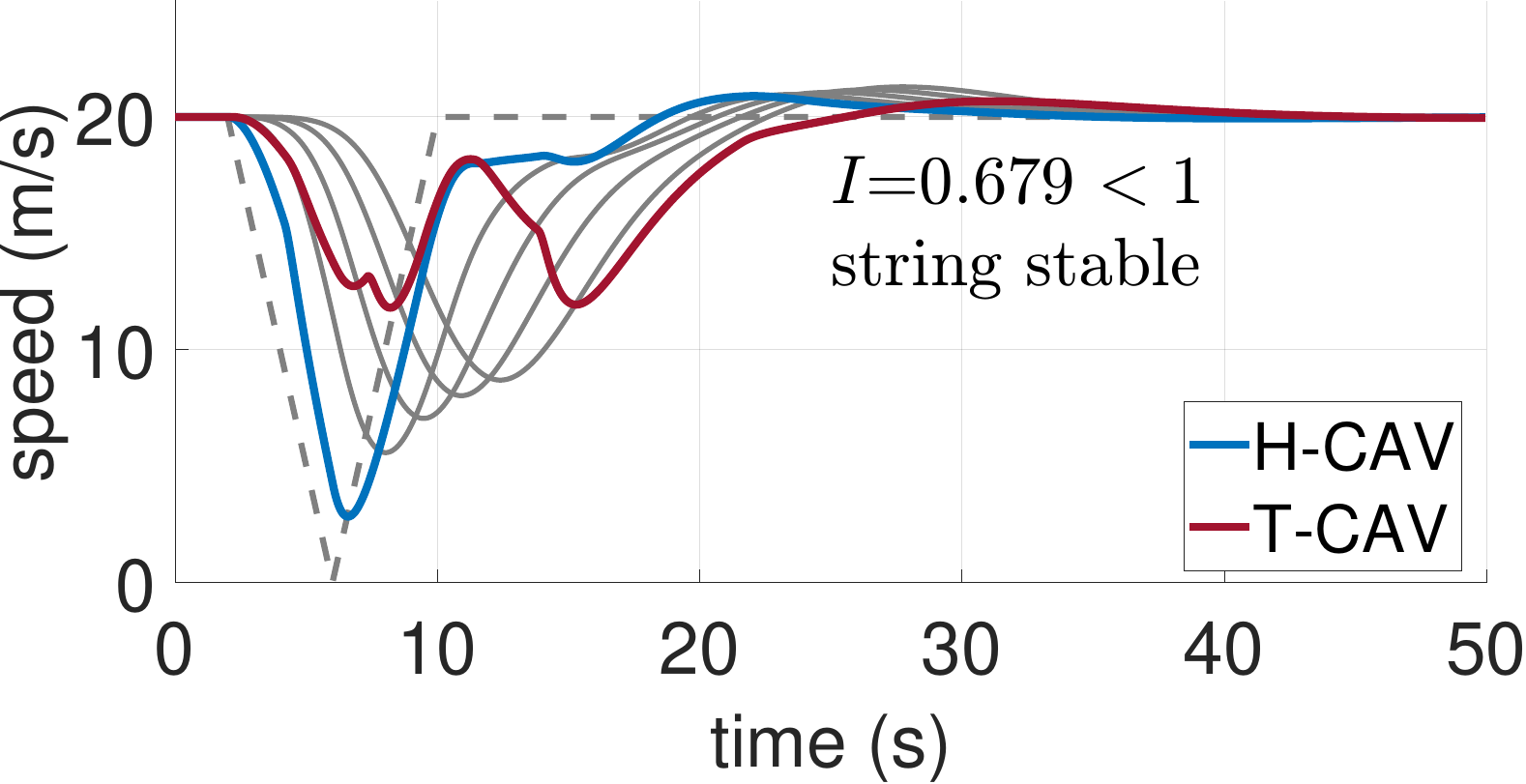}}
    \subfloat[Acceleration]{\includegraphics[width=0.24\linewidth]{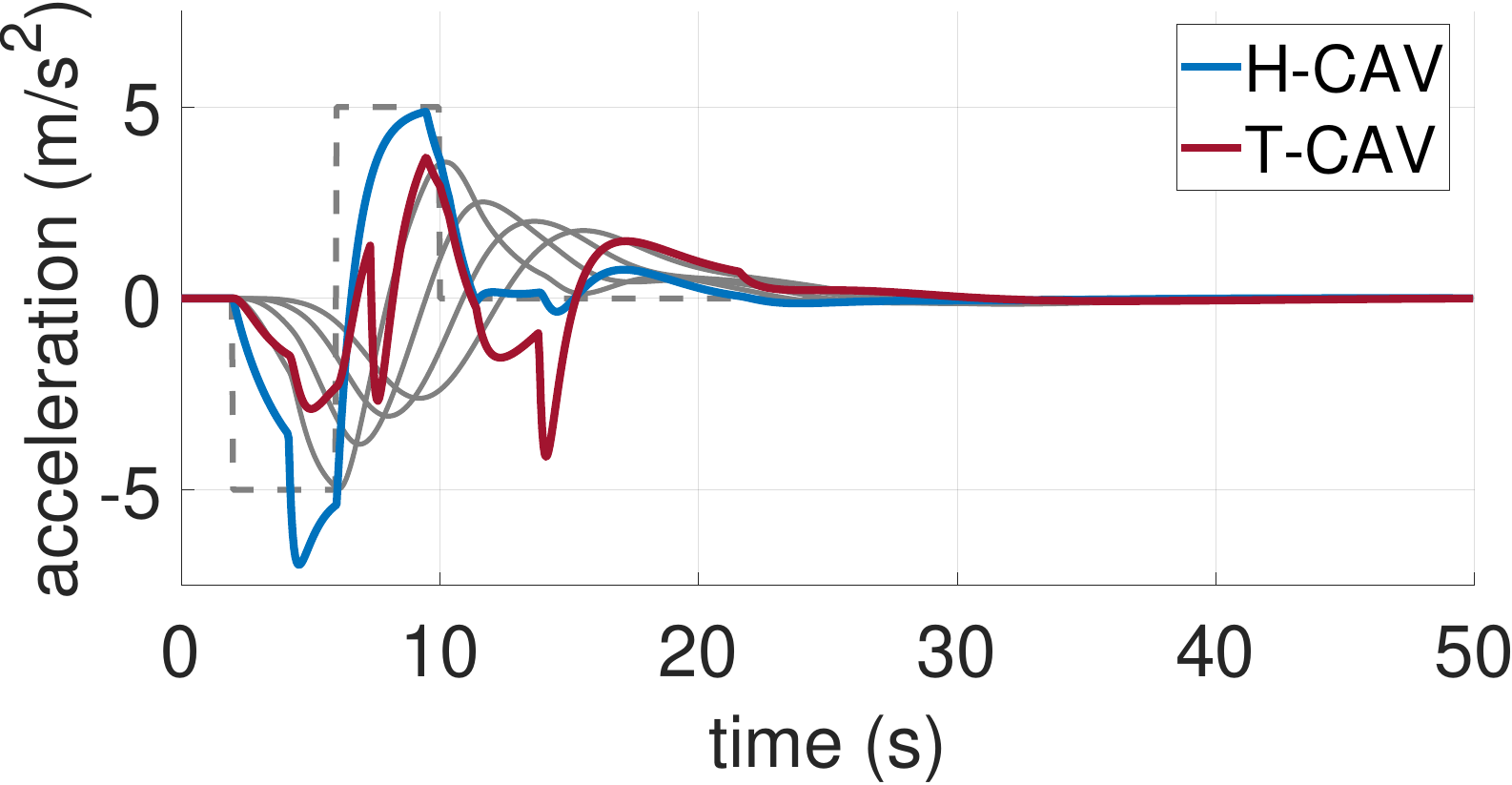}}
    \caption{Simulated trajectories when the head HV suddenly decelerates and platoon safety is enforced by the two CAVs.
    Enforcing platoon safety helps reduce the deceleration of the tail CAV compared to the case without platoon safety (Fig.~\ref{fig:trajectory head HV dec}).}
    \label{fig:trajectory platoon}
\end{figure}

\subsection{Platoon safety}\label{sec:subsec:simulation platoon}

Finally, we investigate platoon safety, which requires that the gap between the two CAVs is greater than a safe minimum value. While platoon safety does not refer to collision-based safety for a single vehicle, it prevents the gap between the two CAVs from becoming too small and increases the safety of the overall platoon.  On the other side, this also enhances the stability of traffic flow. To demonstrate this, we run simulations with the safety-critical controller~\eqref{eq:QP} that enforces platoon safety (while we omit the HV safety constraints).
We consider the scenario of Fig.~\ref{fig:trajectory head HV dec} where the head HV decelerates.
We set the vehicle length as $l_i = 5$ m, and the regular length of the vehicle platoon as $l_0 = 100$ m. We use $\tau_{\platoon} = 1$ s and $\gamma_{\platoon} = 5$ s$^{-1}$ to enforce platoon safety. All other parameters remain the same as in Fig.~\ref{fig:trajectory head HV dec}.

Fig.~\ref{fig:trajectory platoon} plots the simulated trajectories. Compared with the trajectories in Fig.~\ref{fig:trajectory head HV dec}, we see that the string stability index in Fig.~\ref{fig:trajectory platoon} reduces from 0.698 to 0.679, indicating that the tail CAV has smaller speed perturbations. Furthermore, the maximum deceleration of the tail CAV is $-5\;\mathrm{m/s^2}$ when platoon safety is not enforced (see Fig.~\ref{fig:trajectory head HV dec}(h)), while it is only $-4\;\mathrm{m/s^2}$ by enforcing platoon safety (see Fig.~\ref{fig:trajectory platoon}(d)). This is because the tail CAV begins to decelerate earlier when it intends to maintain platoon safety, which results in milder deceleration and smoother motion.

\section{Performance analysis}\label{sec:performance analysis}

The main simulation results have validated that the designed cooperative CAV controllers achieve both stability and safety and, therefore, provides an improvement over the nominal control design. We further analyze the performance of the controllers for a wide range of parameters in this section and discuss the sensitivity and robustness of our design with respect to uncertain human driver behaviors.

\begin{figure}[t]
    \centering
    Head CAV safety \vspace{0.5ex}\\
    \subfloat[Nominal Controller]{\includegraphics[width=0.24\linewidth]{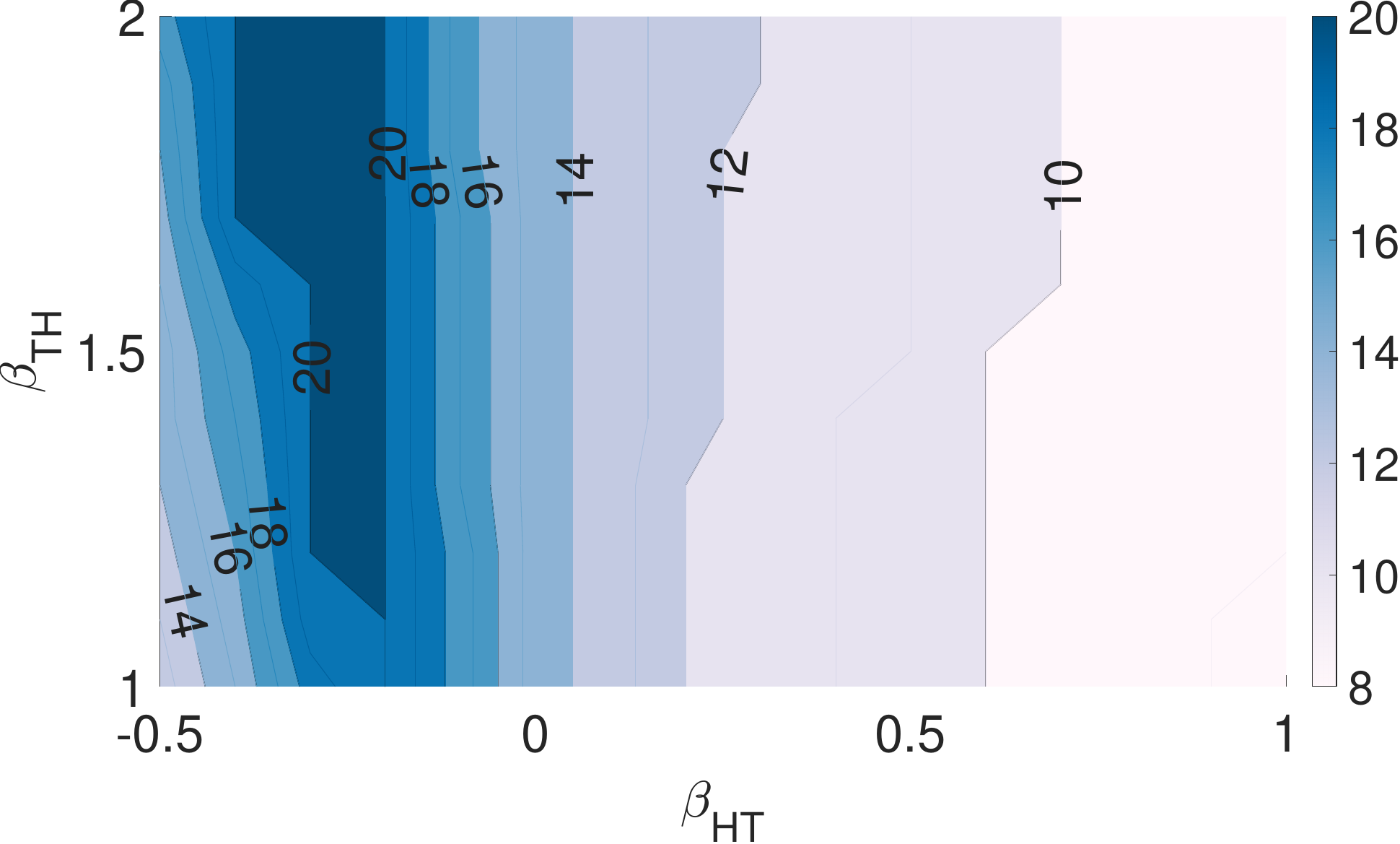}}
    \hspace{0.3em}
    \subfloat[CBF]{\includegraphics[width=0.24\linewidth]{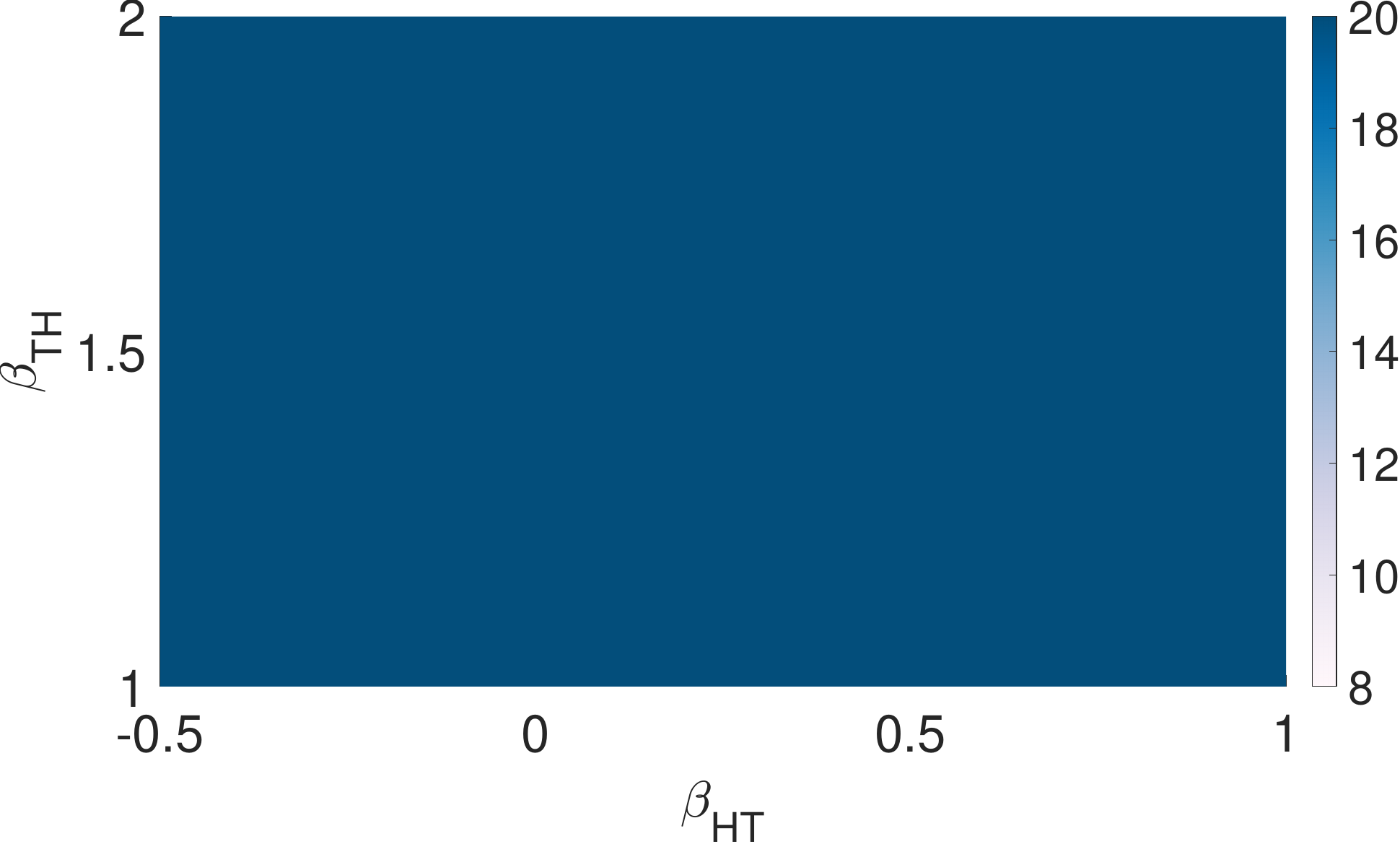}}
    \hspace{0.3em}
    \subfloat[Compare at $\Delta v_{\hhv} = 12$ m/s]{\includegraphics[width=0.24\linewidth]{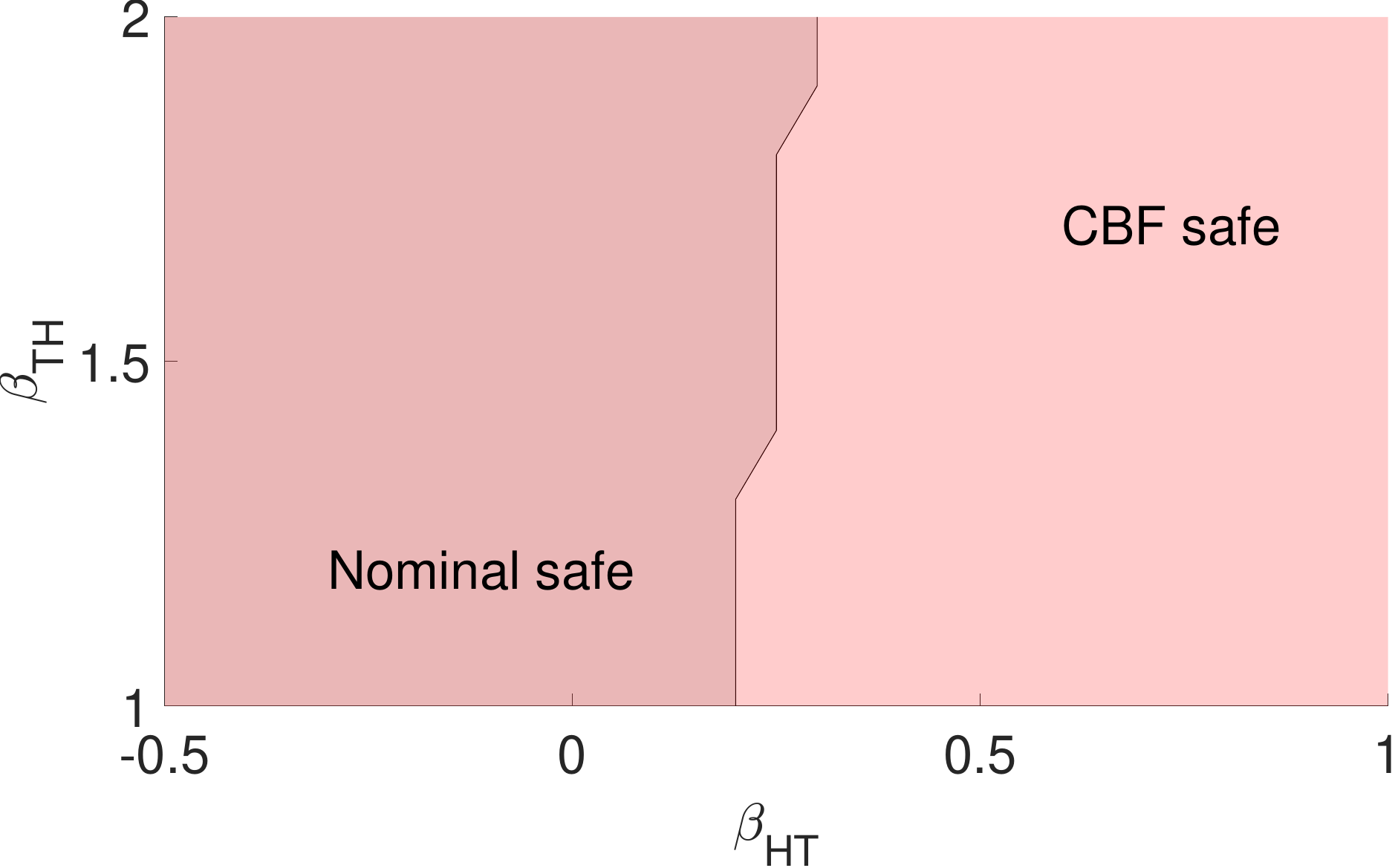}}
    \hspace{0.3em}
    \subfloat[Compare at $\Delta v_{\hhv} = 20$ m/s]{\includegraphics[width=0.24\linewidth]{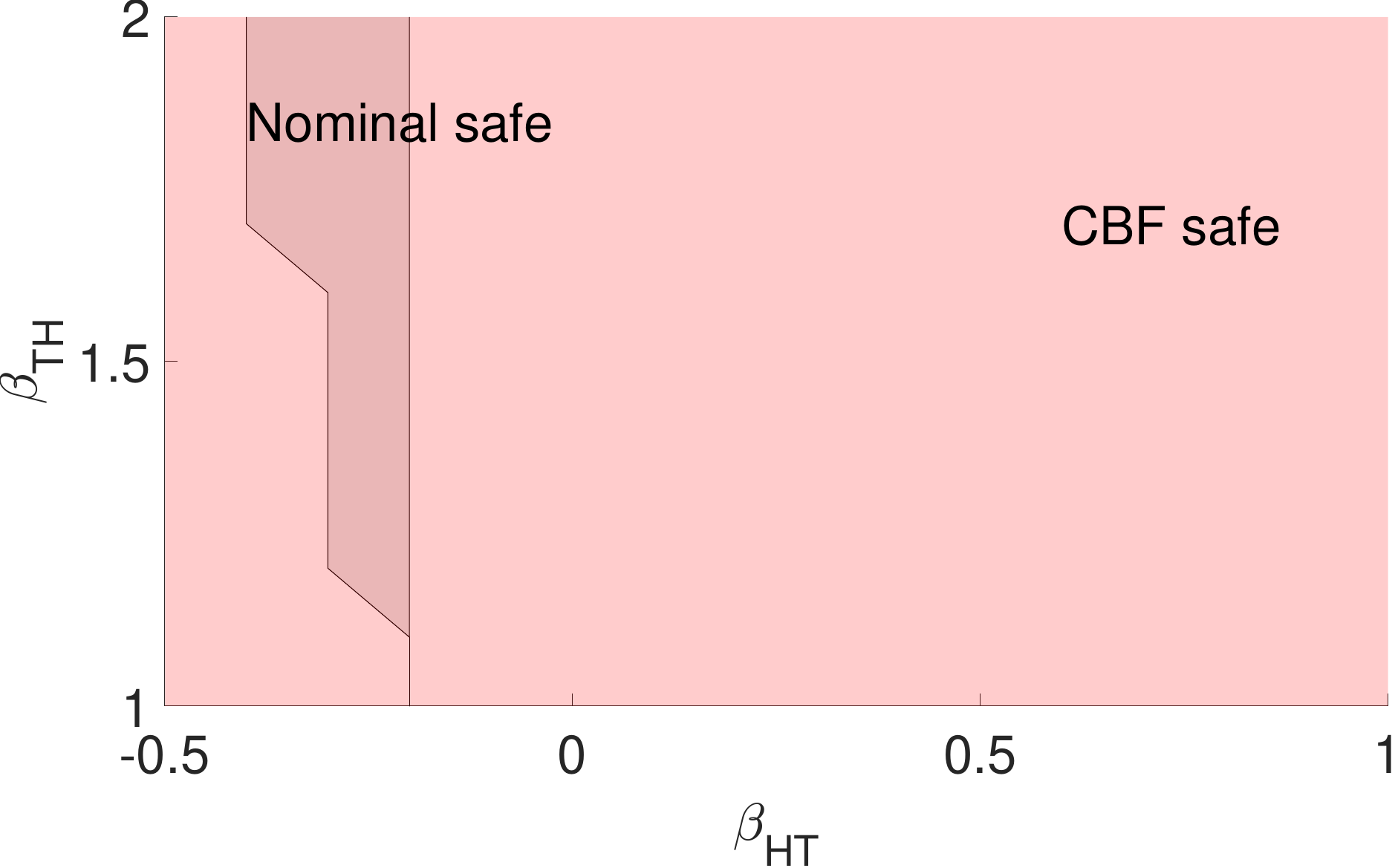}}
    \\
    \vspace{2ex} Tail CAV safety \vspace{0.5ex}\\
    \subfloat[Nominal Controller]{\includegraphics[width=0.24\linewidth]{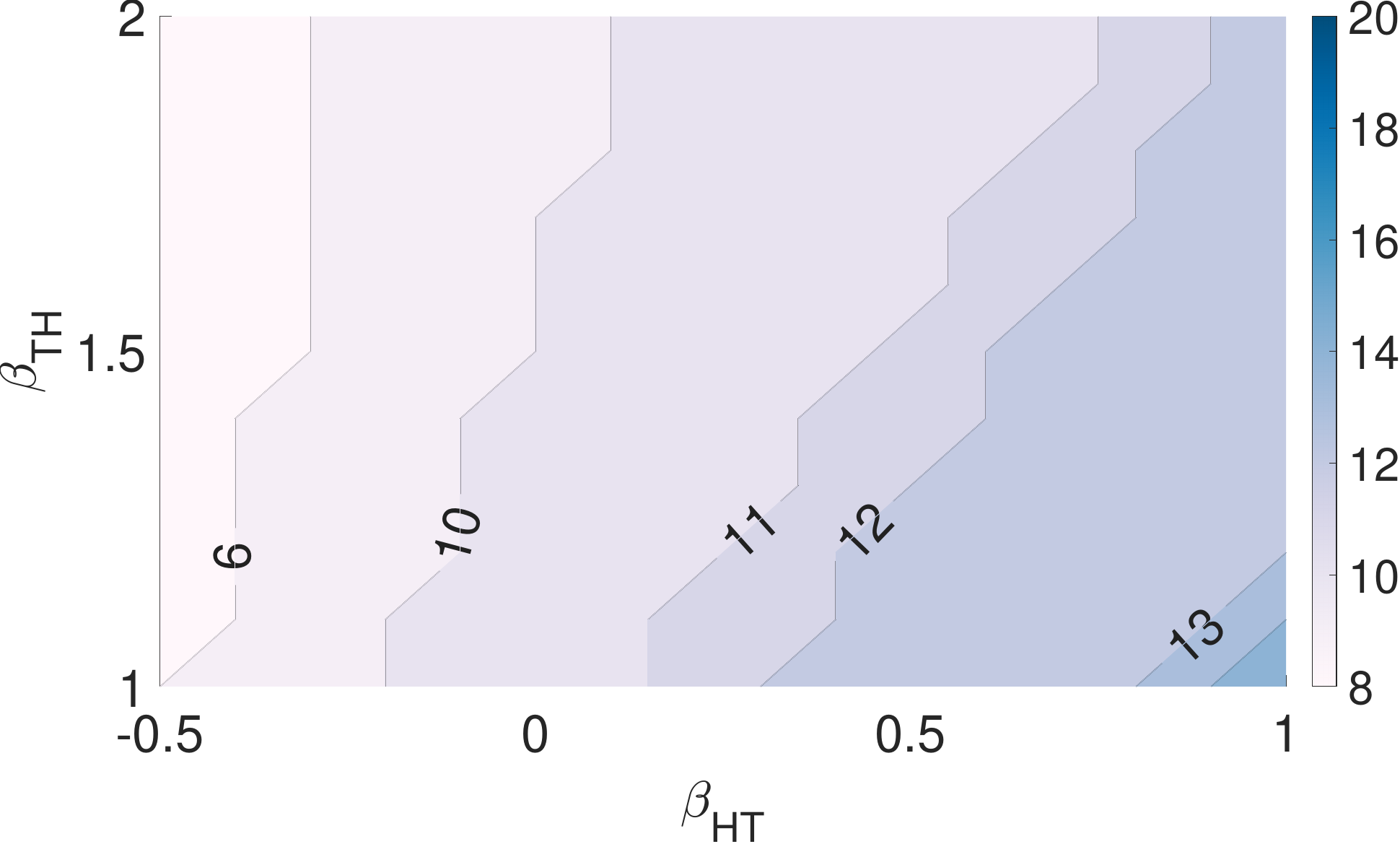}}
    \hspace{0.3em}
    \subfloat[CBF]{\includegraphics[width=0.24\linewidth]{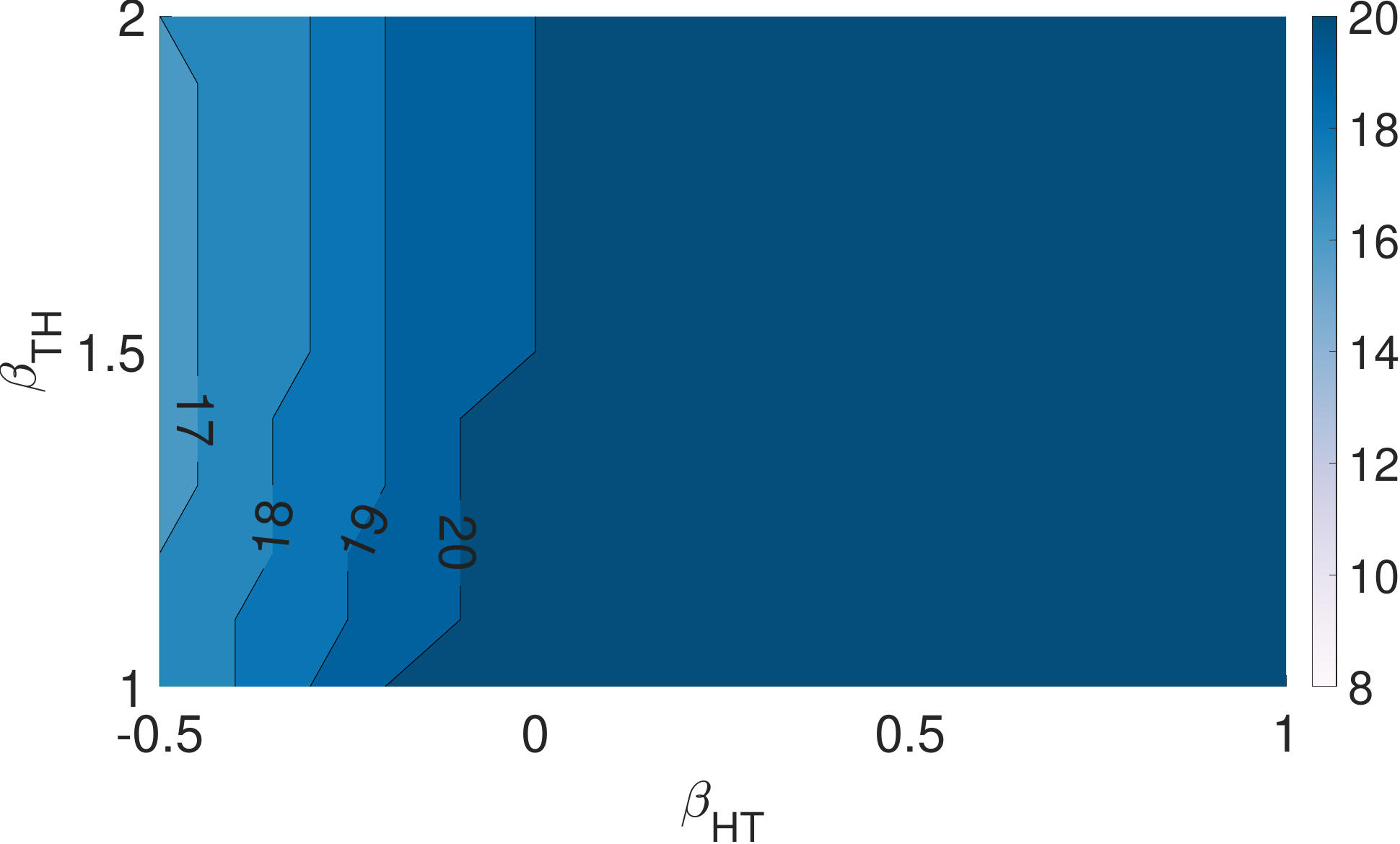}}
    \hspace{0.3em}
    \subfloat[Compare at $\Delta v_{\hhv} = 12$ m/s]{\includegraphics[width=0.24\linewidth]{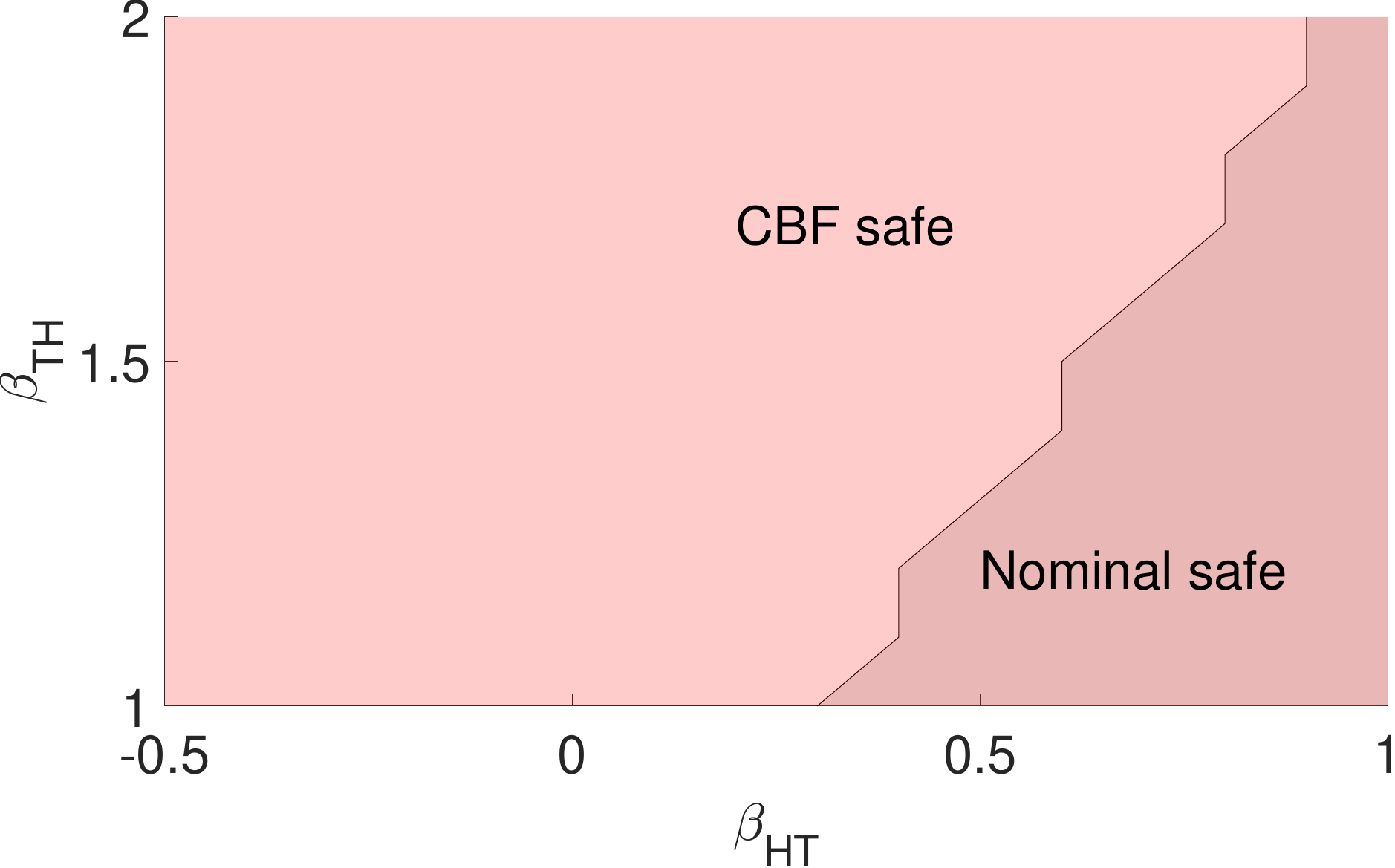}}
    \hspace{0.3em}
    \subfloat[Compare at $\Delta v_{\hhv} = 20$ m/s]{\includegraphics[width=0.24\linewidth]{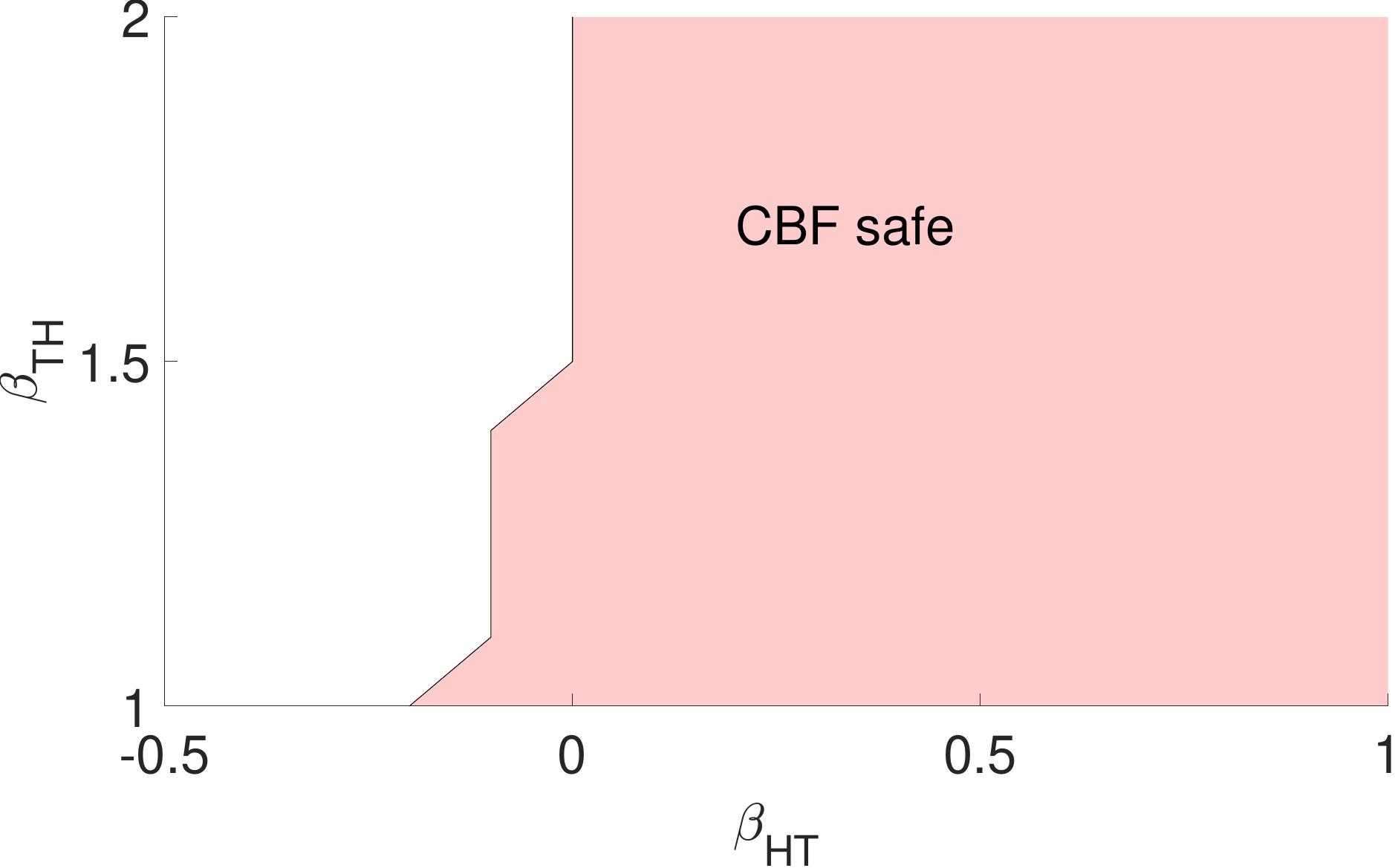}}
    \caption{Safety performance of the nominal controller and the safety-critical controller with CBF. The first and second row gives results for the head and tail CAV, respectively. The first and second columns show the maximum speed perturbation $\Delta v_\hhv$ for the head HV so that the CAV remains safe by using the nominal controller and the CBF, respectively, considering various controller gains $(\beta_{\headcav,\tailcav},\beta_{\tailcav,\headcav})$. The third and fourth columns provide the range of controller gains under which the CAV is safe for a fixed speed perturbation $\Delta v_\hhv$. The grey and red area gives the safe range for the nominal controller and CBF, respectively.}
    \label{fig:safe region} 
\end{figure}

\begin{figure}[t]
    \centering
    \subfloat[Nominal controller, $I$]{\includegraphics[width=0.24\linewidth]{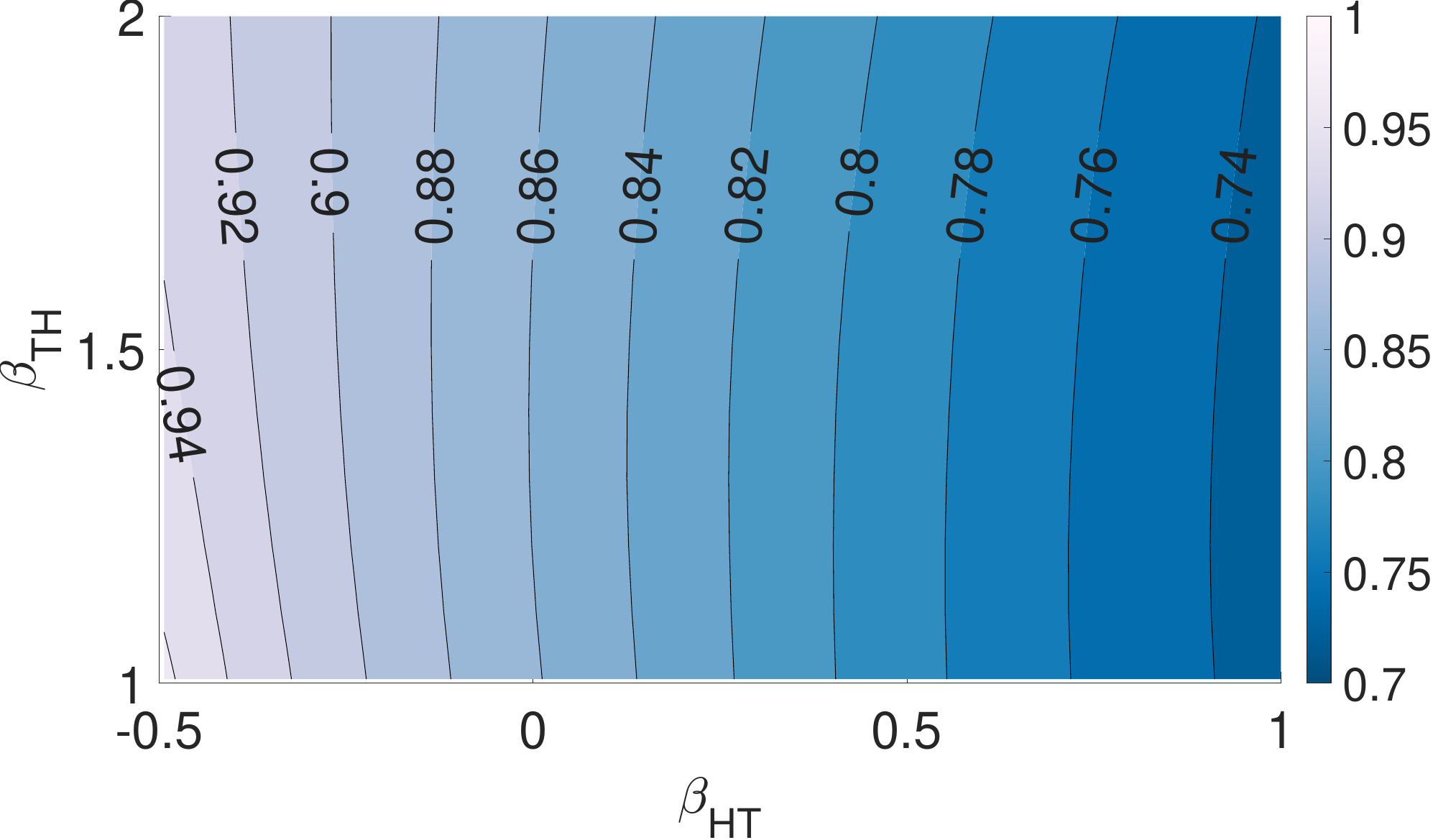}}
    \hspace{0.3em} \subfloat[CBF, $I$]{\includegraphics[width=0.24\linewidth]{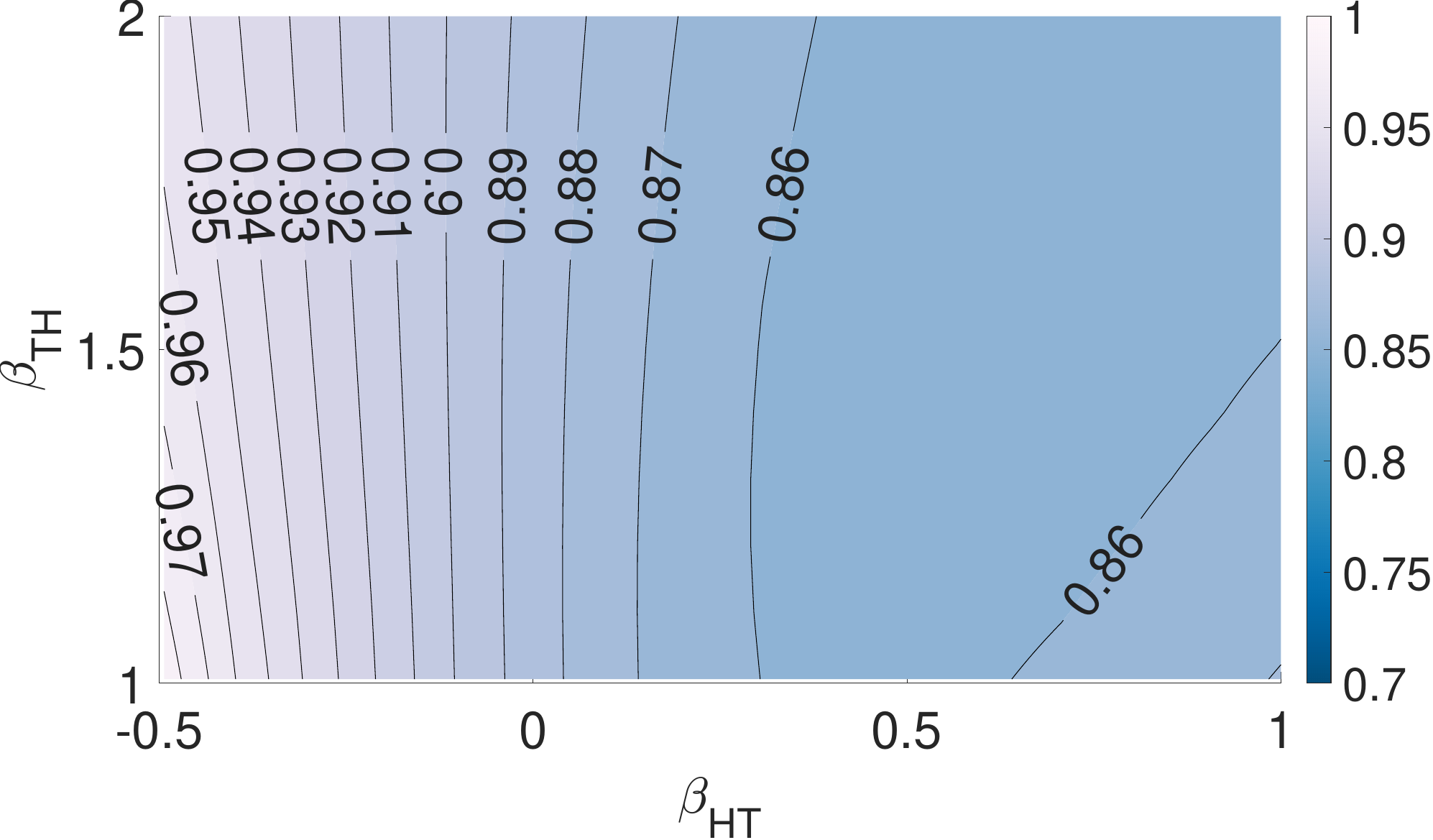}}
    \hspace{0.3em} \subfloat[Nominal controller, $\bar{I}$]{\includegraphics[width=0.24\linewidth]{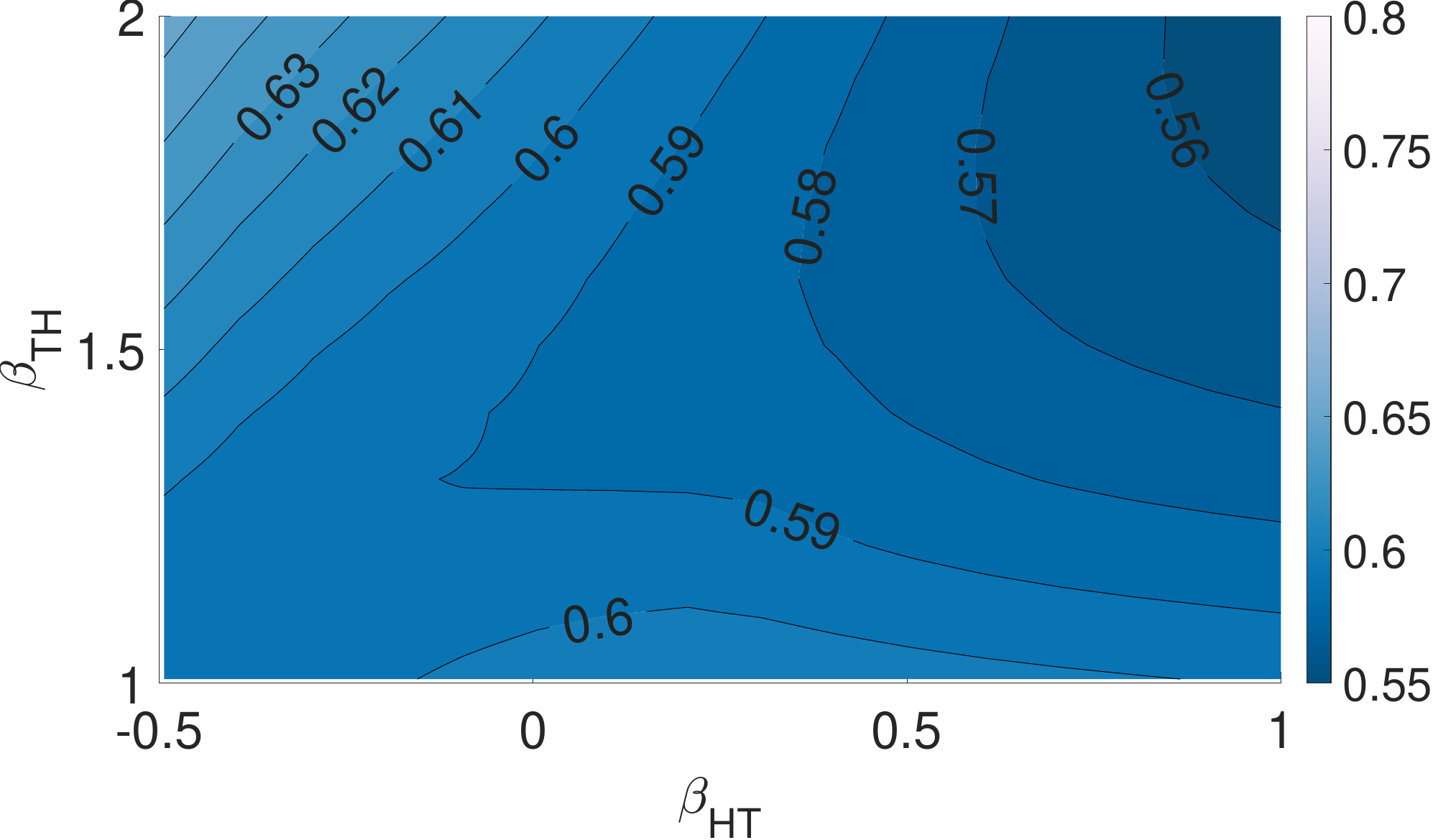}}
    \hspace{0.3em} \subfloat[CBF, $\bar{I}$]{\includegraphics[width=0.24\linewidth]{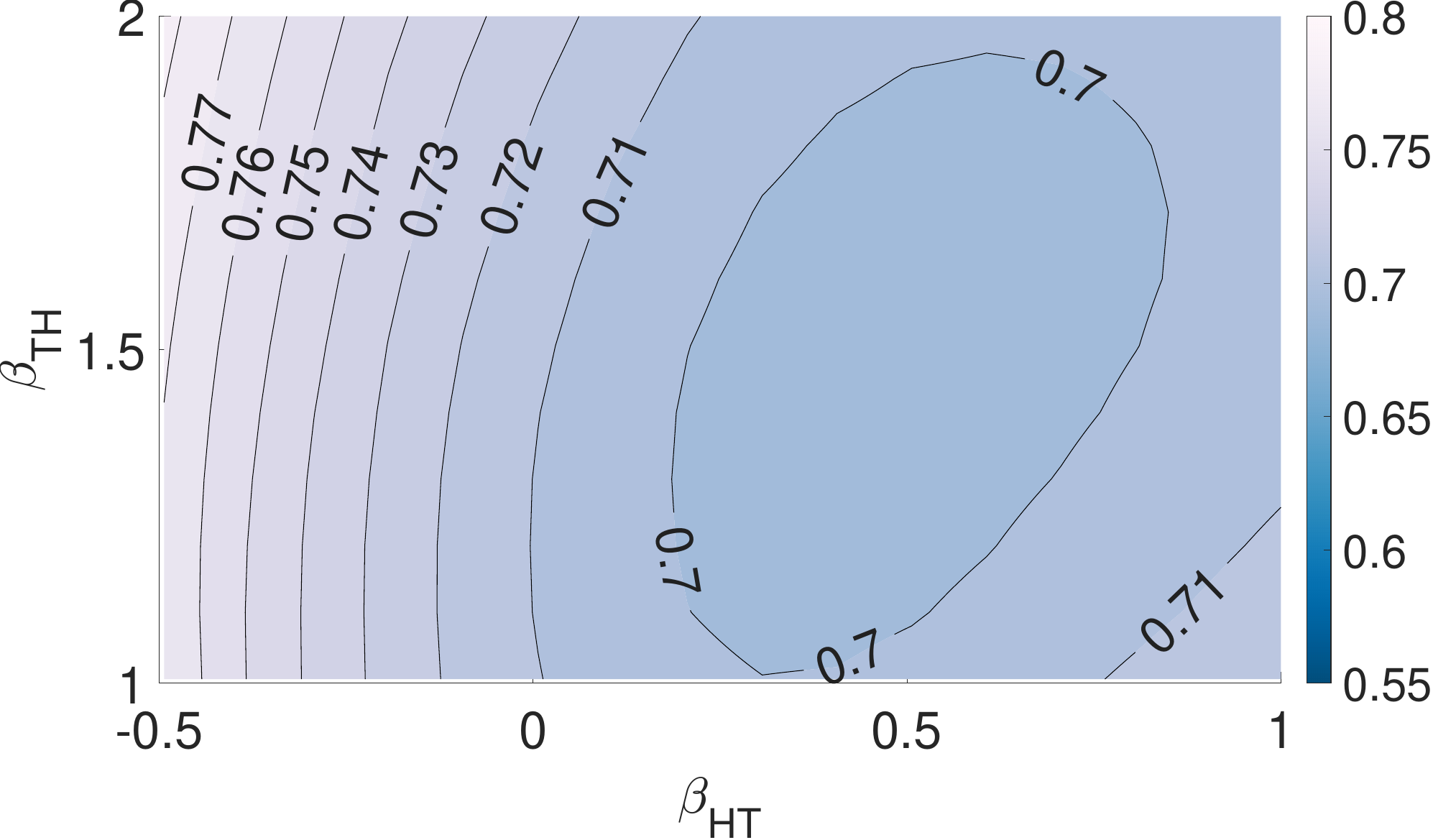}}
    \caption{Stability performance of the nominal controller and the safety-critical controller with CBF. The first and second columns show the head-to-tail string stability index $I$ defined in~\eqref{eq:stability index tail}. A smaller $I$ implies that the tail CAV has smaller speed perturbation, and thus the upstream traffic is smoother. With $I<1$, the mixed traffic system is considered head-to-tail string stable. The third and fourth columns depict the average string stability index $\bar{I}$ defined in~\eqref{eq:stability index platoon}.  A smaller $\bar{I}$ reflects that the entire mixed vehicle platoon drives smoother on average. A darker color implies a smaller $I$ or $\bar{I}$ and thus a smoother traffic.}
    \label{fig:analysis stability index}
\end{figure}

\begin{figure}[t]
    \centering
    Stability \vspace{0.5ex}
    \\
    \subfloat[Nominal controller $v_{\headcav}$]{\includegraphics[width=0.23\linewidth]{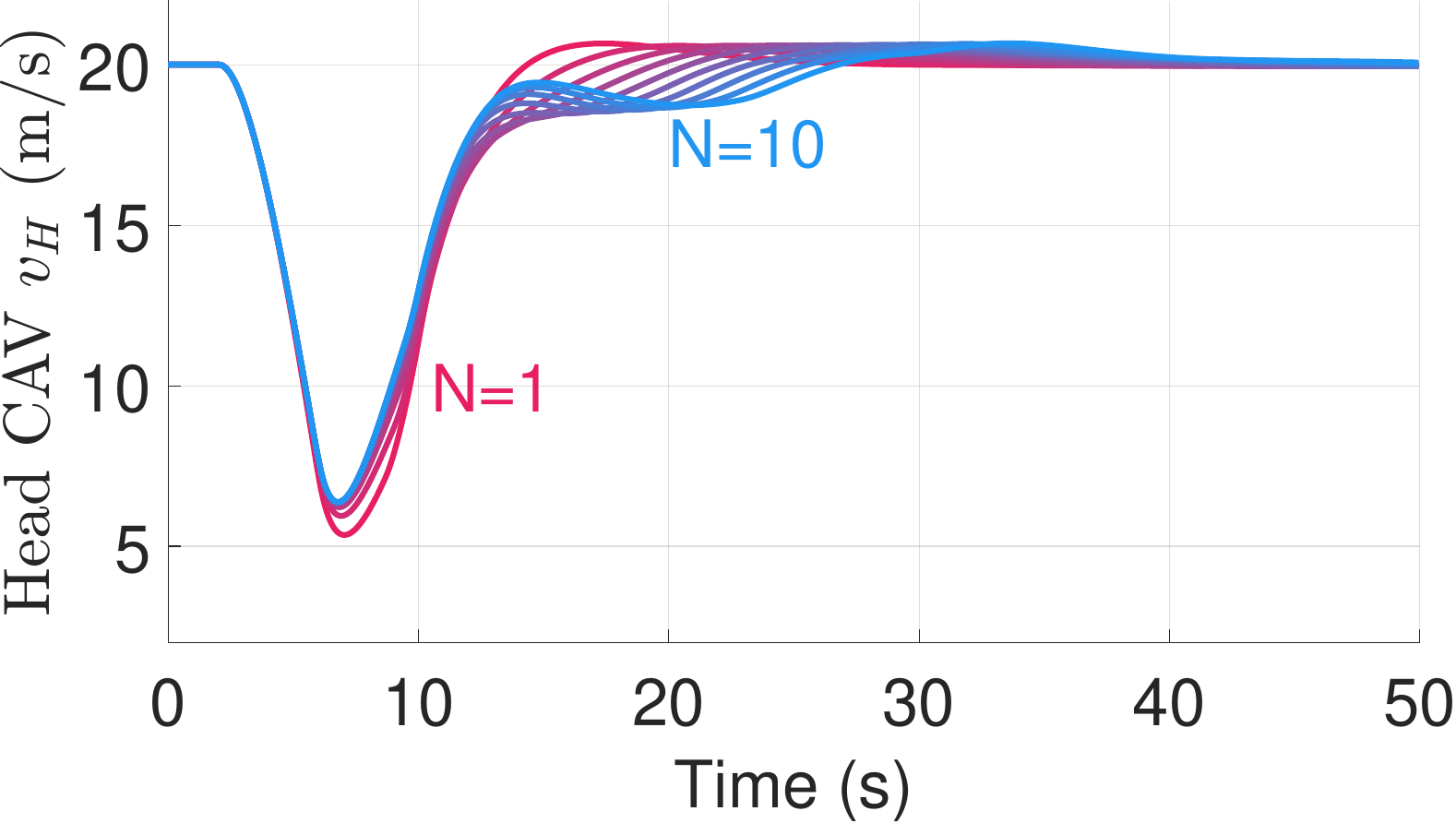}}
    \hspace{0.2em} \subfloat[CBF $v_{\headcav}$]{\includegraphics[width=0.23\linewidth]{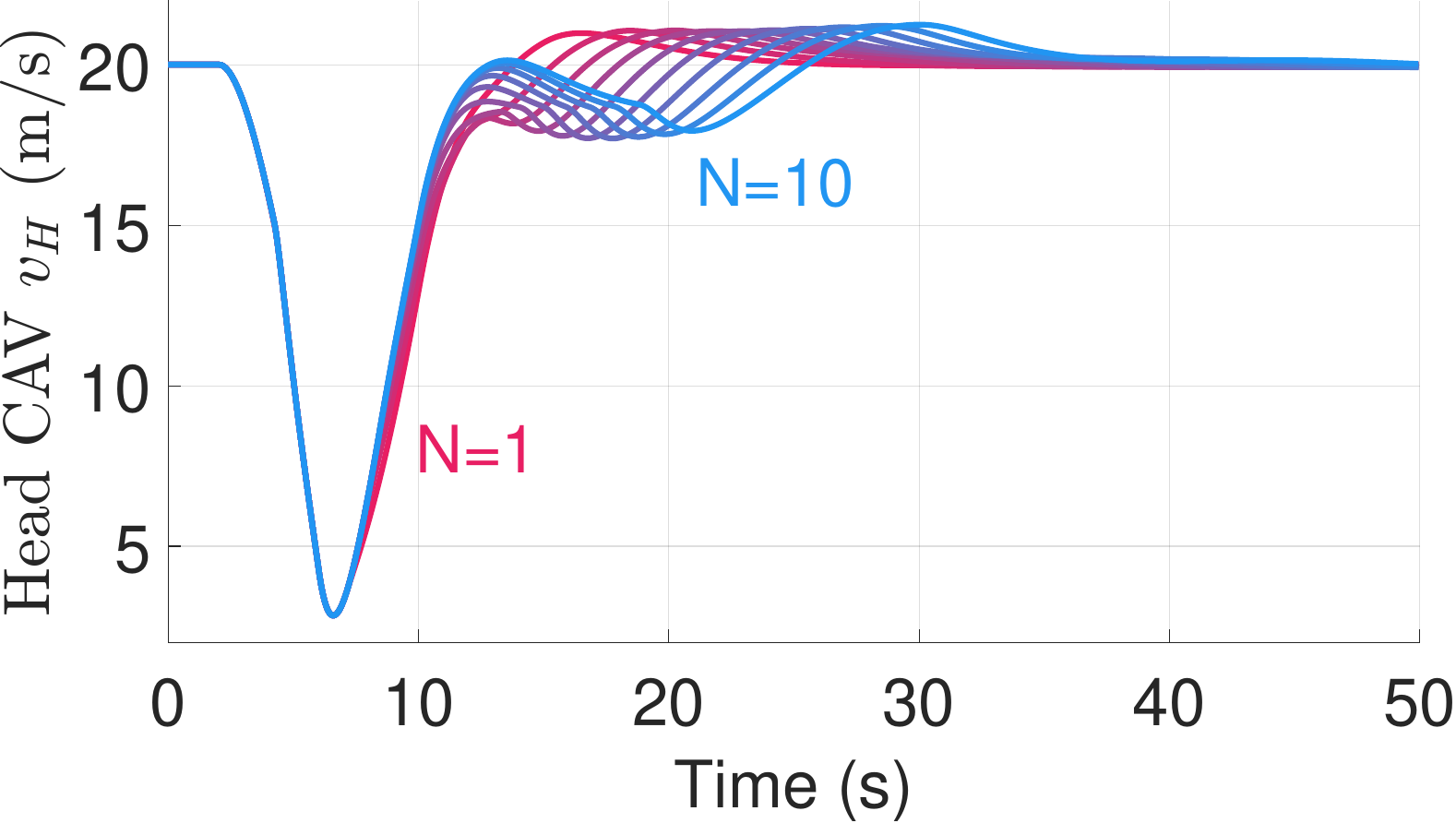}}
    \hspace{0.2em}\subfloat[Nominal Controller $v_{\tailcav}$]{\includegraphics[width=0.23\linewidth]{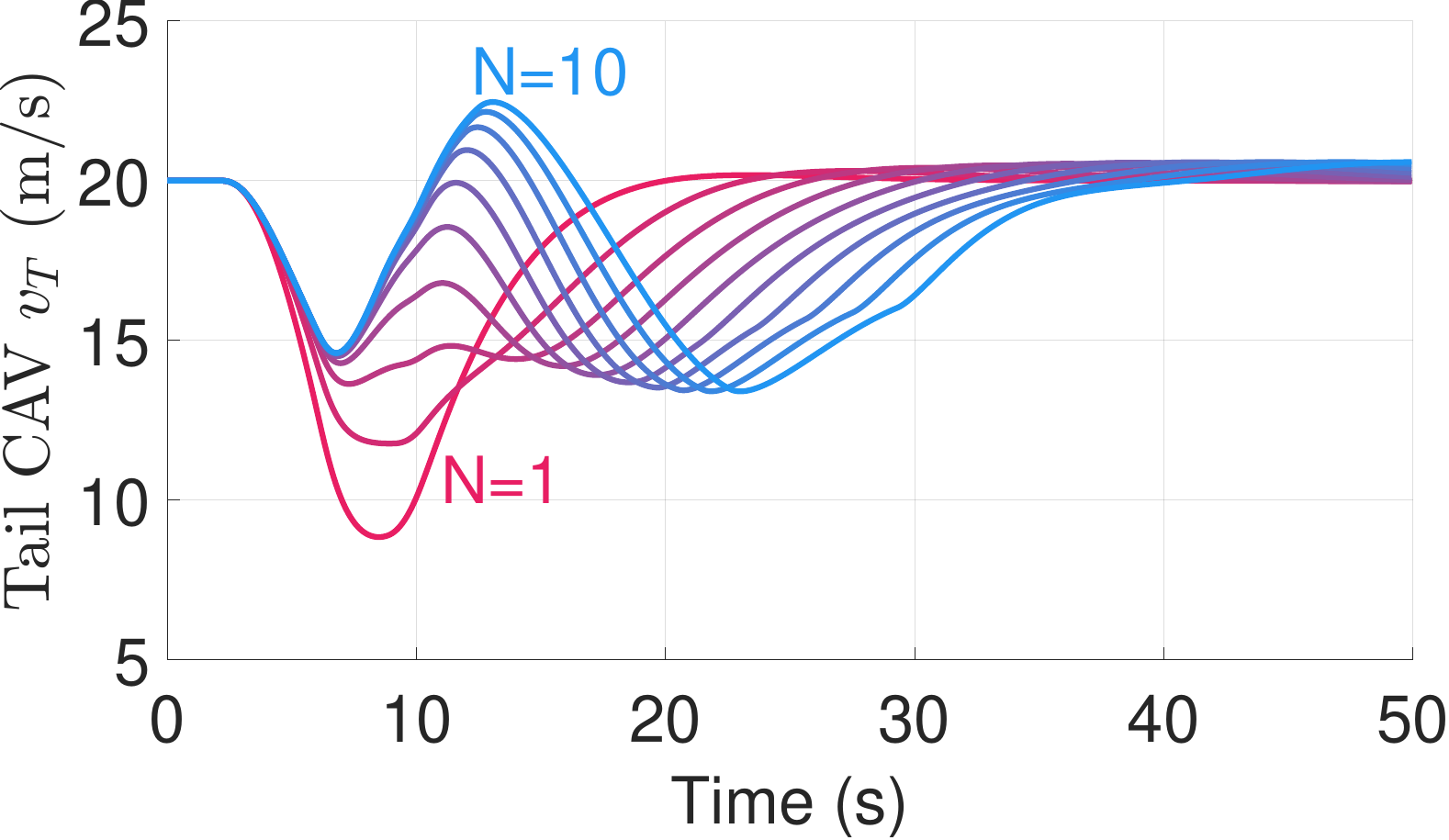}}
    \hspace{0.2em}\subfloat[CBF $v_{\tailcav}$]{\includegraphics[width=0.23\linewidth]{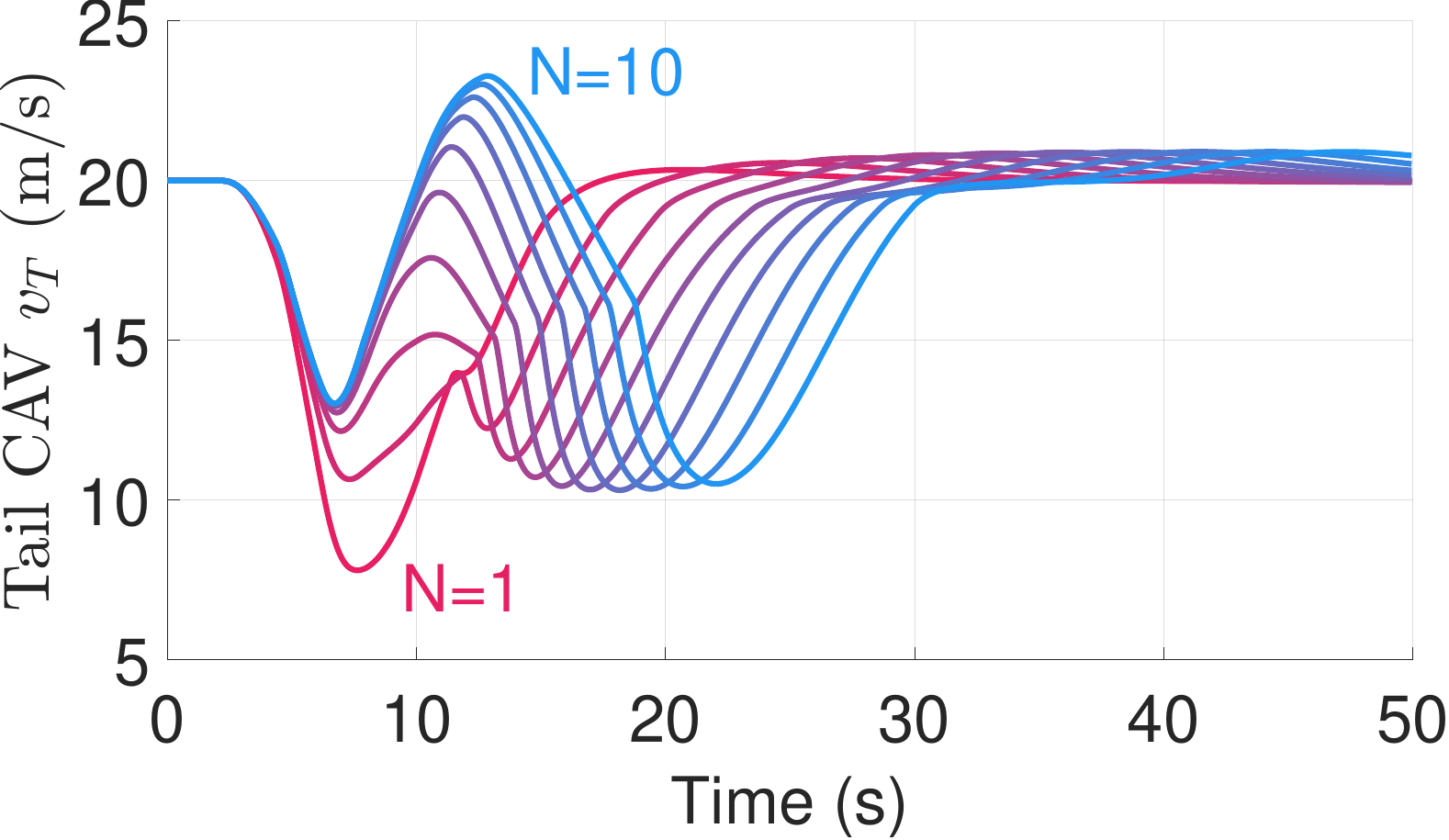}}
    \\
    \subfloat[Stability-safety trade-offs $I$]{\includegraphics[width=0.3\linewidth]{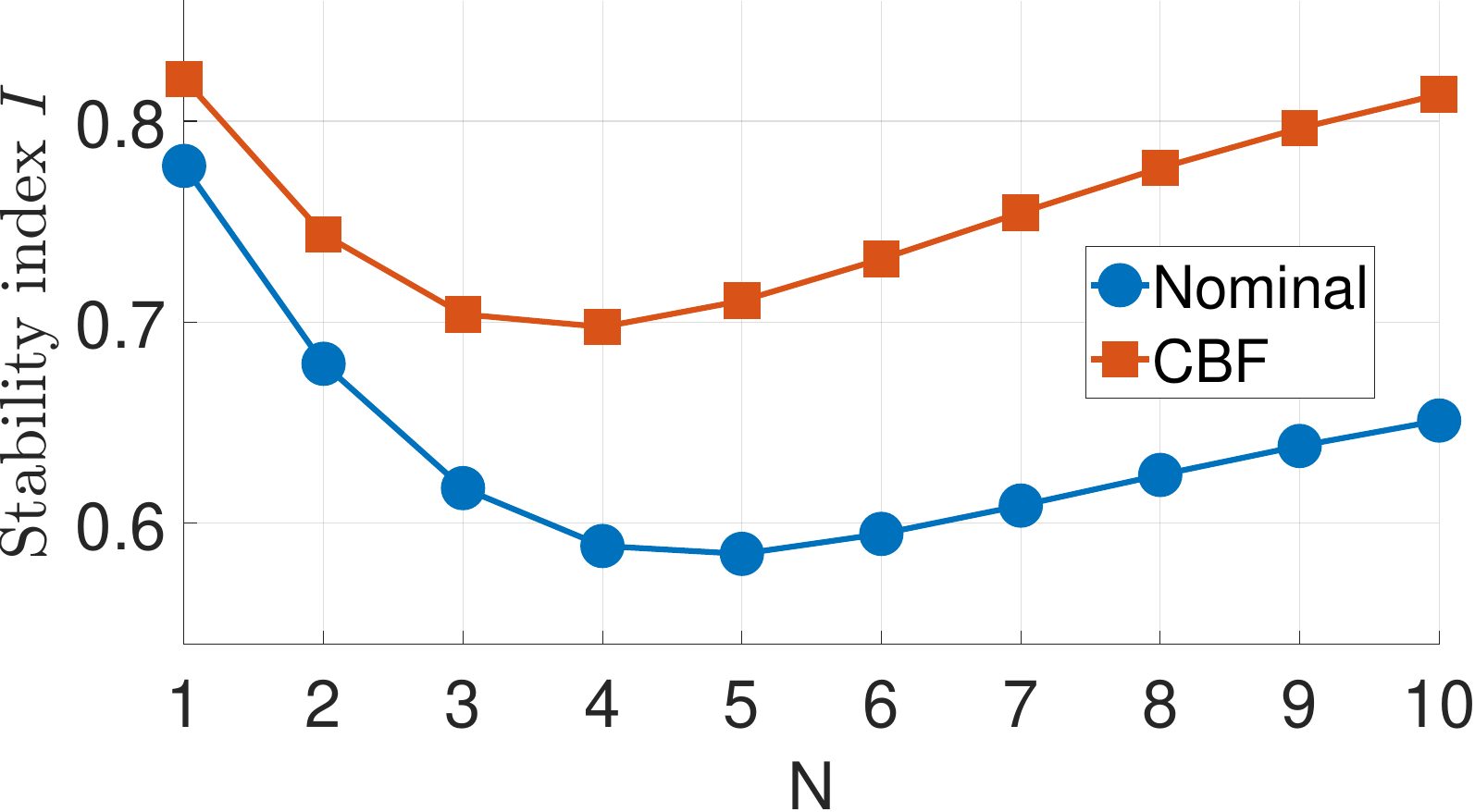}}
    \\
    \vspace{2ex} Safety \vspace{0.5ex} \\
    \subfloat[Nominal controller $h_{\headcav}$]{\includegraphics[width=0.23\linewidth]{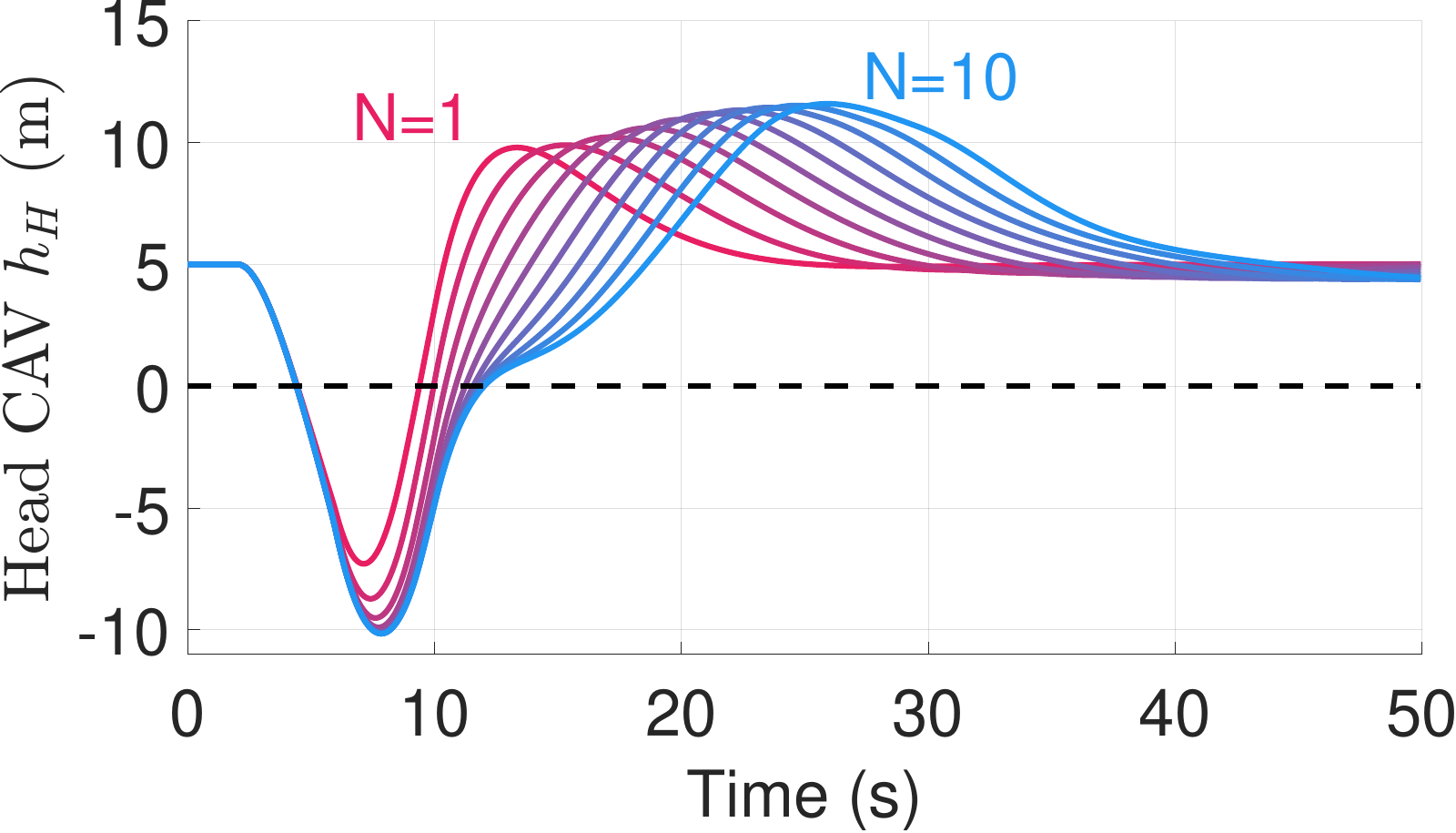}}
    \hspace{0.2em}\subfloat[CBF $h_{\headcav}$]{\includegraphics[width=0.23\linewidth]{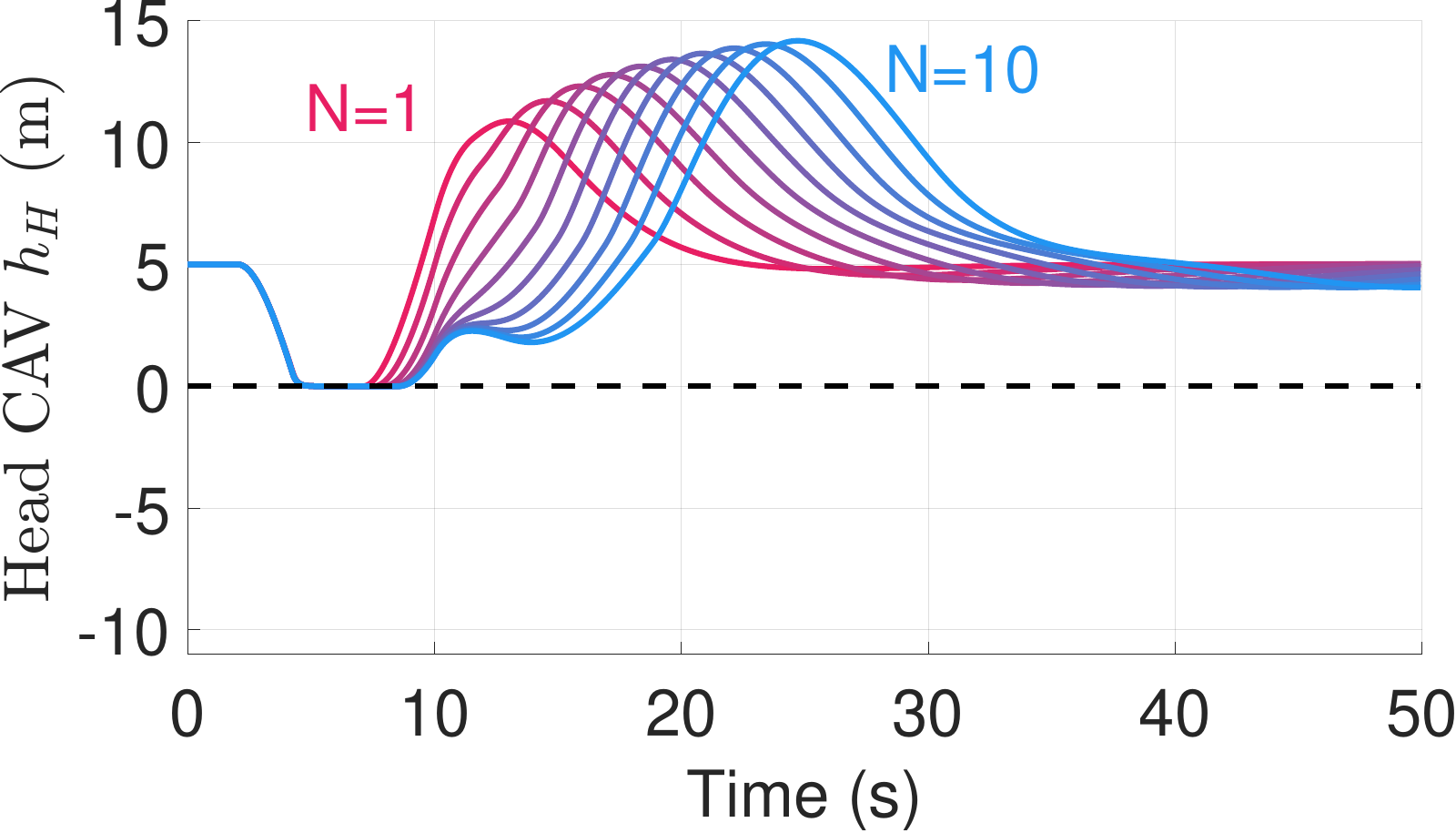}}
    \hspace{0.2em}\subfloat[Nominal controller $h_{\tailcav}$]{\includegraphics[width=0.23\linewidth]{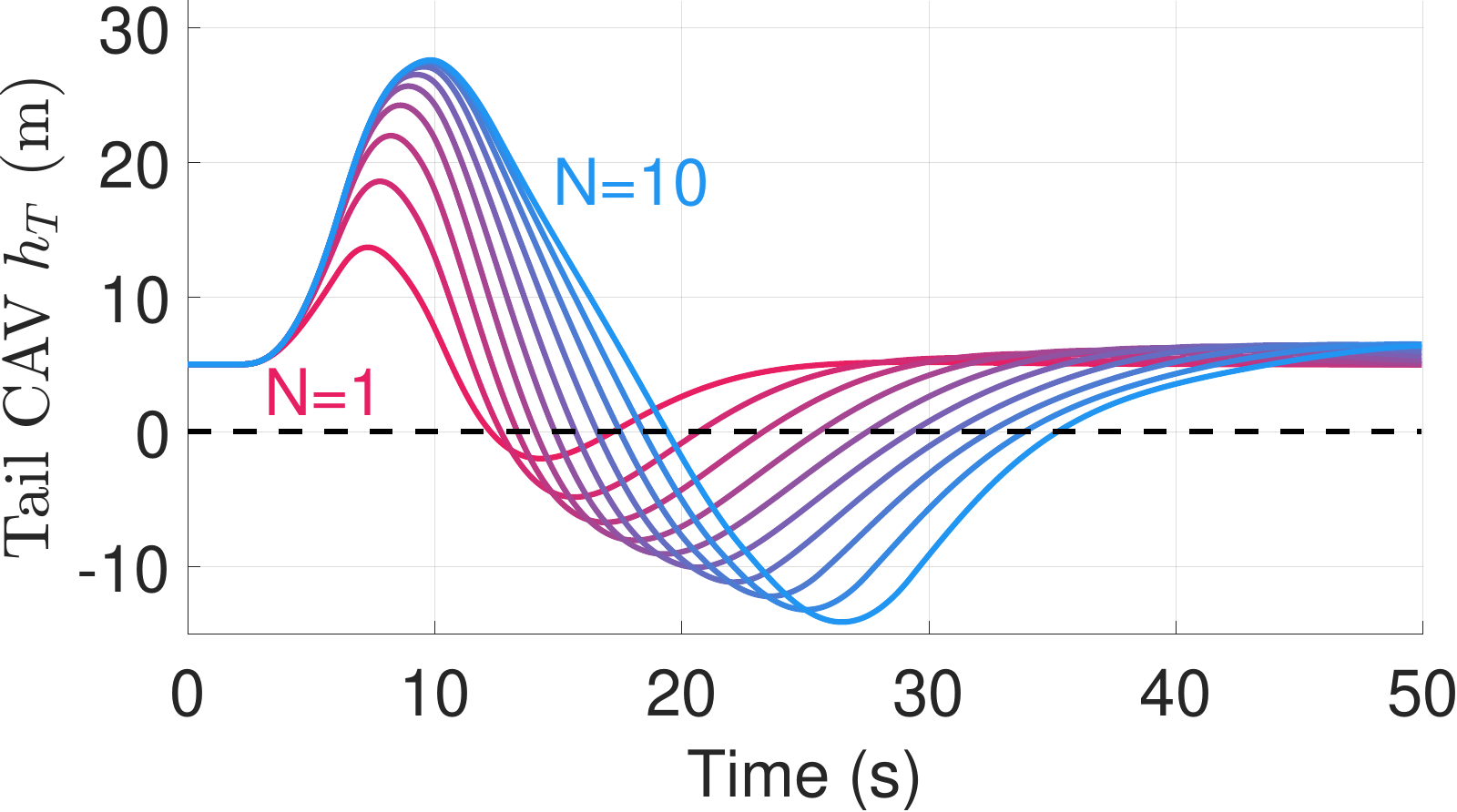}}
    \hspace{0.2em}\subfloat[CBF $h_{\tailcav}$]{\includegraphics[width=0.23\linewidth]{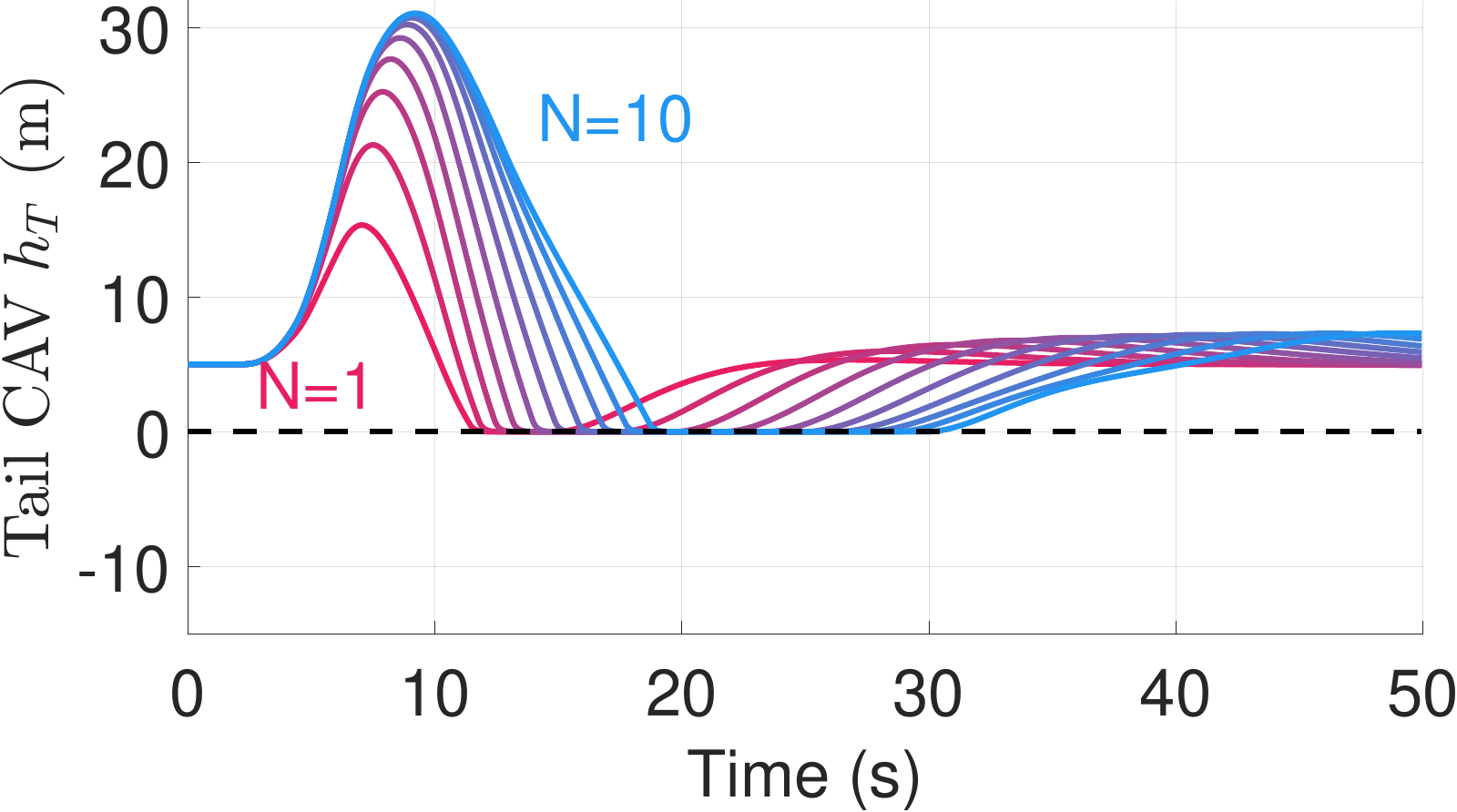}}
    \hspace{34mm}
    \caption{Simulated trajectories with different numbers of middle HVs, $N$. The proposed controller achieves both safe ($h\ge 0$) and string stable ($I<1$) driving for all considered $N$.
    }
    \label{fig:analysis HV number}
\end{figure}

\begin{figure}[t]
    \centering
    \subfloat[$a$]{\includegraphics[width=0.24    \linewidth]{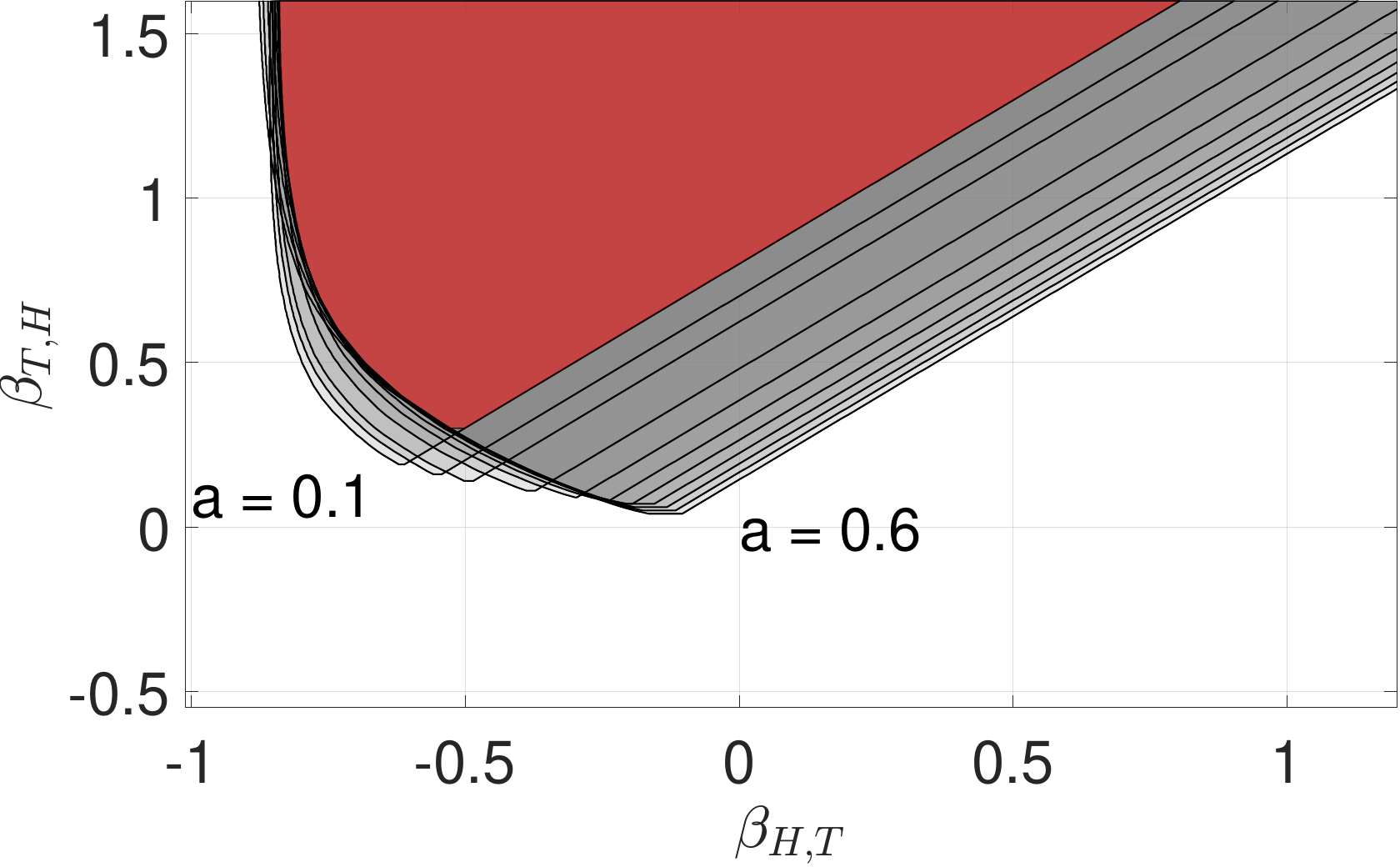}}
    \subfloat[$b$]{\includegraphics[width=0.24    \linewidth]{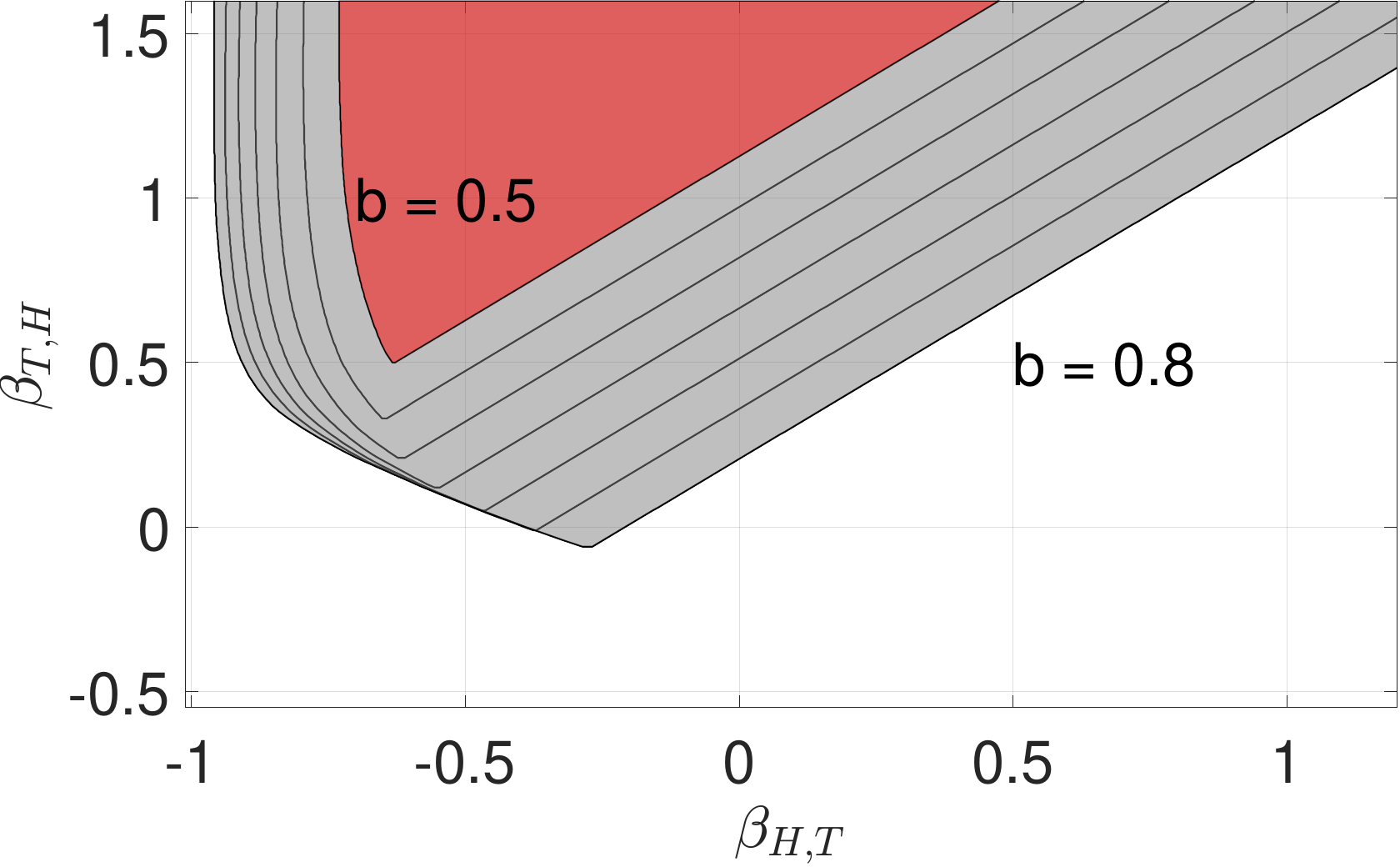}}
    \subfloat[$s_{\mathrm{st}}$]{\includegraphics[width=0.24    \linewidth]{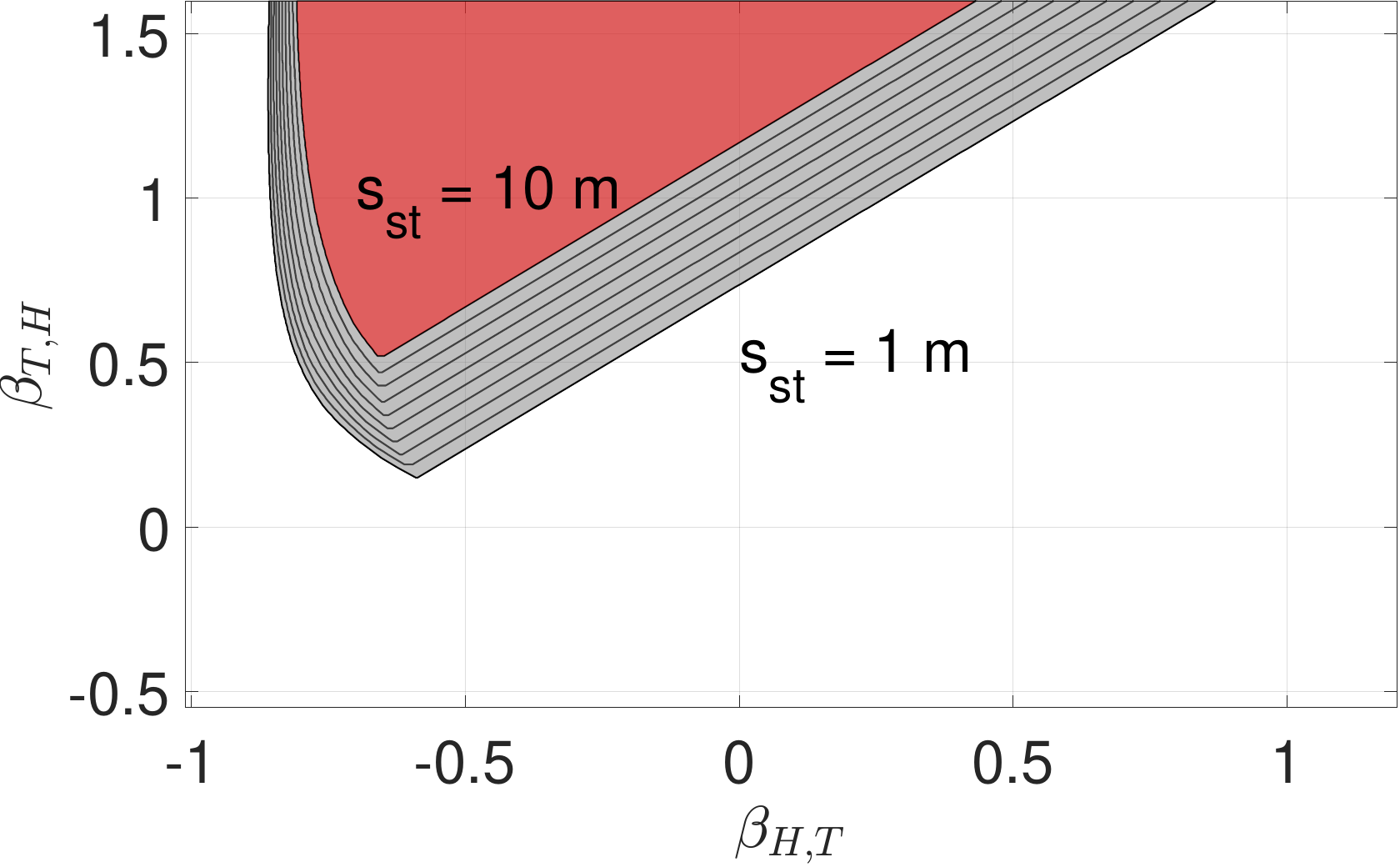}}
    \subfloat[$s_{\mathrm{go}}$]{\includegraphics[width=0.24    \linewidth]{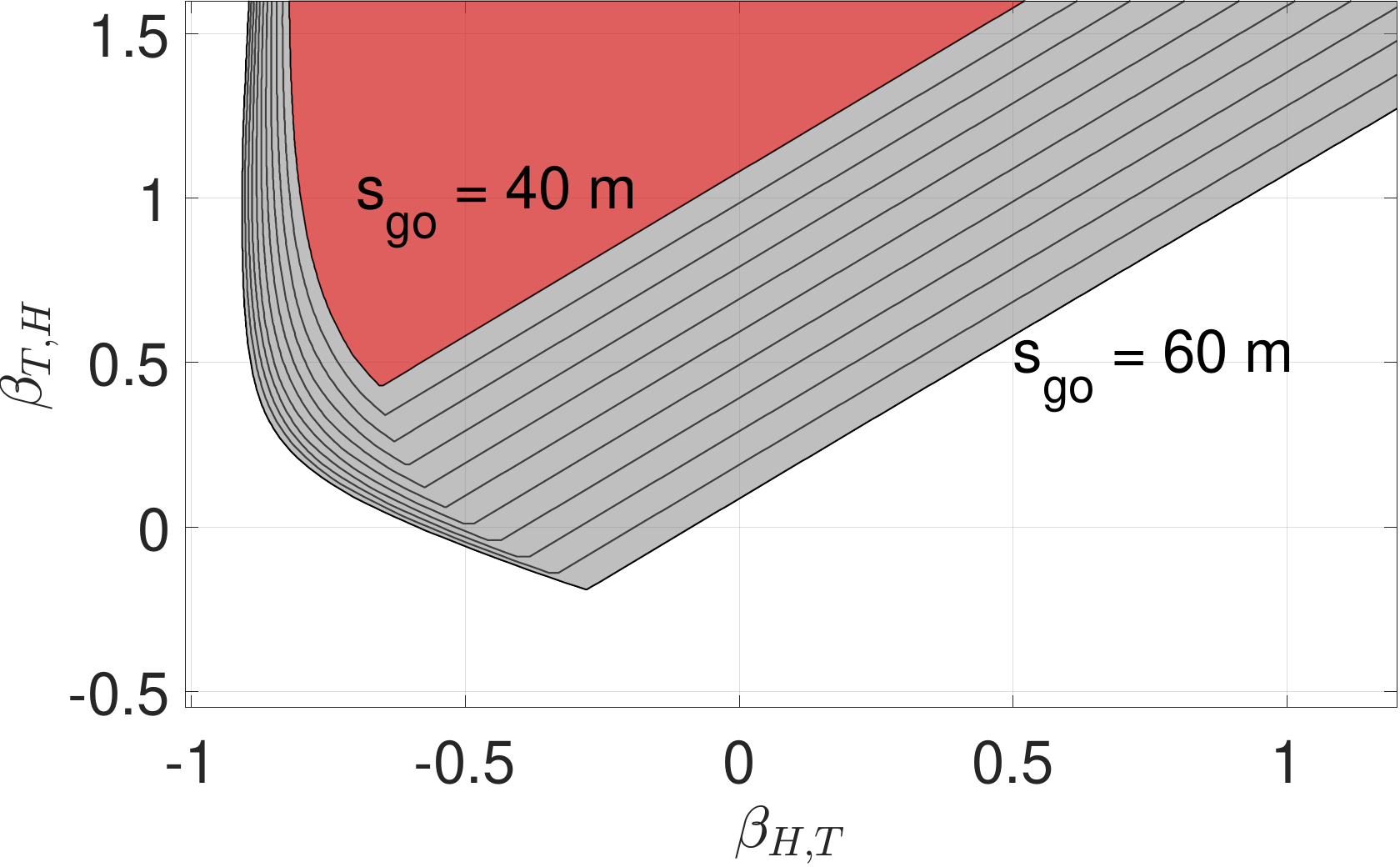}}
    \caption{Stability charts under different HV model parameters $a$, $b$, $s_{\mathrm{st}}$, and $s_{\mathrm{go}}$. In each subfigure, we plot the stability boundaries when one parameter changes with the other three parameters fixed. Each line gives a boundary of head-to-tail string stability for a given parameter value. Each grey area shades the range of $(\beta_{\headcav,\tailcav},\beta_{\tailcav,\headcav})$ gains that lead to string stability for the corresponding HV model parameter value. The red region is the overlap of string stability regions with varying parameters.}
    \label{fig:robust stability chart}
\end{figure}

\begin{figure}[t]
    \centering
    Nominal Controller \vspace{0.5ex}\\
    \includegraphics[width=0.24\linewidth]{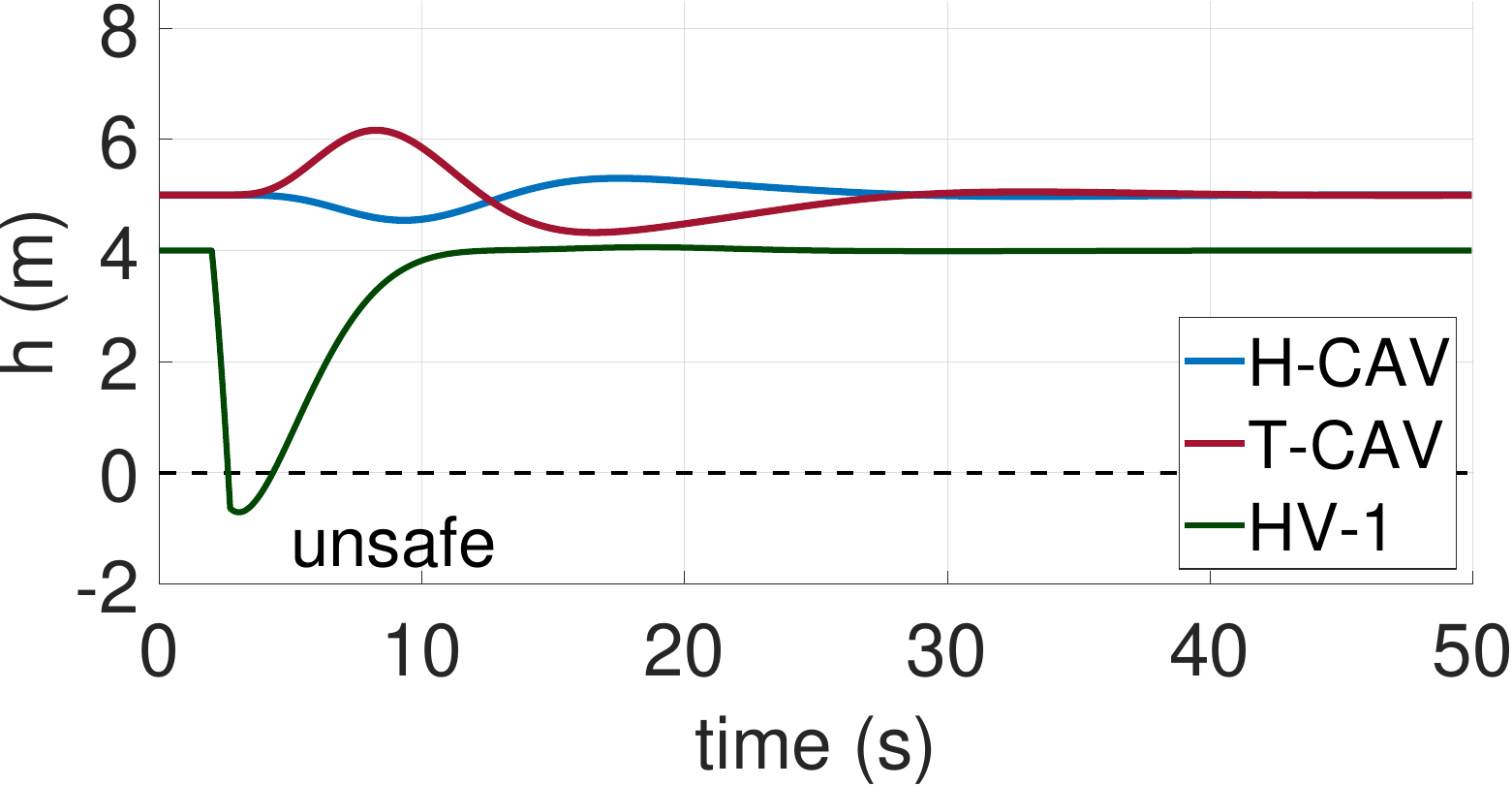}
    \includegraphics[width=0.24\linewidth]{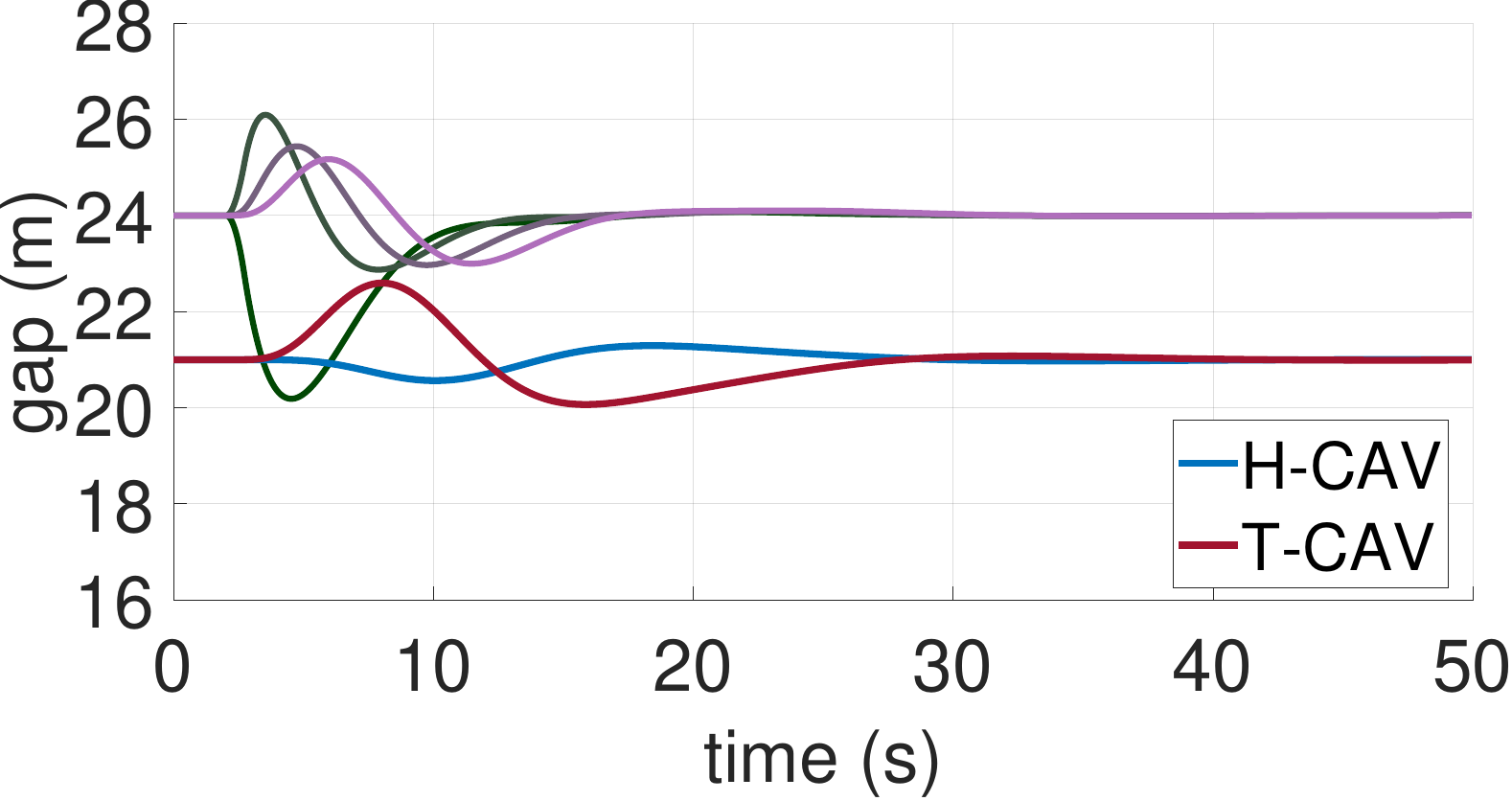}
    \includegraphics[width=0.24\linewidth]{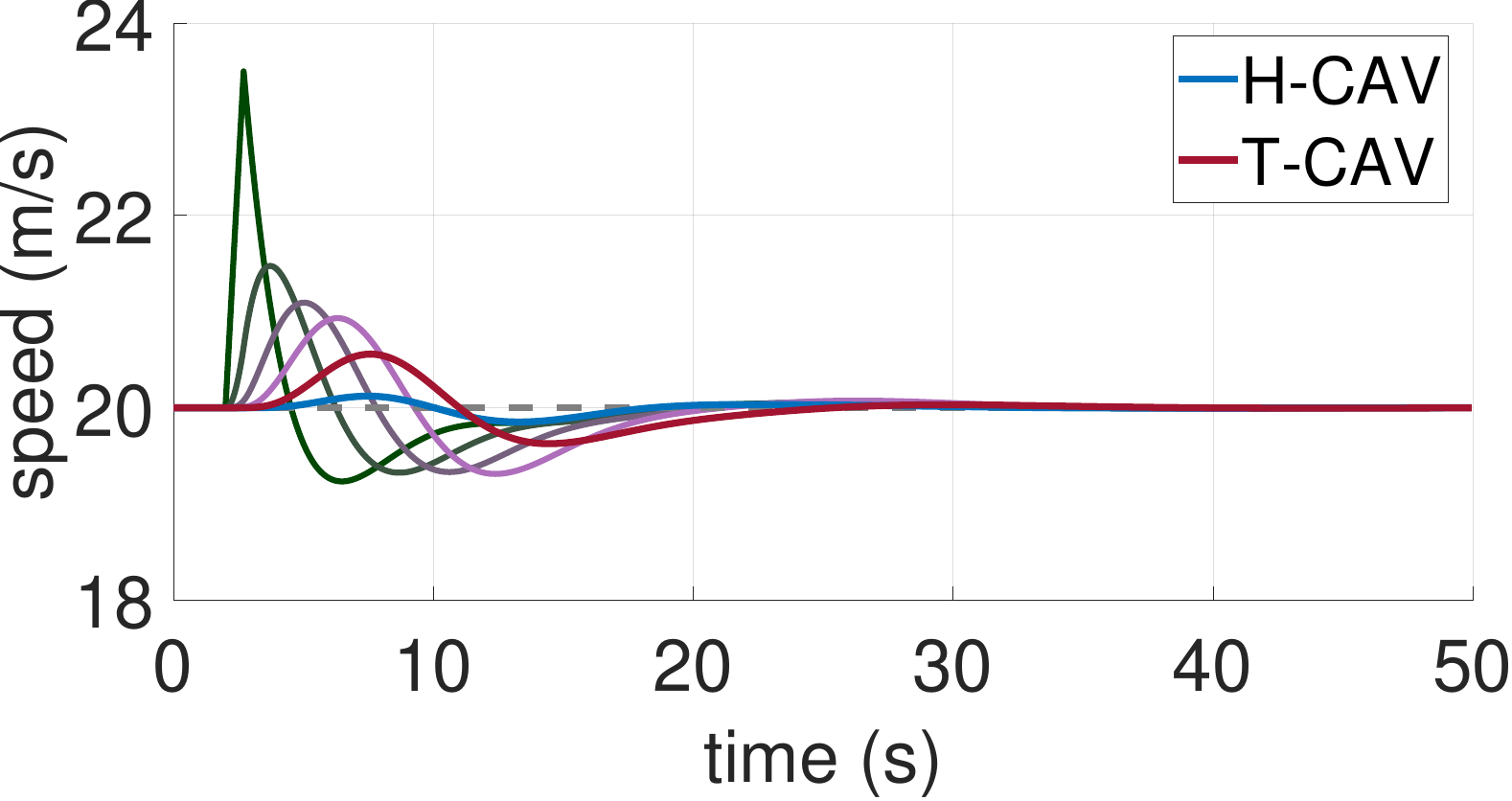}
    \includegraphics[width=0.24\linewidth]{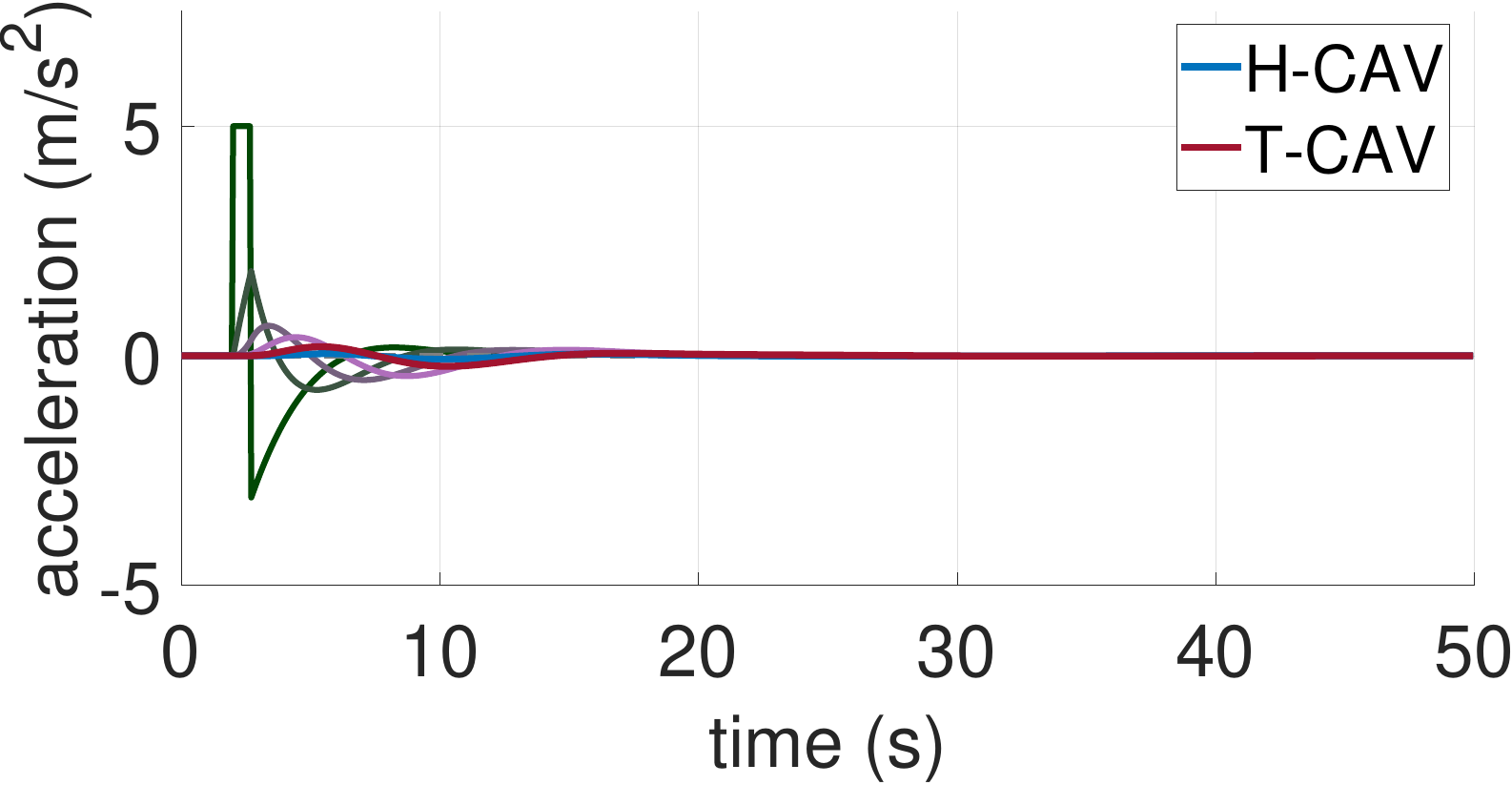}
    \\
    \vspace{2ex} CBF without robustness term \vspace{0.5ex} \\
    \includegraphics[width=0.24\linewidth]{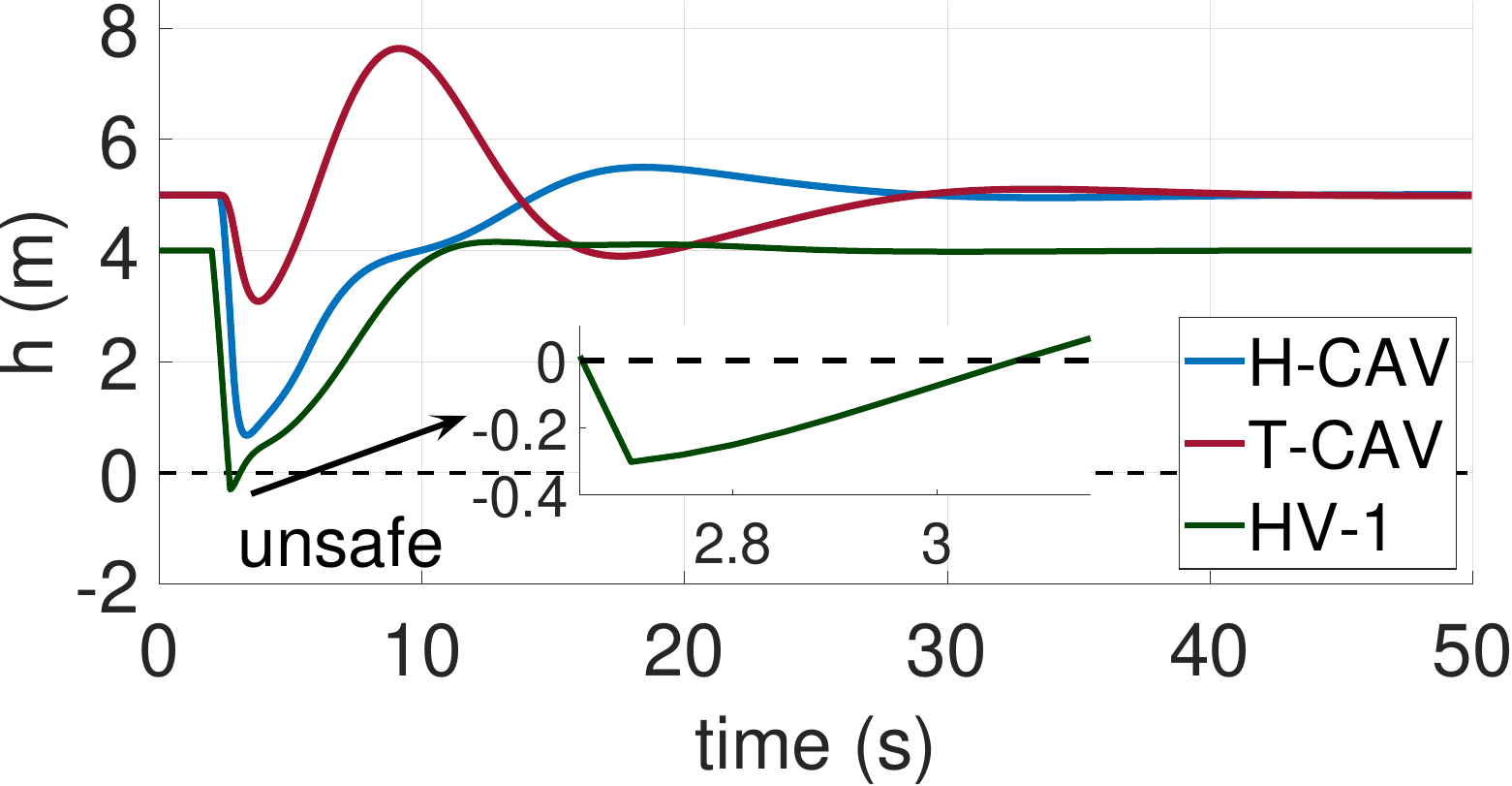}
    \includegraphics[width=0.24\linewidth]{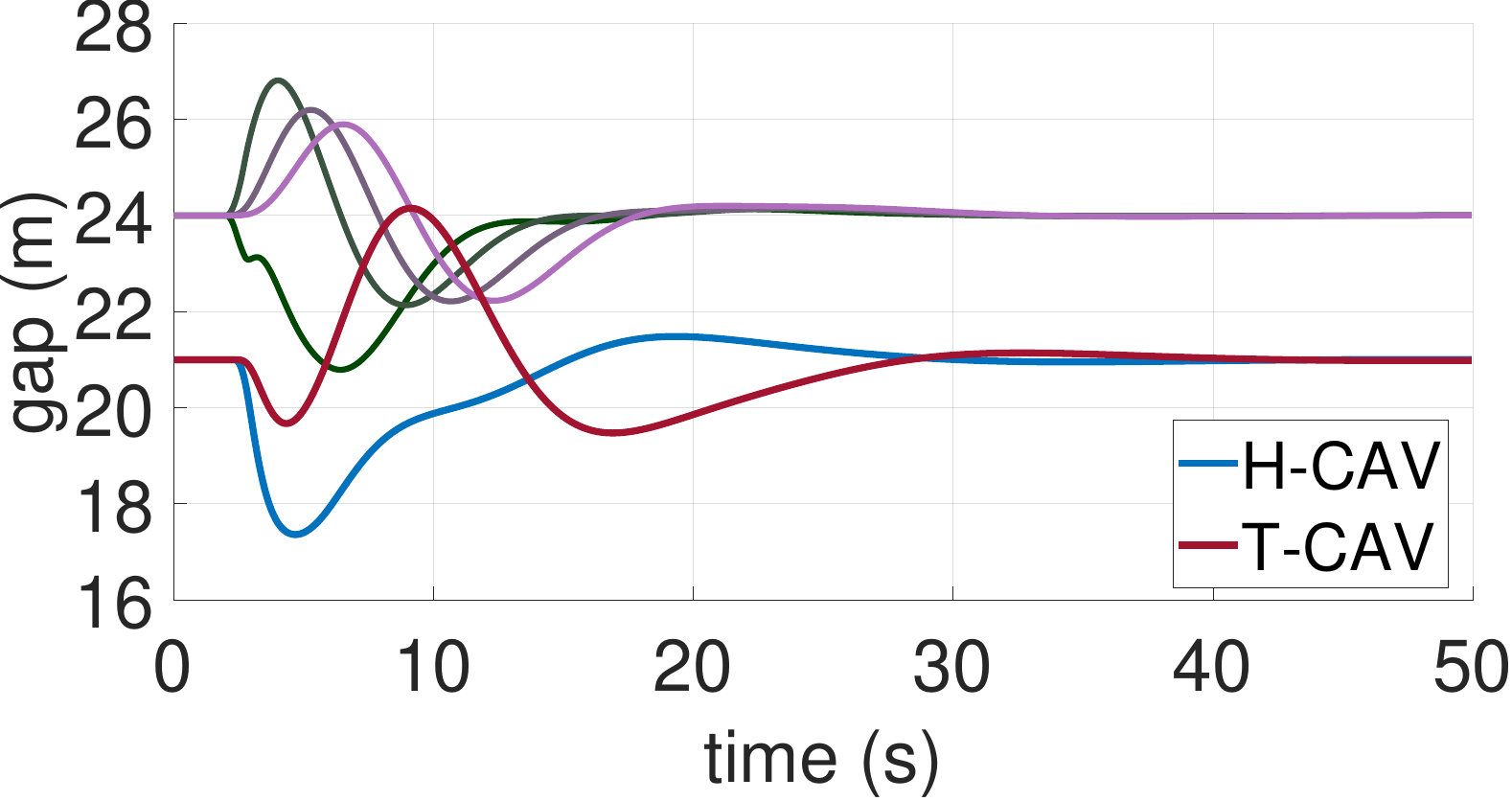}
    \includegraphics[width=0.24\linewidth]{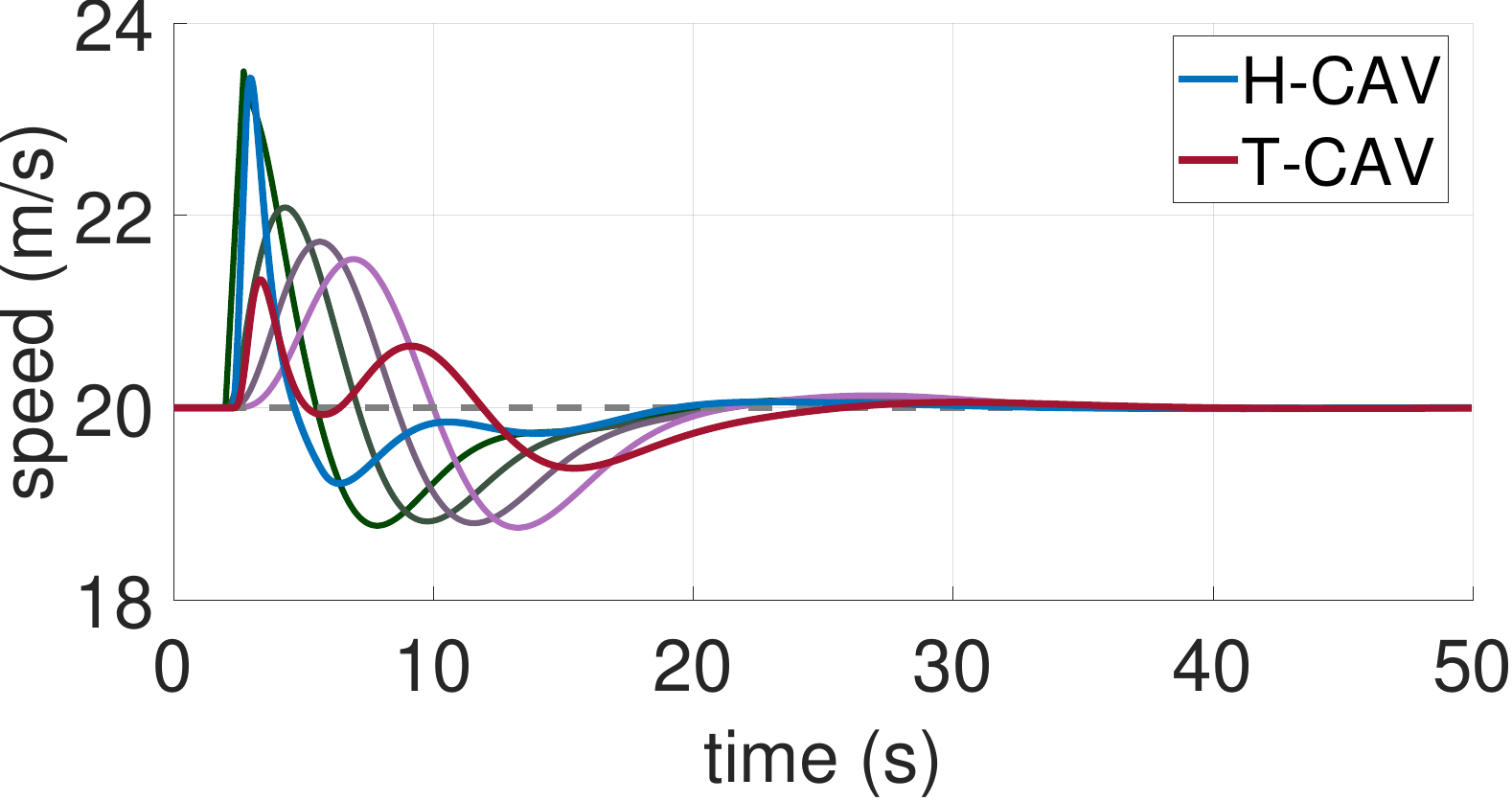}
    \includegraphics[width=0.24\linewidth]{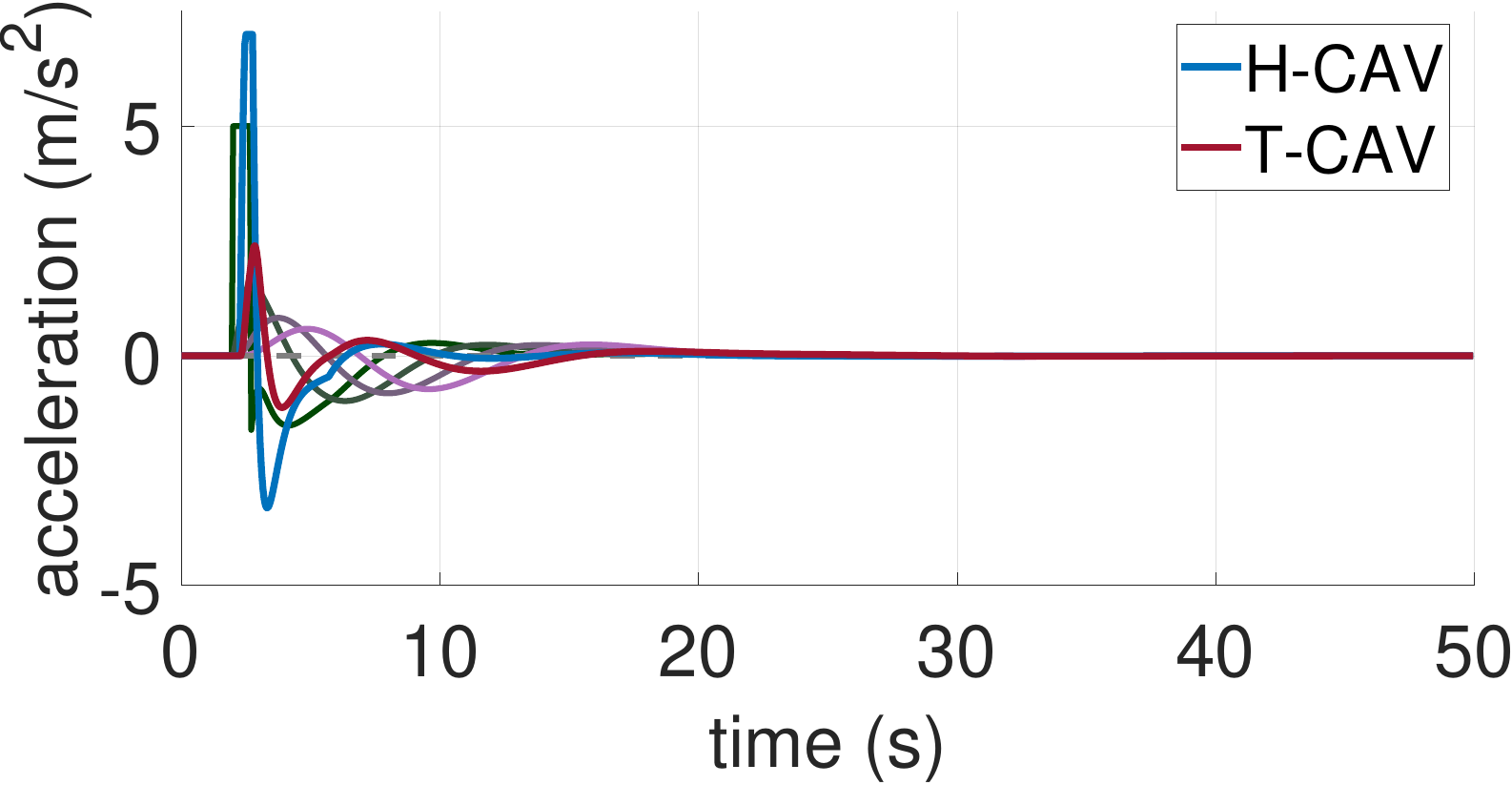}
    \\
    \vspace{2ex} Robust CBF \vspace{0.5ex}\\
    \includegraphics[width=0.24\linewidth]{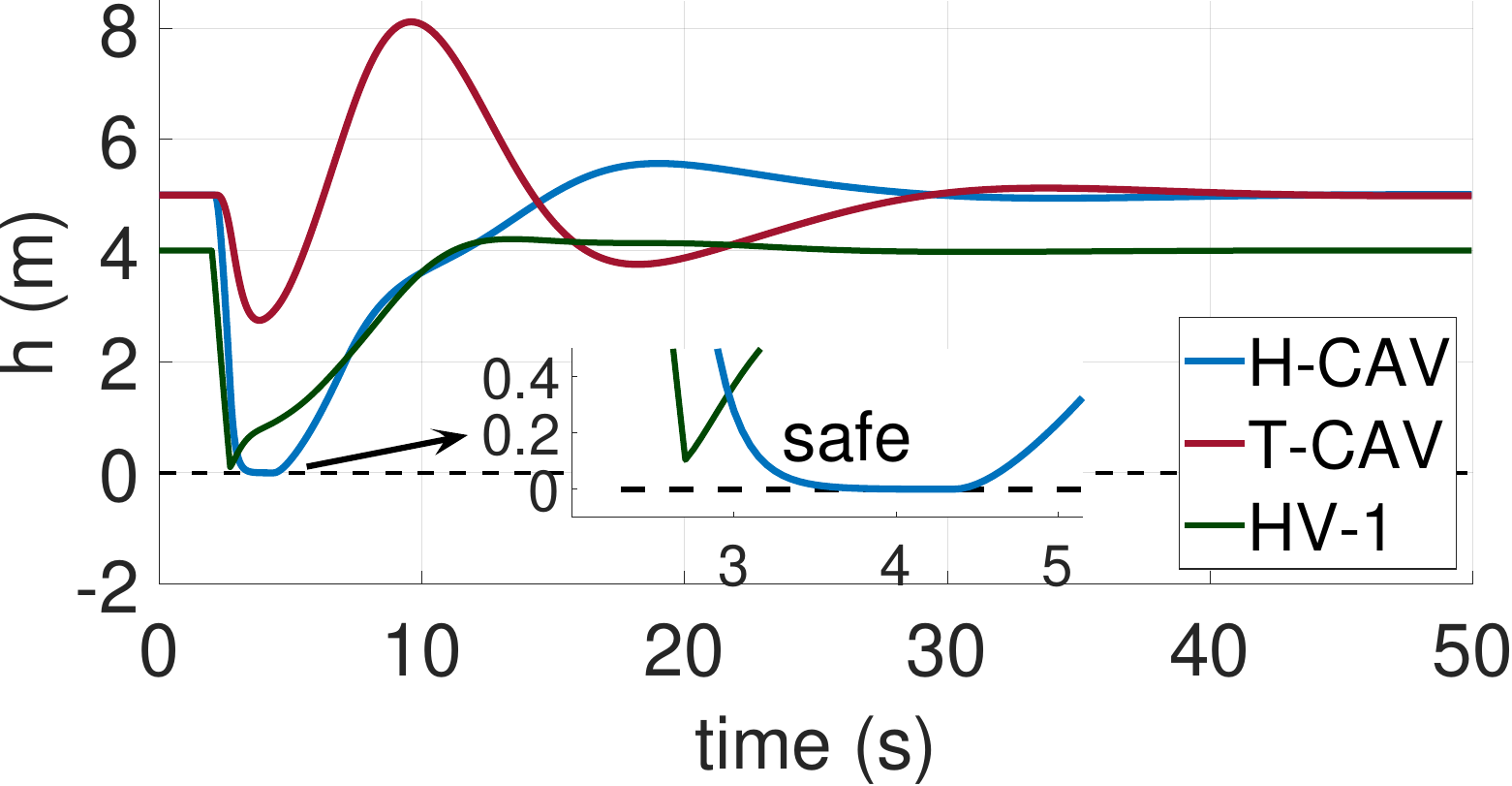}
    \includegraphics[width=0.24\linewidth]{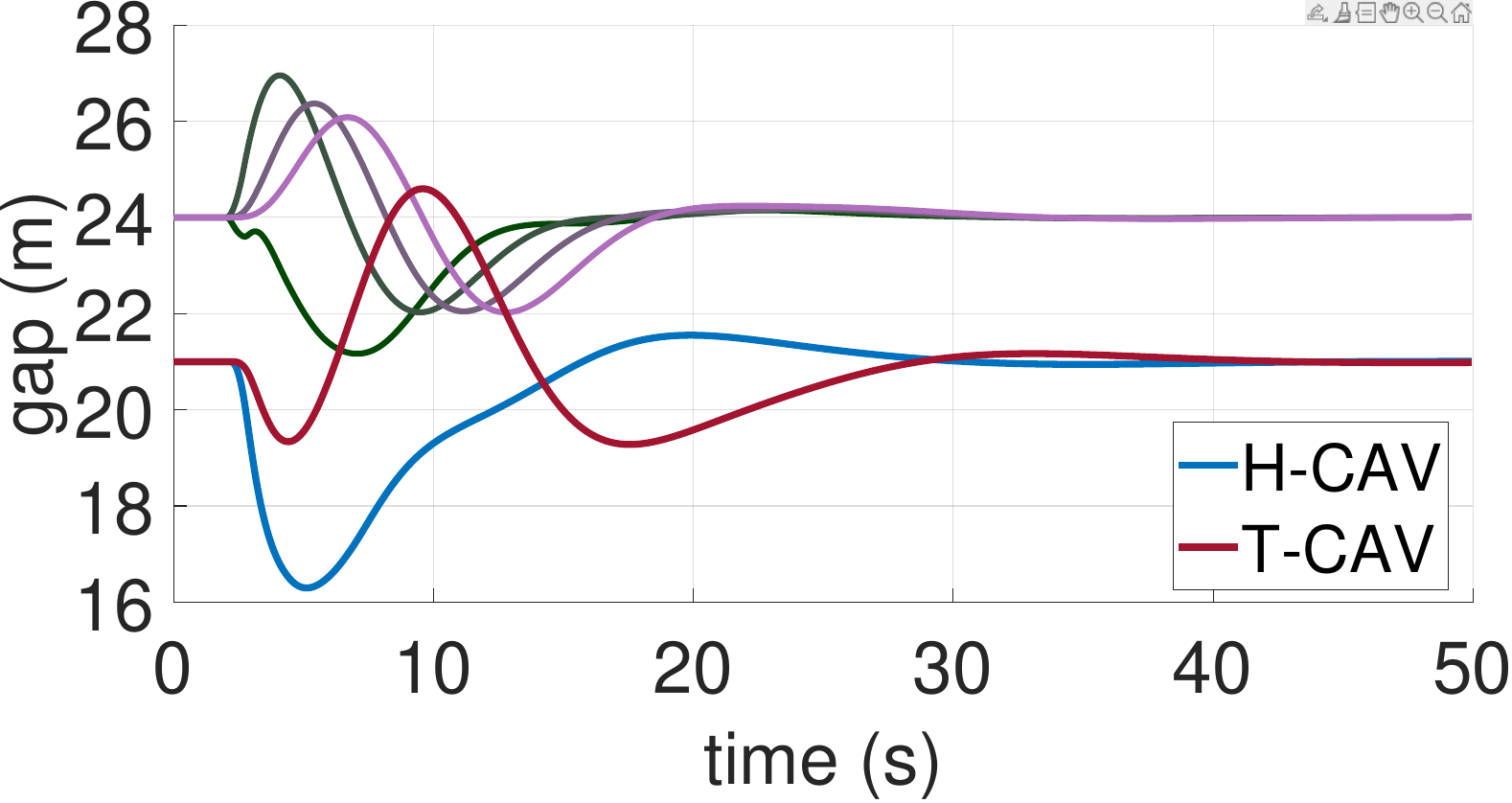}
    \includegraphics[width=0.24\linewidth]{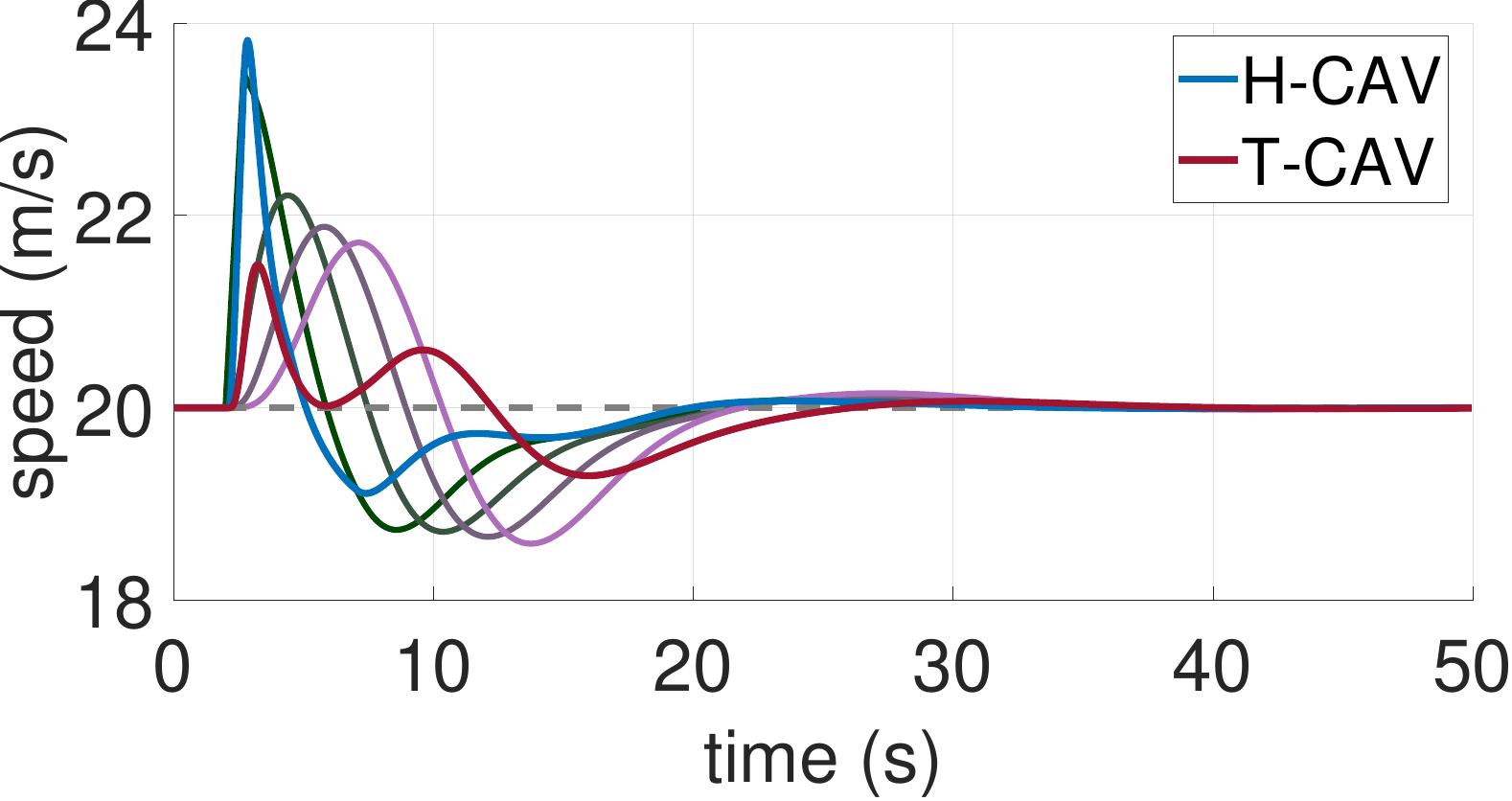}
    \includegraphics[width=0.24\linewidth]{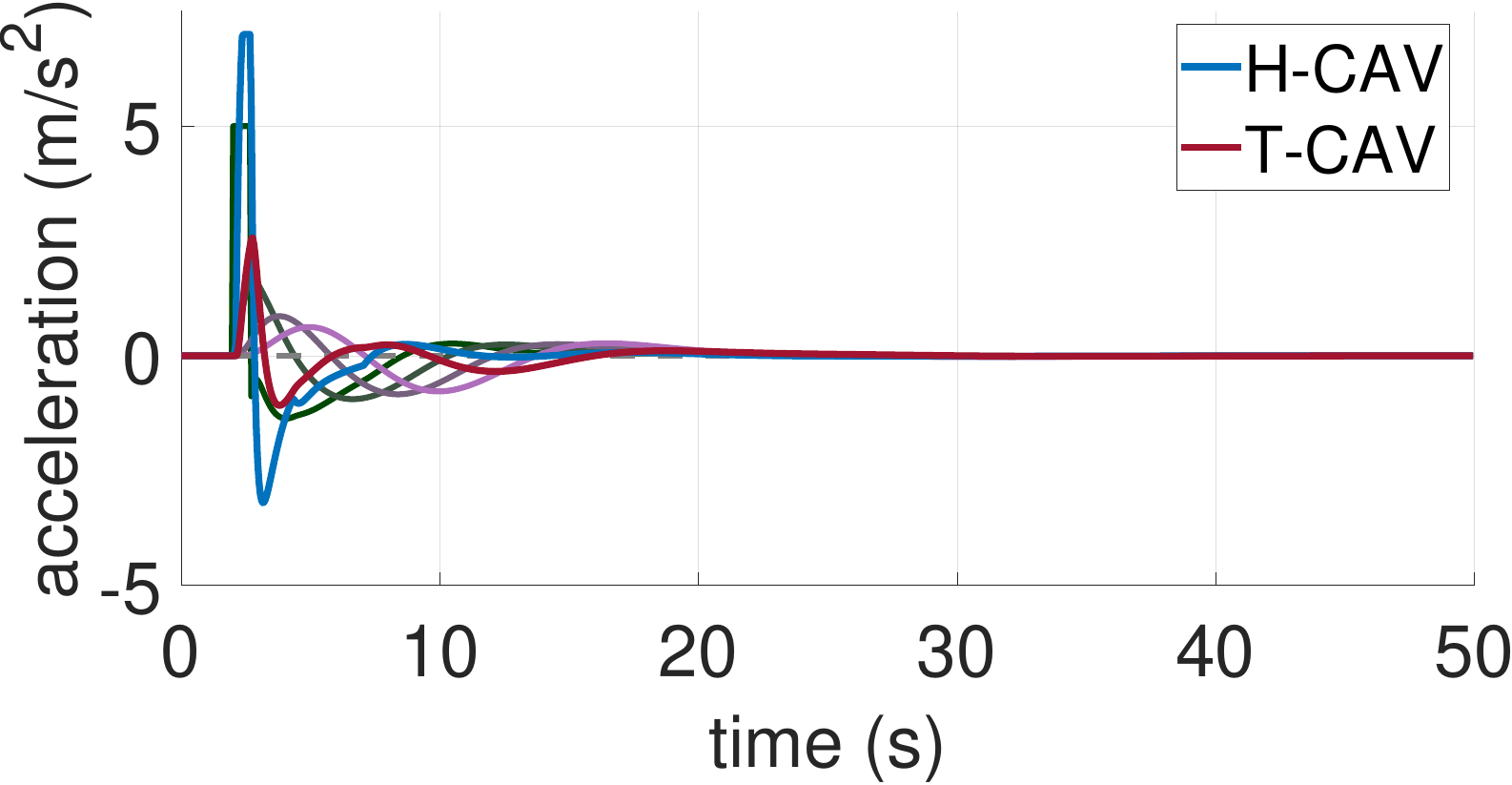}
    \caption{Simulated trajectories when one middle HV suddenly accelerates.
    When the human driver model used by the head CAV's control design is inaccurate, the CBF may fail to ensure HV safety.
    This can be remedied by using a robust CBF that accounts for the human driver uncertainty.}
    \label{fig:trajectory robust HV safety}
\end{figure}

\subsection{Trade-offs between stability and safety}

First, we consider the safety-critical controllers~\eqref{eq:min_controller_head},~\eqref{eq:min_controller_tail} that enforce CAV safety only.
We further investigate the effects of safety filters on safety and stability by conduction simulations with varying control gains $(\beta_{\headcav,\tailcav},\beta_{\tailcav,\headcav})$, considering the scenario where the head HV suddenly decelerates.
That is, the simulations in Fig.~\ref{fig:trajectory head HV dec} are repeated with various gains.

\subsubsection*{Effect of CBF on safety}
In theory, the safety-critical controllers guarantee safety with any nominal controller and in all scenarios. However, safety guarantees hold for bounded accelerations only. If the CBF requires too large acceleration or deceleration, the CAV's control input is saturated, and safety guarantees could be lost. We analyze how the nominal controller and CBF affect the two CAVs' safety with saturated accelerations.

Fig.~\ref{fig:safe region} presents simulation results with varying controller gains $(\beta_{\headcav,\tailcav},\beta_{\tailcav,\headcav})$. The first and second columns plot the maximum speed perturbation $\Delta v_{\hhv}$ of the head HV for which the head and tail CAV remains safe. A darker color indicates that the CAV is able to maintain safety at larger speed perturbations. Fig.~\ref{fig:safe region}(b) and (f) show that, by adding the CBF-based safety filters, both the head and tail CAV remains safe for almost all controller gains even when the head HV decelerates to a full stop, i.e., $\Delta v_{\hhv}  = 20$ m/s.
The third and fourth columns present the range of controller gains $(\beta_{\headcav,\tailcav},\beta_{\tailcav,\headcav})$ that ensure the safety of the head and tail CAVs for a fixed speed perturbation $\Delta v_{\hhv}$.  Fig.~\ref{fig:safe region}(c) and (g) give the safety region for $\Delta v_{\hhv} = 12$ m/s. As the grey area shows, the nominal controller can achieve safety for one of the CAVs with a limited choice of gains. However, by comparing panels (c) and (g), it can be concluded that no gains can enable the nominal controller to ensure safety for both CAVs. As opposed, by adding the CBF-based safety filters, both the head and tail CAVs remain safe for all considered $(\beta_{\headcav,\tailcav},\beta_{\tailcav,\headcav})$ gains, as the red region shows. 
Fig.~\ref{fig:safe region}(d) and (h) gives the safety region for $\Delta v_{\hhv} = 20$ m/s. We note from Fig.~\ref{fig:safe region}(h) that all nominal controllers fail to ensure the safety of the tail CAV (i.e., there is no grey region), while the CBF still guarantees its safety for most $(\beta_{\headcav,\tailcav},\beta_{\tailcav,\headcav})$ pairs (see the red region).
The unsafe domain (white region) is caused by the saturation of accelerations.

\subsubsection*{Effect of CBF on stability} 
As trajectories in Figs.~\ref{fig:trajectory head HV dec},~\ref{fig:trajectory HV dec}, and~\ref{fig:trajectory HV acc} show, the CBF may increase the acceleration or deceleration of CAVs to avoid collisions, i.e., there is a trade-off between safety and stability. To evaluate the CBF's effect on string stability (traffic smoothness), we run simulations when the head HV decelerates (i.e., for the scenario of Fig.~\ref{fig:trajectory head HV dec}), and we calculate the head-to-tail string stability index $I$ defined in~\eqref{eq:stability index tail}. 
Fig.~\ref{fig:analysis stability index}(a) and (b) depict the stability index obtained for the nominal controller and CBF using various $(\beta_{\headcav,\tailcav},\beta_{\tailcav,\headcav})$ controller gains. We see that after adding the CBF, we still have $I<1$ for all gains. Thus, combining the safe region in Fig.~\ref{fig:safe region} and the stability index in Fig.~\ref{fig:analysis stability index}, the safety-critical controller achieves both safety and string stability.

Besides evaluating head-to-tail string stability, that characterizes the smoothness of the tail CAV's motion, we also investigate the smoothness of the overall mixed traffic considering all vehicles in the platoon. We define the average string stability index:
\begin{align}\label{eq:stability index platoon}
    \bar{I} = \frac{1}{N+2} \left( \frac{ \sqrt{\int_0^T (v_{\headcav} - v^*)^2 \diff t}} {\sqrt{\int_0^T (v_{\hhv} - v^*)^2 \diff t}} + \frac{\sqrt{\int_0^T (v_{\tailcav} - v^*)^2 \diff t}}{\sqrt{\int_0^T (v_{\hhv} - v^*)^2 \diff t}} + \sum_{i=1}^N \frac{\sqrt{\int_0^T (v_i - v^*)^2 \diff t}}{\sqrt{\int_0^T (v_{\hhv} - v^*)^2 \diff t}} \right).
\end{align} 
Fig.~\ref{fig:analysis stability index}(c) and (d) plot $\bar{I}$ for the nominal controller and CBF. Similar to the head-to-tail string stability index $I$, the average string stability index $\bar{I}$ becomes higher when implementing CBF, since the acceleration becomes larger to maintain safety. Nevertheless, we still have $\bar{I}<1$ for the CBF, which implies that perturbations from the downstream traffic are attenuated by the platoon of two CAVs and $N$ middle HVs.

\subsection{Sensitivity to the CAV penetration rate}

We analyze how the number of middle HVs, $N$, affects the performance of the CAV controllers. Namely, how the CAV penetration rate affects the stability and safety of mixed traffic.    
We repeat the simulation considering that the head HV decelerates as Fig.~\ref{fig:trajectory head HV dec} for various $N$ values ranging from 1 to 10, with the penetration rate $p=2/(N+2)$ ranging from $67\%$ to $17 \%$. 
Note that the vehicle-to-vehicle communication connecting the two CAVs has a limited range, and 10 middle HVs is approximately the maximum $N$ that can be covered by the communication range.
Since the middle HVs are assumed to be non-connected,  the two CAVs do not respond to them, and the nominal controller remains the same for different numbers of middle HVs. We set all parameters except $N$ to be
the same as in Fig.~\ref{fig:trajectory head HV dec}.

Fig.~\ref{fig:analysis HV number} shows the simulated the speed $v_{\headcav}$, $v_{\tailcav}$ and safety function $h_{\headcav}$, $h_{\tailcav}$  for various $N$ values.
The first row of Fig.~\ref{fig:analysis HV number} depicts the profiles of speed $v_{\headcav}$ and $v_{\tailcav}$ under the nominal controller and CBF. We see that the nominal controller stabilizes the system for each $N$, i.e., the system state converges to the equilibrium value once the head HV drives at the equilibrium speed $v^*$. With more middle HVs, however, it takes more time to converge: the system gets close to the equilibrium around 20 sec for $N=1$, while it takes 40 sec for $N=10$. As for string stability, Fig.~\ref{fig:analysis HV number}(e) plots the string stability index $I$ as a function of $N$. We observe that $I<1$ for all $N$, i.e., the nominal controller achieves string stability. With the increase of $N$, $I$ first decreases and then increases.
This means that the tail CAV performs best if the head CAV is a few vehicles ahead ($N \approx 4$), which is consistent with the findings in connected cruise control~\cite{orosz2016connected}. When the safety filter is implemented, the effect of the HV number on stability is similar to the nominal controller. From Fig.~\ref{fig:analysis HV number}(b) and (d), the safety-critical controller still ensures plant stability, i.e., $v$ converges to $v^*$.  For the string stability, as Fig.~\ref{fig:analysis HV number}(e) shows, we still have $I<1$ for all $N$, which means the system is still string stable with the CBF. 

As for safety, we plot $h_{\headcav}$ and $h_{\tailcav}$ in the second row of Fig.~\ref{fig:analysis HV number}. For the nominal controller, both the head and tail CAV become more unsafe when there are more middle HVs (i.e., the minimum of $h$ becomes smaller as $N$ increases and $h$ goes below zero in each case in Fig.~\ref{fig:analysis HV number}(f) and (h)).  For the safety-critical controller, as Fig.~\ref{fig:analysis HV number}(g) and (i) show, both the head CAV and tail CAV maintain safety in the sense that $h_{\headcav}\ge 0$ and $h_{\tailcav}\ge 0$. To summarize, by implementing CBF constraints, the mixed vehicle platoon has safe and string stable driving. 

\subsection{Robustness to uncertainty of human driver behaviors}

Next, we study how the proposed controllers are affected by the uncertainty induced by human driver behaviors.

\subsubsection*{Robust stability with parameter uncertainty of the human driver models}

We analyze how the stability performance of the nominal controller is affected by the parameters of the human driver model. We consider the case where the two CAVs are connected only to each other, and repeat the stability calculations in Fig.~\ref{fig:stability chart}(a) with various HV parameters. 
There are four parameters in the human driver model~\eqref{eq:OVM}:
sensitivity to desired speed $a$, sensitivity to leader's speed $b$, 
stopping gap $s_{\mathrm{st}}$, and free-driving gap $s_{\mathrm{go}}$. We have found that the plant stability boundaries only have small changes with different HV model parameters. We depict the head-to-tail string stability region with different HV parameters in Fig.~\ref{fig:robust stability chart}. In each subfigure, we plot the string stability boundaries when one parameter changes while keeping the other three parameters the same as calibrated in Section~\ref{sec:subsec:stability analysis}. The overlap between the string stable regions associated with different HV parameters is shaded in red.
The red domain shows that there exists a large region of nominal controller gains that renders the system string stable even when the human driver behavior is uncertain.
Choosing gains from this region, therefore, provides robustness against human driver uncertainty. 

\subsubsection*{Robust safety with inaccurate human driver model}

Finally, we study how safety can be guaranteed when the behavior of human drivers is uncertain.
We note that the safety constraints of the head CAV~\eqref{eq:CBF head CAV}, tail CAV~\eqref{eq:CBF tail CAV}, and platoon~\eqref{eq:CBF platoon} do not depend on the human driver model. Therefore, CAV safety and platoon safety can be enforced even if the human-driver model is unknown or inaccurate. The human driver uncertainty only affects the HV safety~\eqref{eq:CBF HV}.
Below we outline a method to provide robust safety guarantees for HVs even with uncertain human driver model.

Recall that the uncertainty of the human driver model can be captured by a disturbance $d_{i}$ in~\eqref{eq:system HV v} that represents the error between the actual acceleration $\dot{v}_i$ and its model $F_i$. If the disturbance has a known upper bound $\bar{d}_i$, i.e., $|d_i| = |\dot{v}_i - F_i(s_i,v_i,\dot{s}_i)|\le \bar{d}_i$, then the HV safety constraint~\eqref{eq:CBF HV} can be modified based on robust CBF theory~\cite{jankovic2018robust} to the robust constraint:
\begin{align} \label{eq:CBF HV robust}
    L_{f} \bar{h}_{i}(x,v_{\hhv}) + L_{g_{\headcav}} \bar{h}_{i}(x) u_{\headcav} - \tau_i \bar{d}_i \ge -\gamma_{i} \bar{h}_{i}(x).
\end{align}
This leads to safety guarantees for HVs even with model uncertainty, which is established in  Theorem~\ref{theorem:safety HV robust} in Appendix~\ref{appdx:safety}.

We validate this robust CBF by simulations for the scenario of one middle HV suddenly accelerating.
We compare simulations with the nominal controller~\eqref{eq:nominal controller head CAV},~\eqref{eq:nominal controller tail CAV};
the safety-critical controller~\eqref{eq:QP_head},~\eqref{eq:min_controller_tail};
and the robust safety-critical controller where the constraints in~\eqref{eq:QP_head} are replaced by~\eqref{eq:CBF HV robust}. We take the calibrated HV model $F_i$ from the previous simulations in Fig.~\ref{fig:trajectory HV acc}, and we use this model to calculate the left-hand side of the safety constraint~\eqref{eq:CBF HV} and the robust safety constraint~\eqref{eq:CBF HV robust}.
Then, we simulate the vehicle platoon considering HVs with different parameters: $a=0.2$, $b=0.6$, $s_{\mathrm{st}} = 8$ m, and $s_{\mathrm{go}} = 40$ m.  The remaining simulation parameters, including the nominal controller gains, CBF parameters, and sudden HV acceleration settings, are the same as in Fig.~\ref{fig:trajectory HV acc}. 
Fig.~\ref{fig:trajectory robust HV safety} shows the simulated trajectories by the nominal controller, CBF~\eqref{eq:CBF HV} and robust CBF~\eqref{eq:CBF HV robust} with $\bar{d}_i = 5 \; \mathrm{m/s^2}$. The robust CBF maintains HV safety, while the other two controllers fail to do so.

\section{Conclusion}

In this paper, we coordinate a pair of CAVs traveling amongst HVs in mixed traffic. Feedback controllers are designed for the two CAVs to utilize CAV cooperation and, possibly, connected HV feedback. Stability and safety conditions are derived for the controller gains, and the effect of CAV coordination and HV connection on stability and safety is analyzed. We find that both CAV coordination and HV connection have opposite effects on stability and safety: including CAV cooperation or HV connection makes it easier to stabilize traffic but harder to maintain safety. To overcome this trade-off, safety filters are designed using control barrier functions considering CAV safety, HV safety, and platoon safety.
The controller performance is analyzed via numerical simulations. With the proposed controller, the mixed vehicle platoon travels safely and also mitigates perturbations from downstream traffic. Future extensions of this research include designing robust controllers under V2V communication failure, considering lateral movement, and evaluating the controller from more perspectives such as comfort and fuel consumption.

\appendices
\setcounter{equation}{0} 
\renewcommand{\theequation}{A.\arabic{equation}}

\section{Stability Analysis}
\label{appdx:stability}

This Appendix provides the mathematical background required for the stability analysis in Section~\ref{sec:stability}, including the derivation of the linearized dynamics in~\eqref{eq:linearized_system} and the corresponding head-to-tail transfer function in Lemma~\ref{theorem:G}.

\subsection{Linearized dynamics}

We first derive the linearized dynamics~\eqref{eq:linearized_system}.
Considering the perturbations around the equilibrium as in \eqref{eq:perturbation}, the dynamics are as follows.
For middle HV-$i$, by linearizing~\eqref{eq:system HV s}-\eqref{eq:system HV v}, we obtain:
\begin{align}
    \dot{\tilde{s}}_i &= \tilde{v}_{i-1} - \tilde{v}_i, \label{eq:linear system HV s}\\
    \dot{\tilde{v}}_i &= a_{i1} \tilde{s}_i - a_{i2} \tilde{v}_{i} + a_{i3} \tilde{v}_{i-1}, \label{eq:linear system HV v}
\end{align}
where the coefficients are
${a_{i1}=\frac{\partial F_i}{\partial s_i}(s_i^*,v^*,0)}$, ${a_{i2}=\frac{\partial F_i}{\partial \dot{s}_i}(s_i^*,v^*,0)-\frac{\partial F_i}{\partial v_i}(s_i^*,v^*,0)}$, and ${a_{i3}=\frac{\partial F_i}{\partial \dot{s}_i}(s_i^*,v^*,0)}$.
For the head CAV, the linearization of~\eqref{eq:system head CAV s}-\eqref{eq:system head CAV v} leads to:
\begin{align}
    \dot{\tilde{s}}_{\headcav} &= \tilde v_{\hhv} - \tilde{v}_{\headcav}, 	\label{eq:linear system head s}\\
    \dot{\tilde{v}}_{\headcav} & = \xi_{\headcav} \tilde{s}_{\headcav} - \eta_{\headcav} \tilde{v}_\headcav + \beta_{\headcav,\hhv} \tilde{v}_{\hhv} + \sum_{i\in \mathcal{N}_{\headcav}} 
     \beta_{\headcav,i} \tilde{v}_{i}   + \beta_{\headcav,\tailcav},  \label{eq:linear system head v}
\end{align}
with
${\xi_{\headcav} = \alpha_{\headcav} V'_{\headcav}(s_{\headcav}^*)}$
and
${\eta_{\headcav} = \alpha_{\headcav} + \beta_{\headcav,{\hhv}} + \sum_{i\in \mathcal{N}_{\headcav}} \beta_{\headcav,i}  + \beta_{\headcav,\tailcav}}$.
For the tail CAV, linearizing~\eqref{eq:system tail CAV s}-\eqref{eq:system tail CAV v} yields:
\begin{align}
    \dot{\tilde{v}}_{\tailcav} & = \tilde{s}_{\HVn} - \tilde{s}_{\tailcav}, \label{eq:linear system tail s} \\ 
    \dot{\tilde{v}}_{\tailcav} & = \xi_{\tailcav} \tilde{s}_{\tailcav} - \eta_{\tailcav} \tilde{v}_{\tailcav} + \beta_{\tailcav,\HVn} \tilde{v}_{\HVn} + \sum_{i\in \mathcal{N}_{\tailcav}} 
	 \beta_{\tailcav,i} \tilde{v}_{i}   + \beta_{\tailcav,\headcav} \tilde{v}_{\headcav}, \label{eq:linear system tail v}
\end{align}
with
${\xi_\tailcav = \alpha_\tailcav V_{\tailcav}'(s_{\tailcav}^*)}$
and
${\eta_\tailcav = \alpha_\tailcav + \beta_{\tailcav,\HVn} + \sum_{i\in \mathcal{N}_{\tailcav}} \beta_{\tailcav,i} + \beta_{\tailcav,\headcav}}$.

The linearized dynamics can be written compactly as system~\eqref{eq:linearized_system}, where the matrices $A$ and $B$ are:
\begin{equation}\label{eq:linear AB}
    A = \left[ \begin{array}{cc:cc:cc:c:cc:cc}
        0 & -1 & 0 & 0 & 0 & 0 & \cdots & 0 & 0 & 0& 0 \\
        \xi_{\headcav} & -\eta_{\headcav} & 0 &\beta_{\headcav,1} & 0 &\beta_{\headcav,2} & \cdots & 0 & \beta_{\headcav,\HVn} & 0 & \beta_{\headcav,\tailcav} \\ \hdashline
        0& 1 &0 &-1  & 0 & 0 & \cdots &0&0&0&0 \\
        0 & a_{13} & a_{11} & -a_{12} &0&0&  \cdots &0&0&0&0 \\ \hdashline
        0&0& 0& 1 &0 &-1  & \cdots &0&0&0&0 \\
        0&0& 0 & a_{23} & a_{21} & -a_{22} & \cdots &0&0&0&0 \\ \hdashline
        &&&&&& \ddots &  &   &  &   \\ \hdashline
        0&0&0&0&0&0&  \cdots & 0& 1 &0 &-1   \\
        0 & \beta_{\tailcav,\headcav}& 0 & \beta_{\tailcav,1} & 0 & \beta_{\tailcav,2} & \cdots & 0 & \beta_{\tailcav,\HVn}& \xi_{\tailcav} & -\eta_{\tailcav} \\
    \end{array}\right], \quad
    B = \left[ \begin{array}{c}
        1  \\
        \beta_{\headcav,\hhv} \\  
        0  \\  
        0  \\
        0 \\  
       \vdots   \\  
       0   \\
      0    \\
    \end{array}  \right].
\end{equation}

\subsection{Proof of Lemma~\ref{theorem:G}}

We derive the head-to-tail transfer function $G(s)$ as follows.
For each middle HV-$i$, we take the Laplace transform of the linearized dynamics~\eqref{eq:linear system HV s}-\eqref{eq:linear system HV v} considering zero initial conditions:
\begin{align}
    & s\widetilde{S}_i = \widetilde{V}_{i-1} - \widetilde{V}_i, \\
    & s \widetilde{V}_i = a_{i1} \widetilde{S}_{i} -a_{i2} \widetilde{V}_{i} +a_{i3} \widetilde{V}_{i-1},
\end{align}
where $\widetilde{S}$ and $\widetilde{V}$ denote the Laplace transforms of $\tilde{s}$ and $\tilde{v}$ (with the corresponding subscript).
This gives the relationship between $\widetilde{V}_{i-1}$ and $\widetilde{V}_i$ as:
\begin{align}
    \widetilde{V}_{i} = \frac{a_{i3} s +a_{i1}}{s^2 + a_{i2} s + a_{i1} } \widetilde{V}_{i-1}.
\end{align}
Considering that $\widetilde{V}_0 = \widetilde{V}_{\headcav}$, we express each $\widetilde{V}_i$ from $\widetilde{V}_{\headcav}$ as:
\begin{align} \label{eq:proof stability Vi VH}
    \widetilde{V}_i = \Omega_i \widetilde{V}_{\headcav},
\end{align}
with $
    \Omega_i = \prod_{j=1}^i \frac{a_{j3} s +a_{j1}}{s^2 + a_{j2} s + a_{j1}}$.
Note that $\Omega_i= P_i/P_0$ based on~\eqref{eq:G Pi}.

For the head CAV, we have the Laplace transform of  the linearized dynamics~\eqref{eq:linear system head s}-\eqref{eq:linear system head v} as:
\begin{align}
    & s\widetilde{S}_{\headcav} = \widetilde{V}_{\hhv} - \widetilde{V}_{\headcav}, \\
    & s \widetilde{V}_{\headcav} = \xi_{\headcav} \widetilde{S}_{\headcav} - \eta_{\headcav} \widetilde{V}_\headcav + \beta_{\headcav,\hhv} \widetilde{V}_{\hhv} + \sum_{i\in \mathcal{N}_{\headcav}} 
     \beta_{\headcav,i} \widetilde{V}_{i}   + \beta_{\headcav,\tailcav} \widetilde{V}_{\tailcav},
\end{align}
which gives:
\begin{align}
    \widetilde{V}_{\headcav} = \frac{(\beta_{\headcav,\hhv} s+ \xi_{\headcav} ) \widetilde{V}_{\hhv} + \beta_{\headcav,\tailcav} s \widetilde{V}_{\tailcav} + \sum_{i\in \mathcal{N}_{\headcav}}\beta_{\headcav,i} s \widetilde{V}_i
 }{s^2 + \eta_{\headcav} s + \xi_{\headcav}}.
\end{align}
By substituting $\widetilde{V}_i$ from~\eqref{eq:proof stability Vi VH}, we obtain:
\begin{align} \label{eq:proof stability VH Vd}
    \widetilde{V}_{\headcav} = \frac{(\beta_{\headcav,\hhv} s+ \xi_{\headcav} ) \widetilde{V}_{\hhv} + \beta_{\headcav,\tailcav} s \widetilde{V}_{\tailcav}}{s^2 + \eta_{\headcav} s + \xi_{\headcav} - \sum_{i\in \mathcal{N}_{\headcav}}\beta_{\headcav,i} s \Omega_i}.
\end{align}

For the tail CAV, taking the Laplace transform of the linearized dynamics~\eqref{eq:linear system tail s}-\eqref{eq:linear system tail v} leads to:
\begin{align}
    s\widetilde{S}_{\tailcav} &= \widetilde{V}_{\HVn} - \widetilde{V}_{\tailcav}, \\
    s \widetilde{V}_{\tailcav} &= \xi_{\tailcav} \widetilde{S}_{\tailcav} - \eta_{\tailcav} \widetilde{V}_\tailcav + \beta_{\tailcav,\HVn} \widetilde{V}_{\HVn} + \sum_{i\in \mathcal{N}_{\tailcav}} 
     \beta_{\tailcav,i} \widetilde{V}_{i}   + \beta_{\tailcav,\headcav} \widetilde{V}_{\headcav},
\end{align}
which gives:
\begin{align}
     \widetilde{V}_{\tailcav} = \frac{\beta_{\tailcav,\headcav}s \widetilde{V}_{\headcav} + (\beta_{\tailcav,\HVn}s + \xi_{\tailcav}) \widetilde{V}_{\HVn} + \sum_{i\in \mathcal{N}_{\tailcav}} \beta_{\tailcav,i} s \widetilde{V}_i }{s^2 + \eta_{\tailcav} s + \xi_{\tailcav}}.
\end{align}
By substituting $\widetilde{V}_i$ from~\eqref{eq:proof stability Vi VH}, we get:
\begin{align}\label{eq:proof stability VH VT}
     \widetilde{V}_{\headcav}  = \frac{(s^2 + \eta_{\tailcav} s + \xi_{\tailcav}) \widetilde{V}_{\tailcav}} {\beta_{\tailcav,\headcav}s   + (\beta_{\tailcav,\HVn}s + \xi_{\tailcav}) \Omega_{\HVn} + \sum_{i\in \mathcal{N}_{\tailcav}} \beta_{\tailcav,i} s \Omega_i}.
\end{align}

Equating~\eqref{eq:proof stability VH Vd} and~\eqref{eq:proof stability VH VT} leads to:
\begin{align}
    \frac{(\beta_{\headcav,\hhv} s+ \xi_{\headcav} ) \widetilde{V}_{\hhv} + \beta_{\headcav,\tailcav} s \widetilde{V}_{\tailcav}}{s^2 + \eta_{\headcav} s + \xi_{\headcav} - \sum_{i\in \mathcal{N}_{\headcav}}\beta_{\headcav,i} s \Omega_i}
    =
    \frac{(s^2 + \eta_{\tailcav} s + \xi_{\tailcav}) \widetilde{V}_{\tailcav}} {\beta_{\tailcav,\headcav}s   + (\beta_{\tailcav,\HVn}s + \xi_{\tailcav}) \Omega_{\HVn} + \sum_{i\in \mathcal{N}_{\tailcav}} \beta_{\tailcav,i} s \Omega_i}.
\end{align}
This can be rearranged to obtain the head-to-tail transfer function defined in~\eqref{eq:G_def}:
\begin{align}
    G(s) = \frac{(\beta_{\headcav,\hhv} s+ \xi_{\headcav} )(\beta_{\tailcav,\headcav}s   + (\beta_{\tailcav,\HVn}s + \xi_{\tailcav}) \Omega_{\HVn} + \sum_{i\in \mathcal{N}_{\tailcav}} \beta_{\tailcav,i} s \Omega_i)  }{\left( s^2 + \eta_{\headcav} s + \xi_{\headcav} - \sum_{i\in \mathcal{N}_{\headcav}}\beta_{\headcav,i} s \Omega_i \right) (s^2 + \eta_{\tailcav} s + \xi_{\tailcav})  -\beta_{\headcav,\tailcav}s (\beta_{\tailcav,\headcav}s   + (\beta_{\tailcav,\HVn}s + \xi_{\tailcav}) \Omega_{\HVn} + \sum_{i\in \mathcal{N}_{\tailcav}} \beta_{\tailcav,i} s \Omega_i)}.
\end{align}
Since $\Omega_i= P_i/P_0$ based on~\eqref{eq:G Pi}, we multiply both the numerator and denominator by $P_0$, and we get~\eqref{eq:G}-\eqref{eq:G D}.

\setcounter{equation}{0} 
\renewcommand{\theequation}{B.\arabic{equation}}

\section{Safety Analysis}
\label{appdx:safety}
In this Appendix, we first prove Theorem~\ref{theorem:safety nominal head} that provides safe controller gains for the nominal controller in Section~\ref{sec:safety}.
Then, we establish robust CBF constraints for HV safety, which are utilized in~\eqref{eq:CBF HV robust} in Section~\ref{sec:performance analysis}.

\subsection{Proof of Theorem~\ref{theorem:safety nominal head}}
\label{appdx:nominal_controller_safety}

\begin{proof}
We prove safety based on Lemma~\ref{theorem:nagumo safety}, by showing that ${\dot{h}_{\headcav}(x,v_{\hhv}) \ge 0}$ holds if ${h_{\headcav}(x) = 0}$.
We express $\dot{h}_{\headcav}(x,v_{\hhv})$:
\begin{equation}
    \dot {h}_{\headcav}(x,v_{\hhv}) = v_{\hhv} - v_{\headcav} -\tau_{\headcav} u_{\headcav},
\end{equation}
where we substitute the nominal controller ${u_{\headcav}=k_{\headcav,{\rm n}}(x,v_{\hhv})}$ from~\eqref{eq:nominal controller head CAV}:
\begin{equation}
    \dot {h}_{\headcav}(x,v_{\hhv}) =
    v_{\hhv} - v_{\headcav}
    -\tau_{\headcav} \bigg(
    \alpha_{\headcav} (V_\headcav(s_\headcav) - v_{\headcav})
    + \beta_{\headcav,{\hhv}}(W(v_{\hhv}) - v_{\headcav})
    + \sum_{i \in \mathcal{N}_{\headcav}} \beta_{\headcav,i} (W(v_i) - v_{\headcav})
    + \beta_{\headcav,\tailcav} (W(v_{\tailcav}) - v_{\headcav}) \bigg).
\end{equation}
Then, we consider $v_{\hhv}, v_{\headcav}, v_{i}, v_{\tailcav} \in [0,v_{\max}]$, $s_{\headcav} \in [s_{\mathrm{st}},s_{\mathrm{go}}]$, and we substitute $W$ and $V_{\headcav}$ from~\eqref{eq:W} and~\eqref{eq:Vs}:
\begin{equation}
     \dot {h}_{\headcav}(x,v_{\hhv}) =
     v_{\hhv} - v_{\headcav}
     -\tau_{\headcav} \bigg(
     \alpha_{\headcav} \big( \kappa (s_{\headcav}  - s_{\mathrm{st}}) - v_{\headcav} \big)
     + \beta_{\headcav,{\hhv}}(v_{\hhv} - v_{\headcav})
     + \sum_{i \in \mathcal{N}_{\headcav}} \beta_{\headcav,i} (v_i - v_{\headcav})
     + \beta_{\headcav,\tailcav} (v_{\tailcav} - v_{\headcav}) \bigg).
\end{equation}
Next, we rearrange the terms and consider ${h_{\headcav}(x) = 0}$, that is, ${s_{\headcav} = \tau_{\headcav} v_{\headcav}}$:
\begin{equation}
     \dot {h}_{\headcav}(x,v_{\hhv}) =
     (1 - \tau_{\headcav} \beta_{\headcav,{\hhv}}) (v_{\hhv} - v_{\headcav})
     -\alpha_{\headcav} \big( \kappa \tau_{\headcav} (s_{\headcav}  - s_{\mathrm{st}}) - s_{\headcav} \big)
     -\tau_{\headcav} \bigg(
     \sum_{i \in \mathcal{N}_{\headcav}} \beta_{\headcav,i} (v_i - v_{\headcav})
     + \beta_{\headcav,\tailcav} (v_{\tailcav} - v_{\headcav}) \bigg).
\end{equation}
Since $v_{\hhv}, v_{\headcav}, v_{i}, v_{\tailcav} \in [0,v_{\max}]$, we have
${|v_{\hhv} - v_{\headcav}| \le v_{\max}}$,
${|v_{i} - v_{\headcav}| \le v_{\max}}$, and
${|v_{\tailcav} - v_{\headcav}| \le v_{\max}}$.
Furthermore, since ${s_{\headcav} \in [s_{\mathrm{st}},s_{\mathrm{go}}]}$ and ${\kappa \leq 1/\tau_{\headcav}}$, we have
${\kappa \tau_{\headcav} (s_{\headcav}  - s_{\mathrm{st}}) - s_{\headcav} \le -s_{\mathrm{st}}}$.
Substituting these and using ${\alpha_{\headcav} \ge 0}$ leads to:
\begin{equation}
     \dot {h}_{\headcav}(x,v_{\hhv}) \ge
     -  |  1-\tau_{\headcav} \beta_{\headcav,{\hhv}}| v_{\max}
     +\alpha_{\headcav} s_{\mathrm{st}}
     -\tau_{\headcav} \bigg(
     \sum_{i \in \mathcal{N}_{\headcav}} |\beta_{\headcav,i}|
     + |\beta_{\headcav,\tailcav}| \bigg) v_{\max}.
\end{equation}
Finally, substituting~\eqref{eq:head_CAV_safe_gains} leads to ${\dot{h}_{\headcav}(x,v_{\hhv}) \ge 0}$ which completes the proof.
\end{proof}

\subsection{Robust CBF constraints for HV safety}

Finally, we discuss how the head CAV's controller may ensure HV safety when the human driver model is inaccurate, which is captured by nonzero disturbance $d \ne 0$ in~\eqref{eq:system} and $d_i \ne 0$ in~\eqref{eq:system HV v}.
To this end, we briefly discuss robust safety-critical control for systems with disturbance, based on robust CBF theory~\cite{jankovic2018robust}.

Consider system~\eqref{eq:controlafine} with an additive disturbance $d \in \mathcal{D} \subset \mathbb{R}^n$:
\begin{equation}
    \dot{x} = f(x) + g(x)u + d,
\end{equation}
cf.~\eqref{eq:system}.
Analogously to~\eqref{eq:CBF_constraint} in Theorem~\ref{theorem:safety}, it can be stated that controllers $u = k(x)$ satisfying:
\begin{align} \label{eq:CBF_constraint_disturbance}
    L_f h(x) + L_gh(x) k(x) + \nabla h(x) \cdot d \ge -\gamma(h(x)),
    \quad \forall x \in \mathcal{C},
    \quad \forall d \in \mathcal{D},
\end{align}
render the set $\mathcal{C}$ forward invariant (safe) for the corresponding closed-loop system with disturbance. The difficulty of satisfying this constraint is that the disturbance $d$ may be unknown.
However, if the disturbance has a known bound $\bar{d}>0$, that is, if $\|d\|_{\infty} \le \bar{d}$ holds, then robust CBF theory provides the following sufficient condition for a safe controller~\cite{jankovic2018robust}:
\begin{align} \label{eq:CBF_constraint_robust}
    L_f h(x) + L_gh(x) k(x) - \| \nabla h(x) \| \bar{d} \ge -\gamma(h(x)),
    \quad \forall x \in \mathcal{C},
\end{align}
which implies that~\eqref{eq:CBF_constraint_disturbance} holds.

In the context of guaranteeing HV safety, the following modification of~\eqref{eq:CBF HV} can be used as robust CBF constraint.

\begin{theorem}[Robust HV safety with bounded HV model error]\label{theorem:safety HV robust}
    Consider system~\eqref{eq:system} with disturbance given in~\eqref{eq:system_expressions}.
    Assume that there is a known bound $\bar{d}_i \in \mathbb{R}$ for the disturbance $d_i$, i.e., $|d_i| \le \bar{d}_i$.
    If the controller of the head CAV satisfies \eqref{eq:CBF head CAV} and:
    \begin{align} \label{eq:CBF HV robust appdx}
    L_{f} \bar{h}_{i}(x,v_{\hhv}) + L_{g_{\headcav}} \bar{h}_{i}(x) u_{\headcav} - \tau_i \bar{d}_i \ge -\gamma_{i} \bar{h}_{i}(x),
    \end{align}
    with $\gamma_i>0$,
    then HV-$i$ is safe w.r.t. the CTH policy defined in~\eqref{eq:safety_function_hv}.
\end{theorem}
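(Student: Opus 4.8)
The plan is to reduce HV-$i$ safety to two separate forward-invariance statements and then recombine them through the definition $\bar{h}_i = h_i - \eta_i h_{\headcav}$ in~\eqref{eq:safety_function_hv_valid}. Since $\eta_i>0$, the identity $h_i(x) = \bar{h}_i(x) + \eta_i h_{\headcav}(x)$ shows that whenever both $h_{\headcav}(x)\ge 0$ and $\bar{h}_i(x)\ge 0$ hold, we automatically obtain $h_i(x)\ge 0$, i.e., HV-$i$ satisfies the CTH policy in~\eqref{eq:safety_function_hv}. It therefore suffices to prove that the safe set $\mathcal{C}_{\headcav}$ of~\eqref{eq:safe_set_headcav} and the set $\bar{\mathcal{C}}_i = \{x\in\mathbb{R}^n : \bar{h}_i(x)\ge 0\}$ are each forward invariant under the true disturbed dynamics, and then intersect them.

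First I would treat the head CAV. The key point is that $h_{\headcav}=s_{\headcav}-\tau_{\headcav}v_{\headcav}$ depends only on the head CAV's own gap and speed, whose dynamics~\eqref{eq:system head CAV s}-\eqref{eq:system head CAV v} contain no HV disturbance. Hence $\dot{h}_{\headcav} = v_{\hhv}-v_{\headcav}-\tau_{\headcav}u_{\headcav}$ is free of any $d_i$ term, and the standard CBF guarantee of Theorem~\ref{theorem:safety} applies verbatim: the constraint~\eqref{eq:CBF head CAV} renders $\mathcal{C}_{\headcav}$ forward invariant irrespective of the human-driver uncertainty.

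The crux is the second invariance claim, for which I would appeal to the robust CBF condition~\eqref{eq:CBF_constraint_disturbance}-\eqref{eq:CBF_constraint_robust}. The essential computation is that the disturbance enters $\bar{h}_i$ through a single channel: because $d$ in~\eqref{eq:system_expressions} carries $d_i$ only in the $\dot v_i$ slot and $\bar{h}_i$ is affine with coefficient $-\tau_i$ in $v_i$ (while being disturbance-free in all of its other active coordinates $s_i$, $s_{\headcav}$, $v_{\headcav}$), one obtains $\nabla\bar{h}_i(x)\cdot d = -\tau_i d_i$. The true derivative along trajectories is thus $\dot{\bar{h}}_i = L_f\bar{h}_i + L_{g_{\headcav}}\bar{h}_i\,u_{\headcav} - \tau_i d_i$, which over $d_i\in[-\bar{d}_i,\bar{d}_i]$ is minimized at $d_i=\bar{d}_i$ with worst-case contribution $-\tau_i\bar{d}_i$. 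The robust constraint~\eqref{eq:CBF HV robust appdx} is therefore precisely the requirement that $\dot{\bar{h}}_i \ge -\gamma_i\bar{h}_i$ hold for every admissible disturbance, so by the robust CBF theorem $\bar{\mathcal{C}}_i$ is forward invariant under the disturbed dynamics.

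Finally I would combine the two results: for any $x(0)\in\mathcal{C}_{\headcav}\cap\bar{\mathcal{C}}_i$, both sets remain invariant, so $x(t)\in\mathcal{C}_{\headcav}\cap\bar{\mathcal{C}}_i$ for all $t\ge 0$, and the decomposition $h_i=\bar{h}_i+\eta_i h_{\headcav}$ yields $h_i(x(t))\ge 0$, which is the claimed HV-$i$ safety. I expect the main obstacle to be conceptual rather than computational: one must recognize that the worst-case disturbance contributes exactly $-\tau_i\bar{d}_i$ rather than the looser generic bound $\|\nabla\bar{h}_i\|\bar{d}$ appearing in~\eqref{eq:CBF_constraint_robust}, a sharpening that rests on the sparse structure of $d$ together with the fact that $\bar{h}_i$ is affine in $v_i$ alone among the disturbed states.
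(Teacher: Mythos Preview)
Your proposal is correct and follows essentially the same approach as the paper's proof: both argue that the disturbance enters $\dot{\bar{h}}_i$ only through the $v_i$-slot with coefficient $-\tau_i$, so the robust constraint~\eqref{eq:CBF HV robust appdx} implies the disturbed CBF inequality for $\bar{h}_i$, while~\eqref{eq:CBF head CAV} handles $h_{\headcav}$ independently of any $d_i$, and the decomposition $h_i=\bar{h}_i+\eta_i h_{\headcav}$ then yields $h_i\ge 0$. Your explicit remark that the sharp bound $-\tau_i\bar{d}_i$ (rather than the generic $\|\nabla\bar{h}_i\|\bar{d}$) relies on the sparse structure of $d$ and the affinity of $\bar{h}_i$ in $v_i$ is a nice clarification that the paper leaves implicit.
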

\begin{proof}
Considering the system~\eqref{eq:system} and~\eqref{eq:system_expressions} with the safety function $\bar{h}_i$ in~\eqref{eq:safety_function_hv_valid}, the CBF constraint corresponding to~\eqref{eq:CBF_constraint_disturbance} is:
\begin{align} \label{eq:CBF HV disturbance}
    L_{f} \bar{h}_{i}(x,v_{\hhv}) + L_{g_{\headcav}} \bar{h}_{i}(x) u_{\headcav} + \frac{\partial \bar{h}_i}{\partial v_i} d_i \ge -\gamma_{i} \bar{h}_{i}(x),
\end{align}
where $\frac{\partial \bar{h}_i}{\partial v_i} = -\tau_i$.
Since the disturbance $d_i$ is upper bounded by $\bar{d}_i$, we have $\frac{\partial \bar{h}_i}{\partial v_i} d_i \ge - \tau_i \bar{d}_i$.
Therefore,~\eqref{eq:CBF HV robust appdx} implies~\eqref{eq:CBF HV disturbance} and leads to a sufficient condition for guaranteeing safety w.r.t.~$\bar{h}_i$, analogously to~\eqref{eq:CBF_constraint_robust}.
Furthermore, \eqref{eq:CBF head CAV} ensures safety w.r.t.~$h_{\headcav}$ defined in~\eqref{eq:safety_function_headcav}.
Based on~\eqref{eq:safety_function_hv_valid}, ensuring both $h_{\headcav}(x)\ge 0 $ and $\bar{h}_i(x)\ge 0$ implies $h_i(x) \ge 0$, which means safety w.r.t. the CTH policy defined in~\eqref{eq:safety_function_hv}.
\end{proof}

\begin{remark}[Effect of HV model error on safety constraints]
Compared with the HV safety constraint~\eqref{eq:CBF HV}, the robust HV safety constraint~\eqref{eq:CBF HV robust appdx} has an extra safety margin term $\tau_i \bar{d}_i$, which accounts for the human model error.   The safety margin is proportional to the model error $\bar{d}_i$, which means a less accurate human driver model will cause a more aggressive control strategy. Specifically, when $\bar{d}_i = 0$, the robust safety constraint~\eqref{eq:CBF HV robust appdx} becomes the same as the original safety constraint~\eqref{eq:CBF HV}. 
\end{remark}

\bibliographystyle{plain}
\bibliography{ref}

\end{document}